\declaretheorem[numberwithin=section]{theorem}
\declaretheorem[sibling=theorem]{lemma}
\declaretheorem[sibling=theorem]{corollary}
\theoremstyle{definition}
\declaretheorem[sibling=theorem,name=Problem]{prob}
\declaretheorem[sibling=theorem]{definition}
\declaretheorem[sibling=theorem]{remark}
\newtheoremstyle{named}{}{}{\itshape}{}{}{.}{.5em}{\thmnote{#3}}
\theoremstyle{named}
\newtheorem*{namedtheorem}{Theorem}
\newcommand{\norm}[1]{\left\lVert#1\right\rVert}
\newcommand{\nrm}[1]{\norm{#1}}
\DeclareMathOperator{\finv}{inv}
\DeclareMathOperator{\fsqrt}{sqrt}
\DeclareMathOperator{\nnz}{nnz}
\DeclareMathOperator{\poly}{poly}
\DeclareMathOperator{\polylog}{polylog}
\DeclareMathOperator{\sinc}{sinc}
\DeclareMathOperator{\rank}{rank}
\DeclareMathOperator{\Tr}{Tr}
\DeclareMathOperator{\Var}{Var}
\DeclareMathOperator{\vvec}{vec}
\DeclareMathOperator{\sq}{SQ}
\DeclareMathOperator{\q}{Q}
\DeclareMathOperator{\sqcb}{\mathbf{sq}}
\DeclareMathOperator{\scb}{\mathbf{s}}
\DeclareMathOperator{\qcb}{\mathbf{q}}
\DeclareMathOperator{\ncb}{\mathbf{n}}
\DeclareMathOperator{\pcb}{\mathbf{q}}
\DeclareMathOperator{\sqrun}{\widetilde{\mathbf{sq}}}
\DeclarePairedDelimiter{\abs}{\lvert}{\rvert}
\newcommand{\bbr}{\mathbb{R}}
\newcommand{\bbR}{\mathbb{R}}
\newcommand{\bbc}{\mathbb{C}}
\newcommand{\bbC}{\mathbb{C}}
\newcommand{\E}{\mathbb{E}}
\newcommand{\N}{\mathbb{N}}
\newcommand{\fr}{{\mathrm{F}}}
\newcommand{\op}{{\mathrm{Op}}}
\newcommand{\eps}{\varepsilon}
\newcommand{\range}[1]{[#1]}
\newcommand{\bigO}[1]{\mathcal{O}\left( #1 \right)}
\newcommand{\bOt}[1]{\widetilde{\mathcal O}\left(#1\right)}
\newcommand{\evt}{ {(\mathrm{EV})} }
\newcommand{\qsvt}{ {(\mathrm{QV})} }
\newcommand{\svt}{ {(\mathrm{SV})} }
\crefname{claim}{claim}{claims}
\crefname{claim}{Claim}{Claims}
\renewcommand{\emptyset}{\varnothing}
\renewcommand{\ln}{\log}
\begin{document}

\title{Sampling-based sublinear low-rank matrix arithmetic framework for dequantizing quantum machine learning}

\author{Nai-Hui Chia\thanks{Department of Computer Science, University of Texas at Austin. Research supported by Scott Aaronson's Vannevar Bush Faculty Fellowship from the US Department of Defense. Email: \tt{$\{$nai,linhh,chunhao$\}$@cs.utexas.edu}} 
\qquad Andr\'{a}s Gily\'{e}n\thanks{Alfréd Rényi Institute of Mathematics. Formerly at the Institute for Quantum Information and Matter, California Institute of Technology. Funding provided by Samsung Electronics Co., Ltd., for the project ``The Computational Power of Sampling on Quantum Computers'', and by the Institute for Quantum Information and Matter, an NSF Physics Frontiers Center (NSF Grant PHY-1733907), as well as by the EU's Horizon 2020 Marie Skłodowska-Curie program 891889-QuantOrder. Email: \tt{gilyen@renyi.hu}} 
\qquad Tongyang Li\thanks{Department of Computer Science, Institute for Advanced Computer Studies, and Joint Center for Quantum Information and Computer Science, University of Maryland. Research supported by IBM PhD Fellowship, QISE-NET Triplet Award (NSF DMR-1747426), and the U.S. Department of Energy, Office of Science, Office of Advanced Scientific Computing Research, Quantum Algorithms Teams program. Email: \tt{tongyang@cs.umd.edu}} \\
\,\, Han-Hsuan Lin$^{*}$
\qquad Ewin Tang\thanks{ University of Washington. This material is based upon work supported by the National Science Foundation Graduate Research Fellowship Program under Grant No. DGE-1762114. Email: \tt{ewint@cs.washington.edu}} 
\qquad Chunhao Wang$^{*}$}

\date{\empty}
\maketitle

\begin{abstract}
We present an algorithmic framework for quantum-inspired classical algorithms on close-to-low-rank matrices, generalizing the series of results started by Tang's breakthrough quantum-inspired algorithm for recommendation systems [STOC'19]. Motivated by quantum linear algebra algorithms and the quantum singular value transformation (SVT) framework of Gily{\'e}n, Su, Low, and Wiebe~[STOC'19], we develop classical algorithms for SVT that run in time independent of input dimension, under suitable quantum-inspired sampling assumptions. Our results give compelling evidence that in the corresponding QRAM data structure input model, quantum SVT does not yield exponential quantum speedups. Since the quantum SVT framework generalizes essentially all known techniques for quantum linear algebra, our results, combined with sampling lemmas from previous work, suffice to generalize all prior results about dequantizing quantum machine learning algorithms. In particular, our classical SVT framework recovers and often improves the dequantization results on recommendation systems, principal component analysis, supervised clustering, support vector machines, low-rank regression, and semidefinite program solving. We also give additional dequantization results on low-rank Hamiltonian simulation and discriminant analysis. Our improvements come from identifying the key feature of the quantum-inspired input model that is at the core of all prior quantum-inspired results: $\ell^2$-norm sampling can approximate matrix products in time independent of their dimension. We reduce all our main results to this fact, making our exposition concise, self-contained, and intuitive.
\end{abstract}

\newpage

\tableofcontents

\ifdraft{\listoftodos}

\newpage

\section{Introduction}

\subsection{Motivation}
Quantum machine learning (QML) is a field of study with a rapidly growing number of proposals for how quantum computers could significantly speed up machine learning tasks \cite{Dunjko2020nonreviewofquantum,Ciliberto2018}.
If any of these proposals yield substantial practical speedups, it could be the killer application motivating the development of scalable quantum computers~\cite{preskill2018QuantCompNISQEra}.
At first glance, many applications of QML seem to admit exponential speedups.
However, these exponential speedups are less likely to manifest in practice compared to, say, Shor's algorithm for factoring~\cite{shor1994Factoring}, because unlike their classical counterparts, QML algorithms must make strong input assumptions and learn relatively little from their output~\cite{aaronson2015caveat}.
These caveats arise because both loading input data into a quantum computer and extracting amplitude data from an output quantum state are hard in their most generic forms.

A recent line of research analyzes the speedups of QML algorithms by developing classical counterparts that carefully exploit these restrictive input and output assumptions.
This began with a breakthrough 2018 paper by Tang~\cite{tang2018QuantumInspiredRecommSys} showing that the quantum recommendation systems algorithm~\cite{kerenidis2016QRecSys}, previously believed to be one of the strongest candidates for a practical exponential speedup in QML, does not give an exponential speedup.
Specifically, Tang describes a ``dequantized'' algorithm that solves the same problem as the quantum algorithm and only suffers a polynomial slowdown.
Tang's algorithm crucially exploits the input data structure assumed by the quantum algorithm, which is used for efficiently preparing states.
Subsequent work relies on similar techniques to dequantize a wide range of QML algorithms, including those for principal component analysis and supervised clustering~\cite{tang2018QInspiredClassAlgPCA}, low-rank linear system solving~\cite{chia2020QuantInsLinEqSolving}, low-rank semidefinite program solving~\cite{chia2019QInspiredSubLinLowRankSDPSolver}, support vector machines~\cite{ding2019SVM}, nonnegative matrix factorization~\cite{chen2019NMF}, and minimal conical hull~\cite{du2019conical}.
These results show that the advertised exponential speedups of many QML algorithms disappear when compared to classical algorithms with input assumptions analogous to the state preparation assumptions of the quantum algorithms, drastically changing our understanding of the landscape of potential QML algorithm speedups.

A recent line of work in quantum algorithms has worked to unify many quantum algorithms ranging from quantum walks to QML, under a quantum linear algebra framework called quantum singular value transformation (QSVT)~\cite{low2016HamSimQSignProc,chakraborty2018BlockMatrixPowers,gilyen2018QSingValTransf}.
Since this framework captures essentially all known linear algebraic QML techniques~\cite{martyn2021GrandUnificationQAlgs}, including all prior dequantized QML algorithms (up to minor technical details), a natural question is whether this framework can be dequantized.
One cannot hope to dequantize \emph{all} of QSVT, because with sparse block-encodings of input data, QSVT can simulate Harrow, Hassidim, and Lloyd's pioneering poly-logarithmic time algorithm (HHL) for the BQP-complete problem of sampling from the solution of a sparse system of linear equations~\cite{harrow2009QLinSysSolver}.
However, one could hope to dequantize QSVT provided that the input data comes in the state preparation data structure used commonly for quantum linear algebra.
This data structure allows for efficient QML when the input is close-to-low-rank, but as dequantized algorithms show, it also gives significant power to classical algorithms.
Prior work \cite{tang2018QInspiredClassAlgPCA,chia2020QuantInsLinEqSolving} has made similar speculations that these techniques could feasibly dequantize wide swathes of quantum linear algebra.
In this work, we give evidence for these hopes by presenting a classical analogue of the QSVT framework and applying it to dequantize QML algorithms.

\subsection{Results} \label{subsec:mr}
We describe a \emph{simple} framework for quantum-inspired classical algorithms with \emph{wide applicability}, grasping the capabilities and limitations of these techniques.
We use this framework to dequantize many quantum linear algebra algorithms.
We also prove QSVT-like extensibility properties of our framework, giving evidence that it can dequantize any QSVT algorithms in the QRAM input model.

\paragraph{Input model: Oversampling and query access.}
Our framework assumes a specific input model called \emph{oversampling and query access}, which can be thought of as a classical analogue to quantum state preparation assumptions, i.e., the ability to prepare a quantum state $|v\rangle$ proportional to some input vector~$v$.
Our conceptual contribution is to define this generalization of sampling and query access, because it has better closure properties.

We have \emph{sampling and query access} to a vector $v \in \bbc^n$, denoted $\sq(v)$, if we can efficiently make the following kinds of queries (\cref{defn:sq-access}): (1) given an index $i \in [n]$, output the corresponding entry $v(i)$; (2) sample an index $j \in [n]$ with probability $|v(j)|^2/\|v\|^2$; and (3) output the vector's $\ell^2$-norm $\|v\|$.
We have sampling and query access to a matrix $A\in\bbc^{m\times n}$, denoted $\sq(A)$, if we have $\sq(A(i,\cdot))$ for all rows of $A$, $A(i,\cdot)$, and also $\sq(a)$ for $a$ the vector of row norms (i.e., $a(i) \coloneqq \|A(i,\cdot)\|$).
We have \emph{$\phi$-oversampling and query access} to a vector $v$, denoted $\sq_\phi(v)$, if (1) we can query for entries of $v$ and (2) we have sampling and query access to an ``entry-wise upper bound'' vector $\tilde{v}$ satisfying $\|\tilde{v}\|^2 = \phi\|v\|^2$ and $|\tilde{v}(i)| \geq |v(i)|$ for all indices $i$; the definition for a matrix is analogous (\cref{defn:phi-sq-access}).

The parameter $\phi$ should be seen as a form of overhead that comes out in the runtime of algorithms: through rejection sampling, $\sq_\phi(v)$ can do approximate versions of all the queries of $\sq(v)$ with a factor $\phi$ of overhead (\cref{lem:b-sq-approx}).
In this paper, we most often think of $\phi$ as being independent of input size.

To motivate this definition, we make the following observations about this input model.
First, as far as we know, if input data is given \emph{classically},\!\footnote{This assumption is important. When input data is quantum (say, it is coming directly from a quantum system), a classical computer has little hope of performing linear algebra on it efficiently, see for example~\cite{aharononv2021QAlgorithmicMeasurement,huang2021InfThBoundsOnQAdvantageML}.} classical algorithms in the sampling and query model can be run whenever the corresponding algorithms in the quantum model can (\cref{rmk:when-sq-access}).
For example, if input is loaded in the QRAM data structure, as commonly assumed in QML in order to satisfy state preparation assumptions \cite{prakash2014QLinAlgAndMLThesis,Ciliberto2018}, then we have log-time sampling and query access to it.
So, a fast classical algorithm for a problem in this classical model implies lack of quantum speedup for the problem, at least in the usual settings explored in the QML literature.
In particular, a polynomial-time classical algorithm in this model implies lack of exponential quantum speedup.
Second, oversampling and query access has many similarities to the notion of quantum block-encodings in quantum singular value transformation~\cite{gilyen2018QSingValTransf}.
The commonly used data-structures that enable oversampling and query access to a matrix $A$ also enable implementing an efficient quantum circuit whose unitary is a block-encoding of $A$.
Further, in both input models one can perform efficient matrix arithmetic.

\paragraph{Matrix arithmetic.} 
The thrust of our main results is to demonstrate that \emph{oversampling and query access is approximately closed under arithmetic operations}.
We argue that the essential power of quantum-inspired algorithms lies in their ability to leverage sampling and query access of the input matrices to provide oversampling and query access to complex arithmetic expressions as output (possibly with some approximation error), without paying the (at least) linear time necessary to compute such expressions in conventional ways.
While the proven closure properties are important from a complexity theoretic point of view, some of them don't explicitly come into play in the demonstrated applications of our framework for dequantizing QML, since they often come with undesirable polynomial overhead. 
This is in contrast with quantum block-encodings, which generally compose with minimal overhead.

We now list the closure properties that we show, along with the corresponding closure properties proven for block-encodings in \cite{gilyen2018QSingValTransf}.
For all of these, the query time for access to the output is just polynomial in the query times for access to the input, so in particular, these procedures run in time independent of input dimension.
More specifically, we will compare what we call (sub)normalization ``overhead'' between the two, which is the value $\phi$ in the classical setting and what \cite{gilyen2018QSingValTransf} denotes as $\alpha$ in the quantum setting.
The two quantities are analogous, and roughly correspond to overheads in rejection sampling and post-selection, inducing a multiplicative factor in sampling times for both.
\begin{itemize}
    \item Given access to a constant number of vectors $v_1,\ldots,v_\tau$, we have access to linear combinations $\sum_{t=1}^\tau \lambda_t v_t$, and analogously with linear combinations of matrices (\cref{lemma:sample-Mv,lem:weighted-oversampling}).
    This is a classical analogue to the ``linear combinations of unitaries'' technique for block-encodings \cite[Lemma 52]{gilyen2018QSingValTransf}.
    In the quantum setting, there is less overhead,\footnote{In fact, already $\sqrt{\phi}\geq \alpha/\nrm{\sum_{t=1}^\tau \lambda_t v_t}$ which can be seen using that the root mean-square is at least the average.} allowing for efficient block encodings for linear combinations of arbitrarily many matrices in certain settings.
    \item Given access to two matrices $A, B$ with Frobenius norm at most one, we have access to a matrix $Z$ $\eps$-close to the product $A^\dagger B$ in Frobenius norm (\cref{prop:appr-mms,rmk:appr-mms-sq}).
    In the quantum setting, closure of block-encodings under products is almost immediate \cite[Lemma 53]{gilyen2018QSingValTransf} and is not approximate.
    In both cases the individual input overheads of $A$ and $B$ are multiplied. With the same overheads one can also form Kronecker products $A\otimes B$ exactly---this is immediate both in the classical and quantum case \cite{camps2020ApxQCircSynthesisBlockEncoding}.  In particular, given access to two vectors $u$ and $v$, we have access to their outer product $uv^\dagger$ (\cref{lem:outersampling}).
    \item Given access to a matrix $A$ with Frobenius norm at most one and a Lipschitz function $f$, we have access to a matrix $Z$ $\eps$-close to $f(A^\dagger A)$ in Frobenius norm (\cref{thm:evenSing})\footnote{For a Hermitian matrix $H$ and a function $f\colon\bbR\mapsto \bbC$, $f(H)$ denotes applying $f$ to the eigenvalues of $H$. That is, $f(H) \coloneqq \sum_{i=1}^n f(\lambda_i) v_iv_i^\dagger$, for $\lambda_i$ and $v_i$ the eigenvalues and eigenvectors of $H$.}.
    In the quantum setting, block-encodings are closed under even and odd polynomial singular value transformations \cite[Lemmas 8, 10]{gilyen2018QSingValTransf} without approximation, provided the polynomial is low-degree and bounded.
    This block-encoding closure property can be viewed as a corollary of the above two properties, but can also be achieved directly with some more efficient technical machinery.

    An even polynomial singular value transformation of $A$ is precisely $f(A^\dagger A)$ for $f$ a low-degree polynomial (which means $f$ is Lipschitz), and odd polynomials can be decomposed into a product of an even polynomial with $A$, so our closure property is as strong as the quantum one.
    The full details are derived in \cref{subsec:dequant}.
\end{itemize}
To summarize, every arithmetic operation of matrices with block-encodings in \cite{gilyen2018QSingValTransf} (with the possible exception of long linear combinations) can be mimicked by matrices with oversampling and query access, up to Frobenius norm error, provided that an input matrix in a block-encoding corresponds to having\footnote{We take some care here to distinguish whether we have oversampling and query access to $A$ or $A^\dagger$.
    We don't need to: we show that having either one of them implies having the other, up to approximation (\cref{rmk:a-to-a-dagger}).
    However, the accesses assumed in our closure properties are in some sense the most natural choices and require the least overhead.} $\sq(A)$ and $\sq(A^\dagger)$.
The ``linear combinations of vectors'' and ``outer products'' classical closure properties have been used in prior work~\cite{tang2018QuantumInspiredRecommSys,chia2020QuantInsLinEqSolving}.
However, without our new definition of oversampling and query access, it was not clear that these algorithms could be chained indefinitely as we show with these closure properties.

\paragraph{Implications for quantum singular value transformation.}
Our results give compelling evidence that there is indeed no exponential speedup for QRAM-based QSVT, and show that oversampling and query access can be thought of as a classical analogue to block-encodings in the bounded Frobenius norm regime.
Nevertheless, we do not rule out the possibility for large polynomial speedups, as the classical runtimes tend to have impractically large polynomial exponents.
To elaborate more on this connection, we now recall the QSVT framework in more detail.

The QSVT framework of Gily\'{e}n, Su, Low, and Wiebe~\cite{gilyen2018QSingValTransf} assumes that the input matrix $A$ is given by a \emph{block-encoding}, which is a quantum circuit implementing a unitary transformation whose top-left block contains (up to scaling) $A$ itself~\cite{low2016HamSimQSignProc}.
Given a block-encoding of $A$, one can apply it to a quantum state or form block-encodings of other expressions using the closure properties mentioned above.
One can get a block-encoding of an input matrix $A$ through various methods.
If $A$ is $s$-sparse with efficiently computable elements and $\|A\| \leq 1$, then one can directly get a block-encoding of $A/s$~\cite[Lemma~48]{gilyen2018QSingValTransf}.
If $A$ is in the QRAM data structure (used for efficient state preparation for QML algorithms \cite{prakash2014QLinAlgAndMLThesis}), one can directly get a block-encoding of $A/\|A\|_\fr$~\cite[Lemma~50]{gilyen2018QSingValTransf}.
We will use the term \emph{QRAM-based QSVT} to refer to the family of quantum algorithms possible in the QSVT framework when all input matrices \& vectors are given in the QRAM data structure.

The normalization in QRAM-based QSVT means that it has an implicit dependence on the Frobenius norm $\|A\|_\fr$.
Since $\|A\|_\fr$ is also the key parameter in the complexity of our corresponding classical algorithms, this suggests that QRAM-based QSVT does not give inherent exponential quantum speedups (though, if input preparation/output analysis protocols have no classical analogues, they can act as a subroutine in an algorithm that does give an exponential quantum speedup).
Our closure results confirm this: if input matrices and vectors are given in QRAM data structure, then on one (quantum) hand we can construct block-encodings of these matrices normalized to have Frobenius norm one and on the other (classical) hand we have sampling and query access to the input. The conclusion is that, up to some controllable approximation error, an algorithm using the block-encoding framework has a classical analogue in the oversampling and query access model.
The classical algorithm's runtime is only polynomially slower than the corresponding quantum algorithm, except in the $\eps$ parameter.\!\footnote{The QML algorithms we discuss generally only incur $\polylog(\frac1\eps)$ terms, but need to eventually pay $\poly(1/\eps)$ to extract information from output quantum states. So, we believe this exponential speedup is artificial. See the open questions section for more discussion of this error parameter.}
One can argue similarly that there should be no exponential speedup for QSVT for block-encodings derived from (purifications of) density operators \cite[Lemma~45]{gilyen2018QSingValTransf} that come from some well-structured classical data (see \cref{subsec:dequant}).
This stands in contrast to, for example, block-encodings that come from sparsity assumptions \cite[Lemma~48]{gilyen2018QSingValTransf}, where the matrix in the block-encoding can have Frobenius norm as large as $\sqrt{sn}$ (where we take $A$ to be $n\times n$), and so the classical techniques cannot be applied without incurring dependence on $n$ in the runtime.

\subsection{Technical overview}

We now illustrate the flavor of the algorithmic ideas underlying our main results, by showing why the ``oversampling'' input model is closed under approximate matrix products.
Suppose we are given sampling and query access to two matrices $A \in \bbc^{m\times n}$ and $B \in \bbc^{m\times p}$, and desire (over)sampling and query access to $A^\dagger B$.
$A^\dagger B$ is a sum of outer products of rows of $A$ with rows of $B$ (that is, $A^\dagger B = \sum_{i=1}^m A(i,\cdot)^\dagger B(i,\cdot)$), so a natural idea is to use the outer product closure property to get access to each outer product individually, and then use the linear combination closure property to get access to their sum, which is $A^\dagger B$ as desired.
However, there are $m$ terms in the sum, which is too large: we can't even compute entries of $A^\dagger B$ in time independent of $m$.
So, we use sampling to approximate this sum of $m$ terms by a linear combination over far fewer terms, allowing us to get access to $Z$ for $Z \approx A^\dagger B$.
This type of matrix product approximation is well-known in the classical literature~\cite{DKM06}.
Given $\sq(A)$, we can pull samples $i_1,\ldots,i_s$ according to the row norms of $A$, a distribution we will denote $p$ (so $p(i) = \|A(i,\cdot)\|^2/\|A\|_\fr^2$).
Consider $Z \coloneqq \frac{1}{s}\sum_{k=1}^s \frac{1}{p(i_k)}A(i_k,\cdot)^\dagger B(i_k,\cdot)$.
$Z$ is an unbiased estimator of $A^\dagger B$: $\E[Z]=\frac1s\sum_{k=1}^s\sum_{\ell=1}^m p(\ell)\frac{A(\ell,\cdot)^\dagger B(\ell,\cdot)}{p(\ell)}=\sum_{\ell=1}^m A(\ell,\cdot)^\dagger B(\ell,\cdot)=A^\dagger B$.
Further, the variance of this estimator is small.
In the following computation, we consider $s = 1$, because the variance for general $s$ decreases as $1/s$.
\begin{multline*}
    \E[\|A^\dagger B - Z\|_\fr^2] \leq \sum_{i,j} \E[\abs{Z(i,j)}^2]
    = \sum_{i,j}\sum_\ell p(\ell) \frac{1}{p(\ell)^2}\abs{A(\ell,i)}^2\abs{B(\ell,j)}^2 \\
    = \sum_\ell \frac{1}{p(\ell)} \|A(\ell,\cdot)\|^2\|B(\ell,\cdot)\|^2
    = \sum_\ell \|A\|_\fr^2\|B(\ell,\cdot)\|^2
    = \|A\|_\fr^2\|B\|_\fr^2.
\end{multline*}
By Chebyshev's inequality, we can choose $s = \bigO{\frac{1}{\eps^2}}$ to get that $\|Z - A^\dagger B\|_\fr < \eps\|A\|_\fr\|B\|_\fr$ with probability $0.99$.
Since $Z$ is a linear combination of $s$ outer products, this gives us oversampling and query access to $Z$ as desired.
In our applications we would keep $Z$ as an outer product $A'^\dagger B'$ for convenience.
We have just sketched the proof of our key lemma: an approximate matrix product protocol.

\begin{namedtheorem}[\textbf{Key lemma} \cite{DKM06} (informal version of \cref{prop:appr-mms})]
Suppose we are given $\sq(X) \in \bbc^{m\times n}$ and $\sq(Y) \in \bbc^{m\times p}$.
Then we can find normalized submatrices of $X$ and $Y$, $X' \in \bbc^{s\times n}$ and $Y' \in \bbc^{s\times p}$, in $\bigO{s}$ time for $s = \Theta(\frac{1}{\eps^2}\log\frac{1}{\delta})$, such that
\[
  \Pr\Big[\|X'^\dagger Y' - X^\dagger Y\|_\fr \leq \eps\|X\|_\fr\|Y\|_\fr\Big] > 1-\delta.
\]
We subsequently have $\bigO{s}$-time $\sq(X'), \sq(X'^\dagger), \sq(Y'), \sq(Y'^\dagger)$.
\end{namedtheorem}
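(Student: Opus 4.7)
The plan is to draw $s$ row indices $i_1,\dots,i_s \in [m]$ i.i.d.\ from the length-squared distribution $p(i):=\|X(i,\cdot)\|^2/\|X\|_\fr^2$, accessible via $\sq(X)$, and to define $X'$ and $Y'$ by $X'(k,\cdot):=X(i_k,\cdot)/\sqrt{s p(i_k)}$ and $Y'(k,\cdot):=Y(i_k,\cdot)/\sqrt{s p(i_k)}$. Then $(X')^\dagger Y' = \frac{1}{s}\sum_{k=1}^s \frac{1}{p(i_k)}X(i_k,\cdot)^\dagger Y(i_k,\cdot)$ is an unbiased estimator of $X^\dagger Y$. Only the sampled indices and rescaling factors are stored, so the sampling step costs $O(s)$ time under $O(1)$-time oracle responses; the row norms of $X'$ are uniform, $\|X'(k,\cdot)\| = \|X\|_\fr/\sqrt{s}$, which is what ``normalized'' refers to in the statement.

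For the error bound I would reuse the variance computation already given in the introduction, which extends verbatim to $\E[\|(X')^\dagger Y' - X^\dagger Y\|_\fr^2]\leq \frac{1}{s}\|X\|_\fr^2\|Y\|_\fr^2$ (cross terms vanish by independence and unbiasedness of each summand). Markov's inequality then yields a constant-probability bound already for $s = O(1/\eps^2)$. To amplify the failure probability to $\delta$ while paying only an extra $\log(1/\delta)$ factor in $s$, I would run $L = \Theta(\log(1/\delta))$ independent copies of the sampler producing candidates $Z_\ell := (X'_\ell)^\dagger Y'_\ell$ in implicit factored form, and apply an approximate geometric median: by a Chernoff bound a majority of the $Z_\ell$ lie within $\eps\|X\|_\fr\|Y\|_\fr$ of $X^\dagger Y$, so by the triangle inequality any $Z_{\ell^*}$ whose $2\eps\|X\|_\fr\|Y\|_\fr$-ball contains more than half of the other candidates is itself within $3\eps\|X\|_\fr\|Y\|_\fr$ of $X^\dagger Y$. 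The surviving $(X'_{\ell^*}, Y'_{\ell^*})$ is the desired pair after rescaling $\eps$ by a constant.

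Finally I would check that the claimed $\sq$-accesses follow directly from the implicit representation. For $\sq(X')$ the row norms are uniform, so row sampling is trivial, and within-row entry queries and length-squared samples delegate to $\sq(X(i_k,\cdot))$; every oracle response is $O(1)$ after accounting for the rescaling factor. For $\sq((X')^\dagger)$ a row of $(X')^\dagger$ is a length-$s$ column of $X'$, whose entries and exact $\ell^2$-norm are computable in $O(s)$ time by reading off the stored rescaled rows, and a global row-norm sample reduces to drawing $k \in [s]$ uniformly and then calling $\sq(X(i_k,\cdot))$ to get the column index. The arguments for $\sq(Y')$ and $\sq((Y')^\dagger)$ are identical.

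The main obstacle is the amplification step: the bare Markov/Chebyshev estimate only yields $\poly(1/\delta)$ dependence in $s$, so the advertised $\log(1/\delta)$ factor must come from a boosting argument of the above flavour (or an equivalent matrix Bernstein-type inequality obtained after a soft truncation of the contributions $X(i_k,\cdot)^\dagger Y(i_k,\cdot)/p(i_k)$, which are not uniformly bounded in Frobenius norm). Making the comparison step in the geometric median run in a dimension-independent way is itself a subtle point — one must evaluate $\|Z_\ell - Z_j\|_\fr^2$ from the factored representations, which naively involves inner products between rows of $X$ and rows of $Y$ and so requires a secondary approximate-inner-product primitive; everything else in the proof — sampling, unbiasedness, variance, and the bookkeeping that yields $\sq$-access — is routine.
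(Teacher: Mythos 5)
You sample only from $\mathcal{D}_X$, and you are right that this caps you at a Chebyshev-type, constant-probability bound and forces a boosting step. The paper's route sidesteps that entirely: it samples the shared row indices from the \emph{averaged} distribution $\frac12(\mathcal{D}_X + \mathcal{D}_Y)$, exploiting both of the given $\sq$ oracles (the statement grants $\sq(Y)$ as well, which your sampling step never uses). Under the averaged distribution, each summand $\frac{1}{p(i_k)}X(i_k,\cdot)^\dagger Y(i_k,\cdot)$ has Frobenius norm at most $\|X\|_\fr\|Y\|_\fr$ uniformly over $i_k$ (AM--GM applied to the two terms in the denominator). This is exactly the bounded-differences condition needed to apply McDiarmid's inequality to the scalar random variable $\|X'^\dagger Y' - X^\dagger Y\|_\fr$, yielding $\log(1/\delta)$ dependence with a single sketch of size $s = \Theta(\eps^{-2}\log(1/\delta))$ and no repetition, no median selection, no secondary primitive. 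This is \cite[Theorem 1]{DKM06}, restated as \cref{lem:mm} and proved in \cref{apx:proofs}; \cref{prop:appr-mms} follows as a corollary.

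The gap in your route is the one you identify, and it is real. Sampling from $\mathcal{D}_X$ alone, the term $\frac{1}{\mathcal{D}_X(k)}X(k,\cdot)^\dagger Y(k,\cdot)$ has Frobenius norm $\|X\|_\fr^2\|Y(k,\cdot)\|/\|X(k,\cdot)\|$, which is unbounded whenever $\|X(k,\cdot)\|$ is small but $\|Y(k,\cdot)\|$ is not, so no exponential tail is available without truncation or boosting. Your median-of-means patch then needs a dimension-independent way to compare the implicit $n\times p$ candidates $Z_\ell$: computing $\|Z_\ell - Z_j\|_\fr^2$ reduces to $O(s^2)$ inner products of rows of $X$ (resp.\ $Y$), each costing $\Theta(n)$ (resp.\ $\Theta(p)$) time exactly, so one must replace them with estimates via \cref{lemma:inner-prod} and union-bound the resulting failures against the boosting argument; none of this is carried out. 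Your $\sq$-access bookkeeping is correct (it matches \cref{lem:sq-sketching}) in the exact importance-sampling case, but note that once you switch to the averaged distribution the sketch is only a $\leq 2$-oversampled sketch of each of $X$ and $Y$, so the formal \cref{prop:appr-mms} delivers oversampling access $\sq_{\leq 2}$ rather than exact $\sq$; the informal statement elides this.
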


Prior quantum-inspired algorithms \cite{tang2018QuantumInspiredRecommSys,tang2018QInspiredClassAlgPCA,chia2018QInspiredSubLinLowRankLinEqSolver,chia2019QInspiredSubLinLowRankSDPSolver} indirectly used this lemma by using \cite{frieze2004FastMonteCarloLowRankApx}, which finds a low-rank approximation to the input matrix in the form of an approximate low-rank SVD and relies heavily on this lemma in the analysis.

One of our main results, mentioned earlier as our singular value transformation closure property, is that, given $\sq(A) \in \bbc^{m\times n}$, in time independent of $m$ and $n$, we can access an approximation of $f(A^\dagger A)$ for a Lipschitz-function $f$ that, without loss of generality, satisfies $f(0) = 0$ (\cref{thm:evenSing}).
One could use \cite{frieze2004FastMonteCarloLowRankApx} to give a classical algorithm for SVT, but a more efficient approach is to directly apply the key lemma twice to get an approximate decomposition of $f(A^\dagger A)$:
\begin{align*}
    f(A^\dagger A) &\approx f(R^\dagger R) \tag*{by key lemma, with $R \in \bbc^{r \times n}$ normalized rows of $A$}\\
    &= R^\dagger \bar{f}(RR^\dagger) R \tag*{by computation, where $\bar{f}(x) \coloneqq f(x)/x$)}\\
    &\approx R^\dagger \bar{f}(CC^\dagger) R \tag*{by key lemma, with $C \in \bbc^{r\times c}$ normalized columns of $R$}
\end{align*}
We call $R^\dagger \bar{f}(CC^\dagger) R$ an \emph{RUR decomposition} because $R \in \bbc^{r\times n}$ is a subset of rows of the input matrix and $U$ is a matrix with size independent of input dimension ($R$ corresponds to the `R' of the RUR decomposition, and $\bar{f}(CC^\dagger) \in \bbc^{r\times r}$ corresponds to the `U').
In other words, an RUR decomposition expresses a desired matrix as a linear combination of $r^2$ outer products of rows of the input matrix ($\sum_{i,j} [\bar{f}(CC^\dagger)](i,j)R(i,\cdot)^\dagger R(j, \cdot)$, for example).\!\footnote{This is the relevant variant of the notion of a \emph{CUR decomposition} from the randomized numerical linear algebra and theoretical computer science communities \cite{Drineas2008}.}
We want our output in the form of an RUR decomposition, since we can describe such a decomposition implicitly just as a list of row indices and some additional coefficients, which avoids picking up a dependence on $m$ or $n$ in our runtimes.
Further, having $\sq(A)$ gives us $\sq_\phi(R^\dagger  U R)$ via closure properties, enabling efficient access to matrix-vector expressions like $R^\dagger URb$.

More general results follow as corollaries of our main result on even SVT (\cref{thm:gen-svt,thm:eig-svt}).
However, using only our main theorem about even SVT, we can directly recover most existing quantum-inspired machine learning algorithms without using these general results, yielding faster dequantization for QML algorithms.
We now outline our results recovering such applications.

\subsection{Applications: dequantizing QML \& more} \label{subsec:intro-apps}

\renewcommand{\arraystretch}{2.5}
\begin{figure}
\begin{center}\begin{tabular}{r | c c c }
    & quantum algorithm & prior work & this work \\
    \hline
    \begingroup 
    \renewcommand{\arraystretch}{1.2}
    \begin{tabular}{@{}r@{}}
        simple QSVT \\
        \cite{gilyen2018QSingValTransf}, \S\ref{subsec:dequant}
    \end{tabular}
    \endgroup
    & $\displaystyle \frac{d\|A\|_\fr\|b\|}{\|p^\qsvt(A)b\|}$
    &
    & $\displaystyle \frac{d^{22}\|A\|_\fr^6\|b\|^6}{\eps^6\|p^\qsvt(A)b\|^6}$ \\

    \begingroup 
    \renewcommand{\arraystretch}{1.2}
    \begin{tabular}{@{}r@{}}
        recommendation systems  \\
        \cite{chakraborty2018BlockMatrixPowers}, \cite{tang2018QuantumInspiredRecommSys}, \S\ref{sec:recommendation-systems}
    \end{tabular}
    \endgroup
    & $\displaystyle \frac{\|A\|_\fr}{\sigma}$
    & $\displaystyle \frac{\|A\|_\fr^{24}}{\sigma^{24}\eps^{12}}$
    & $\displaystyle \frac{\|A\|_\fr^6\|A\|^{10}}{\sigma^{16}\eps^6}$ \\

    \begingroup 
    \renewcommand{\arraystretch}{1.2}
    \begin{tabular}{@{}r@{}}
        supervised clustering \\
        \cite{lloyd2013Clustering}, \cite{tang2018QInspiredClassAlgPCA}, \S\ref{sec:supervised-clustering}
    \end{tabular}
    \endgroup
    & $\displaystyle \frac{\|M\|_\fr^2\|w\|^2}{\eps}$\makebox[0pt]{\quad${\vphantom{\Big)}}^{(\clubsuit)}$} & $\displaystyle \frac{\|M\|_\fr^4\|w\|^4}{\eps^2}$
    & $\displaystyle \frac{\|M\|_\fr^4\|w\|^4}{\eps^2}$ \\

    \begingroup 
    \renewcommand{\arraystretch}{1.2}
    \begin{tabular}{@{}r@{}}
        \kern-3mm principal component analysis \\
        \cite{chakraborty2018BlockMatrixPowers}, \cite{tang2018QInspiredClassAlgPCA}, \S\ref{sec:PCA}
    \end{tabular}
    \endgroup
    & $\displaystyle \frac{\|X\|_\fr\|X\|}{\lambda_k \eps}$
    & $\displaystyle \frac{\|X\|_\fr^{36}}{\|X\|^{12}\lambda_k^{12}\eta^6\eps^{12}}$
    & $\displaystyle \frac{\|X\|_\fr^{6}}{\|X\|^2\lambda_k^2\eta^6\eps^6}$ \\

    \begingroup 
    \renewcommand{\arraystretch}{1.2}
    \begin{tabular}{@{}r@{}}
        matrix inversion \\
        \cite{gilyen2018QSingValTransf}, \cite{gilyen2018QInsLowRankHHL}, \S\ref{sec:matrix-inversion}
    \end{tabular}
    \endgroup
    & $\displaystyle \frac{\|A\|_\fr}{\sigma}$
    & $\displaystyle \frac{\|A\|_\fr^6\|A\|^{16}k^6}{\sigma^{22}\eps^6}$\makebox[0pt]{\quad${\vphantom{\Big)}}^{(\diamondsuit)}$}
    & $\displaystyle \frac{\|A\|_\fr^6\|A\|^{22}}{\sigma^{28}\eps^6}$ \\

    \begingroup 
    \renewcommand{\arraystretch}{1.2}
    \begin{tabular}{@{}r@{}}
        support vector machines \\
        \cite{rebentrost2014QSVM}, \cite{ding2019SVM}, \S\ref{sec:SVM}
    \end{tabular}
    \endgroup
    & $\displaystyle \frac{1}{\lambda^3\eps^3}$\makebox[0pt]{\quad${\vphantom{\Big)}}^{(\diamondsuit)}$}
    & $\displaystyle \poly\Big(\frac1\lambda, \frac1\eps\Big)$\makebox[0pt]{\quad${\vphantom{\Big)}}^{(\clubsuit)}$}
    & $\displaystyle \frac{1}{\lambda^{28}\eps^{6}}$ \\

    \begingroup 
    \renewcommand{\arraystretch}{1.2}
    \begin{tabular}{@{}r@{}}
        Hamiltonian simulation \\
        \cite{gilyen2018QSingValTransf}, \S\ref{sec:Hamiltonian-simulation}
    \end{tabular}
    \endgroup
    & $\displaystyle \|H\|_\fr$
    &
    & $\displaystyle \frac{\|H\|_\fr^6\|H\|^{16}}{\max(1,\sigma^{16})\eps^6}$ \\


    \begingroup 
    \renewcommand{\arraystretch}{1.2}
    \begin{tabular}{@{}r@{}}
        \kern-4mm semidefinite program solving \\
        \cite{apeldoorn2018ImprovedQSDPSolving}, \cite{chia2019QInspiredSubLinLowRankSDPSolver}, \S\ref{sec:SDP}
    \end{tabular}
    \endgroup
    & $\displaystyle \frac{\|A^{(\cdot)}\|_\fr^7}{\eps^{7.5}} + \frac{\sqrt{m}\|A^{(\cdot)}\|_\fr^2}{\eps^4}$\kern-1mm
    & $\displaystyle \frac{mk^{57}}{\eps^{92}}$\makebox[0pt]{\quad${\vphantom{\Big)}}^{(\diamondsuit)}$}
    & $\displaystyle \frac{\|A^{(\cdot)}\|_\fr^{22}}{\eps^{46}}+ \frac{m\|A^{(\cdot)}\|_\fr^{14}}{\eps^{28}}$ \\

    \begingroup 
    \renewcommand{\arraystretch}{1.2}
    \begin{tabular}{@{}r@{}}
        discriminant analysis \\
        \cite{cong2016quantum}, \S\ref{sec:discriminant-analysis}
    \end{tabular}
    \endgroup
    & $\displaystyle \frac{\|B\|_\fr^7}{\eps^3\sigma^7} + \frac{\|W\|_\fr^7}{\eps^3\sigma^7}$\makebox[0pt]{\quad${\vphantom{\Big)}}^{(\diamondsuit)}$}
    &
    &  \kern-4mm $\displaystyle \frac{\|B\|_\fr^6\|B\|^4}{\eps^6\sigma^{10}} + \frac{\|W\|_\fr^6\|W\|^{10}}{\eps^6\sigma^{16}}$ \kern-4mm
\end{tabular}\end{center}
\caption[Summary of time complexities for our dequantizations]{
The time complexity for our algorithms, the quantum algorithms they are based on, and prior quantum-inspired algorithms (where they exist).
We assume our sampling and query accesses to the input takes $\bigO{1}$ time.
There are data structures that can support such queries (\cref{rmk:when-sq-access}), and if the input is in QRAM, the runtime only increases by at most a factor of log of input size.

We list the runtime of the algorithm, not including the time it takes to access the output (denoted with $\sqrun$).
The runtimes as listed ignore polylog terms, particularly those in error parameters ($\eps$ and $\delta$) and dimension parameters ($m$ and $n$).
The matrices and vectors referenced in these runtimes are always the input, $\sigma$ refers to a singular value threshold of the input matrices, $\lambda$ refers to an eigenvalue threshold (which can be thought of here as $\sigma^2$), and $\eta > \eps$ is a (dimensionless) gap parameter.

$(\clubsuit)$ indicates that the error analyses of the corresponding results are incomplete; we list the runtime they achieve for completeness.

$(\diamondsuit)$ indicates that the corresponding results only hold in the restricted setting where the input matrices are strictly rank $k$.
For the quantum algorithms with this tag, they allow for general matrices, but only have an informal error analysis arguing that singular values outside the range considered don't affect the final result.}
\label{fig:results}
\end{figure}

We use the results above to recover existing quantum-inspired algorithms for recommendation systems \cite{tang2018QuantumInspiredRecommSys}, principal component analysis \cite{tang2018QInspiredClassAlgPCA}, supervised clustering \cite{tang2018QInspiredClassAlgPCA}, support vector machines \cite{ding2019SVM}, low-rank matrix inversion \cite{chia2020QuantInsLinEqSolving}, and semidefinite program solving \cite{chia2019QInspiredSubLinLowRankSDPSolver}.
We also propose new quantum-inspired algorithms for low-rank Hamiltonian simulation and discriminant analysis (dequantizing the quantum algorithm of Cong \& Duan~\cite{cong2016quantum}).

\cref{fig:results} has a summary of our results, along with a comparison of runtimes to the corresponding quantum algorithms and prior quantum-inspired work, where it exists.
All our results match or improve on prior dequantized algorithms apart from that for matrix inversion, where prior work gives an incomparable runtime that only holds for strictly low-rank matrices of rank $k$.
Our results for matrix inversion and semidefinite program solving solve the problem in greater generality than prior work, without the restriction that the input matrices are strictly rank-$k$.\footnote{For semidefinite program solving, $\|A^{(\cdot)}\|_\fr \leq \sqrt{k}$, which makes the runtimes comparable.}

We do \emph{not} claim any meaningful breakthroughs for these problems in the classical literature: the problems that these QML algorithms solve differ substantially from their usual classical counterparts.
For example, the quantum recommendation systems algorithm of Kerenidis and Prakash~\cite{kerenidis2016QRecSys} performs sampling from a low-rank approximation of the input instead of low-rank matrix completion, which is the typical formalization of the recommendation systems problem~\cite{tang2018QuantumInspiredRecommSys}.
Evaluating these quantum algorithms' justifications for their versions of problems is outside the scope of this work: instead, we argue that these algorithms would likely not give exponential speedups when implemented, regardless of whether such implementations would be useful.
The goal of our framework is to demonstrate what can be done classically and establish a classical frontier for quantum algorithms to push past.

The proofs for these dequantization results follow the same general structure: consider the quantum algorithm and formulate the problem that this algorithm solves, and in particular, the linear algebra expression that the quantum algorithm computes.
From there, repeatedly use the SVT result and key lemma to approximate this expression by something like an RUR decomposition.
Finally, use closure properties to gain oversampling and query access to that output decomposition.
This procedure is relatively straightfoward and flexible.
Also, unlike previous work \cite{chia2020QuantInsLinEqSolving,chia2019QInspiredSubLinLowRankSDPSolver}, our results need not assume that the input is strictly low-rank.
Instead, following~\cite{tang2018QuantumInspiredRecommSys,gilyen2018QSingValTransf}, our algorithms work on close-to-low-rank matrices by doing SVTs that smoothly threshold to effectively only operate on large-enough singular values.

\subsection{Related work}

\paragraph{Quantum-inspired algorithms.}
Our approach and analysis is much simpler than that of Frieze, Kannan, and Vempala~\cite{frieze2004FastMonteCarloLowRankApx}, while it also gives improved results in our applications, and has several other advantages. For example, the reduction to \cite{frieze2004FastMonteCarloLowRankApx} first given by Tang to get an SVT-based low-rank approximation bound from the standard notion of low-rank approximation \cite[Theorem~4.7]{tang2018QuantumInspiredRecommSys} induces a quadratic loss in precision, which appears to be only an artifact of the analysis.
Also, \cite{frieze2004FastMonteCarloLowRankApx} gives Frobenius norm error bounds, though for applications we often only need spectral norm bounds; our main theorem can get improved runtimes by taking advantage of the weaker spectral norm bounds.
Finally, we take a reduced number of rows compared to columns, whereas \cite{frieze2004FastMonteCarloLowRankApx} approximates the input by taking the same number of rows and columns.

\paragraph{Randomized numerical linear algebra.}
All of the results presented here are more or less randomized linear algebra algorithms \cite{mahoney2011randomized,w14}.
The kind of sampling we get from sampling and query access is called \emph{importance sampling} or \emph{length-square sampling} in that body of work: see the survey by Kannan and Vempala~\cite{kannan2017RandAlgNumLinAlg} for more on importance sampling.
Importance sampling, and specifically, its approximate matrix product property, is the core primitive of this work.
In addition to the low-rank approximation algorithms~\cite{frieze2004FastMonteCarloLowRankApx} used in the quantum-inspired literature, others have used importance sampling for, e.g., orthogonal tensor decomposition~\cite{drineas2007randomized,mahoney2008tensor,swz16} (generalizing low-rank approximation \cite{frieze2004FastMonteCarloLowRankApx}) and support vector machines \cite{hks11}.

The fundamental difference between quantum-inspired algorithms and traditional sketching algorithms is that we assume ``we can perform quantum measurements'' of states corresponding to input in time independent of input dimension (that is, we have efficient sampling and query access to input), and in exchange want algorithms that run in time independent of dimension and provide only (over)sampling and query access to the output.
This quantum-inspired model is weaker than the standard sketching algorithm model (\cref{rmk:when-sq-access}): an algorithm taking $T$ time in the quantum-inspired model for an input matrix $A$ can be converted to a standard algorithm that runs in time $\bigO{\nnz(A) + T}$, where $\nnz(A)$ is the number of nonzero entries of $A$.
So, we can also think about an $\bigO{T}$-time quantum-inspired algorithm as an $\bigO{\nnz(A) + T}$-time sketching algorithm, where the $\nnz(A)$ portion of the runtime can \emph{only} be used to facilitate importance sampling.\footnote{The same holds for quantum algorithms using the QRAM data structure input model: the data structure itself can be built during an $\bigO{\nnz(A)}$-time (classical) preprocessing phase.}
This restriction makes for algorithms that may perform worse in generic sketching settings, but work in more settings, and so demonstrate lack of exponential quantum speedup for a wider range of problems.

A natural question is whether more modern sketching techniques can be used in our model.
After all, importance sampling is only one of many sketching techniques studied in the large literature on sketching algorithms.
Notably, though, \emph{other types of sketches seem to fail in the input regimes where quantum machine learning succeeds}: assuming sampling and query access to input, importance sampling takes time independent of dimension, whereas other randomized linear algebra methods such as Count-Sketch and Johnson-Lindenstrauss still take time linear in input-sparsity.

Subsequent work by Chepurko, Clarkson, Horesh, Lin, and Woodruff~\cite{chepurko2020quantum} notes that importance sampling oversamples leverage score sampling, so usual analyses for leverage score sampling also hold for importance sampling, up to some small overhead.
It is reasonable to suspect that this connection could lead to significant improvements over the results presented here.
However, exploiting this connection for improved runtimes seems nontrivial, since most approaches using leverage score sampling requires performing $O(\nnz(A))$-time pre-processing operations, even if one has $\sq(A)$.
As a simple example, for low-rank approximation of an input matrix $A$, if we wish to adapt the algorithm of Clarkson and Woodruff~\cite{cw17,w14}, importance sampling of $A$ can replace CountSketch for one of the sketches \cite[Lemma 4.2]{w14}, but importance sampling of $SA$ cannot replace the other \cite[Theorem 4.3]{w14}.
Versions of importance sampling may work for the second sketch (for example, sampling from a low-rank approximation of $SA$), but we are aware of none that can be obtained easily from $\sq(A)$.
This may be a manifestation of the difficulty of achieving relative-error estimates to quantities like leverage scores and residuals in this model.

An alternative approach is to use a projection-cost preserving sketch (PCP) like \emph{ridge} leverage score sampling to sketch $A$ on both sides~\cite{cmm17}.
The importance sampling from $\sq(A)$ $\lambda/2$-oversamples ridge leverage score sampling, where $\lambda \coloneqq \|A\|_\fr^2/\|A - A_k\|_\fr^2 \geq \|A\|_\fr^2/\sigma_{k+1}^2$, so using importance sampling in place of ridge leverage score sampling can give algorithms.
This is how \cite{chepurko2020quantum} gets their algorithms for low-rank sampling (recommendation systems) and quantum-inspired linear regression.
This does appear to be a promising approach, with possibility to extend to dequantizing all of QSVT, but to the authors' knowledge, it is still an open question how to improve the algorithms presented in this work, with the exception of \cite{chepurko2020quantum} improving over our recommendation systems algorithm.
Their use of PCPs significantly improves the runtime for this low-rank sampling task down below $\frac{\|A\|_\fr^6}{\sigma^6\eps^6}$, but getting a similarly good runtime for all functions seems nontrivial.
For example, their linear regression algorithm requires that the input matrix is strictly rank-$k$ (or is regularized).

The quantum-like closure properties of importance sampling shown here may be useful in the context of classical sketching algorithms.
This insight unlocks surprising power in importance sampling.
For example, it reveals that Frieze, Kannan, and Vempala's low-rank approximation algorithm~ \cite{frieze2004FastMonteCarloLowRankApx}, which, as stated, requires $\bigO{kmn}$ time to output the desired matrix, actually can produce useful results (samples and entries) in time independent of input dimension.
To use the language in \cite{frieze2004FastMonteCarloLowRankApx}, if Assumptions 1 and 2 hold for the input matrix, they also hold for the output matrix!

\paragraph{Classical algorithms for quantum problems.}
We are aware of two important prior results from before Tang's first paper~\cite{tang2018QuantumInspiredRecommSys} that connect quantum algorithms to randomized numerical linear algebra.
The first is Van den Nest's work on using probabilistic methods for quantum simulation \cite{vanDenNest2011SimulatingQCompwProbMeth}, which defines a notion of ``computationally tractable'' (CT) state equivalent to our notion of sampling and query access and then uses it to simulate restricted classes of quantum circuits.
We share some essential ideas with this work, such as the simple sampling lemmas \cref{lemma:sample-Mv,lemma:inner-prod}, but also differ greatly since we focus on low-rank matrices relevant for QML, whereas \cite{vanDenNest2011SimulatingQCompwProbMeth} focuses on simulating potentially large quantum circuits that correspond to high-rank matrices.
The second is a paper by Rudi, Wossnig, Ciliberto, Rocchetto, Pontil, and Severini~\cite{rudi2018nystrom} that uses the Nystr{\"o}m method to simulate a sparse Hamiltonian $H$ on a sparse input state in time poly-logarithmic in dimension and polynomial in $\|H\|_\fr$, assuming sampling and query access to $H$.
Our Hamiltonian simulation results do not require a sparsity assumption and still achieve a dimension-independent runtime, but get slightly larger exponents in exchange.

\paragraph{Practical implementation.}
A work by Arrazola, Delgado, Bardhan, and Lloyd~\cite{arrazola2019QInspiredInPractice} implements and benchmarks quantum-inspired algorithms for regression and recommendation systems.
The aforementioned paper of Chepurko, Clarkson, Horesh, Lin, and Woodruff~\cite{chepurko2020quantum} does the same for the quantum-inspired algorithms they introduce.
The former work makes various conclusions, including that the $\eps^2$ scaling in the number of rows/columns taken in our recommendation systems algorithm is inherent and that the quantum-inspired algorithms performed slower and worse than direct computation for practical datasets.
The latter work finds that their algorithms perform faster than direct algorithms, with an accompanying increase in error comparable to that of other sketching algorithms~\cite{DKW18}.
This improvement appears to come from both a better-performing implementation as well as an algorithm with better asymptotic runtime.
Nevertheless, it is difficult to draw definitive conclusions about the practicality of quantum-inspired algorithms as a whole from these experimental results.
Since quantum-inspired algorithms are a restricted, weaker form of computation than classical randomized numerical linear algebra algorithms (see the comparison made above), it seems possible that they perform worse than standard sketching algorithms, despite seemingly having exponentially improved runtime in theory.

Modern sketching algorithms use similar techniques to quantum-inspired algorithms, but are more natural to run on a classical computer and are likely to be faster.
For example, Dahiya, Konomis, and Woodruff~\cite{DKW18} conducted an empirical study of sketching algorithms for low-rank approximation on both synthetic datasets and the movielens dataset, reporting that their implementation ``finds a solution with cost at most 10 times the optimal one \ldots but does so 10 times faster.''
Sketching algorithms like those in \cite{DKW18} may become a relevant point of reference for benchmarking quantum linear algebra, when the implementation of these quantum algorithms on actual quantum hardware becomes possible.
In a sense, our work shows using asymptotic runtime bounds that in many scenarios sketching and sampling techniques give similar computational power to quantum linear algebra, which is a counterintuitive point since the former typically leads to linear runtimes and the latter leads to poly-logarithmic ones.

\paragraph{Quantum machine learning.}
Our work has major implications for the landscape of quantum machine learning.
Since we have presented many dequantized versions of QML algorithms, the question remains of what QML algorithms don't have such versions.
In other words, what algorithms still have the potential to give exponential speedups?

There are two general paradigms for employing quantum linear algebra techniques in quantum machine learning: the low-rank approach and the high-rank approach.
In both, we need to turn classical input data vectors to quantum states (via what's called an amplitude encoding), perform linear algebra operations on those vectors, and extract information about the output via sampling~\cite{aaronson2015caveat}.
Since preparing generic quantum states require a number of quantum gates that is proportional to the dimension of the vectors, we need some state preparation assumptions (like having QRAM with an appropriate data structure, etc.,~cf.~\cref{rmk:when-sq-access}) in order to achieve sublinear runtimes.
The main difference between the two paradigms is how the input matrices are given: in the low-rank approach, they are also given in QRAM, and so must be rescaled to have Frobenius norm one.
QSVT with block-encodings coming from the QRAM data structure\footnote{The QRAM data structure used here has alternatives which in some sense vary the ``norm'' in which one stores the input matrix \cite[Theorem~IV.4]{kerenidis2017QGradDesc}, \cite[Lemma~25]{chakraborty2018BlockMatrixPowers}. We do not expect that these alternatives would give sampling and query access to the matrix they store or could be otherwise dequantized, since these datastructures generalize and strengthen the sparse-access input model, which is known to be BQP-complete~\cite{harrow2009QLinSysSolver}.} or density operators is an example framework in this vein.
The restrictions of this block-encoding method means that the rank needs to be small, but assuming the hardware needed for QRAM can be realized, this is still a flexible setting, one that QML researchers find interesting.
In the high-rank approach, which is used in the HHL algorithm~\cite{harrow2009QLinSysSolver} and its derivatives, the matrix needs to be represented by a concise quantum circuit and have a small (poly-logarithmic in input dimension) condition number in order to gain an exponential speedup over classical algorithms.
This doesn't happen in typical datasets.
The collection of these demanding requirements hamstrings most attempts to find applications of HHL~\cite{harrow2009QLinSysSolver} with the potential for practical super-polynomial speedups.

Our results give evidence for the lack of exponential speedup for the former, low-rank approach.
It is important to note however, that our results do not rule out the possibility of large polynomial quantum speedups.
In order to assess the potential usefulness of QML algorithms in this regime it is important to improve on classical upper and lower bounds for these problems, which we leave as an open question.
On the other hand, high-rank block-encodings, such as those coming from sparsity assumptions in the original HHL algorithm~\cite{harrow2009QLinSysSolver}, remain impervious to our techniques.
This suggests that the most promising way to get exponential quantum speedups for QML algorithms is by assuming sparse matrices as input, or utilizing other efficiently implementable high-rank quantum operation such as the Quantum Fourier Transform.

Works from Zhao, Fitzsimons, and Fitzsimons on Gaussian process regression~\cite{zhao2015QAssisstedGaussProcRegr}; from Lloyd, Garnerone, and Zanardi on topological data analysis~\cite{lloyd2016topological,gcd20}; and from Yamasaki, Subramanian, Sonoda, and Koashi~\cite{yamasaki2020OptRandFeat} on learning random features attempt to address these issues to get a super-polynomial quantum speedup. 
Though these works avoid the dequantization barrier to large quantum speedups, it remains to be seen how broad will be their impact on QML and whether these speedups manifest for data seen in practice.

\paragraph{Related independent work.}
Independently from our work, Jethwani, Le Gall, and Singh simultaneously derived similar results~\cite{jethwani2019QInsClassAlgSVT}.
They implicitly derive a version of our even SVT result, and use it to achieve generic SVT (approximate $\sq(b^\dagger f^\svt(A))$ for a vector $b$) by writing $f^\svt(A) = A g(A^\dagger A)$ for $g(x) = f(\sqrt{x})/\sqrt{x}$ and then using sampling subroutines to get the solution from the resulting expression $b^\dagger A R^\dagger U R$.
It is difficult to directly compare the main SVT results, because the parameters that appear in their runtime bounds are somewhat non-standard, but one can see that for typical choices of~$f$, their results require a strictly low-rank~$A$.
In comparison our results apply to general~$A$, and we also demonstrate how to apply them to (re)derive dequantized algorithms.

\subsection{Open questions}
Our framework recovers recent dequantization results, and we hope that it will be used for dequantizing more quantum algorithms. In the meantime, our work leaves several natural open questions:
\begin{enumerate}[label=(\alph*)]
    \item Is there an approach to QML that does not go through HHL (whose demanding assumptions make exponential speedups difficult to demonstrate even in theory) or a low-rank assumption (which, as we demonstrate, makes the tasks ``easy'' for classical computers) and yields a provable superpolynomial speedup for a practically relevant ML problem?
    \item Our algorithms still have significant slowdown as compared to their quantum counterparts.
    Can we shave condition number factors to get runtimes of the form $\bOt{\frac{\|A\|_\fr^6}{\sigma^6\eps^6}\log^3\frac{1}{\delta}}$ (for the recommendation systems application, for instance), without introducing additional assumptions?
    Can we get even better runtimes by somehow avoiding SVD computation?
    \item Do the matrix arithmetic closure properties we showed for $\ell_2$-norm importance sampling hold for other kinds of sampling and sketching distributions, like leverage score or $\ell_p$-norm sampling?
    \item In the quantum setting, linear algebra algorithms~\cite{gilyen2018QSingValTransf} can achieve logarithmic dependence on the precision $\eps$. Can classical algorithms also achieve such exponentially improved dependence, when the goal is restricted to sampling from the output (i.e., without the requirement to query elements of the output)? If not, is there a mildly stronger classical model that can achieve this? Can one prove that this exponential advantage for sampling problems cannot be conferred to estimation/decision problems?
\end{enumerate}

\subsection{Organization}

The paper proceeds as follows.
\cref{subsec:sq-oracles} introduces the notion of (over)sampling and query access and some of its closure properties.
\cref{subsec:sketch} gives the fundamental idea of using sampling and query access to sketch matrices used for the approximation results in \cref{subsec:tool_box} and singular value transformation results in \cref{subsec:main-svt}.
These results form the framework that is used to dequantize QSVT in \cref{subsec:dequant} and recover all the quantum-inspired results in \cref{sec:applications}.
These applications of our framework contain various tricks and patterns that we consider to be ``best practice'' for coercing problems into our framework, since they have given us the best complexities and generality.
More general results of SVT are shown in \cref{sec:cur}.


\section{Preliminaries}\label{sec:prelim}
To begin with, we define notation to be used throughout this paper.
For $n \in \N$, $[n] \coloneqq \{1,\ldots,n\}$.
For $z \in \bbc$, its absolute value is $\abs{z} = \sqrt{z^*z}$, where $z^*$ is the complex conjugate of $z$.
$f \lesssim g$ denotes the ordering $f = \bigO{g}$ (and respectively for $\gtrsim$ and $\eqsim$).
$\bOt{g}$ is shorthand for $\bigO{g\poly(\log g)}$.
$\log$ refers to the natural logarithm.
Finally, we assume that arithmetic operations (e.g., addition and multiplication of real numbers) and function evaluation oracles (computing $f(x)$ from $x$) take unit time, and that queries to oracles (like the queries to input discussed in \cref{subsec:sq-oracles}) are at least unit time cost.

\subsection{Linear algebra}

In this paper, we consider complex matrices $A\in \mathbb{C}^{m\times n}$ for $m,n \in \N$.
For $i \in [m], j \in [n]$, we let $A(i,\cdot)$ denote the $i$-th row of $A$, $A(\cdot,j)$ denote the $j$-th column of $A$, and $A(i,j)$ denote the $(i,j)$-th element of $A$.
$(A \mid B)$ denotes the concatenation of matrices $A$ and $B$ and $\vvec(A) \in \mathbb{C}^{mn}$ denotes the vector formed by concatenating the rows of $A$.
For vectors $v \in \mathbb{C}^n$, $\|v\|$ denotes standard Euclidean norm (so $\|v\| \coloneqq (\sum_{i=1}^n \abs{v(i)}^2)^{1/2}$).
For a matrix $A \in \mathbb{C}^{m\times n}$, the {\em Frobenius norm} of $A$ is $\|A\|_\fr \coloneqq \nrm{\vvec(A)}= (\sum_{i=1}^m\sum_{j=1}^{n} \abs{A(i,j)}^2)^{1/2}$ and the {\em spectral norm} of $A$ is $\|A\| \coloneqq \|A\|_\op \coloneqq \sup_{x\in \bbc^{n}, \|x\|=1} \|Ax\|$.
We say that $U$ is an isometry if $\|Ux\| = \|x\|$ for all $x$, or equivalently, if $U$ is a subset of columns of a unitary.

A \emph{singular value decomposition} (SVD) of $A$ is a representation $A = UDV^{\dag}$, where for $N \coloneqq \min(m,n)$, $U\in \bbc^{m\times N}$ and $V\in \bbc^{n\times N}$ are isometries and $D\in \bbr^{N\times N}$ is diagonal with $ \sigma_i\coloneqq D(i,i)$ and $\sigma_1 \geq \sigma_2 \geq \cdots \geq \sigma_N \geq 0$.
We can also write this decomposition as $A = \sum_{i=1}^{N} \sigma_i u_i v_i^\dagger$, where $u_i \coloneqq U(\cdot,i)$ and $v_i \coloneqq V(\cdot, i)$.
For Hermitian $A$, an \emph{(unitary) eigendecomposition} of $A$ is a singular value decomposition where $U = V$, except the entries of $D$ are allowed to be negative.

Using SVD, we can define the rank-$k$ approximation of $A$ to be $A_k \coloneqq \sum_{i=1}^{k} \sigma_i u_iv_i^\dagger$ and the pseudoinverse of $A$ to be $A^+ \coloneqq \sum_{i=1}^{\rank(A)} \frac{1}{\sigma_i}  v_iu_i^\dagger$.
We now formally define singular value transformation:

\begin{definition} \label{def:svt}
For a function $f\colon[0,\infty)\to \bbc$ such that $f(0)=0$ and a matrix $A \in \bbc^{m\times n}$, we define the \emph{singular value transform} of $A$ via a singular value decomposition $A=\sum_{i = 1}^{\min(m,n)} \sigma_i u_i v_i^\dagger$:
\begin{equation}\label{eqn:svt}
  f^\svt(A)\coloneqq \sum_{i = 1}^{\min(m,n)} f(\sigma_i) u_i v_i^\dagger.
\end{equation}
\end{definition}

\noindent The requirement that $f(0)=0$ ensures that the definition is independent of the (not necessarily unique) choice of SVD.

\begin{definition}
For a function $f: \bbr \to \bbc$ and a Hermitian matrix $A \in \bbc^{n\times n}$, we define the eigenvalue transform of $A$ via a \emph{unitary eigendecomposition} $A = \sum_{i=1}^n \lambda_i v_iv_i^\dagger$:
\begin{equation}\label{eqn:evt}
  f^\evt(A)\coloneqq \sum_{i=1}^n f(\lambda_i) v_iv_i^\dagger.
\end{equation}
\end{definition}
Since we only consider eigenvalue transformations of Hermitian matrices, where singular vectors/values and eigenvectors/values (roughly) coincide, the key difference between singular value transformation and eigenvalue transformation is that the latter can distinguish eigenvalue sign.
As eigenvalue transformation is the standard notion of a matrix function, we will usually drop the superscript in notation: $f(A) \coloneqq f^\evt(A)$.

We will use the following standard definition of a Lipschitz function.
\begin{definition}
  We say $f\colon \bbr \to \bbc$ is \emph{$L$-Lipschitz} on $\mathfrak{F}\subseteq\bbr$ if for all $x, y \in \mathfrak{F}$, $\abs{f(x) - f(y)} \leq L\abs{x - y}$.
\end{definition}
\noindent We define approximate isometry as follows:\footnote{This is the notion of approximate orthonormality as given by the first arXiv version of \cite{tang2018QuantumInspiredRecommSys}.}
\begin{definition}
  \label{defn:approx-isometry}
  Let $m,n\in \mathbb{N}$ and $m\geq n$. A matrix $V \in \bbc^{m \times n}$ is an \emph{$\alpha$-approximate isometry} if $\norm{V^{\dag}V - I} \leq \alpha$.
  It is an \emph{$\alpha$-approximate projective isometry} if $\|V^\dagger V - \Pi\| \leq \alpha$ for $\Pi$ an orthogonal projector.
\end{definition}

If $V$ is an $\alpha$-approximate isometry, among other things, it implies that $\abs{\|V\|^2 - 1} \leq \alpha$ and that there exists an isometry $U \in \bbc^{m \times n}$ with $\operatorname{im}(U) = \operatorname{im}(V)$ such that $\norm{U - V} \leq \alpha$.
We show this and other basic facts in the following lemma, whose proof is deferred to \cref{apx:proofs}.

\begin{restatable}{lemma}{apporthfacts} \label{lem:apporth-facts}
  If $\hat{X} \in \bbc^{m\times n}$ is an $\alpha$-approximate isometry, then there is an exact isometry $X \in \bbc^{m\times n}$ with the same columnspace as $\hat{X}$ such that $\|\hat{X} - X\| \leq \alpha$.
  Furthermore, for any matrix $Y \in \bbc^{n\times n}$,
  \begin{align*}
    \|\hat{X}Y\hat{X}^\dagger - XYX^\dagger\| \leq (2\alpha + \alpha^2)\|Y\|.
  \end{align*}
  If $\alpha < 1$, then $\|\hat{X}^+\| \leq (1-\alpha)^{-1}$ and
  \begin{align*}
    \|\hat{X}Y\hat{X}^\dagger - XYX^\dagger\| \leq \alpha\frac{2 - \alpha}{(1-\alpha)^2}\|\hat{X}Y\hat{X}^\dagger\|.
  \end{align*}
\end{restatable}


\section{Sampling and query access oracles} \label{subsec:sq-oracles}

Since we want our algorithms to run in time sublinear in input size, we must carefully define our access model.
The sampling and query oracle we present below is unconventional, being designed as a reasonable classical analogue for the input model of some quantum algorithms.
It will also be used heavily to move between intermediate steps of these quantum-inspired algorithms.
First, as a warmup, we define a simple query oracle:

\begin{definition}[Query access]
  \label{defn:q-access}
  For a vector $v \in \bbc^n$, we have $\q(v)$, \emph{query access} to $v$, if for all $i \in [n]$, we can query for $v(i)$.
  Likewise, for a matrix $A \in \bbc^{m\times n}$, we have $\q(A)$ if for all $(i,j) \in [m] \times [n]$, we can query for $A(i, j)$.
  Let $\qcb(v)$ (respectively $\qcb(A)$) denote the (time) cost of such a query.
\end{definition}

For example, in the typical RAM access model, we are given our input $v \in \bbc^n$ as $\q(v)$ with $\qcb(v)=1$.
For brevity, we will sometimes abuse this notation (and other access notations) and, for example, abbreviate ``$\q(A)$ for $A \in \bbc^{m\times n}$'' as ``$\q(A) \in \bbc^{m\times n}$''.
We will also sometimes abuse complexity notation like $\qcb$ to refer to known bounds on the complexity, instead of the complexity itself.

\begin{definition}[Sampling and query access to a vector]
  \label{defn:sq-access}
  For a vector $v \in \bbc^n$, we have $\sq(v)$, \emph{sampling and query access} to $v$, if we can:
  \begin{enumerate}
    \item query for entries of $v$ as in $\q(v)$;
    \item obtain independent samples $i \in [n]$ following the distribution $\mathcal{D}_v \in \bbr^n$, where $\mathcal{D}_v(i) \coloneqq |v(i)|^2/\norm{v}^2$;
    \item query for $\|v\|$.
  \end{enumerate}
  Let $\qcb(v)$, $\scb(v)$, and $\ncb(v)$ denote the cost of querying entries, sampling indices, and querying the norm respectively.
  Further define $\sqcb(v)\coloneqq \max(\qcb(v), \scb(v), \ncb(v))$.
\end{definition}

We will refer to these samples as \emph{importance samples from $v$}, though one can view them as measurements of the quantum state $|v\rangle \coloneqq \frac{1}{\|v\|}\sum v_i|i\rangle$ in the computational basis.

Quantum-inspired algorithms typically don't give exact sampling and query access to the output vector.
Instead, we get a more general version of sampling and query access, which assumes we can only access a sampling distribution that \emph{oversamples} the correct distribution.\footnote{Oversampling turns out to be the ``natural'' form of approximation in this setting; other forms of error do not propagate through quantum-inspired algorithms well.}

\begin{definition}
For $p, q \in \bbr_{\geq 0}^n$ that are distributions, meaning $\sum_i p(i) = \sum_i q(i) = 1$, we say that $p$ \emph{$\phi$-oversamples} $q$ if, for all $i \in [n]$, $p(i) \geq q(i) / \phi$.
\end{definition}

The motivation for this definition is the following: if $p$ $\phi$-oversamples $q$, then we can convert a sample from $p$ to a sample from $q$ with probability $1/\phi$ using rejection sampling: sample an $i$ distributed as $p$, then accept the sample with probability $q(i)/(\phi p(i))$ (which is $\leq 1$ by definition).

\begin{definition}[Oversampling and query access]
  \label{defn:phi-sq-access}
  For $v \in \bbc^n$ and $\phi \geq 1$, we have $\sq_\phi(v)$, \emph{$\phi$-oversampling and query access} to $v$, if we have $\q(v)$ and $\sq(\tilde{v})$ for $\tilde{v} \in \bbc^n$ a vector satisfying $\|\tilde{v}\|^2 = \phi\|v\|^2$ and $\abs{\tilde{v}(i)}^2 \geq \abs{v(i)}^2$ for all $i \in [n]$.
  Denote $\scb_\phi(v) \coloneqq \scb(\tilde{v})$, $\pcb_\phi(v) \coloneqq \qcb(\tilde{v})$, $\ncb_\phi(v) \coloneqq \ncb(\tilde{v})$, and $\sqcb_\phi(v) \coloneqq \max(\scb_\phi(v), \pcb_\phi(v), \qcb(v), \ncb_\phi(v))$.
\end{definition}

\noindent The distribution $\mathcal{D}_{\tilde{v}}$ $\phi$-oversamples $\mathcal{D}_{v}$, since for all $i \in [n]$,
\begin{align*}
  \mathcal{D}_{\tilde{v}}(i) = \frac{\abs{\tilde{v}_i}^2}{\|\tilde{v}\|^2} = \frac{\abs{\tilde{v}_i}^2}{\phi\|v\|^2} \geq \frac{\abs{v_i}^2}{\phi\|v\|^2} = \frac1\phi \mathcal{D}_v(i).
\end{align*}
For this reason, we call $\mathcal{D}_{\tilde{v}}$ a \emph{$\phi$-oversampled importance sampling distribution} of $v$.
$\sq(v)$ is the same as $\sq_1(v)$, by taking $\tilde{v} = v$.
Note that we do not assume knowledge of $\phi$ (though it can be estimated, (though it can be estimated as shown in \cref{lem:b-sq-approx}).
However, we do need to know $\|\tilde{v}\|$ (even if $\nrm{v}$ is known), as it cannot be deduced from a small number of queries, samples, or probability computations.
So, we will be choosing $\tilde{v}$ (and, correspondingly, $\phi$) such that $\|\tilde{v}\|^2$ remains computable, even if potentially some $c\tilde{v}$ satisfies all our other requirements for some $c < 1$ (giving a smaller value of $\phi$).

Intuitively speaking, estimators that use $\mathcal{D}_v$ can also use $\mathcal{D}_{\tilde{v}}$ via rejection sampling at the expense of a factor $\phi$ increase in the number of utilized samples.
From this observation we can prove that oversampling access implies an approximate version of the usual sampling access:

\begin{restatable}{lemma}{oversampling} \label{lem:b-sq-approx}
  Suppose we are given $\sq_\phi(v)$ and some $\delta \in (0,1]$.
  Denote $\sqrun(v) \coloneqq \phi\sqcb_\phi(v)\log\frac{1}{\delta}$.
  We can sample from $\mathcal{D}_v$ with probability $\geq 1-\delta$ in $\bigO{\sqrun(v)}$ time.
  We can also estimate $\|v\|$ to $\nu$ multiplicative error for $\nu \in (0,1]$ with probability $\geq 1-\delta$ in $\bigO{\frac{1}{\nu^2}\sqrun(v)}$ time.
\end{restatable}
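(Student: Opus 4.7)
The plan is to implement both claims via rejection sampling on the oversampling distribution $p := \mathcal{D}_{\tilde v}$. Since $|\tilde v(i)|^2 \geq |v(i)|^2$ for every $i$, the acceptance probability $r(i) := |v(i)|^2/|\tilde v(i)|^2$ is well-defined and lies in $[0,1]$, and a single rejection-sampling trial draws $i \sim p$, queries $\tilde v(i)$ (cost $\pcb_\phi(v)$), queries $v(i)$ (cost $\qcb(v)$), then accepts with probability $r(i)$. Writing everything out, the per-trial acceptance probability is
\[
\sum_{i=1}^n p(i)\, r(i) \;=\; \sum_{i=1}^n \frac{|\tilde v(i)|^2}{\|\tilde v\|^2}\cdot\frac{|v(i)|^2}{|\tilde v(i)|^2} \;=\; \frac{\|v\|^2}{\|\tilde v\|^2} \;=\; \frac{1}{\phi},
\]
and conditional on acceptance the returned index is distributed exactly as $\mathcal{D}_v$. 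This single calculation simultaneously underpins both parts of the lemma.

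For the first (sampling) claim I would run independent trials until one accepts, capping at $T_1 := \lceil \phi \ln(1/\delta)\rceil$ trials. Since the number of trials to success is geometric with mean $\phi$, the probability that all $T_1$ trials fail is at most $(1-1/\phi)^{T_1} \leq e^{-T_1/\phi} \leq \delta$. Each trial costs $O(\sqcb_\phi(v))$, so the total runtime is $O(\phi\,\sqcb_\phi(v)\log(1/\delta)) = O(\sqrun(v))$.

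For the second (norm-estimation) claim I would fix a number of trials $T_2 := \Theta(\phi \log(1/\delta)/\nu^2)$, let $S$ be the number of accepted trials, and output $\widehat{\|v\|^2} := (S/T_2)\cdot\|\tilde v\|^2$ (using the query for $\|\tilde v\|$ permitted by $\ncb_\phi(v)$), then take a square root. Since $\mathbb{E}[S] = T_2/\phi$, a multiplicative Chernoff bound gives $\Pr[|S/T_2 - 1/\phi| > (\nu/3)/\phi] \leq 2\exp(-\Omega(\nu^2 T_2/\phi)) \leq \delta$ for an appropriate constant, yielding a $(\nu/3)$-multiplicative estimate of $\|v\|^2$ and hence a $\nu$-multiplicative estimate of $\|v\|$ (absorbing the square root via $\sqrt{1+x} \leq 1 + x$ for small $x$, and handling $\nu$ close to $1$ by a constant rescaling). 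The runtime is $T_2 \cdot O(\sqcb_\phi(v)) = O(\sqrun(v)/\nu^2)$.

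There is no serious obstacle; the only point requiring a little care is that we genuinely use every component of $\sqcb_\phi(v)$ (one sample from $\tilde v$, one query of $\tilde v$ to compute $p(i)$, one query of $v$ to compute $r(i)$, and, for the second part, one query of $\|\tilde v\|$). I would also note that $\phi$ is a known parameter of the access (it appears in the subscript of $\sq_\phi$), so the trial counts $T_1$ and $T_2$ can indeed be fixed in advance without additional estimation overhead.
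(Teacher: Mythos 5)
Your proof is correct and follows essentially the same route as the paper's: rejection sampling with acceptance probability $r(i)=|v(i)|^2/|\tilde v(i)|^2$, the per-trial acceptance rate $1/\phi$, a geometric tail bound for the sampling claim, and a Chernoff bound on the acceptance fraction multiplied by $\|\tilde v\|^2$ for the norm claim. You are in fact marginally more careful than the paper at the final step, explicitly converting the multiplicative estimate of $\|v\|^2$ into one of $\|v\|$ via the square root, whereas the paper's proof stops at $\|v\|^2$.
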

\begin{proof}
  Consider the following rejection sampling algorithm to generate samples: sample an index $i$ from $\tilde{v}$, and output it as the desired sample with probability $r(i) \coloneqq \frac{\abs{v(i)}^2}{\abs{\tilde{v}(i)}^2}$.
  Otherwise, restart.
  We can perform this: we can compute $r(i)$ in $\bigO{\sqcb_\phi(v)}$ time and $r(i) \leq 1$ since $\tilde{v}$ bounds $v$.

  The probability of accepting a sample in a round is $\sum_i \mathcal{D}_{\tilde{v}}(i)r(i) = \|v\|^2/\|\tilde{v}\|^2 = \phi^{-1}$ and, conditioned on a sample being accepted, the probability of it being $i$ is $\abs{v(i)}^2/\|v\|^2$, so the output distribution is $\mathcal{D}_v$ as desired.
  So, to get a sample with $\geq 1-\delta$ probability, run rejection sampling for at most $2\phi\log\frac{1}{\delta}$ rounds.

  To estimate $\|v\|^2$, notice that we know $\|\tilde{v}\|^2$, so it suffices to estimate $\|v\|^2/\|\tilde{v}\|^2$ which is $\phi^{-1}$.
  The probability of accepting the rejection sampling routine is $\phi^{-1}$, so we run $3\nu^{-2}\phi\log\frac2\delta$ rounds of it for estimating $ \phi^{-1}$.
  Let $Z$ denote the fraction of them which end in acceptance.
  Then, by a Chernoff bound we have
  \begin{align*}
    \Pr[\abs{Z - \phi^{-1}} \geq \nu\phi^{-1}] \leq 2\exp\Big(-\frac{\nu^2z\phi^{-1}}{2+\nu} \Big) \leq \delta,
  \end{align*}
  so $Z\|\tilde{v}\|^2$ is a good multiplicative approximation to $\|v\|^2$ with probability $\geq 1-\delta$.
\end{proof}

Generally, compared to a quantum algorithm that can output (and measure) a desired vector $|v\rangle$, our algorithms will output $\sq_\phi(u)$ such that $\|u - v\|$ is small.
So, $\sqrun(u)$ is the relevant complexity measure that we will analyze and bound: if we wish to mimic samples from the output of the quantum algorithm we dequantize, we will pay a one-time cost to run our quantum-inspired algorithm for ``obtaining'' $\sq_\phi(u)$, and then pay $\sqrun(u)$ cost per additional measurement.
As for error, bounds on $\|u - v\|$ imply that measurements from $u$ and $v$ follow distributions that are close in total variation distance {\cite[Lemma~4.1]{tang2018QuantumInspiredRecommSys}}.
Now, we show that oversampling and query access of vectors is closed under taking small linear combinations.

\begin{restatable}[Linear combinations, Proposition~4.3 of \cite{tang2018QuantumInspiredRecommSys}]{lemma}{thinmatvec} \label{lemma:sample-Mv}
  Given $\sq_{\varphi_t}(v_t) \in \bbc^n$ and $\lambda_t \in \bbc$ for all $t \in [\tau]$, we have $\sq_\phi(\sum_{t=1}^{\tau} \lambda_tv_t)$ for $\phi = \tau\frac{\sum \varphi_t\|\lambda_tv_t\|^2}{\|\sum \lambda_tv_t\|^2}$ and $\sqcb_\phi(\sum \lambda_tv_t) = {\displaystyle\max_{t \in [\tau]}}\scb_{\varphi_t}(v_t) +  \sum_{t=1}^\tau \qcb(v_t)$ (after paying $\bigO{\sum_{t=1}^\tau \ncb_{\varphi_t}(v_t)}$ one-time pre-processing cost to query for norms).
\end{restatable}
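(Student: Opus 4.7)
The plan is to explicitly construct a vector $\tilde{w}$ that serves as an entry-wise upper bound for $w:=\sum_{i=1}^k \lambda_i v_i$ and for which we can simulate $\sq(\tilde{w})$ from the given $\sq_{\varphi_i}(v_i)$ oracles. Unpacking \cref{defn:phi-sq-access}, to establish $\sq_\phi(w)$ it suffices to (i) provide $\q(w)$, and (ii) exhibit $\tilde{w}$ with $|\tilde{w}(i)|\geq |w(i)|$ together with $\sq(\tilde{w})$, where $\phi=\|\tilde{w}\|^2/\|w\|^2$. Each $\sq_{\varphi_i}(v_i)$ gives us an auxiliary vector $\tilde{v}_i$ with $|\tilde{v}_i(j)|\geq |v_i(j)|$ and $\|\tilde{v}_i\|^2=\varphi_i\|v_i\|^2$, together with $\sq(\tilde{v}_i)$.

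The natural candidate is
\[
  \tilde{w}(j) \;:=\; \sqrt{k\sum_{i=1}^k |\lambda_i|^2\,|\tilde{v}_i(j)|^2}.
\]
The entry-wise bound then follows immediately from the Cauchy--Schwarz inequality $\bigl|\sum_{i=1}^k a_i\bigr|^2 \leq k\sum_{i=1}^k |a_i|^2$ applied with $a_i = \lambda_i v_i(j)$, combined with $|v_i(j)|\leq |\tilde{v}_i(j)|$. Summing over $j$ gives
\[
  \|\tilde{w}\|^2 \;=\; k\sum_{i=1}^k |\lambda_i|^2\,\|\tilde{v}_i\|^2 \;=\; k\sum_{i=1}^k \varphi_i\,\|\lambda_i v_i\|^2,
\]
which is exactly the numerator in the claimed $\phi$, so $\phi=\|\tilde{w}\|^2/\|w\|^2$ matches the statement.

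Next I would explain how to implement the three oracles of $\sq(\tilde{w})$ and the query oracle for $w$. Query access to $w(j)$ is by evaluating $\sum_i \lambda_i v_i(j)$ from the $k$ entry queries, at cost $\sum_i \qcb(v_i)$. For the norm of $\tilde{w}$, observe that once we have paid a one-time $\bigO{\sum_i \ncb_{\varphi_i}(v_i)}$ cost to retrieve $\|\tilde{v}_i\|$ for each $i$, the expression above for $\|\tilde{w}\|^2$ is just a scalar we store. For sampling from $\mathcal{D}_{\tilde{w}}$, note that
\[
  \mathcal{D}_{\tilde{w}}(j) \;=\; \sum_{i=1}^k \frac{|\lambda_i|^2\,\|\tilde{v}_i\|^2}{\|\tilde{w}\|^2/k}\cdot \mathcal{D}_{\tilde{v}_i}(j)\Big/k \;=\; \sum_{i=1}^k q_i\,\mathcal{D}_{\tilde{v}_i}(j),
\]
i.e., $\mathcal{D}_{\tilde{w}}$ is the mixture of the distributions $\mathcal{D}_{\tilde{v}_i}$ with weights $q_i \propto |\lambda_i|^2\|\tilde{v}_i\|^2$. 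After the preprocessing above, the $q_i$ are known, so we sample $i\sim q$ (cost $\bigO{k}$, absorbed) and then draw from $\mathcal{D}_{\tilde{v}_i}$ at cost $\scb_{\varphi_i}(v_i)\leq \max_i \scb_{\varphi_i}(v_i)$.

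There is no real obstacle here; the only subtle point is the factor of $k$, which is an artifact of using Cauchy--Schwarz to guarantee the entry-wise bound (rather than, say, the triangle inequality on absolute values, which would give $|\tilde{w}(j)|=\sum_i |\lambda_i|\,|\tilde{v}_i(j)|$ and the same $\phi$ up to the same $k$ factor, but would not be as clean to sample from). This $k$ loss is unavoidable if one wants a single mixture-based sampler whose norm can be read off directly, and it matches the $k$ appearing in the lemma's expression for $\phi$. Collecting the costs yields exactly the claimed bound $\sqcb_\phi(w) = \max_i \scb_{\varphi_i}(v_i) + \sum_i \qcb(v_i)$, with the norm queries absorbed into the stated one-time preprocessing.
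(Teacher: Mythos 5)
Your proof is correct and takes essentially the same approach as the paper: the same entry-wise bound $\tilde{w}(j) := \sqrt{k\sum_i |\lambda_i \tilde{v}_i(j)|^2}$, the same Cauchy--Schwarz justification, the same mixture-distribution sampler, and the same accounting of preprocessing and per-query costs. (One small slip: in the display for $\mathcal{D}_{\tilde{w}}(j)$ the trailing $/k$ should not be there, since $q_i = \frac{|\lambda_i|^2\|\tilde v_i\|^2}{\|\tilde w\|^2/k}$ already sums to $1$; the conclusion $q_i \propto |\lambda_i|^2\|\tilde v_i\|^2$ is of course correct.)
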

\begin{proof}
  Denote $u \coloneqq \sum \lambda_tv_t$.
  To compute $u(s)$ for some $s \in [n]$, we just need to query $v_t(s)$ for all $t \in [\tau]$, paying $\bigO{\sum \qcb(v_t)}$ cost.
  So, it suffices to get $\sq(\tilde{u})$ for an appropriate bound $\tilde{u}$.
  We choose
  \[
    \textstyle\tilde{u}(s) = \sqrt{\tau\sum_{t=1}^\tau |\lambda_t\tilde{v}_t(s)|^2},
  \]
  so that $|\tilde{u}(s)| \geq |u(s)|$ by Cauchy--Schwarz, and $\|\tilde{u}\|^2 = \tau\sum_{t=1}^\tau \|\lambda_t\tilde{v}_t\|^2 = \tau\sum_{t=1}^\tau \varphi_t\|\lambda_tv_t\|^2$, giving the desired value of $\phi$.

  We have $\sq(\tilde{u})$: we can compute $\|\tilde{u}\|^2$ by querying for all norms $\|\tilde{v}_t\|$, compute $\tilde{u}(s)$ by querying $\tilde{v}_t(s)$ for all $t \in [\tau]$.
  We can sample from $\tilde{u}$ by first sampling $t \in [\tau]$ with probability $\frac{\|\lambda_t\tilde{v}_t\|^2}{\sum_\ell \|\lambda_\ell \tilde{v}_\ell\|^2}$, and then taking our sample to be $j \in [n]$ from $\tilde{v}_t$.
  The probability of sampling $j \in [n]$ is correct:
  \begin{equation*}
    \sum_{t=1}^\tau \frac{\|\lambda_t\tilde{v}_t\|^2}{\sum_\ell \|\lambda_\ell \tilde{v}_\ell\|^2} \frac{\abs{\tilde{v}_t(j)}^2}{\|\tilde{v}_t\|^2}
    = \frac{\sum_{t=1}^\tau \abs{\lambda_t\tilde{v}_t(j)}^2}{\sum_{\ell=1}^\tau \|\lambda_\ell \tilde{v}_\ell\|^2}
    = \frac{|\tilde{u}(j)|^2}{\|\tilde{u}\|^2}.
  \end{equation*}
  If we pre-process by querying all the norms $\|\tilde{v}_\ell\|$ in advance, we can sample from the distribution over $i$'s in $\bigO{1}$ time, using an alias sampling data structure for the distribution (\cref{rmk:when-sq-access}), and we can sample from $\tilde{v}_t$ using our assumed access to it, $\sq_{\varphi_t}(v_t)$.
\end{proof}

So, our general goal will be to express our output vector as a linear combination of a small number of input vectors that we have sampling and query access to.
Then, we can get an approximate $\sq$ access to our output using \cref{lem:b-sq-approx}, where we pay an additional ``cancellation constant'' factor of $\phi = \tau\frac{\sum \varphi_t\|\lambda_tv_t\|^2}{\|\sum \lambda_tv_t\|^2}$.
This factor is only large when the linear combination has significantly smaller norm than the components $v_t$ in the sum suggest.
Usually, in our applications, we can intuitively think about this overhead being small when the desired output vector mostly lies in a subspace spanned by singular vectors with large singular values in our low-rank input.
Quantum algorithms also have the same kind of overhead.
Namely, the QSVT framework encodes this in the subnormalization constant $\alpha$ of block-encodings, and the overhead from the subnormalization appears during post-selection~\cite{gilyen2018QSingValTransf}.
When this cancellation is not too large, the resulting overhead typically does not affect too badly the runtime of our applications.

We also define oversampling and query access for a matrix.
The same model (under an alternative definition) is also discussed in prior work \cite{frieze2004FastMonteCarloLowRankApx,DKR02} and is the right notion for the sampling procedures we will use.
\begin{definition}[Oversampling and query access to a matrix]
  \label{defn:sampling-A}
For a matrix $A \in \bbc^{m\times n}$, we have $\sq(A)$ if we have $\sq(A(i,\cdot))$ for all $i \in [m]$ and $\sq(a)$ for $a \in \bbr^m$ the vector of row norms ($a(i)\!\coloneqq \!\|A(i,\cdot)\|$).

We have $\sq_\phi(A)$ if we have $\q(A)$ and $\sq(\tilde{A})$ for $\tilde{A} \in \bbc^{m\times n}$ satisfying $\|\tilde{A}\|_\fr^2 = \phi\|A\|_\fr^2$ and $\abs{\tilde{A}(i,j)}^2 \geq \abs{A(i,j)}^2$ for all $(i,j) \in [m]\times[n]$.

The complexity of (over)sampling and querying from the matrix $A$ is denoted by $\scb_\phi(A) \coloneqq \max(\scb(\tilde{A}(i,\cdot)), \scb(\tilde{a}))$, $\pcb_\phi(A) \coloneqq \max(\pcb(\tilde{A}(i,\cdot)),\pcb(\tilde{a}))$, $\qcb(A) \coloneqq \max(\pcb(A(i,\cdot)))$, and $\ncb_\phi(A)\coloneqq \ncb(\tilde{a})$ respectively.
We also denote $\sqcb_\phi(A) \coloneqq \max(\scb_\phi(A), \pcb_\phi(A),\qcb(A), \ncb_\phi(A))$.
We omit subscripts if $\phi = 1$.
\end{definition}
Observe that access to a matrix, $\sq_\phi(A)$, implies access to its vectorized version, $\sq_\phi(\vvec(A))$: we can take $\widetilde{\vvec(A)} = \vvec(\tilde{A})$, and the distribution for $\vvec(\tilde{A})$ is sampled by sampling $i$ from $\mathcal{D}_{\tilde{a}}$, and then sampling $j$ from $\mathcal{D}_{\tilde{A}(i,\cdot)}$.
This gives the output $(i,j)$ with probability $\abs{\tilde{A}(i,j)}^2/\|\tilde{A}\|_\fr^2$.
Therefore, one can think of $\sq_\phi(A)$ as $\sq_\phi(\vvec(A))$, with the addition of having access to samples $(i,j)$ from $\vvec(A)$, conditioned on fixing a particular row $i$ and also knowing the probabilities of these conditional samples.

Now we prove that oversampling and query access is closed under taking outer products.
The same idea also extends to taking Kronecker products of matrices.

\begin{restatable}{lemma}{outersampling} \label{lem:outersampling}
  Given vectors $\sq_{\varphi_u}(u)\in\bbC^m$ and $\sq_{\varphi_v}(v)\in\bbC^n$, we have $\sq_\phi(A)$ for their outer product $A \coloneqq u v^\dagger$ with $\phi = \varphi_u \varphi_v$ and $\scb_\phi(A) = \scb_{\varphi_u}(u)+ \scb_{\varphi_v}(v)$, $\pcb_\phi(A) = \pcb_{\varphi_u}(u)+ \pcb_{\varphi_v}(v)$, $\qcb(A) = \qcb(u)+ \qcb(v)$, and $\ncb_\phi(A) = \ncb_{\varphi_u}(u)+ \ncb_{\varphi_v}(v)$,
\end{restatable}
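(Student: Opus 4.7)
The plan is to exhibit the natural candidate for the oversampling vector, namely $\tilde{A} := \tilde{u}\tilde{v}^\dagger \in \bbc^{m\times n}$, where $\tilde{u}$ and $\tilde{v}$ are the vectors underlying the given oracles $\sq_{\varphi_u}(u)$ and $\sq_{\varphi_v}(v)$ (as in \cref{defn:phi-sq-access}). Since the definition of $\sq_\phi(A)$ only requires $\q(A)$ together with $\sq(\tilde{A})$ for some $\tilde{A}$ that entrywise dominates $A$ and has norm squared equal to $\phi\|A\|_\fr^2$, the whole proof reduces to (i) verifying these two structural properties of $\tilde{A}$, and (ii) implementing the four elementary oracles ($\q(A)$, row queries, row-norm queries, and the associated sampling) by composing the oracles for $u,v,\tilde{u},\tilde{v}$.

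For (i), the Frobenius norm computes as $\|\tilde{A}\|_\fr^2 = \|\tilde{u}\|^2\|\tilde{v}\|^2 = \varphi_u\varphi_v\|u\|^2\|v\|^2 = \phi\|A\|_\fr^2$, and the entrywise bound $|\tilde{A}(i,j)|^2 = |\tilde{u}(i)|^2|\tilde{v}(j)|^2 \geq |u(i)|^2|v(j)|^2 = |A(i,j)|^2$ is immediate from the corresponding bounds for $\tilde{u}$ and $\tilde{v}$.

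For (ii), I would handle the oracles one at a time:
\begin{itemize}
\item \emph{Entry queries to $A$.} $A(i,j) = u(i)\,\overline{v(j)}$, implementable by one query each to $u$ and $v$.
\item \emph{Row access $\sq(\tilde{A}(i,\cdot))$.} The $i$-th row of $\tilde{A}$ equals $\tilde{u}(i)\tilde{v}^\dagger$, so its induced sampling distribution is exactly $\mathcal{D}_{\tilde{v}}$; thus samples come from the $\sq_{\varphi_v}(v)$ oracle. Entry queries reduce to one query each on $\tilde{u}$ and $\tilde{v}$, and the row norm $|\tilde{u}(i)|\cdot\|\tilde{v}\|$ needs one entry query on $\tilde{u}$ plus the precomputed $\|\tilde{v}\|$.
\item \emph{Row-norm vector $\tilde{a}$.} We have $\tilde{a}(i) = |\tilde{u}(i)|\cdot\|\tilde{v}\|$, so the induced distribution is $\mathcal{D}_{\tilde{u}}$, obtainable from $\sq_{\varphi_u}(u)$. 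Entries and the norm $\|\tilde{a}\| = \|\tilde{u}\|\|\tilde{v}\|$ reduce similarly to one operation on each side.
\end{itemize}
Tallying the unit costs of the above calls yields precisely the claimed $\scb_\phi(A) = \scb_{\varphi_u}(u) + \scb_{\varphi_v}(v)$, $\pcb_\phi(A) = \pcb_{\varphi_u}(u) + \pcb_{\varphi_v}(v)$, $\qcb(A) = \qcb(u) + \qcb(v)$, and $\ncb_\phi(A) = \ncb_{\varphi_u}(u) + \ncb_{\varphi_v}(v)$ (the sums being loose over the actual maxima used in \cref{defn:sampling-A}, but certainly upper bounds).

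There is no real obstacle here — every step is a direct unpacking of the definitions — so the only thing to be careful about is matching conventions (e.g.\ that $\scb_\phi$ takes the max over the two row-type oracles while the total reported cost can be stated as a sum) and that sampling from $\vvec(\tilde{A})$ decomposes into sampling a row index from $\mathcal{D}_{\tilde{u}}$ followed by a column index from $\mathcal{D}_{\tilde{v}}$, which is implicitly used when one later invokes $\sq_\phi(A)$ as $\sq_\phi(\vvec(A))$.
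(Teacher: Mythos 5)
Your proof is correct and follows the same approach as the paper's: take $\tilde A := \tilde u\tilde v^\dagger$, verify the entrywise dominance and Frobenius-norm identity, and implement each oracle for $\tilde A$ by composing the oracles for $\tilde u,\tilde v,u,v$. The only slight difference is cosmetic — you correctly write the row-norm vector as $\tilde a(i) = |\tilde u(i)|\,\|\tilde v\|$, whereas the paper's displayed formula $\tilde a = \|\tilde v\|^2\tilde u$ has a small typo (the exponent should be $1$), but since both give $\mathcal D_{\tilde a} = \mathcal D_{\tilde u}$ the sampling argument is unaffected.
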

\begin{proof}
  We can query an entry $A(i,j) = u(i)v(j)^\dagger$ by querying once from $u$ and $v$.
  Our choice of upper bound is $\tilde{A} = \tilde{u}\tilde{v}^\dagger$.
  Clearly, this is an upper bound on $uv^\dagger$ and $\|\tilde{A}\|_\fr^2 = \|\tilde{u}\|^2\|\tilde{v}\|^2 = \varphi_u\varphi_v\|A\|_\fr^2$.
  We have $\sq(\tilde{A})$ in the following manner: $\tilde{A}(i,\cdot) = \tilde{u}(i)\tilde{v}^\dagger$, so we have $\sq(\tilde{A}(i,\cdot))$ from $\sq(\tilde{v})$ after querying for $\tilde{u}(i)$, and $\tilde{a} = \|\tilde{v}\|^2\tilde{u}$, so we have $\sq(\tilde{a})$ from $\sq(\tilde{u})$ after querying for $\|\tilde{v}\|$.
\end{proof}

Using the same ideas as in \cref{lemma:sample-Mv}, we can extend sampling and query access of input matrices to linear combinations of those matrices.
\begin{restatable}{lemma}{wosampling} \label{lem:weighted-oversampling}
  Given $\sq_{\varphi^{(t)}}(A^{(t)}) \in \bbc^{m\times n}$ and $\lambda_t \in \bbc$ for all $t \in [\tau]$, we have $\sq_\phi(A) \in \bbc^{m\times n}$ for $A \coloneqq \sum_{t=1}^\tau \lambda_t A^{(t)}$ with $\phi = \tau\frac{\sum_{t=1}^\tau {\varphi^{(t)}}\|\lambda_t A^{(t)}\|_\fr^2}{\|A\|_\fr^2}$ and $\scb_\phi(A) = {\displaystyle\max_{t \in [\tau]}}\scb_{\varphi^{(t)}}(A^{(t)})+\sum_{t=1}^\tau \pcb_{\varphi^{(t)}}(A^{(t)})$, $\pcb_\phi(A) = \sum_{t=1}^\tau \pcb_{\varphi^{(t)}}(A^{(t)})$, $\qcb(A) = \sum_{t=1}^\tau \qcb(A^{(t)})$, and $\ncb_\phi(A)=1$ (after paying $\bigO{\sum_{t=1}^\tau \ncb_{\varphi^{(t)}}(A^{(t)})}$ one-time pre-processing cost).
\end{restatable}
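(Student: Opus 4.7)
The plan is to mirror the construction in \cref{lemma:sample-Mv}, but lifted to the matrix setting via \cref{defn:sampling-A}. Specifically, I will explicitly construct the auxiliary oversampling matrix $\tilde{A}$ and then verify that it provides the required sampling and query access at the claimed cost.

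\textbf{Construction.} Define $\tilde{A} \in \bbc^{m\times n}$ entrywise by
\[
\tilde{A}(i,j)^2 \,:=\, \tau \sum_{t=1}^{\tau} \abs{\lambda_t}^2 \, \abs{\tilde{A}^{(t)}(i,j)}^2,
\]
where $\tilde{A}^{(t)}$ is the oversampling matrix underlying $\sq_{\varphi^{(t)}}(A^{(t)})$. By Cauchy--Schwarz,
\[
\abs{A(i,j)}^2 \,=\, \Bigl\lvert\sum_{t=1}^{\tau}\lambda_t A^{(t)}(i,j)\Bigr\rvert^2 \,\leq\, \tau \sum_{t=1}^{\tau}\abs{\lambda_t A^{(t)}(i,j)}^2 \,\leq\, \abs{\tilde{A}(i,j)}^2,
\]
so the entrywise domination property of \cref{defn:sampling-A} holds. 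Summing the defining identity over all $(i,j)$ yields $\|\tilde{A}\|_\fr^2 = \tau\sum_t \varphi^{(t)}\|\lambda_t A^{(t)}\|_\fr^2 = \phi\|A\|_\fr^2$, matching the stated value of $\phi$.

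\textbf{Access oracles for $\tilde{A}$.} Query access $\q(A)$ is immediate: compute $A(i,j) = \sum_t \lambda_t A^{(t)}(i,j)$ at cost $\sum_t \qcb(A^{(t)})$. For $\sq(\tilde{A})$ I exploit the fact that $\tilde{A}$ is presented as a nonnegative-weighted superposition of the per-entry densities of the $\tilde{A}^{(t)}$. Concretely:
\begin{itemize}
\item \emph{Row-norm vector $\tilde{a}$.} Since $\tilde{a}(i)^2 = \tau\sum_t\abs{\lambda_t}^2\tilde{a}^{(t)}(i)^2$, a sample from $\mathcal{D}_{\tilde{a}}$ is drawn by first sampling $t$ with probability proportional to $\abs{\lambda_t}^2\|\tilde{A}^{(t)}\|_\fr^2$ (a one-time precomputed discrete distribution, using the queried norms), then drawing $i\sim\mathcal{D}_{\tilde{a}^{(t)}}$ using $\sq_{\varphi^{(t)}}(A^{(t)})$. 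Entries of $\tilde{a}$ are computed by querying $\tilde{a}^{(t)}(i)$ for each $t$ and summing; the norm $\|\tilde{A}\|_\fr$ is precomputed in $\bigO{\sum_t \ncb_{\varphi^{(t)}}(A^{(t)})}$ time so that subsequently $\ncb_\phi(A)=1$.
\item \emph{Row oracles $\sq(\tilde{A}(i,\cdot))$.} Given a row $i$, use the same mixture idea within the row: compute $\|\tilde{A}^{(t)}(i,\cdot)\|^2$ for each $t$ in total cost $\sum_t \pcb_{\varphi^{(t)}}(A^{(t)})$, sample $t'$ proportional to $\abs{\lambda_{t'}}^2\|\tilde{A}^{(t')}(i,\cdot)\|^2$, and return $j\sim\mathcal{D}_{\tilde{A}^{(t')}(i,\cdot)}$ at cost $\max_{t}\scb_{\varphi^{(t)}}(A^{(t)})$; similarly for the probability oracle on $\tilde{A}(i,\cdot)$.
\end{itemize}

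\textbf{Cost accounting.} Adding everything, a single importance sample costs $\max_{t}\scb_{\varphi^{(t)}}(A^{(t)}) + \sum_t \pcb_{\varphi^{(t)}}(A^{(t)})$; a single probability query costs $\sum_t \pcb_{\varphi^{(t)}}(A^{(t)})$; a single entry query to $A$ itself costs $\sum_t \qcb(A^{(t)})$; and norm queries are $\bigO{1}$ after the stated preprocessing. These match the quantities $\scb_\phi(A)$, $\pcb_\phi(A)$, $\qcb(A)$, and $\ncb_\phi(A)$ in the statement.

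\textbf{Anticipated obstacle.} The only delicate point is ensuring the two-step sampling procedure produces exactly $\mathcal{D}_{\tilde{A}}$ (respectively $\mathcal{D}_{\tilde{a}}$); this is where the mixture interpretation of the defining identity $\tilde{A}(i,j)^2 = \tau\sum_t \abs{\lambda_t}^2 \tilde{A}^{(t)}(i,j)^2$ must be unwound carefully, especially to confirm that the inner sampling step (conditional on $t$) is correctly weighted so that marginalizing over $t$ reproduces the intended joint distribution on $(i,j)$. Once this bookkeeping is done, the remainder of the argument is a mechanical verification of the cost bounds, analogous to the vector case in \cref{lemma:sample-Mv}.
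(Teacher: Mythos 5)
Your construction and sampling scheme are identical to the paper's proof: the same $\tilde{A}$ with $\tilde{A}(i,j)^2 = \tau\sum_t|\lambda_t\tilde{A}^{(t)}(i,j)|^2$, the same Cauchy--Schwarz domination, the same two-stage mixture sampler for $\tilde{a}$ and for each row, and the same cost accounting after precomputing the $\tau$ Frobenius norms. The ``anticipated obstacle'' you flag is resolved by the direct marginalization $\sum_{t}\Pr[t]\cdot\Pr[j\mid t]=|\tilde{A}(i,j)|^2/\|\tilde{A}(i,\cdot)\|^2$, which the paper carries out explicitly in the vector case (\cref{lemma:sample-Mv}) and invokes here by analogy.
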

\begin{proof}
  To compute $A(i,j) = \sum_{t=1}^\tau \lambda_tA^{(t)}(i,j)$ for $(i,j) \in [m]\times[n]$, we just need to query $A^{(t)}(i,j)$ for all $t \in [\tau]$, paying $\bigO{\sum_t \qcb(A^{(t)})}$ cost.
  So, it suffices to get $\sq(\tilde{A})$ for an appropriate bound $\tilde{A}$.
  We choose
  \[
    \textstyle\tilde{A}(i,j) = \sqrt{\tau\sum_{t=1}^\tau |\lambda_t\tilde{A}^{(t)}(i,j)|^2}.
  \]
  That $|\tilde{A}(i,j)| \geq |A(i,j)|$ follows from Cauchy--Schwarz, and we get the desired value of $\phi$:
  \[
    \|\tilde{A}\|_\fr^2 = \tau\sum_{t=1}^\tau \|\lambda_i\tilde{A}^{(t)}\|_\fr^2 = \tau\sum_{t=1}^\tau \varphi^{(t)}\|\lambda_iA^{(t)}\|_\fr^2.
  \]

  We have $\sq(\tilde{A})$: we can compute $\|\tilde{A}\|_\fr$ by querying for all norms $\|\tilde{A}^{(t)}\|_\fr$, compute $\tilde{a}(i) = \|\tilde{A}(i,\cdot)\| = \sqrt{\tau\sum_{t=1}^\tau \|\lambda_t\tilde{A}^{(t)}(i,\cdot)\|^2}$ by querying $\tilde{a}^{(t)}(i)$ for all $t \in [\tau]$, and compute $\tilde{A}(i,j)$ by querying $\tilde{A}^{(t)}(i,j)$ for all $t \in [\tau]$.
  Analogously to \cref{lemma:sample-Mv}, we can sample from $\tilde{a}$ by first sampling $s \in [\tau]$ with probability $\frac{\|\lambda_s\tilde{A}^{(s)}\|_\fr^2}{\sum_t \|\lambda_t\tilde{A}^{(t)}\|_\fr^2}$, then taking our sample to be $i \in [m]$ from $\mathcal{D}_{\tilde{a}^{(s)}}$.
  If we pre-process by querying all the Frobenius norms $\|\tilde{A}^{(t)}\|_\fr$ in advance, we can sample from $\tilde{a}$ in $\bigO{\max_{t \in [\tau]}\scb_{\varphi^{(t)}}(A^{(t)})}$ time.
  We can sample from $\tilde{A}(i,\cdot)$ by first sampling $s \in [\tau]$ with probability $\frac{\|\lambda_s\tilde{A}^{(s)}(i,\cdot)\|^2}{\sum_t \|\lambda_t\tilde{A}^{(t)}(i,\cdot)\|^2}$, then taking our sample to be $j \in [n]$ from $\mathcal{D}_{\tilde{A}^{(s)}(i,\cdot)}$.
  This takes $\bigO{\sum_{t=1}^\tau \pcb_{\varphi^{(t)}}(A^{(t)})+\max_{t \in [\tau]}\scb_{\varphi^{(t)}}(A^{(t)})}$ time.
\end{proof}

\begin{remark}
With the lemmas we've introduced, we can already get oversampling and query access to some modest expressions.
For example, consider RUR decompositions, which show up frequently in our results: suppose we have $\sq(A)$ for $A \in \bbc^{m\times n}$, $R \in \bbc^{r\times n}$ a (possibly normalized) subset of rows of $A$, and a matrix $U \in \bbc^{r\times r}$.
Then
\[
  R^\dagger U R = \sum_{i=1}^r \sum_{j=1}^r U(i,j) R(i,\cdot)^\dagger R(j,\cdot),
\]
which is a linear combination of $r^2$ outer products involving rows of $A$.
So, by \cref{lem:outersampling} and \cref{lem:weighted-oversampling}, we have $\sq_\phi(R^\dagger U R)$.
\end{remark}

\noindent For us, the most interesting scenario is when our sampling and query oracles take poly-logarithmic time, since this corresponds to the scenarios where quantum state preparation procedures can run in time $\polylog(n)$.
In these scenarios, quantum machine learning have the potential to achieve exponential speedups.
We can provide such  classical access in various ways.

\definecolor{fig2}{HTML}{0074d9}
\definecolor{fig1}{HTML}{d9003a}
\begin{figure*}[ht]
 \centering
 \begin{scriptsize}
 \[ \begin{tikzcd}[column sep=-10mm]
 & & & & & & & {\color{fig1} \nrm{a}^2 = \nrm{A}_F^2} \ar[fig1,dllll] \ar[fig1,drrrr] & & & & & & & \\[-3mm]
 & & & {\color{fig1}\abs{a_1}^2} = {\color{fig2}\nrm{A(1,\cdot)}^2} \ar[fig2,dll] \ar[fig2,drr] & & & & & & & & {\color{fig1}\abs{a_2}^2} = {\color{fig2}\nrm{A(2,\cdot)}^2} \ar[fig2,dll] \ar[fig2,drr] & & & \\
 & {\color{fig2}\abs{A(1,1)}^2+\abs{A(1,2)}^2} \ar[fig2,dl] \ar[fig2,dr] & & & & {\color{fig2}\abs{A(1,3)}^2 +\abs{A(1,4)}^2} \ar[fig2,dl] \ar[fig2,dr] & & & & {\color{fig2}\abs{A(2,1)}^2+\abs{A(2,2)}^2} \ar[fig2,dl] \ar[fig2,dr] & & & & {\color{fig2}\abs{A(2,3)}^2+\abs{A(2,4)}^2} \ar[fig2,dl] \ar[fig2,dr] &  \\
 {\color{fig2}\kern-2mm\abs{A(1,1)}^2} \ar[fig2,d] & & {\color{fig2}\abs{A(1,2)}^2}\ar[fig2,d] & & {\color{fig2}\abs{A(1,3)}^2} \ar[fig2,d] & & {\color{fig2}\abs{A(1,4)}^2} \ar[fig2,d] & &  {\color{fig2}\abs{A(2,1)}^2} \ar[fig2,d] & & {\color{fig2}\abs{A(2,2)}^2}\ar[fig2,d] & & {\color{fig2}\abs{A(2,3)}^2} \ar[fig2,d] & & {\color{fig2}\abs{A(2,4)}^2\kern-2.5mm} \ar[fig2,d]\\
 {\color{fig2} \frac{A(1,1)}{\abs{A(1,1)}}} & & {\color{fig2} \frac{A(1,2)}{\abs{A(1,2)}}} & & {\color{fig2} \frac{A(1,3)}{\abs{A(1,3)}}} & & {\color{fig2} \frac{A(1,4)}{\abs{A(1,4)}}} & & {\color{fig2} \frac{A(2,1)}{\abs{A(2,1)}}} & & {\color{fig2} \frac{A(2,2)}{\abs{A(2,2)}}} & & {\color{fig2} \frac{A(2,3)}{\abs{A(2,3)}}} & & {\color{fig2} \frac{A(2,4)}{\abs{A(2,4)}}}
 \end{tikzcd} \]
 \end{scriptsize}
 \caption{Dynamic data structure for a matrix $A \in \bbC^{2\times 4}$ discussed in \cref{rmk:when-sq-access} part (b). We compose {\color{fig1} the data structure for $a$} with {\color{fig2} the data structure for $A$'s rows}.}
 \label{fig:ds}
\end{figure*}

\begin{remark} \label{rmk:when-sq-access}
Below, we list settings where we have sampling and query access to input matrices and vectors, and whenever relevant, we compare the resulting runtimes to the time to prepare analogous quantum states.
Note that because we do not analyze classical algorithms in the bit model, i.e., we do not count each operation bitwise, their runtimes may be missing log factors that should be counted for a fair comparison between classical and quantum.
\begin{enumerate}[label=(\alph*)]
  \item (Data structure)
  Given $v \in \bbc^n$ in the standard RAM model, the alias method \cite{Vose1991} takes $\Theta(n)$ pre-processing time to output a data structure that uses $\Theta(n)$ space and can sample from $v$ in $\Theta(1)$ time.
  In other words, we can get $\sq(v)$ with $\sqcb(v) = \Theta(1)$ in $\bigO{n}$ time, and by extension, for a matrix $A \in \bbc^{m\times n}$, $\sq(A)$ with $\sqcb(A) = \Theta(1)$ in $\bigO{mn}$ time.

  If the input vector (resp.\ matrix) is given as a list of $\nnz(v)$ (resp.\ $\nnz(A)$) of its non-zero entries, then the pre-processing time is linear in that number of entries.
  Therefore, the quantum-inspired setting can be directly translated to a basic randomized numerical linear algebra algorithm.
  More precisely, with this data structure, a fast quantum-inspired algorithm (say, one running in time $\bigO{T\sqcb(A)}$ for $T$ independent of input size) implies an algorithm in the standard computational model (running in $\bigO{\nnz(A) + T}$ time).

  \item (Dynamic data structure)
  QML algorithms often assume that their input is in a data structure with a certain kind of quantum access~\cite{prakash2014QLinAlgAndMLThesis,kerenidis2017QGradDesc,giovannetti2007QuantumRAM,wossnig2018QLinSysAlgForDensMat,rebentrost2016QGradDesc,chakraborty2018BlockMatrixPowers}.
  They argue that, since this data structure allows for circuits preparing input states with linear gate count but polylog \emph{depth}, hardware called QRAM might be able to parallelize these circuits enough so that they run in effectively polylog \emph{time}.
  In the interest of considering the best of all possible worlds for QML, we will treat circuit depth as runtime for QRAM and ignore technicalities.

  This data structure (see \cref{fig:ds}) admits sampling and query access to the data it stores with just-as-good runtimes: specifically, for a matrix $A \in \bbc^{m\times n}$, we get $\sq(A)$ with $\qcb(A) = \bigO{1}$, $\scb(A) = \bigO{\log mn}$, and $\ncb(A) = \bigO{1}$.
  So, quantum-inspired algorithms can be used whenever QML algorithms assume this form of input.

  Further, unlike the alias method stated above, this data structure supports updating entries in $\bigO{\log mn}$ time, which is used in applications of QML where data accumulates over time \cite{kerenidis2016QRecSys}.

  \item (Integrability assumption)
  For $v \in \bbc^n$, suppose we can compute entries $v(i)$ and sums $\sum_{i \in I(b)} \abs{v(i)}^2$ in time $T$, where $I(b) \subset [n]$ is the set of indices whose binary representation begins with the bitstring~$b$.
  Then we have $\sq(v)$ where $\qcb(v) = \bigO{T}, \scb(v) = \bigO{T\log n}$, and $\ncb(v) = \bigO{T}$.
  Analogously, the quantum state that encodes $v$ in its amplitudes, $|v\rangle = \sum_i \frac{v_i}{\|v\|}|i\rangle$, can be prepared in time $\bigO{T\log n}$ via Grover-Rudolph state preparation \cite{grover2002SuperposEffIntegrProbDistr}.
  (One can think about the QRAM data structure as pre-computing all the necessary sums for this protocol.)

  \item (Uniformity assumption)
  Given $\bigO{1}$-time $\q(v) \in \bbc^n$ and a $\beta$ such that $\max \abs{v(i)}^2 \leq \beta/n$, we have $\sq_{\phi}(v)$ with $\phi = \beta/\|v\|^2$ and $\sqcb_{\phi}(v) = \bigO{1}$, by using the vector whose entries are all $\sqrt{\beta/n}$ as the upper bound $\tilde{v}$.
  Assuming the ability to query entries of $v$ in superposition, a quantum state corresponding to $v$ can be prepared in time $\mathcal{O}\big(\sqrt{\phi}\log n\big)$.

  \item (Sparsity assumption)
  If $A \in \bbc^{m\times n}$ has at most $s$ non-zero entries per row (with efficiently computable locations) and the matrix elements are $\abs{A(i, j)} \leq c$ (and efficiently computable), then we have $\sq_{\phi}(A)$ for $\phi = c^2\frac{sm}{\|A\|_\fr^2}$, simply by using the uniform distribution over non-zero entries for the oversampling and query oracles.
  For example, for $\sq(\tilde{a})$ we can set $\tilde{a}(i) \coloneqq c\sqrt{s}$, and for $\tilde{A}(i,\cdot)$ we use the vector with entries $c$ at the non-zeros of $A(i,\cdot)$ (potentially adding some ``dummy'' zero locations to have exactly $s$ non-zeroes).

  Note that similar sparse-access assumptions are often seen in the QML and Hamiltonian simulation literature~\cite{harrow2009QLinSysSolver}.
  Also, if $A$ is not much smaller than we expect, then $\phi$ can be independent of dimension.
  For example, if $A$ has exactly $s$ non-zero entries per row and $\abs{A(i,j)} \geq c'$ for non-zero entries, then $\phi \leq (c/c')^2$.

  \item (CT states) In 2009, Van den Nest defined the notion of a ``computationally tractable'' (CT) state \cite{vanDenNest2011SimulatingQCompwProbMeth}.
  Using our notation, $|\psi\rangle \in \bbc^n$ is a CT state if we have $\sq(\psi)$ with $\sqcb(\psi) = \polylog(n)$.
  Van den Nest's paper identifies several classes of CT states, including product states, quantum Fourier transforms of product states, matrix product states of polynomial bond dimension, stabilizer states, and states from matchgate circuits.
  For more details on how can one get efficient sampling and query access to such vectors we direct the reader to~\cite{vanDenNest2011SimulatingQCompwProbMeth}.
\end{enumerate}
\end{remark}


\section{Matrix sketches} \label{subsec:sketch}

We now introduce the workhorse of our algorithms: the matrix sketch.
Using sampling and query access, we can generate these sketches efficiently, and these allow one to reduce the dimensionality of a problem, up to some approximation.
Most of the results presented in this section are known in the classical sketching literature: we present them here for completeness, and to restate them in the context of sampling and query access.

\begin{definition}
For a distribution $p \in \bbr^m$, we say that a matrix $S \in \bbr^{s\times m}$ is \emph{sampled according to $p$} if each row of $S$ is independently chosen to be $e_i / \sqrt{s\cdot p(i)}$ with probability $p(i)$, where $e_i$ is the vector that is one in the $i$th position and zero elsewhere.
If $p$ is an $\ell_2$-norm sampling distribution $\mathcal{D}_v$ as defined in \cref{defn:sq-access}, then we also say $S$ is \emph{sampled according to $v$}.

We call $S$ an \emph{importance sampling sketch} for $A \in \bbc^{m\times n}$ if it is sampled according to $A$'s row norms $a$, and we call $S$ a \emph{$\phi$-oversampled importance sampling sketch} if it is sampled according to the bounding row norms from $\sq_\phi(A)$, $\tilde{a}$ (or, more generally, from a $\phi$-oversampled importance sampling distribution of $a$).
\end{definition}
\noindent One should think of $S$ as a description of how to sketch $A$ down to $SA$.
The following lemma shows that $\|SA\|_\fr$ approximates $\|A\|_\fr$, giving a simple example of the phenomenon that $SA$ approximates $A$ in certain senses: it shows that $\|SA\|_\fr = \Theta(\|A\|_\fr)$ with probability $\geq 0.9$ when $S$ has $\Omega(\frac{1}{\phi^2})$ rows.
We show later (\cref{lem:AMP-spectral}) that a similar statement holds for spectral norm: $\|SA\| = \Theta(\|A\|)$ with probability $\geq 0.9$ when $S$ has $\tilde{\Omega}(\phi^2\|A\|_\fr^2/\|A\|^2)$ rows.

\begin{lemma}[Frobenius norm bounds for matrix sketches] \label{lem:sa-frob-norm}
  Let $S \in \bbc^{r\times m}$ be a $\phi$-oversampled importance sampling sketch of $A \in \bbc^{m\times n}$.
  Then $\|[SA](i,\cdot)\| \leq \sqrt{\phi/r}\|A\|_\fr$ for all $i \in [r]$, so $\|SA\|_\fr^2 \leq \phi\|A\|_\fr^2$ (unconditionally).
  Equality holds when $\phi = 1$.
  Further,
  \begin{equation*}
    \Pr\Big[\abs{\|SA\|_\fr^2 - \|A\|_\fr^2} \geq \sqrt{\frac{\phi^2\ln(2/\delta)}{2r}}\|A\|_\fr^2\Big] \leq \delta. \qedhere
  \end{equation*}
\end{lemma}
\begin{proof}
  Let $p$ be the distribution used to create $S$, and let $s_i$ be the sample from $p$ used for row $i$ of $S$.
  Then $\|SA\|_\fr^2$ is the sum of the row norms $\|[SA](i,\cdot)\|^2$ over all $i \in [r]$, and
  \begin{gather*}
    \|[SA](i,\cdot)\|^2 = \frac{\|A(s_i,\cdot)\|^2}{r\cdot p(s_i)} \leq \frac\phi r\|A\|_\fr^2 \\
    \E[\|[SA](i,\cdot)\|^2] = \sum_{s=1}^m p(s)\frac{\|A(s,\cdot)\|^2}{r\cdot p(s)} = \frac1r \|A\|_\fr^2
  \end{gather*}
  The first equation shows the unconditional bounds on $\|SA\|_\fr$.
  When $\phi = 1$, $p(i) = \|A(i,\cdot)\|^2/\|A\|_\fr^2$ so the inequality becomes an equality.
  By the second equation, $\|SA\|_\fr^2 - \|A\|_\fr^2$ has expected value zero and is the sum of independent random variables bounded in $[-\|A\|_\fr^2, (\phi - 1)\|A\|_\fr^2]$, so the probabilistic bound follows immediately from Hoeffding's inequality.
\end{proof}

In the standard algorithm setting, computing an importance sampling sketch requires reading all of $A$, since we need to sample from $\mathcal{D}_a$.
If we have $\sq_\phi(A)$, though, we can efficiently create a $\phi$-oversampling sketch $S$ in $\bigO{s(\scb_\phi(A) + \pcb_\phi(A)) + \ncb_\phi(A)}$ time: for each row of $S$, we pull a sample from $p$, and then compute $\sqrt{p(i)}$.
After finding this sketch $S$, we have an implicit description of $SA$: it is a normalized multiset of rows of $A$, so we can describe it with the row indices and corresponding normalization, $(i_1,c_1),\ldots,(i_s,c_s)$.

$SA$ can be used to approximate matrix expressions involving $A$.
Further, we can chain sketches using the lemma below, which shows that from $\sq_\phi(A)$, we have $\sq_{\leq 2\phi}((SA)^\dagger)$, under a mild assumption on the size of the sketch $S$.
This can be used to find a sketch $T^\dagger$ of $(SA)^\dagger$.
The resulting expression $SAT$ is small enough that we can compute functions of it in time independent of dimension, and so will be used extensively.
When we discuss sketching $A$ down to $SAT$, we are referring to the below lemma for the method of sampling $T$.

\begin{restatable}{lemma}{sqsketching} \label{lem:sq-sketching}
  Consider $\sq_\varphi(A) \in \bbc^{m\times n}$ and $S \in \bbr^{r\times m}$ sampled according to $\tilde{a}$, described as pairs $(i_1,c_1),\ldots,(i_r,c_r)$.
  If $r \geq 2\varphi^2\ln\frac{2}{\delta}$, then with probability $\geq 1-\delta$, we have $\sq_{\phi}(SA)$ and $\sq_{\phi}((SA)^\dagger)$ for some $\phi$ satisfying $\phi \leq 2\varphi$.
  If $\varphi = 1$, then for all $r$, we have $\sq(SA)$ and $\sq((SA)^\dagger)$.

  The runtimes for $\sq_\phi(SA)$ are $\qcb(SA) = \qcb(A)$, $\scb_\phi(SA) = \scb_\varphi(A)$, $\pcb_\phi(SA) = \pcb_\varphi(A)$, and $\ncb_\phi(SA)=\bigO{1}$, after $\bigO{\ncb_\varphi(A)}$ pre-processing cost.
  The runtimes for $\sq_\phi((SA)^\dagger)$ are $\qcb((SA)^\dagger) = \qcb(A)$, $\scb_\phi((SA)^\dagger) = \scb_\varphi(A)+r\pcb_\varphi(A)$, $\pcb_\phi((SA)^\dagger) = r\pcb_\varphi(A)$, and $\ncb_\phi((SA)^\dagger)=\ncb_\varphi(A)$.
\end{restatable}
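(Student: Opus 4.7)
The plan is to exhibit explicit witness matrices $\widetilde{SA}$ and $\widetilde{(SA)^\dagger}$ realizing the required oversampling and query access, and then certify the oversampling parameter with a Hoeffding-type concentration bound. First I would define the natural witness: for each $k \in [r]$, set $\widetilde{SA}(k,\cdot) := \tilde{A}(i_k,\cdot)/\sqrt{r \cdot p(i_k)}$, where $p(i) = \tilde{a}(i)^2/\|\tilde{a}\|^2$ is the sampling distribution defining $S$. The entrywise bound $|\tilde{A}(i,j)| \geq |A(i,j)|$ immediately transfers to $|\widetilde{SA}(k,j)| \geq |SA(k,j)|$, and one computes
\[
    \|\widetilde{SA}\|_\fr^2 = \sum_{k=1}^r \frac{\|\tilde{A}(i_k,\cdot)\|^2}{r\, p(i_k)} = \|\tilde{A}\|_\fr^2 = \varphi\|A\|_\fr^2,
\]
which is both \emph{known} (no queries required beyond the one-time preprocessing for $\|\tilde{A}\|_\fr$) and a \emph{deterministic} equality, not a random quantity.

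Next I would bound the oversampling parameter $\phi = \|\widetilde{SA}\|_\fr^2 / \|SA\|_\fr^2 = \varphi\|A\|_\fr^2/\|SA\|_\fr^2$; it suffices to show $\|SA\|_\fr^2 \geq \|A\|_\fr^2/2$ with probability at least $1-\delta$. Writing $\|SA\|_\fr^2 = \frac{1}{r}\sum_{k=1}^r X_k$ with $X_k := \|A(i_k,\cdot)\|^2/p(i_k)$, each $X_k$ is an independent copy of a nonnegative random variable with mean $\|A\|_\fr^2$ and the uniform upper bound $X_k \leq \|\tilde{A}\|_\fr^2 = \varphi\|A\|_\fr^2$ (using $p(i)\|\tilde{a}\|^2 = \tilde{a}(i)^2 \geq \|A(i,\cdot)\|^2$). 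A standard one-sided Hoeffding bound then yields $\Pr[\|SA\|_\fr^2 < \|A\|_\fr^2/2] \lesssim \exp(-r/(2\varphi^2))$, and the hypothesis $r \geq 2\varphi^2\ln(2/\delta)$ drives this below $\delta$, giving $\phi \leq 2\varphi$ on success. In the special case $\varphi = 1$ we have $\tilde{A} = A$ and $p(i) = \|A(i,\cdot)\|^2/\|A\|_\fr^2$, so each $X_k$ is \emph{deterministically} $\|A\|_\fr^2$, yielding exact $\sq(SA)$ for every $r$ with no failure probability.

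Finally I would verify that the promised access primitives can be implemented within the stated cost budget. For $\sq_\phi(SA)$: each row of $\widetilde{SA}$ is a scalar multiple of a row of $\tilde{A}$, so row sampling, entry queries, and row-norm queries inherit directly from $\sq_\varphi(A)$ at the claimed costs. The row-norm vector of $\widetilde{SA}$ is constant ($\|\tilde{A}\|_\fr/\sqrt{r}$ in every slot), so sampling and norm queries for it are trivial once $\|\tilde{A}\|_\fr$ has been computed (the one-time $\ncb_\varphi(A)$ cost). For $\sq_\phi((SA)^\dagger)$: querying an entry reduces to one query to $A$; to sample within the row of $(SA)^\dagger$ indexed by $j$, I would explicitly compute all $r$ values $\widetilde{SA}(k,j)$ (at cost $r\,\pcb_\varphi(A)$) and sample from the resulting length-$r$ distribution; to sample from the row-norm distribution of $\widetilde{(SA)^\dagger}$, I would draw a matrix entry $(k,j)$ from $\widetilde{SA}$ (sample $k$ uniformly, then $j$ via $\sq(\tilde{A}(i_k,\cdot))$ at cost $\scb_\varphi(A)$) and return the column index $j$, which realizes the distribution proportional to squared column norms of $\widetilde{SA}$. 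The main technical step will be the Hoeffding concentration that pins down the $r \geq 2\varphi^2 \ln(2/\delta)$ threshold; everything else is bookkeeping of the access structure.
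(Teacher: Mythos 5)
Your proof is correct and takes essentially the same route as the paper: the witness is $\widetilde{SA} := S\tilde{A}$ with $\|S\tilde{A}\|_\fr^2 = \|\tilde{A}\|_\fr^2$ deterministically, the oversampling bound comes from Hoeffding concentration on $\|SA\|_\fr^2$ (the paper packages this as \cref{lem:sa-frob-norm}, which is itself proved by Hoeffding), and the column-norm sampling for $(SA)^\dagger$ is realized by drawing a matrix entry of $S\tilde{A}$ and returning its column index. The only cosmetic difference is that you inline the concentration argument rather than citing the auxiliary lemma.
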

\begin{proof}
By \cref{lem:sa-frob-norm}, $\|SA\|_\fr^2 \geq \|A\|_\fr^2 / 2$ with probability $\geq 1-\delta$.
Suppose this bound holds.
To get $\sq_{\phi}(SA)$, we take $\widetilde{SA} = S\tilde{A}$, which bounds $SA$ by inspection.
Further, $\|S\tilde{A}\|_\fr^2 = \|\tilde{A}\|_\fr^2$ by \cref{lem:sa-frob-norm}, so $\phi = \|S\tilde{A}\|_\fr^2/\|SA\|_\fr^2 = \varphi\|A\|_\fr^2/\|SA\|_\fr^2 \leq 2\varphi$.
Analogously, $(S\tilde{A})^\dagger$ works as a bound for $\sq_\phi((SA)^\dagger)$.
We can query an entry of $SA$ by querying the corresponding entry of $A$, so all that suffices is to show that we have $\sq(S\tilde{A})$ and $\sq((S\tilde{A})^\dagger)$ from $\sq(\tilde{A})$.
(When $\varphi = 1$, we can ignore the above argument: the rest of the proof will show that we have $\sq(SA)$ and $\sq((SA)^\dagger)$ from $\sq(A)$.)

We have $\sq(S\tilde{A})$.
Because the rows of $S\tilde{A}$ are rescaled rows of $\tilde{A}$, we have $\sq$ access to them from $\sq$ access to $\tilde{A}$.
Because $\|S\tilde{A}\|_\fr^2 = \|\tilde{A}\|_\fr^2$ and $\|[S\tilde{A}](i,\cdot)\|^2 = \|\tilde{A}\|_\fr^2/r$, after precomputing $\|\tilde{A}\|_\fr^2$, we have $\sq$ access to the vector of row norms of $S\tilde{A}$ (pulling samples simply by pulling samples from the uniform distribution).

We have $\sq((S\tilde{A})^\dagger)$.
(This proof is similar to one from \cite{frieze2004FastMonteCarloLowRankApx}.)
Since the rows of $(S\tilde{A})^\dagger$ are length $r$, we can respond to $\sq$ queries to them by reading all entries of the row and performing some linear-time computation.
$\|(S\tilde{A})^\dagger\|_\fr^2 = \|\tilde{A}\|_\fr^2$, so we can respond to a norm query by querying the norm of $\tilde{A}$.
Finally, we can sample according to the row norms of $(S\tilde{A})^\dagger$ by first querying an index $i \in [r]$ uniformly at random, then outputting the index $j \in [n]$ sampled from $[S\tilde{A}](i,\cdot)$ (which we can sample from because it is a row of $\tilde{A}$).
The distribution of the samples output by this procedure is correct: the probability of outputting $j$ is
\[
  \frac{1}{r}\sum_{i=1}^r \frac{|[S\tilde{A}](i,j)|^2}{\|[S\tilde{A}](i,\cdot)\|^2}
  = \sum_{i=1}^r \frac{|[S\tilde{A}](i,j)|^2}{\|S\tilde{A}\|_\fr^2}
  = \frac{\|[S\tilde{A}](\cdot,j)\|^2}{\|S\tilde{A}\|_\fr^2}. \qedhere
\]
\end{proof}

\subsection{Approximation results}\label{subsec:tool_box}

Here, we present approximation results on sketched matrices that we will use heavily throughout our results.
We begin with a fundamental observation: given sampling and query access to a matrix $A$, we can approximate the matrix product $A^\dagger B$ by a sum of rank-one outer products.
We formalize this with two variance bounds, which we can use together with Chebyshev's inequality.

\begin{restatable}[Asymmetric matrix multiplication to Frobenius norm error, {\cite[Lemma~4]{DKM06}}]{lemma}{mmhp} \label{lem:mmhp}
  Consider $X \in \bbc^{m\times n}, Y \in \bbc^{m\times p}$, and take $S \in \bbr^{r\times m}$ to be sampled according to $p \in \bbr^m$ a $\phi$-oversampled importance sampling distribution from $X$ or $Y$.
  Then,
  \[
  \E[\|X^\dagger S^\dagger SY - X^\dagger Y\|_\fr^2] \leq \frac{\phi}{r} \|X\|_\fr^2\|Y\|_\fr^2
  \quad \text{and} \quad
  \E\Big[\sum_{i=1}^r \|[SX](i,\cdot)\|^2\|[SY](i,\cdot)\|^2\Big] \leq \frac{\phi}{r}\|X\|_\fr^2\|Y\|_\fr^2.
  \]
\end{restatable}
\begin{proof}
  To show the first equation, we use that $\E[\|X^\dagger X^\dagger SY - X^\dagger Y\|_\fr^2]$ is a sum of variances, one for each entry $(i,j)$, since $\E[X^\dagger S^\dagger SY - XY]$ is zero in every entry.
  Furthermore, for every entry $(i,j)$, the matrix expression is the sum of $r$ independent, mean-zero terms, one for each row of $S$:
  \begin{equation*}
    [X^\dagger S^\dagger SY - XY](i,j)
    = \sum_{s=1}^r \Big([SX](s,i)^\dagger [SY](s, j) - \frac1r [X^\dagger Y](i,j)\Big).
  \end{equation*}
  So, we can use standard properties of variances\footnote{See the proof of \cref{lem:mm} in \cref{apx:proofs} for this kind of computation done with more detail.} to conclude that
  \begin{multline*}
  \E[\|X^\dagger S^\dagger SY - X^\dagger Y\|_\fr^2]
  = r\cdot \E[\|[SX](1,\cdot)^\dagger [SY](1,\cdot) - {\textstyle\frac1r} X^\dagger Y\|_\fr^2]
   \leq r\cdot \E[\|[SX](1,\cdot)^\dagger [SY](1,\cdot)\|_\fr^2] \\
  = r \sum_{i=1}^m p(i)\frac{\|X(i,\cdot)^\dagger Y(i,\cdot)\|_\fr^2}{r^2p(i)^2}
  = \frac1r \sum_{i=1}^m \frac{\|X(i,\cdot)\|^2\|Y(i,\cdot)\|^2}{p(i)}
  \leq \frac\phi r \|X\|_\fr^2\|Y\|_\fr^2.
  \end{multline*}
  The second other inequality follows by the same computation:
  \begin{equation*}
  \E\Big[\sum_{i=1}^r \|[SX](i,\cdot)\|^2\|[SY](i,\cdot)\|^2\Big]
  = r\cdot \E[\|[SX](1,\cdot)\|^2\|[SY](1,\cdot)\|^2]
  \leq \frac{\phi}{s}\|X\|_\fr^2\|Y\|_\fr^2. \qedhere
  \end{equation*}
\end{proof}

The above result shows that, given $\sq(X)$, $X^\dagger Y$ can be approximated by a sketch with constant failure probability.
If we have $\sq(X)$ \emph{and} $\sq(Y)$, we can make the failure probability exponential small.
To show this tighter error bound, we use an argument of Drineas, Kannan, and Mahoney for approximating matrix multiplication.
We state their result in a slightly stronger form, which is actually proved in their paper.
For completeness, a proof of this statement is in the appendix.

\begin{restatable}[Matrix multiplication by subsampling  {\cite[Theorem~1]{DKM06}}]{lemma}{lemmm} \label{lem:mm}
Suppose we are given $X \in \bbc^{n\times m}, Y \in \bbc^{n\times p}, r \in \N$ and a distribution $p \in \bbr^n$ satisfying the oversampling condition that, for some $\phi \geq 1$,
\begin{align*}
  p(k) \geq \frac{\|X(k,\cdot)\|\|Y(k,\cdot)\|}{\phi\sum_\ell \|X(\ell,\cdot)\|\|Y(\ell,\cdot)\|}.
\end{align*}
Let $S \in \bbr^{r \times n}$ be sampled according to $p$.
Then $X^\dagger S^\dagger SY$ is an unbiased estimator for $X^\dagger Y$ and
\begin{align*}
  \Pr\Big[\|X^\dagger S^\dagger SY - X^\dagger Y\|_\fr < \sqrt{\frac{8\phi^2\ln(2/\delta)}{r}}\underset{\leq\|X\|_\fr\|Y\|_\fr}{\underbrace{\sum_\ell \|X(\ell,\cdot)\|\|Y(\ell,\cdot)\|}}\Big] > 1-\delta.
\end{align*}
\end{restatable}

\noindent From a simple application of \cref{lem:mm}, we get a key lemma used frequently in \cref{sec:applications}.

\begin{restatable}[Approximating matrix multiplication to Frobenius norm error; corollary of {\cite[Theorem~1]{DKM06}}]{lemma}{apprmms}
    \label{prop:appr-mms}
    Consider $X \in \bbc^{m\times n}, Y \in \bbc^{m\times p}$, and take $S \in \bbr^{r\times m}$ to be sampled according to $q \coloneqq \frac{q_1+q_2}{2}$, where $q_1, q_2 \in \bbr^m$ are $\phi_1,\phi_2$-oversampled importance sampling distributions from $x, y$, the vector of row norms for $X$, $Y$, respectively.
    Then $S$ is a $2\phi_1,2\phi_2$-oversampled importance sampling sketch of $X, Y$, respectively.
    Further,
    \[
       \Pr\Big[\|X^\dagger S^\dagger SY - X^\dagger Y\|_\fr < \sqrt{\frac{8\phi_1\phi_2\log 2/\delta}{r}}\|X\|_\fr\|Y\|_\fr\Big] > 1-\delta.
    \]
\end{restatable}
\begin{proof}
First, notice that $2q(i) \geq q_1(i)$ and $2q(i) \geq q_2(i)$, so $q$ oversamples the importance sampling distributions for $X$ and $Y$ with constants $2\phi_1$ and $2\phi_2$, respectively.
We get the bound by using \cref{lem:mm}; $q$ satisfies the oversampling condition with $\phi = \frac{\sqrt{\phi_1\phi_2}\|X\|_\fr\|Y\|_\fr}{\sum_\ell \|X(\ell,\cdot)\|\|Y(\ell,\cdot)\|}$, using the inequality of arithmetic and geometric means:
\begin{align*}
  \frac{1}{q(i)}\frac{\|X(i,\cdot)\|\|Y(i,\cdot)\|}{\sum_\ell \|X(\ell,\cdot)\|\|Y(\ell,\cdot)\|}
  &= \frac{2}{q_1(i) + q_2(i)} \frac{\|X(i,\cdot)\|\|Y(i,\cdot)\|}{\sum_\ell \|X(\ell,\cdot)\|\|Y(\ell,\cdot)\|} \\
  &\leq \frac{1}{\sqrt{q_1(i)q_2(i)}} \frac{\|X(i,\cdot)\|\|Y(i,\cdot)\|}{\sum_\ell \|X(\ell,\cdot)\|\|Y(\ell,\cdot)\|} \\
  &\leq \frac{\sqrt{\phi_1\phi_2}\|X\|_\fr\|Y\|_\fr}{\|X(i,\cdot)\|\|Y(i,\cdot)\|} \frac{\|X(i,\cdot)\|\|Y(i,\cdot)\|}{\sum_\ell \|X(\ell,\cdot)\|\|Y(\ell,\cdot)\|} \\
  &= \frac{\sqrt{\phi_1\phi_2}\|X\|_\fr\|Y\|_\fr}{\sum_\ell \|X(\ell,\cdot)\|\|Y(\ell,\cdot)\|}. \qedhere
\end{align*}
\end{proof}

\begin{remark} \label{rmk:appr-mms-sq}
    \cref{prop:appr-mms} implies that, given $\sq_{\phi_1}(X)$ and $\sq_{\phi_2}(Y)$, we can get $\sq_\phi(M)$ for $M$ a sufficiently good approximation to $X^\dagger Y$, with $\phi \leq \phi_1\phi_2\frac{\|X\|_\fr^2\|Y\|_\fr^2}{\|M\|_\fr^2}$.
    This is an approximate closure property for oversampling and query access under matrix products.

    Given the above types of accesses, we can compute the sketch $S$ necessary for \cref{prop:appr-mms} by taking $p = \mathcal{D}_{\tilde{x}}$ and $q = \mathcal{D}_{\tilde{y}}$), thereby finding a desired $M \coloneqq X^\dagger S^\dagger SY$.
    We can compute entries of $M$ with only $r$ queries each to $X$ and $Y$, so all we need is to get $\sq(\tilde{M})$ for $\tilde{M}$ the appropriate bound.
    We choose $|\tilde{M}(i,j)|^2 \coloneqq r\sum_{\ell=1}^r |[S\tilde{X}](\ell,i)^\dagger [S\tilde{Y}](\ell,j)|^2$; showing that we have $\sq(M)$ follows from the proofs of \cref{lem:outersampling,lem:weighted-oversampling}, since $M$ is simply a linear combination of outer products of rows of $\tilde{X}$ with rows of $\tilde{Y}$.
    Finally, this bound has the appropriate norm.
    Notating the rows sampled by the sketch as $s_1,\ldots,s_r$, we have
    \begin{multline*}
        \|\tilde{M}\|_\fr^2 = r\sum_{\ell=1}^r \|[S\tilde{X}](\ell,\cdot)\|^2\|[S\tilde{Y}](\ell,\cdot)\|^2
        = r\sum_{\ell=1}^r \frac{\|\tilde{X}(s_\ell,\cdot)\|^2\|\tilde{Y}(s_\ell,\cdot)\|^2}{r^2(\frac{\|\tilde{X}(s_\ell,\cdot)\|^2}{2\|\tilde{X}\|_\fr^2} + \frac{\|\tilde{Y}(s_\ell,\cdot)\|^2}{2\|\tilde{Y}\|_\fr^2})^2}\\
        \leq \sum_{\ell=1}^r \frac{\|\tilde{X}(s_\ell,\cdot)\|^2\|\tilde{Y}(s_\ell,\cdot)\|^2}{r(\frac{\|\tilde{X}(s_\ell,\cdot)\|\|\tilde{Y}(s_\ell,\cdot)\|}{\|\tilde{X}\|_\fr\|\tilde{Y}\|_\fr})^2}
        = \|\tilde{X}\|_\fr^2\|\tilde{Y}\|_\fr^2
        = \phi_1\phi_2\|X\|_\fr^2\|Y\|_\fr^2.
    \end{multline*}
\end{remark}
If $X = Y$, we can get an improved spectral norm bound: instead of depending on $\|X\|_\fr^2$, error depends on $\|X\|\|X\|_\fr$.

\begin{restatable}[Approximating matrix multiplication to spectral norm error {\cite[Theorem~3.1]{rudelson2007sampling}}]{lemma}{apprmmspectral} \label{lem:AMP-spectral}
    Suppose we are given $A \in \bbr^{m\times n}, \eps > 0, \delta \in [0,1]$, and $S \in \bbr^{r\times n}$ a $\phi$-oversampled importance sampling sketch of $A$.
    Then
    \begin{align*}
    \Pr \Big[ \| A^\dagger S^\dagger S A - A^\dagger A \| \lesssim \sqrt{\frac{\phi^2\log r\log 1/\delta}{r}}\|A\|\|A\|_\fr \Big]  > 1-\delta.
    \end{align*}
\end{restatable}

The above results can be used to approximate singular values, simply by directly translating the bounds on matrix product error to bounds on singular value error.

\begin{restatable}[Approximating singular values]{lemma}{apprsvs}
    \label{prop:appr-svs}
    Given $\sq_\phi(A) \in \bbc^{m\times n}$ and $\eps \in (0,1]$, we can form importance sampling sketches $S \in \bbr^{r\times m}$ and $T^\dagger \in \bbr^{c\times n}$ in $\bigO{(r+c)\sqcb_\phi(A)}$ time satisfying the following property.
    Take $r, c \geq s$ for some sufficiently large $s = \bOt{\frac{\phi^2}{\eps^2}\log\frac{1}{\delta}}$.
    Then, if $\sigma_i$ and $\hat{\sigma}_i$ are the singular values of $A$ and $SAT$, respectively (where $\hat{\sigma}_i = 0$ for $i > \min(r,c)$), we have with probability $\geq 1-\delta$ that
    \[
    \sqrt{\sum_{i=1}^{\min(m,n)} (\hat{\sigma}_i^2 - \sigma_i^2)^2} \leq \eps\|A\|_\fr^2.
    \]
    If we additionally assume that $\eps \lesssim \|A\|/\|A\|_\fr$, we can conclude $\abs{\sigma_i^2 - \hat{\sigma}_i^2} \leq \eps\|A\|\|A\|_\fr$ for all $i$.
\end{restatable}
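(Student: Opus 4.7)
The plan is to construct the two sketches via \cref{lem:sq-sketching} and then translate matrix-product approximation bounds (\cref{prop:appr-mms} for the Frobenius claim, \cref{lem:AMP-spectral} for the pointwise claim) into singular-value perturbation bounds using the Hoffman--Wielandt and Weyl inequalities. First I draw $S$ by taking $r$ samples from $\sq_\phi(A)$; by \cref{lem:sq-sketching} this gives $\sq_{\leq 2\phi}((SA)^\dagger)$, from which I then draw $T^\dagger$ using $c$ more samples. This produces both sketches in $\bigO{(r+c)\sqcb_\phi(A)}$ time.

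For the Frobenius bound I apply \cref{prop:appr-mms} twice. With $X=Y=A$, $\phi_1=\phi_2=\phi$, and $r=\tilde{\Omega}(\phi^2\log(1/\delta)/\eps^2)$ we get $\|A^\dagger S^\dagger SA - A^\dagger A\|_\fr \lesssim \eps\|A\|_\fr^2$ with high probability. Since $(SA)^\dagger(SA)$ and $A^\dagger A$ are positive semidefinite with eigenvalue sequences (extended by zeros to length $\min(m,n)$) equal to $\{\sigma_i(SA)^2\}$ and $\{\sigma_i(A)^2\}$, Hoffman--Wielandt yields $\sum_i (\sigma_i(SA)^2 - \sigma_i(A)^2)^2 \lesssim \eps^2\|A\|_\fr^4$. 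Applying \cref{prop:appr-mms} a second time with $X=Y=(SA)^\dagger$ sketched by $T^\dagger$, the raw output is $\|(SA)TT^\dagger(SA)^\dagger - (SA)(SA)^\dagger\|_\fr \lesssim \eps\|SA\|_\fr^2$. To rescale this to an $\|A\|_\fr^2$-scale bound, I note that $\|SA\|_\fr^2 = \frac1r\sum_{k=1}^r \|A(i_k,\cdot)\|^2/p(i_k)$ is an average of $r$ independent variables bounded by $\phi\|A\|_\fr^2$ with mean $\|A\|_\fr^2$, so a Chernoff bound gives $\|SA\|_\fr^2 \leq 2\|A\|_\fr^2$ with high probability whenever $r \gtrsim \phi^2\log(1/\delta)$, which is already implied by the choice of $r$. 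Hoffman--Wielandt again yields $\sum_i (\sigma_i(SAT)^2 - \sigma_i(SA)^2)^2 \lesssim \eps^2\|A\|_\fr^4$, and the $\ell^2$ triangle inequality combines the two estimates into the target (after absorbing constants into~$\eps$).

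For the pointwise conclusion under $\eps \lesssim \|A\|/\|A\|_\fr$, I substitute \cref{lem:AMP-spectral} for \cref{prop:appr-mms}: the first application gives $\|A^\dagger S^\dagger SA - A^\dagger A\| \lesssim \eps\|A\|\|A\|_\fr$, and the $\log s$ factor there is precisely what forces the $\tilde{\Omega}$ in the hypothesis. Weyl's inequality $|\lambda_i(M)-\lambda_i(N)| \leq \|M-N\|$ applied to the two Hermitian matrices then yields the pointwise bound $|\sigma_i(SA)^2 - \sigma_i(A)^2| \leq \eps\|A\|\|A\|_\fr$. The hypothesis $\eps \lesssim \|A\|/\|A\|_\fr$ forces $\|SA\|^2 \leq \|A\|^2 + \eps\|A\|\|A\|_\fr = \bigO{\|A\|^2}$, so the second spectral application, on $(SA)^\dagger$ with sketch $T^\dagger$, produces an error that scales as $\|A\|\|A\|_\fr$ rather than $\|SA\|\|SA\|_\fr$; another Weyl step and a triangle inequality close the argument.

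The main obstacle is avoiding a $\phi^2$ blow-up in the second sketching step: the worst-case estimate $\|SA\|_\fr \leq \sqrt{\phi}\|A\|_\fr$ from \cref{rmk:basic-sketch-facts} would force $c = \tilde{\Omega}(\phi^4\log(1/\delta)/\eps^2)$, worse than the claim by a factor of $\phi^2$. The remedy is to exploit concentration of $\|SA\|_\fr$ (and, for the spectral case, of $\|SA\|$) around the corresponding norms of $A$; the $\eps \lesssim \|A\|/\|A\|_\fr$ hypothesis is exactly the regime in which these concentrations hold uniformly through both sketching steps.
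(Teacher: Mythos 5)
Your proof follows the same route as the paper's: construct $S$ and $T$ via \cref{lem:sq-sketching}, apply \cref{prop:appr-mms} (Frobenius) or \cref{lem:AMP-spectral} (spectral) to both sketching steps, translate via Hoffman--Wielandt (\cref{lem:hwineq}) or Weyl (\cref{lem:weylineq}), and control $\|SA\|_\fr^2 \leq 2\|A\|_\fr^2$ via \cref{lem:sa-frob-norm} and $\|SA\|\lesssim\|A\|$ via the first spectral application. One small slip in your closing paragraph: the hypothesis $\eps\lesssim\|A\|/\|A\|_\fr$ is needed only to keep $\|SA\|\lesssim\|A\|$ for the pointwise spectral conclusion, whereas the Frobenius concentration $\|SA\|_\fr^2\leq 2\|A\|_\fr^2$ requires only $r\gtrsim\phi^2\log(1/\delta)$ (already ensured by $\eps\leq 1$)---which is in fact how your main argument correctly proceeds.
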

This result follows from results bounding the error between singular values by errors of matrix products.
For notation, let $\sigma_i(M)$ be the $i$th largest singular value of $M$.
We will use the following inequalities relating norm error of matrices to error in their singular values:
\begin{lemma}[Hoffman-Wielandt inequality {\cite[Lemma~2.7]{kannan2017RandAlgNumLinAlg}}] \label{lem:hwineq}
For symmetric $X, Y \in \bbr^{n\times n}$,
$$
  \sum \abs{\sigma_i(X) - \sigma_i(Y)}^2 \leq \|X - Y\|_\fr^2.
$$
\end{lemma}

\begin{lemma}[Weyl's inequality {\cite[Corollary III.2.2]{bhatia1997MatrixAnalysis}}] \label{lem:weylineq}
For $A, B \in \bbc^{m\times n}$, $\abs{\sigma_k(A) - \sigma_k(B)} \leq \|A - B\|$.
When $A, B$ are Hermitian, the same bound holds for their eigenvalues.\footnote{\cite[Corollary III.2.2]{bhatia1997MatrixAnalysis} actually proves the Hermitian version. The result about singular values is an easy consequence, see for example the blog of Terence Tao \cite[Exercise 22(iv)]{tao2010notesOnHermitianEigenvalues}.}
\end{lemma}

\begin{proof}[Proof of \cref{prop:appr-svs}]
We use known theorems, plugging in the values of $r$ and $c$.
Using \cref{prop:appr-mms} for the sketch $S$, we know that
\begin{equation*}
  \Pr\Big[\|A^\dagger S^\dagger S A - A^\dagger A\|_\fr \leq \frac\eps2\|A\|_\fr^2\Big] \geq 1-\delta;
\end{equation*}
by \cref{lem:sq-sketching}, $T^\dagger$ is an $\leq 2\phi$-oversampled importance sampling sketch of $(SA)^\dagger$, so by \cref{prop:appr-mms} for $T^\dagger$,
\begin{equation*}
  \Pr\Big[\|S A T T^\dagger A^\dagger S^\dagger - S A A^\dagger S^\dagger\|_\fr \leq \frac{\eps}{4}\|SA\|_\fr^2\Big] \geq 1-\delta,
\end{equation*}
and from \cref{lem:sa-frob-norm},
\begin{equation*}
  \Pr\Big[\|S A\|_\fr^2 \leq 2 \|A\|_\fr^2\Big] \geq 1-\delta.
\end{equation*}
By rescaling $\delta$ and union bounding, we can have all events happen with probability $\geq 1-\delta$.
Then, from triangle inequality followed by \cref{lem:hwineq},
\begin{align*}
  \sqrt{\sum \abs{\sigma_i(SAT)^2 - \sigma_i(A)^2}^2} &\leq
  \sqrt{\sum \abs{\sigma_i(SAT)^2 - \sigma_i(SA)^2}^2} +
  \sqrt{\sum \abs{\sigma_i(SA)^2 - \sigma_i(A)^2}^2} \\
  &\leq \|(SAT)(SAT)^\dagger - (SA)(SA)^\dagger\|_\fr + \|(SA)^\dagger(SA) - A^\dagger A\|_\fr \\
  &\leq \eps\|A\|_\fr^2.
\end{align*}
The analogous result holds for spectral norm via \cref{lem:AMP-spectral} and \cref{lem:weylineq}; the only additional complication is that we need to assert that $\|SA\| \lesssim \|A\|$.
We use the following argument, using the upper bound on $\eps$:
\begin{equation*}
  \|SA\|^2 = \|A^\dagger S^\dagger SA\| \leq \|A^\dagger S^\dagger SA - A^\dagger A\| + \|A^\dagger A\| \leq \|A\|^2 + \eps\|A\|\|A\|_\fr \lesssim \|A\|^2. \qedhere
\end{equation*}
\end{proof}

Finally, if we wish to approximate a vector inner product $u^\dagger v$, a special case of matrix product, we can do so with only sampling and query access to one of the vectors while still getting $\log\frac{1}{\delta}$ dependence on failure probability.
The proof of this statement is in \cref{apx:proofs}.

\begin{restatable}[Inner product estimation, {\cite[Proposition~4.2]{tang2018QuantumInspiredRecommSys}}]{lemma}{ipest}
    \label{lemma:inner-prod}
    Given $\sq_\phi(u), \q(v) \in \bbc^n$, we can output an estimate $c \in \bbc$ such that $\abs{c - \langle u, v\rangle} \leq \eps$ with probability $\geq 1-\delta$ in time $\bigO{\phi\|u\|^2\|v\|^2\frac{1}{\eps^2}\log\frac{1}{\delta} (\sqcb_\phi(u)+\qcb(v))}$.
\end{restatable}

\begin{remark} \label{claim:tr_prod} \label{lemma:xAy}
    \cref{lemma:inner-prod} also applies to higher-order tensor inner products:
    \begin{enumerate}[label=(\alph*)]
        \item (Trace inner products, {\cite[Lemma~11]{gilyen2018QInsLowRankHHL}}) Given $\sq_\phi(A) \in \mathbb{C}^{n\times n}$ and $\q(B) \in \mathbb{C}^{n\times n}$, we can estimate $\Tr[AB^\dagger]$ to additive error $\eps$ with probability at least $1-\delta$ by using
        \begin{align*}
        \bigO{\phi\frac{\|A\|_\fr^2\|B\|_\fr^2}{\eps^2} \big(\sqcb_\phi(A)+\qcb(B)\big)\log\frac{1}{\delta}}
        \end{align*}
        time.
        To do this, note that $\sq_\phi(A)$ and $\q(B)$ imply $\sq_\phi(\vvec(A))$ and $\q(\vvec(B))$.
        $\Tr[AB] = \langle \vvec(B), \vvec(A)\rangle$, so we can just apply \cref{lemma:inner-prod} to conclude.

        \item (Expectation values) Given $\sq_\phi(A) \in \mathbb{C}^{n\times n}$ and $\q(x), \q(y) \in \mathbb{C}^{n}$, we can estimate $x^\dagger Ay$ to additive error $\eps$ with probability at least $1-\delta$ in
        \begin{align*}
        \bigO{\phi\frac{\|A\|_\fr^2\|x\|^2\|y\|^2}{\eps^2} \big(\sqcb_\phi(A)+\qcb(x)+\qcb(y)\big)\log\frac{1}{\delta}}
        \end{align*}
        time.
        To do this, observe that $x^\dagger Ay = \Tr(x^\dagger Ay) = \Tr(Ayx^\dagger)$ and that $\q(yx^\dagger)$ can be simulated with $\q(x), \q(y)$.
        So, we just apply the trace inner product procedure.
    \end{enumerate}
\end{remark}

Finally, we observe a simple technique to convert importance sampling sketches into approximate isometries, by inserting the appropriate pseudoinverse.
This will be used in some of the more involved applications.

\begin{restatable}{lemma}{apporth} \label{lem:app-orth-expression}
  Given $A \in \bbc^{m\times n}$, $S \in \bbc^{r\times m}$ sampled from a $\phi$-oversampled importance sampling distribution of $A$, and $T^\dagger \in \bbc^{n\times c}$ sampled from an $\leq \phi$-oversampled importance sampling distribution of $(SA)^\dagger$, let $R \coloneqq SA$ and $C \coloneqq SAT$.
  Let $\sigma_k$ be the $k$th singular value of $A$.
  If, for $\alpha \in (0,1]$, $r = \tilde{\Omega}(\frac{\phi^2\|A\|^2\|A\|_\fr^2}{\sigma_k^4}\log\frac{1}{\delta})$ and $c = \tilde{\Omega}(\frac{\phi^2\|A\|^2\|A\|_\fr^2}{\sigma_k^4\alpha^2}\log\frac{1}{\delta})$, then with probability $\geq 1-\delta$, $((C_k)^+ R)^\dagger$ is an $\alpha$-approximate projective isometry onto the image of $(C_k)^+$.
  Further, $(DV^\dagger R)^\dagger$ is an $\alpha$-approximate isometry, where $C_k^+ = UDV^\dagger$ is a singular value decomposition truncated so that $D \in \bbr^{k' \times k'}$ is full rank (so $k' \leq \min(k, \rank(A))$).
\end{restatable}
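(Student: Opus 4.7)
The intuition is that for sufficiently large sketches, the top-$k$ singular structure of $A$ is preserved in both $R=SA$ and $C=SAT$, and then $C_k^+$ acts as a sketched right inverse of $R$ on the $k$-dimensional subspace it lives on. Concretely, let $C=\tilde U\tilde D\tilde V^\dagger$ be the SVD of $C$, so that $C_k^+=\tilde V_k\tilde D_k^{-1}\tilde U_k^\dagger$ (where $k'\le k$ is the number of nonzero top-$k$ singular values of $C$). A direct computation gives
\begin{equation*}
    C_k^+ C\, C^\dagger (C_k^+)^\dagger = \tilde V_k \tilde V_k^\dagger =: P_k,
\end{equation*}
which is exactly the orthogonal projector onto $\mathrm{im}(C_k^+)$; so $\Pi = P_k$ in the statement.

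The first step of the argument is to show that, with the stated $r,c$, the sketches preserve the singular values of $A$ well enough that $\sigma_k(C)\gtrsim \sigma_k(A)$. I would apply \cref{prop:appr-svs} twice: once to $A$ with sketch $S$ (giving $|\sigma_i(R)^2-\sigma_i(A)^2|\le \sigma_k^2/4$ whp, using $r=\tilde\Omega(\phi^2\|A\|^2\|A\|_\fr^2/\sigma_k^4\cdot\log(1/\delta))$), and once to $R^\dagger$ with sketch $T^\dagger$ (giving $|\sigma_i(C)^2-\sigma_i(R)^2|\le \sigma_k^2/4$ using the bound on $c$). Chaining gives $\sigma_k(C)^2\ge \sigma_k^2/2$, hence $k'=k$ and $\|C_k^+\|\le \sqrt{2}/\sigma_k$; the same consequence also bounds $\|R\|\lesssim\|A\|$, while $\|R\|_\fr\le\sqrt\phi\,\|A\|_\fr$ holds unconditionally by \cref{rmk:basic-sketch-facts}.

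The second step controls $\|RR^\dagger-CC^\dagger\|$. Since $T^\dagger$ is a $\le 2\phi$-oversampled sketch of $R^\dagger$, \cref{lem:AMP-spectral} applied to $R^\dagger$ gives
\begin{equation*}
    \|RR^\dagger - CC^\dagger\|=\|RR^\dagger - RTT^\dagger R^\dagger\| \;\lesssim\; \sqrt{\tfrac{\phi^2\log c\log(1/\delta)}{c}}\,\|R\|\|R\|_\fr.
\end{equation*}
With $c=\tilde\Omega(\phi^2\|A\|^2\|A\|_\fr^2/(\sigma_k^4\alpha^2)\log(1/\delta))$ and the norm bounds from step~1, this is $\le \alpha\sigma_k^2/2$. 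Now expanding
\begin{equation*}
    (C_k^+ R)(C_k^+ R)^\dagger - \Pi = C_k^+\bigl(RR^\dagger - CC^\dagger\bigr)(C_k^+)^\dagger
\end{equation*}
and taking spectral norms gives $\le \|C_k^+\|^2 \cdot \alpha\sigma_k^2/2 \le \alpha$, proving the projective-isometry claim.

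For the second statement, with $U=\tilde V_k$, $V=\tilde U_k$, $D=\tilde D_k^{-1}$, one has $V^\dagger CC^\dagger V=\tilde D_k^2$ and hence $DV^\dagger CC^\dagger VD=I_{k'}$. Thus
\begin{equation*}
    (DV^\dagger R)(DV^\dagger R)^\dagger - I_{k'}=DV^\dagger(RR^\dagger-CC^\dagger)VD,
\end{equation*}
whose spectral norm is $\le\|D\|^2\|RR^\dagger-CC^\dagger\|\le\alpha$ by the same estimates as above. The main obstacle, and the only nontrivial step, is the chained singular-value bound in step~1: we need the $\sigma_k$-scaled sample sizes in $r$ and $c$ to be enough to guarantee $\sigma_k(C)$ is bounded away from zero \emph{and} simultaneously to drive the approximate matrix product error below $\alpha\sigma_k^2$; everything else is essentially algebraic bookkeeping.
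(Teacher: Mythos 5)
Your overall approach is the same as the paper's: preserve $\sigma_k$ through both sketching steps (via \cref{prop:appr-svs}) to bound $\|C_k^+\|$, use \cref{lem:AMP-spectral} to control $\|RR^\dagger - CC^\dagger\|$, and expand $(C_k^+ R)(C_k^+ R)^\dagger - P_k = C_k^+(RR^\dagger-CC^\dagger)(C_k^+)^\dagger$ (and the analogous expression with $DV^\dagger$). Your explicit identification of $P_k = \tilde V_k\tilde V_k^\dagger$ as the projector is a nice elaboration of the paper's terse remark that $C_k^+C_k$ is an orthogonal projector.

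There is one small but genuine slip: you invoke the unconditional bound $\|R\|_\fr \le \sqrt{\phi}\,\|A\|_\fr$ from \cref{rmk:basic-sketch-facts}, and then feed this into the \cref{lem:AMP-spectral} error
\[
  \|RR^\dagger - CC^\dagger\| \;\lesssim\; \sqrt{\tfrac{\phi^2\log c\log(1/\delta)}{c}}\,\|R\|\,\|R\|_\fr.
\]
With your bound, the right-hand side picks up an extra $\sqrt{\phi}$ beyond what the stated $c = \tilde{\Omega}(\phi^2\|A\|^2\|A\|_\fr^2\sigma_k^{-4}\alpha^{-2}\log(1/\delta))$ cancels, so for $\phi > 1$ the conclusion $\|RR^\dagger - CC^\dagger\| \le \alpha\sigma_k^2/2$ does not close. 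The paper avoids this by instead invoking \cref{lem:sa-frob-norm}, which (for the $r$ in the statement, and with probability $\ge 1-\delta$) gives the stronger $\|R\|_\fr = O(\|A\|_\fr)$ with no $\sqrt\phi$ penalty. Swap in that lemma and the rest of your argument goes through as written.
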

\begin{proof}
The following occurs with probability $\geq 1-\delta$.
By \cref{prop:appr-svs}, $\|C_k^+\| \lesssim \frac{1}{\sigma_k^2}$.
By \cref{lem:AMP-spectral}, $\|R^\dagger R - A^\dagger A\| \lesssim \|A\|^2$, which implies that $\|R\| \lesssim \|A\|$, and by \cref{lem:sa-frob-norm}, $\|R\|_\fr \lesssim \|A\|_\fr$.
Further, $\|RR^\dagger - CC^\dagger\| \leq \alpha\sigma_k^2\frac{\|R\|\|R\|_\fr}{\|A\|\|A\|_\fr} \lesssim \alpha\sigma_k^2$.
Finally, $C_k^+ C = C_k^+ C_k$ is an orthogonal projector.
So, with probability $\geq 1-\delta$,
\begin{align*}
  \|(C_k^+ R)(C_k^+ R)^\dagger - (C_k^+ C)(C_k^+ C)^\dagger \|
  &= \| C_k^+(RR^\dagger - CC^\dagger) (C_k^+)^\dagger\|
  \leq \|C_k^+\|^2\|RR^\dagger - CC^\dagger\|
  = O(\alpha).
\end{align*}
We get the computation for the $\alpha$-approximate isometry by restricting attention to the span of $U$:
\begin{align*}
  \|(DV^\dagger R)(DV^\dagger R)^\dagger - I \|
  &= \| DV^\dagger(RR^\dagger - CC^\dagger)VD^\dagger\|
  \leq \|UDV^\dagger\|^2\|RR^\dagger - CC^\dagger\|
  = O(\alpha). \qedhere
\end{align*}
\end{proof}

One can also observe that, for a sufficiently good sketch $C$, $R \approx C_k(C_k)^+ R$ in spectral norm, giving a generic way to approximate a sketch $R$ by a product of a small matrix with an approximate projective isometry.
We do not need it in our proofs, so this computation is not included. 


\section{Singular value transformation}\label{subsec:main-svt}

Our main result is that, given $\sq_\phi(A)$ and a smooth function $f$, we can approximate $f(A^\dagger A)$ by a decomposition $R^\dagger U R + f(0)I$.
This primitive is based on the even singular value transformation used by Gilyén, Su, Low, and Wiebe~\cite{gilyen2018QSingValTransf}.

\begin{restatable}[Even singular value transformation]{theorem}{evensvt} \label{thm:evenSing}
Let $A \in \mathbb{C}^{m\times n}$ and $f\colon\mathbb{R}^+\to \mathbb{C}$ be such that $f(x)$ and $\bar{f}(x)\coloneqq (f(x)-f(0))/x$ are $L$-Lipschitz and $\bar{L}$-Lipschitz, respectively, on $\cup_{i=1}^{\min(m,n)} [\sigma_i^2 - d, \sigma_i^2 + d]$ for some $d > 0$.
Take parameters $\eps$ and $\delta$ such that $0 < \eps \lesssim \min(L\|A\|_*^2,\bar{L}\|A\|_*^2\|A\|^2)$ and $\delta\in(0,1]$.
Choose a norm $* \in \{\fr,\op\}$.

Suppose we have $\sq_\phi(A)$.
Consider the importance sampling sketch $S \in \bbr^{r\times m}$ corresponding to $\sq_\phi(A)$ and the importance sampling sketch $T^\dagger \in \bbr^{c\times n}$ corresponding to $\sq_{\leq 2\phi}((SA)^\dagger)$ (which we have by \cref{lem:sq-sketching}).
Then, for $R \coloneqq SA$ and $C \coloneqq SAT$, we can achieve the bound
\begin{equation}\label{eq:evenSVTbound}
    \Pr\Big[\|R^\dagger \bar{f}(CC^\dagger) R+f(0)I - f(A^\dagger A)\|_* > \eps \Big] < \delta,
\end{equation}
if $r,c > \|A\|^2\|A\|_\fr^2\phi^2\frac{1}{d^2}\log\frac{1}{\delta}$ (or, equivalently, $d > \bar{\eps} \coloneqq \|A\|_*\|A\|_\fr(\frac{\phi^2\log(1/\delta)}{\min(r,c)})^{\!1/2}\!$) and
\begin{equation}\label{eq:evenSVTrc}
    r = \tilde{\Omega}\Big(\phi^2 L^2\|A\|_*^2\|A\|_\fr^2\frac{1}{\eps^2}\log\frac1\delta\Big) \qquad
    c = \tilde{\Omega}\Big(\phi^2 \bar{L}^2\|A\|^4\|A\|_*^2\|A\|_\fr^2\frac{1}{\eps^2}\log\frac{1}{\delta}\Big).
\end{equation}
\end{restatable}

First, we make some technical remarks.
The assumption that $\eps \lesssim L\|A\|_*^2$ is for non-degeneracy: if $\eps \geq L\|A\|^2$, then the naive approximation $f(0)I$ of $f(A^\dagger A)$ would suffice, since $\|f(0)I - f(A^\dagger A)\| \leq L\|A\|^2 \leq \eps$ as desired.\footnote{The choice $f(0)I$ assumes that $f$ is Lipschitz on $\{0,\|A\|^2\}$. More generally, we can choose $f(x)I$ for any $x\in\cup_{i=1}^{\min(m,n)} [\sigma_i^2 - d, \sigma_i^2 + d]$ in order to get a sufficiently good naive approximation.}
The parameter $d$ (or, rather, the parameter $\bar{\eps}$) specifies the domain where $f(x)$ and $\bar{f}(x)$ should be smooth: the condition in the theorem is that they should be Lipschitz on the spectrum of $A^\dagger A$, with $\bar{\eps}$ room for approximation.
This will not come into play often, though, since we can often design our singular value transforms such that we can take $d = \infty$.
For example, if our desired transform $f$ becomes non-smooth outside the relevant interval $[0,\|A\|^2]$, we can apply \cref{thm:evenSing} with $d = \infty$ and the function $g$ such that $g(x) = g(\|A\|^2)$ for $x \geq \|A\|^2$ and $g(x) = f(x)$ otherwise.
Then $g(A^\dagger A) = f(A^\dagger A)$ and $g$ is smooth everywhere, so we do not need to worry about the $d$ parameter.
Finally, we note that no additional log terms are necessary (i.e., $\tilde{\Omega}$ becomes $\Omega$) when the Frobenius norm is used.

By our discussion in \cref{subsec:sketch}, finding the sketches $S$ and $T$ for \cref{thm:evenSing} takes time $\bigO{(r+c)\scb_\phi(A) + rc\qcb_\phi(A) + \ncb_\phi(A)}$, querying for all of the entries of $C$ takes additional time $\bigO{rc\qcb(A)}$, and computing $\bar{f}(CC^\dagger)$ takes additional time $\bigO{\min(r^2c,rc^2)}$ (if done naively).
For our applications, this final matrix function computation will dominate the runtime, and the rest of the cost we will treat as $\bigO{rc\sqcb_\phi(A)}$.

For some intuition on error bounds and time complexity, we consider how the parameters in our main theorem behave in a restricted setting: suppose we have $\sq(A)$ with minimum singular value $\sigma$ and such that $\|A\|_\fr / \sigma$ is dimension-independent.\!\footnote{By a dimension-independent or dimensionless quantity, we mean a quantity that is both independent of the size of the input matrix and is scale-invariant, i.e., does not change under scaling $A\leftarrow \alpha A$.}
This condition simultaneously bounds the rank and condition number of $A$.
Further suppose\footnote{This criterion is fairly reasonable. For example, the polynomials used in QSVT satisfy it.} that $f$ is $L$-Lipschitz on the interval $[0,\|A\|^2]$ and satisfies
\[
    L\|A\|^2 < \Gamma D \text{ where } D \coloneqq \max_{x \in [0,\|A\|^2]} f(x) - \min_{y \in [0,\|A\|^2]} f(y),
\]
for some dimension-independent $\Gamma$.
$\Gamma$ must be at least one, so we can think about such an $f$ as being at most $\Gamma$ times ``steeper'' compared to the least possible ``steepness''.
Under these assumptions, we can get a decomposition satisfying
\begin{align*}
  \|R^\dagger \bar{f}(CC^\dagger)R + f(0)I - f(A^\dagger A)\| > \eps D
\end{align*}
with probability $\geq 1-\delta$ by taking
\begin{align*}
  r = \tilde{\Theta}\Big(\Gamma^2\frac{\|A\|_\fr^2}{\|A\|^2}\frac{1}{\eps^2}\log\frac1\delta\Big) \text{ and }
  c = \tilde{\Theta}\Big(\Gamma^2\frac{\|A\|^2\|A\|_\fr^2}{\sigma^4}\frac{1}{\eps^2}\log\frac{1}{\delta}\Big).
\end{align*}
The time to compute the decomposition is
\begin{align*}
    \bOt{\frac{\|A\|_\fr^6}{\|A\|^2\sigma^4}\frac{\Gamma^6}{\eps^6}\log^3\frac{1}{\delta}}.
\end{align*}
These quantities are all dimensionless.
Dependence on $\sigma$ arises because we bound $\bar{L} \leq L/\sigma^2$: our algorithm's dependence on $\bar{L}$ implicitly enforces a low-rank constraint in this case.
All of our analyses give qualitatively similar results to this, albeit in more general settings allowing approximately low-rank input.

To perform error analyses, we will need bounds on the norms of the matrices in our decomposition.
The following lemma gives the bounds we need for \cref{sec:applications}.

\begin{restatable}[Norm bounds for even singular value transformation]{lemma}{evensvtbounds} \label{lem:evenSingBounds}
Suppose the assumptions from \cref{thm:evenSing} hold.
Then with probability at least $1-\delta$, the event in \cref{eq:evenSVTbound} occurs (that is, $R^\dagger \bar{f}(CC^\dagger) R \approx f(A^\dagger A) - f(0)I$) and moreover, \emph{the following bounds also hold}:
\begin{gather}
    \label{eq:evenSVTbdR}
    \|R\| = \bigO{\|A\|} \quad \text{and} \quad \|R\|_\fr = \bigO{\|A\|_\fr},\\
    \label{eq:evenSVTbdC}
    \|\bar{f}(CC^\dagger)\| \leq \max \Big\{ \abs{\bar{f}(x)} \,\Big\vert\, x \in \bigcup_{i=1}^{\min(r,c)} [\sigma_i^2 - \bar{\eps}, \sigma_i^2 + \bar{\eps}] \Big\}, \\
    \label{eq:evenSVTbdRC}
    \text{when $* = \op$, } \Big\|R^\dagger \sqrt{\bar{f}(CC^\dagger)}\Big\| \leq \sqrt{\|f(A^\dagger A) - f(0)I\| + \eps}.
\end{gather}
\end{restatable}

\cref{eq:evenSVTbdRC} is typically a better bound than combining \cref{eq:evenSVTbdR,eq:evenSVTbdC}.
For intuition, notice this is true if $\eps, \bar{\eps} = 0$: the left-hand and right-hand sides of the following inequality are the two ways to bound $\|R^\dagger \sqrt{\bar{f}(CC^\dagger)}\|^2$, up to constant factors ($\sigma$ below runs over the singular values of $A$):
\[
    \|f(A^\dagger A) - f(0)I\| \leq \max_{\sigma}|f(\sigma^2) - f(0)| \leq \max_{\sigma} \sigma^2 \max_{\sigma} \frac{|f(\sigma^2) - f(0)|}{\sigma^2}
    = \|A\|^2\max_{\sigma} |\bar{f}(\sigma^2)|.
\]
The rest of this section will be devoted to proving \cref{thm:evenSing} and \cref{lem:evenSingBounds}.
A mathematical tool we will need is a matrix version of the defining inequality of $L$-Lipschitz functions, $\abs{f(x) - f(y)} \leq L\abs{x - y}$ when $f$ is $L$-Lipschitz.
The Frobenius norm version of this bound (\cref{lem:lipschitz-frob}) follows by computing matrix derivatives; the spectral norm version (\cref{lem:lipschitz-oper}) has a far less obvious proof.

\begin{lemma}[{\cite[Corollary 2.3]{Gil10}}] \label{lem:lipschitz-frob}
  Let $A$ and $B$ be Hermitian matrices and let $f\colon \bbr \rightarrow \bbc$ be $L$-Lipschitz continuous on the eigenvalues of $A$ and $B$.
  Then $\|f^\evt(A) - f^\evt(B)\|_\fr \leq L\|A - B\|_\fr$.
\end{lemma}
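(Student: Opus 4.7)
The plan is to reduce everything to a coordinate computation in a basis that simultaneously diagonalizes $A$ and $B$ (approximately — it diagonalizes each of them in its own orthonormal basis), then apply the Lipschitz bound entrywise.

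First I would write unitary eigendecompositions $A = U D_A U^\dagger$ and $B = V D_B V^\dagger$, where $D_A = \operatorname{diag}(\alpha_i)$ and $D_B = \operatorname{diag}(\beta_j)$ collect the eigenvalues (which by hypothesis all lie in the domain on which $f$ is $L$-Lipschitz), and set $W := U^\dagger V$, a unitary. By definition of the eigenvalue transform, $f^\evt(A) = U f(D_A) U^\dagger$ and $f^\evt(B) = V f(D_B) V^\dagger$, where $f(D_A) = \operatorname{diag}(f(\alpha_i))$ and similarly for $D_B$.

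Next I would use unitary invariance of the Frobenius norm to compute
\begin{equation*}
\|f^\evt(A) - f^\evt(B)\|_\fr = \|U^\dagger(f^\evt(A) - f^\evt(B)) V\|_\fr = \|f(D_A) W - W f(D_B)\|_\fr.
\end{equation*}
Since $f(D_A)$ and $f(D_B)$ are diagonal, the entries of the matrix inside the norm are $(f(\alpha_i) - f(\beta_j)) W_{ij}$, so
\begin{equation*}
\|f(D_A) W - W f(D_B)\|_\fr^2 = \sum_{i,j} |f(\alpha_i) - f(\beta_j)|^2 \, |W_{ij}|^2.
\end{equation*}
Applying the Lipschitz bound $|f(\alpha_i) - f(\beta_j)| \leq L |\alpha_i - \beta_j|$ termwise gives
\begin{equation*}
\|f(D_A) W - W f(D_B)\|_\fr^2 \leq L^2 \sum_{i,j} |\alpha_i - \beta_j|^2 \, |W_{ij}|^2 = L^2 \|D_A W - W D_B\|_\fr^2.
\end{equation*}
Finally I would undo the diagonalization: since $D_A U^\dagger = U^\dagger A$ and $U^\dagger V D_B = U^\dagger B V$, we have $D_A W - W D_B = U^\dagger(A - B) V$, which has Frobenius norm $\|A - B\|_\fr$ by unitary invariance, yielding the claimed bound.

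There isn't really a hard step here; the only thing to watch is that the hypothesis ``$L$-Lipschitz on the eigenvalues of $A$ and $B$'' is exactly strong enough to license the termwise bound $|f(\alpha_i) - f(\beta_j)| \leq L|\alpha_i - \beta_j|$ for every index pair $(i,j)$ that appears in the sum, which is what makes the trick work. Everything else is unitary invariance plus the fact that multiplication by a diagonal matrix on either side acts entrywise.
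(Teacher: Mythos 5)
Your proof is correct, and it is the standard argument underlying the cited result: conjugate by the two eigenbases, observe that the difference becomes an entrywise commutator of diagonal matrices against the transition unitary $W$, bound entrywise by Lipschitzness, and convert back by unitary invariance of $\|\cdot\|_\fr$. The paper does not reprove this lemma — it cites \cite[Corollary~2.3]{Gil10} — so there is no in-paper proof to compare against, but your argument is complete and hits the one point that needs care (the hypothesis gives Lipschitzness on the \emph{union} of the spectra, which is exactly what licenses the cross-term bound $|f(\alpha_i)-f(\beta_j)|\leq L|\alpha_i-\beta_j|$).
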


\begin{lemma}[{\cite[Theorem 11.2]{aleksandrov2011EstimatesOfOperatorModuli}}] \label{lem:lipschitz-oper}
  \label{lemma:lipschitz}
  Let $A$ and $B$ be Hermitian matrices and let $f\colon \bbr \rightarrow \bbc$ be $L$-Lipschitz continuous on the eigenvalues of $A$ and $B$.
  Then
  \begin{align*}
    \norm{f^\evt(A)-f^\evt(B)} \lesssim L\|A-B\|\log\min(\rank A, \rank B). 
  \end{align*}
\end{lemma}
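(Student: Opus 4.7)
The plan is to follow the classical operator-Lipschitz theory developed by Farforovskaya, Kato, Birman--Solomyak, and Davies, which shows that an $L$-Lipschitz scalar function acts as an operator-Lipschitz map on Hermitian matrices in the spectral norm, with a $\log(\mathrm{rank})$ loss that is in general unavoidable. Since the Frobenius version (Lemma \ref{lem:lipschitz-frob}) gives the clean bound $\|f^\evt(A)-f^\evt(B)\|_\fr \le L\|A-B\|_\fr$ without any logarithm, the extra logarithmic factor is precisely the price of passing to operator norm.

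First I would reduce the question to a Schur multiplier bound via a double operator integral. Decompose spectrally $A=\sum_i\lambda_i P_i$ and $B=\sum_j\mu_j Q_j$ with rank-one eigenprojections. A direct computation gives
\begin{equation*}
    f^\evt(A)-f^\evt(B) \;=\; \sum_{i,j} K(\lambda_i,\mu_j)\,P_i(A-B)Q_j,
    \qquad K(\lambda,\mu):=\frac{f(\lambda)-f(\mu)}{\lambda-\mu},
\end{equation*}
with the convention $K(\lambda,\lambda):=f'(\lambda)$ where defined (and any sub-derivative choice works for a mere Lipschitz $f$, since those values are a measure-zero perturbation). The Lipschitz hypothesis gives $\sup|K|\le L$, so the goal is to bound the norm of the Schur multiplier $T_K:X\mapsto\sum_{i,j}K(\lambda_i,\mu_j)P_iXQ_j$ by $L\cdot\log\min(\rank A,\rank B)$, since then $\|f^\evt(A)-f^\evt(B)\|\le\|T_K\|\cdot\|A-B\|$.

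The hard part is the bound on $\|T_K\|$: a bounded-sup-norm kernel does not in general give a bounded Schur multiplier on operator norm (unlike in the Frobenius case, where it does with constant $L$). To extract the logarithmic loss I would dyadically decompose $f$ in frequency, writing $f=\sum_k f_k$ where each $f_k$ is essentially smooth at spatial scale $2^{-k}$ relative to the spectral range of $A$ and $B$. Smooth pieces ($f_k$ with small Bernstein constant at the relevant scale) admit a direct $C^1$-type Schur multiplier bound, while rough pieces are handled by interpolating the trivial bound $\|f_k^\evt(A)-f_k^\evt(B)\|\le 2\|f_k\|_\infty$ against the Frobenius estimate restricted to spectral subspaces of rank at most $\min(\rank A,\rank B)$. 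Only $O(\log\min(\rank A,\rank B))$ dyadic scales contribute non-trivially, because below the scale $\|A-B\|$ the operator $A-B$ cannot resolve finer frequencies, and above the scale $\|A\|+\|B\|$ the spectra do not see them. Summing the contributions over this logarithmic number of scales produces exactly the claimed factor.

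I expect the main obstacle to be a fully self-contained derivation of the dyadic Schur multiplier bound, which is substantial; it is exactly what Aleksandrov--Peller carry out in \cite{aleksandrov2011EstimatesOfOperatorModuli}. For our purposes it is therefore cleanest to invoke their Theorem~11.2 as a black box, mirroring the use of \cite{aleksandrov2009OperatorHolderZygmundFunctions} for Lemma \ref{lem:lipschitz-oper-moduli}. The finite-rank hypothesis is essential: in infinite dimensions no such operator-Lipschitz estimate holds at this level of generality, which is consistent with the $\log\min(\rank A,\rank B)$ factor blowing up in that limit.
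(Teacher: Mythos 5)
Your proposal is correct and matches the paper's treatment: the paper states this lemma purely as a citation to Aleksandrov--Peller's Theorem~11.2 and provides no proof, exactly as you conclude is the clean choice. Your sketch of the Schur-multiplier identity $f^\evt(A)-f^\evt(B)=\sum_{i,j}K(\lambda_i,\mu_j)P_i(A-B)Q_j$ and the dyadic frequency decomposition accurately summarizes the underlying machinery, but it is additional background rather than something the paper attempts to reproduce.
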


\begin{proof}[Proof of \cref{thm:evenSing} and \cref{lem:evenSingBounds}]
Since $g(A^\dagger A) = f(A^\dagger A) + g(0)I$ for $f(x)\coloneqq g(x)-g(0)$, we can assume without loss of generality that $f(0) = 0$.
As a reminder, in the statement of \cref{thm:evenSing} we take
\begin{equation*}
    r = \tilde{\Omega}\Big(\phi^2 L^2\|A\|_*^2\|A\|_\fr^2\frac{1}{\eps^2}\log\frac1\delta\Big) \qquad
    c = \tilde{\Omega}\Big(\phi^2 \bar{L}^2\|A\|^4\|A\|_*^2\|A\|_\fr^2\frac{1}{\eps^2}\log\frac{1}{\delta}\Big).
\end{equation*}
These values are chosen such that the following holds with probability $\geq 1-\delta$ simultaneously.
\begin{enumerate}
  \item The $i$th singular value of $CC^\dagger$ does not differ from the $i$th singular value of $A^\dagger A$ by more than $\bar{\eps}$.
  This follows from \cref{prop:appr-svs} with error parameter $\eps\|A\|_\fr^{-1}\|A\|_*^{-1}\max(L^{-1},\bar{L}^{-1}\|A\|^{-2})$.
  This immediately implies \cref{eq:evenSVTbdC}.

  \item $\|R\|^2 = \bigO{\|A\|^2}$.
  This is the spectral norm bound in \cref{eq:evenSVTbdR} (the Frobenius norm bound follows from \cref{lem:sa-frob-norm}).
  We use \cref{lem:AMP-spectral}:
  \begin{align*}
    \|R\|^2 &\leq \|A\|^2 + \|R^\dagger R - A^\dagger A\| \leq \|A\|^2 + \frac{\eps\|A\|^2}{L\|A\|_*^2} = O(\|A\|^2).
  \end{align*}
  \item $\|f(R^\dagger R) - f(A^\dagger A)\|_* = \bigO{\eps}$.
  We need the polylog factors in our number of samples to deal with the $\log r$ that arises from \cref{lemma:lipschitz} in the spectral norm case.
  \begin{align*}
    &\|f(R^\dagger R) - f(A^\dagger A)\|_* \\
    &\lesssim L\|R^\dagger R - A^\dagger A\|_*\log \rank(R^\dagger R) \tag{\cref{lemma:lipschitz} or \cref{lem:lipschitz-frob}} \\
    &\lesssim L\sqrt{\frac{\phi^2\log r\log(1/\delta)}{r}}\|A\|_*\|A\|_\fr\log r \tag{\cref{lem:AMP-spectral} or \cref{prop:appr-mms}} \\
    &\lesssim \eps \tag{plugging in value for $r$}.
  \end{align*}
  \item $\|\bar{f}(CC^\dagger) - \bar{f}(RR^\dagger)\|_* = \bigO{\eps/\|A\|^2}$.
  This follows similarly to the above point.
\end{enumerate}
When all of the above bounds hold, we can conclude:
\begin{align*}
  &\|R^\dagger \bar{f}(CC^\dagger) R - f(A^\dagger A)\|_* \\
  &\leq \|R^\dagger \bar{f}(RR^\dagger) R - f(A^\dagger A)\|_* + \|R^\dagger (\bar{f}(RR^\dagger) - \bar{f}(CC^\dagger))R\|_* \\
  &= \|f(R^\dagger R) - f(A^\dagger A)\|_* + \|R^\dagger (\bar{f}(RR^\dagger) - \bar{f}(CC^\dagger))R\|_* \tag{Definition of $\bar{f}$} \\
  &\leq \|f(R^\dagger R) - f(A^\dagger A)\|_* + \|R\|^2\|\bar{f}(RR^\dagger) - \bar{f}(CC^\dagger)\|_* \\
  &\lesssim \eps + \|R\|^2\eps/\|A\|^2 \\
  &\lesssim \eps.
\end{align*}
This gives \cref{eq:evenSVTbound} after rescaling $\eps$ by an appropriate constant factor.
When $* = \op$, we also have \cref{eq:evenSVTbdRC}, since
\begin{equation*}
  \Big\|R^\dagger \sqrt{\bar{f}(CC^\dagger)}\Big\|
  = \sqrt{\|R^\dagger \bar{f}(CC^\dagger) R\|}
  \leq \sqrt{\|f(A^\dagger A) - f(0)I\| + \eps}. \qedhere
\end{equation*}
\end{proof}

We remark here that the $\log$ term in \cref{lem:lipschitz-oper} unfortunately cannot be removed (because some Lipschitz functions are not operator Lipschitz).
However, several bounds hold under various mild assumptions, and for particular functions, the $\log$ term can be improved to $\log \log$ or completely removed.
For example, the QSVT literature~\cite{gilyen2018QSingValTransf} cites the following result:
\begin{lemma}[{\cite[Corollary 7.4]{aleksandrov2009OperatorHolderZygmundFunctions}}] \label{lem:lipschitz-oper-moduli}
  Let $A$ and $B$ be Hermitian matrices such that $a I \preceq A, B \preceq b I$, and let $f\colon \bbr \rightarrow \bbc$ be $L$-Lipschitz continuous on the interval $[a,b]$.
  Then
  \begin{align*}
  \norm{f^\evt(A)-f^\evt(B)} \lesssim L\|A-B\|\log\left(e\frac{b-a}{\nrm{A-B}}\right). 
  \end{align*}
\end{lemma}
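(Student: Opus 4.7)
This is the classical operator-Lipschitz modulus estimate of Aleksandrov--Peller (the logarithmic factor is known to be sharp, e.g.\ by Kato's $f(x)=|x|$ counterexample). The plan is the standard dyadic / Duhamel argument. First I would normalize by an affine change of variables to reduce to $[a,b]=[-1,1]$ (which rescales $\|A-B\|$ and $L$ compatibly, so the claimed inequality is invariant), extend $f$ to an $L$-Lipschitz function on all of $\bbr$ (e.g.\ by constant extension outside $[-1,1]$), and set $\delta := \|A-B\|$; we may assume $\delta \leq 1$, since otherwise the bound is trivial from $\|f(A)-f(B)\| \leq 2\|f\|_\infty \lesssim L$.

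The key decomposition is a Littlewood--Paley split $f = \sum_{k\geq 0} f_k$, where $\hat f_k$ is Fourier-supported in $\{|\xi|\sim 2^k\}$. Two elementary properties follow from $\|f'\|_\infty \leq L$ together with the compact support of $f$: namely $\|f_k\|_\infty \lesssim L\cdot 2^{-k}$ (because $\widehat{f'} = i\xi\hat f$ and $\psi_k$ localizes to a band of width $\sim 2^k$), and $\hat f_k$ is supported in $\{|\xi|\lesssim 2^k\}$. For each piece I would then combine two bounds: the trivial $L^\infty$ bound
\[
\|f_k(A)-f_k(B)\| \leq 2\|f_k\|_\infty \lesssim L\cdot 2^{-k},
\]
and a Duhamel-type bound derived from the integral representation $f_k(X) = \frac{1}{2\pi}\int \hat f_k(\xi)\,e^{i\xi X}\,d\xi$ together with $\|e^{i\xi A}-e^{i\xi B}\| \leq |\xi|\delta$ (Duhamel). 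Peller's theorem (the nontrivial input) states that any $g$ with Fourier support in $\{|\xi|\leq R\}$ satisfies $\|g(A)-g(B)\| \lesssim R\|g\|_\infty\|A-B\|$; applied to $f_k$ this gives $\|f_k(A)-f_k(B)\| \lesssim 2^k\cdot L\cdot 2^{-k}\cdot\delta = L\delta$. Taking the minimum of the two estimates yields the uniform per-scale bound $\|f_k(A)-f_k(B)\| \lesssim L\min(\delta,\,2^{-k})$.

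Summing over $k$ produces the logarithm: the scales $2^k \leq 1/\delta$ each contribute $\lesssim L\delta$ and there are $\sim \log_2(1/\delta)$ of them, while the scales $2^k > 1/\delta$ form a geometric series summing to $\lesssim L\delta$. Altogether $\|f(A)-f(B)\| \lesssim L\delta\log(1/\delta)$, and reinstating the $(b-a)$ normalization yields the claimed inequality. The main obstacle in this sketch is Peller's per-band operator-Lipschitz estimate used in the second step, whose proof goes through a double operator integral argument (equivalently, the characterization of the class of operator-Lipschitz functions by the Besov space $B^1_{\infty,1}(\bbr)$); once that input is taken as given, the remaining dyadic bookkeeping and the optimization of the cutoff scale $2^k\sim 1/\delta$ are routine.
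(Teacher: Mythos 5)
The paper does not prove this lemma; it is imported directly as a citation to Aleksandrov--Peller's Corollary~7.4, so there is no in-paper proof to compare against. Your sketch is a faithful account of the route those authors actually take: reduce to $[-1,1]$, perform a Littlewood--Paley decomposition $f=\sum_k f_k$, control each piece by the minimum of the trivial $L^\infty$ bound $\lesssim L2^{-k}$ and the band-limited operator-Lipschitz estimate, then sum dyadically with the crossover at $2^k\sim 1/\delta$ producing the logarithm. You correctly flag the per-band Bernstein-type inequality (``$\hat g$ supported in $\{|\xi|\leq R\}$ implies $\|g(A)-g(B)\|\lesssim R\|g\|_\infty\|A-B\|$'') as the genuinely hard input, and indeed that is where the Schur-multiplier/double-operator-integral machinery lives.

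One wrinkle worth cleaning up: the ``Duhamel'' bound you mention in passing does not actually give the estimate you need. The integral representation $f_k(X)=\frac{1}{2\pi}\int\hat f_k(\xi)e^{i\xi X}d\xi$ together with $\|e^{i\xi A}-e^{i\xi B}\|\leq|\xi|\,\delta$ yields $\|f_k(A)-f_k(B)\|\lesssim\delta\int|\xi||\hat f_k(\xi)|\,d\xi\lesssim 2^k\|\hat f_k\|_1\,\delta$, and for Lipschitz $f$ one only has $\|\hat f_k\|_1\lesssim L$ uniformly in $k$, giving $2^kL\delta$ per scale, which is too large to sum. The whole point of Peller's inequality is that it trades $\|\hat g\|_1$ for $\|g\|_\infty$, and it is the decay $\|f_k\|_\infty\lesssim L2^{-k}$ that kills the $2^k$. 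Your writeup does ultimately invoke Peller's theorem and get $L\delta$ per scale, so the conclusion is right; I would just excise the Duhamel remark (or present it explicitly as a strictly weaker estimate that motivates but does not suffice for the argument) so a reader does not mistake the elementary step for the substantive one.
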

Though we will not use it, we can extend these results on eigenvalue transformation of Hermitian matrices to singular value transformation of general matrices via the reduction from {\cite[Corollary 21]{gilyen2018QSingValTransf}}.
For example, \cref{lem:lipschitz-oper} implies the following:

\begin{lemma}
  \label{lemma:lipschitz-svf}
  Let $A, B \in \bbc^{m\times n}$ be matrices and let $f\colon [0,\infty) \rightarrow \bbc$ be $L$-Lipschitz continuous on the singular values of $A$ and $B$ such that $f(0) = 0$.
  Then
  \begin{align*}
    \|f^\svt(A)-f^\svt(B)\| \lesssim L\|A - B\|\log \min(\rank A, \rank B).
  \end{align*}
\end{lemma}

In \cref{sec:cur}, we prove results on generic singular value transformation and eigenvalue transformation by bootstrapping \cref{thm:evenSing}.
Since these are slower, though, we will use primarily the even singular value transformation results that we just proved to recover ``dequantized QML'' results.
This will be the focus of next section. 


\section{Applying the framework to dequantizing QML algorithms}\label{sec:applications}

Now, with our framework, we can recover previous dequantization results: recommendation systems (\cref{sec:recommendation-systems}), supervised clustering (\cref{sec:supervised-clustering}), principal component analysis (\cref{sec:PCA}), low-rank matrix inversion (\cref{sec:matrix-inversion}), support-vector machines (\cref{sec:SVM}), and low-rank semidefinite programs (\cref{sec:SDP}).
We also propose new quantum-inspired algorithm for other applications, including QSVT (\cref{subsec:dequant}), Hamiltonian simulation (\cref{sec:Hamiltonian-simulation}), and discriminant analysis (\cref{sec:discriminant-analysis}).
We give applications in roughly chronological order; this also happens to be a rough difficulty curve, with applications that follow more easily from our main results being first.

Everywhere it occurs, $K \coloneqq \|A\|_\fr^2/\sigma^2$, where $A$ is the input matrix.
$\kappa \coloneqq \|A\|_2^2/\sigma^2$.
For simplicity, we will often describe our runtimes as if we know spectral norms of input matrices (so, for example, we know $\kappa$).
If we do not know the spectral norm, we can run \cref{prop:appr-svs} repeatedly with multiplicatively decreasing $\eps$ until we find a constant factor upper bound on the spectral norm, which suffices for our purposes.
Alternatively, we can bound the spectral norm by the Frobenius norm, which we know from sampling and query access to input.

\subsection{Dequantizing QSVT}\label{subsec:dequant}
We begin by dequantizing the quantum singular value transformation described by Gily\'{e}n, Su, Low, and Wiebe~\cite{gilyen2018QSingValTransf} for close-to-low-rank matrices.

\begin{definition} \label{def:qsvt}
For a matrix $A \in \bbc^{m\times n}$ and $p(x) \in \bbc[x]$ degree-$d$ polynomial of parity-$d$ (i.e., even if $d$ is even and odd if $d$ is odd), we define the notation $p^\qsvt(A)$ in the following way:
\begin{enumerate}
    \item If $p$ is \emph{even}, meaning that we can express $p(x) = q(x^2)$ for some polynomial $q(x)$, then $$p^\qsvt(A) \coloneqq q(A^\dagger A)= p(\sqrt{A^\dagger A}).$$
    \item If $p$ is \emph{odd}, meaning that we can express $p(x) = x\cdot q(x^2)$ for some polynomial $q(x)$, then $$p^\qsvt(A) \coloneqq A\cdot q(A^\dagger A).$$
\end{enumerate}
\end{definition}

\begin{theorem} \label{thm:deq-qsvt}
Suppose we are given a matrix $A \in \bbc^{m\times n}$ satisfying $\|A\|_\fr = 1$ via the oracles for $\sq(A)$ and $\sq(A^\dagger)$ with $\sqcb(A), \sqcb(A^\dagger) = \bigO{\log(mn)}$, a vector $\sq(b) \in \bbc^n$ with $\|b\| = 1$ and $\sqcb(b) = \bigO{\log n}$, and a degree-$d$ polynomial $p(x)$ of parity-$d$ such that $\abs{p(x)} \leq 1$ for all $x \in [-1,1]$.

Then with probability $\geq 1-\delta$, for $\eps$ a sufficiently small constant, we can get $\sq_\phi(v) \in \bbc^n$ such that $\|v - p^\qsvt(A)b\| \leq \eps\|p^\qsvt(A)b\|$ in $\poly\Big(d,\frac{1}{\|p^\qsvt(A)b\|},\frac{1}{\eps},\frac{1}{\delta}, \log mn\Big)$ time.

Specifically, for $p$ even, the runtime is
\begin{align*}
    \bOt{\frac{d^{16}\|A\|^{10}}{(\eps\|p^\qsvt(A)b\|)^6}\log^3\frac{1}{\delta} + \frac{d^{12}\|A\|^8 + d^6\|A\|^2}{(\eps\|p^\qsvt(A)b\|)^4}\log^2\frac{1}{\delta}\log(mn)}
\end{align*}
with
\begin{align*}
    \sqrun(v) &= \bOt{\frac{d^{12}\|A\|^4}{\eps^4\|p^\qsvt(A)b\|^6}\log(mn)\log^3\frac{1}{\delta}},
\end{align*}
and for $p$ odd, the runtime is
\begin{align*}
    \bOt{\frac{d^{22}\|A\|^{16}}{(\eps\|p^\qsvt(A)b\|)^{6}} + \frac{d^{16}\|A\|^{12} + d^{10}\|A\|^4\delta^{-1}}{(\eps\|p^\qsvt(A)b\|)^4}\log(mn)}
\end{align*}
with
\begin{align*}
    \sqrun(v) &= \bOt{\frac{d^8}{\eps^2\delta\|p^\qsvt(A)b\|^4}\log(mn)}.
\end{align*}
\end{theorem}

From this result it follows that QSVT, as described in {\cite[Theorem~17]{gilyen2018QSingValTransf}}, has no exponential speedup when the block-encoding of $A$ comes from a quantum-accessible ``QRAM'' data structure as in {\cite[Lemma~50]{gilyen2018QSingValTransf}}.
In the setting of QSVT, given $A$ and $b$ in QRAM, one can prepare $|b\rangle$ and construct a block-encoding for $A/\|A\|_\fr = A$ in $\polylog(mn)$ time.
Then one can apply (quantum) SVT by a degree-$d$ polynomial on $A$ and apply the resulting map to $|b\rangle$ with $d\cdot \polylog(mn)$ gates and finally project down to get the state $|p^\qsvt(A)b\rangle$ with probability $\geq 1-\delta$ after $\Theta\big(\frac{1}{\|p^\qsvt(A)b\|}\log\frac{1}{\delta}\big)$ iterations of the circuit.
So, getting a sample from $|p^\qsvt\!(A)b\rangle$ takes $\Theta\big(d\frac{1}{\|p^\qsvt(A)b\|}\polylog(mn/\delta)\big)$ time.
This circuit gives an exact outcome, possibly with some $\log(1/\eps)$ factors representing the discretization error in truncating real numbers to finite precision (which we ignore, since we do not account for them in our classical algorithm runtimes).

Analogously, by \cref{rmk:when-sq-access}, having $A$ and $b$ in (Q)RAM implies having $\sq(A)$ and $\sq(b)$ with $\sqcb(A) = \bigO{\log mn}$ and $\sqcb(b) = \bigO{\log n}$.
Since QSVT also needs to assume $\max_{x \in [-1,1]}\abs{p(x)} \leq 1$, the classical procedure matches the assumptions for QSVT.
Our algorithm runs only polynomially slower than the quantum algorithm, since the quantum runtime clearly depends on $d$, $\frac{1}{\|p^\qsvt(A)b\|}$, and $\log(mn)$.
We are exponentially slower in $\eps$ and $\delta$ (these errors are conflated for the quantum algorithm).
However, this exponential advantage vanishes if the desired output is not a quantum state but some fixed value (or an estimate of one).
In that case, the quantum algorithm must also pay $\frac{1}{\eps}$ during the sampling or tomography procedures and the classical algorithm can boost a constant success probability to $\geq 1-\delta$, only paying a $\log\frac{1}{\delta}$ factor.
Note that, unlike in the quantum output, we can query entries of the output, which a quantum algorithm cannot do without paying at least a $\frac{1}{\eps}$ factor.

\cref{thm:deq-qsvt} also dequantizes QSVT for block-encodings of density operators when the density operator comes from some well-structured classical data.
Indeed, \cite[Lemma~45]{gilyen2018QSingValTransf} assumes we can efficiently prepare a purification of the density operator $\rho$.
The rough classical analogue is the assumption that we have sampling and query access to some $A \in \bbc^{m\times n}$ with $\rho = A^\dagger A$.
Since $\Tr(\rho) = 1$, we have $\|A\|_\fr = 1$.
Then, $p^\qsvt(\rho) = r^\qsvt(A)$ for $r(x) = p(x^2)$ and $\|\rho\| = \|A\|^2$, so we can repeat the above argument to show the lack of exponential speedup for this input model too.

We can mimic the quantum algorithm with our techniques because low-degree polynomials are smooth, in the sense that we formalize with the following lemma (proven in \cref{apx:proofs}).

\begin{restatable}{lemma}{lowdeg} \label{lem:low-deg-lipschitz}
Consider $p(x)$ a degree-$d$ polynomial of parity-$d$ such that $\abs{p(x)} \leq 1$ for $x \in [-1,1]$.
Recall that, for a function $f: \bbc \to \bbc$, we define $\bar{f}(x) \coloneqq (f(x)-f(0))/x$ (and $\bar{f}(0) = f'(0)$ when $f$ is differentiable at zero).
\begin{itemize}
\item If $p$ is even, then $\displaystyle\max_{x \in [0,1]} \abs{q(x)} \leq 1, \max_{x \in [-1,1]} \abs{q'(x)} \lesssim d^2\!, \max_{x \in [-1,1]} \abs{\bar{q}(x)} \lesssim d^2\!$, and $\displaystyle\max_{x \in [-1,1]} \abs{\bar{q}'(x)} \lesssim d^4\!\!$.

\item If $p$ is odd, then $\displaystyle \!\max_{x \in [-1,1]} \abs{q(x)} \lesssim d,\, \!\max_{x \in [-1,1]} \abs{q'(x)} \lesssim d^3\!,\max_{x \in [-1,1]} \abs{\bar{q}(x)} \lesssim d^3\!$, and $\displaystyle\!\max_{x \in [-1,1]} \abs{\bar{q}'(x)} \lesssim d^5\!\!$.
\end{itemize}
\end{restatable}

These bounds are tight for Chebyshev polynomials.
In general, these bounds can be loose, so for any particular QML application we recommend using our main results for faster algorithms.

\begin{proof}[Proof of \cref{thm:deq-qsvt}]
Consider the even case: take $p(x) = q(x^2)$ for $q$ a degree-$d/2$ polynomial, so $p^\qsvt(A) = q(A^\dagger A)$, and we have the correct form to apply \cref{thm:evenSing}.
$q$ is uncontrolled outside of $[-1,1]$, so we instead apply the singular value transformation which is constant outside of $[-1,1]$:
\begin{align*}
    f(x) \coloneqq \begin{cases}
        q(-1) & x \leq -1 \\
        q(x) & -1 \leq x \leq 1 \\
        q(1) & 1 \leq x
    \end{cases}.
\end{align*}
We can do this because the singular values of $A$ lie in $[0,1]$, so $q(A^\dagger A) = f(A^\dagger A)$.
Then, by \cref{lem:low-deg-lipschitz}, $f$ and $\bar{f}$ are Lipschitz with $L = \bigO{d^2}, \bar{L} = \bigO{d^4}$.
So, by \cref{thm:evenSing}, we can get $R \in \bbc^{r\times n}$ and $C \in \bbc^{r\times c}$ such that $\|R^\dagger \bar{f}(CC^\dagger) R + f(0)I - f(A^\dagger A)\| \leq \eps$, where
\begin{align*}
    r = \bOt{d^4\|A\|^2\|A\|_\fr^2\frac{1}{\eps^{2}}\log\frac{1}{\delta}} \qquad \text{and} \qquad c = \bOt{d^8\|A\|^6\|A\|_\fr^2\frac{1}{\eps^{2}}\log\frac{1}{\delta}}.
\end{align*}
(We will later rescale $\eps$; note that $\eps\|p^\qsvt(A)b\| \lesssim L\|A\|^2\|b\|$, so $\eps$ is small enough for the theorem assumption.)
This reduces the problem to approximating $R^\dagger \bar{f}(CC^\dagger) Rb + f(0)b$.
We further approximate $Rb \approx u \in \bbc^r$ such that $\|Rb - u\| \leq \eps/d$.
Using \cref{prop:appr-mms}, this needs $\bigO{\|A\|_\fr^2\|b\|^2\frac{d^2}{\eps^2}\log\frac{1}{\delta}}$ samples, which can be done in time $\bigO{\|A\|_\fr^2\|b\|^2\frac{d^2}{\eps^2}r\log(mn)\log\frac{1}{\delta}}$, using that $\|R\|_\fr \lesssim \|A\|_\fr$ (\cref{eq:evenSVTbdR}) and $\sqcb(R^\dagger) = \bigO{r\sqcb(A)}$ (\cref{lem:sq-sketching}).
This suffices to maintain the error bound because (using \cref{eq:evenSVTbdRC,eq:evenSVTbdC,lem:low-deg-lipschitz}),
\begin{align*}
    \|R^\dagger \bar{f}(CC^\dagger)(Rb - u)\| &\leq \Big\|R^\dagger\sqrt{\bar{f}(CC^\dagger)}\Big\|\Big\|\sqrt{\bar{f}(CC^\dagger)}\Big\|\|Rb - u\| \\
    &\leq \sqrt{\|f(A^\dagger A) - f(0)I\| + \eps}\sqrt{\max_x \bar{f}(x)}\frac{\eps}{d}
    \leq \sqrt{2 + \eps}d\frac{\eps}{d}
    \lesssim \eps.
\end{align*}
As a consequence, $v \coloneqq R^\dagger \bar{f}(CC^\dagger) u+ f(0)b$ satisfies $\|v - p^\qsvt(A)b\| \leq \eps$.
Via \cref{lemma:sample-Mv}, we can get $\sq_\phi(v)$ with
\begin{align*}
    \sqrun(v) &= \phi\sq_\phi(v)\log\frac{1}{\delta} \\
    &= \Big((r+1)\frac{\|R\|_\fr^2\|\bar{f}(CC^\dagger) u\|^2 + p(0)^2\|b\|^2}{\|v\|^2}\Big)\Big((r+1)\log(mn)\Big)\log\frac{1}{\delta} \\
    &\lesssim r^2\frac{\|\bar{f}(CC^\dagger)\|^2(\|Rb\|+\|Rb-u\|)^2 + p(0)^2}{(\|p^\qsvt(A)b\|-\|v-p^\qsvt(A)\|)^2}\log (mn)\log\frac{1}{\delta} \\
    &\lesssim r^2\frac{d^4(1 + \eps/d)^2 + 1}{(\|p^\qsvt(A)b\|-\eps)^2}\log\frac{1}{\delta} \\
    &\lesssim \frac{r^2d^4}{\|p^\qsvt(A)b\|^2}\log(mn)\log\frac{1}{\delta}.
\end{align*}
In the last step, we use that $\eps \lesssim \|p^\qsvt(A)b\|$; if we do not have that assumption, $\sqrun(v) \lesssim \frac{r^2d^4}{\|v\|^2}\log(mn)\log\frac{1}{\delta}$.
We rescale $\eps \leftarrow \eps\|p^\qsvt(A)b\|$ to get the desired bound.
The runtime is dominated by finding $C$ in $\bigO{rc\log(mn)}$ time, computing $\bar{f}(CC^\dagger)$ in $\bigO{r^2c}$ time, and estimating $R^\dagger b$ in $\bigO{r\frac{d^2}{\eps^2}\log(mn)\log\frac{1}{\delta}}$ time.
We also need to compute the matrix-vector product $\bar{f}(CC^\dagger) u$, but this can be done in $\bigO{rc}$ time by instead multiplying through with the expression $U\bar{f}(D^2)U^\dagger = \bar{f}(CC^\dagger)$, where $U \in \bbc^{r\times c}$ comes from the SVD of $C$.

Now for the odd case: we could use \cref{thm:gen-svt} here, but we will continue to use \cref{thm:evenSing} here.
Similarly to the even case, we take $g(x)$ to be $q(x)$ in $[-1,1]$ and held constant ouside it, so $p^\qsvt(A) = A \cdot g(A^\dagger A)$.
Then, by plugging in the smoothness parameters from \cref{lem:low-deg-lipschitz}, we get $R, C$ such that $\|R^\dagger \bar{g}(CC^\dagger) R + g(0)I - g(A^\dagger A)\| < \frac{\eps}{\|A\|}$ with probability $\geq 1-\delta$ where
\begin{align*}
    r = \bOt{d^6\|A\|^4\|A\|_\fr^2\frac{1}{\eps^2}\log\frac{1}{\delta}} \qquad
    c = \bOt{d^{10}\|A\|^8\|A\|_\fr^2\frac{1}{\eps^2}\log\frac{1}{\delta}}.
\end{align*}
We now use the approximating matrix product lemmas \cref{lem:mmhp,prop:appr-mms} three times.
\begin{enumerate}
    \item We use \cref{lem:mmhp} to approximate $AR^\dagger \approx A'R'^\dagger$ such that $\|AR^\dagger - A'R'^\dagger\| \leq \eps d^{-2}$.
    We can do this since we have $\sq(A^\dagger)$ (by assumption) and $\sq(R^\dagger)$, in $\bigO{\|A\|_\fr^2\|R\|_\fr^2d^4\eps^{-2}\delta^{-1}} = \bigO{d^4\eps^{-2}\delta^{-1}}$ samples, with each sample costing $\bigO{r\log(mn)}$ time (by \cref{lem:sq-sketching}).
    Then using \cref{lem:evenSingBounds} and the bounds on $\abs{q(x)}$ and $\abs{\bar{q}(x)}$ in \cref{lem:low-deg-lipschitz},
    \begin{align*}
        \|(AR^\dagger - A'R'^\dagger)\bar{g}(CC^\dagger)R\|
        &\leq \|AR^\dagger - A'R'^\dagger\|\|\sqrt{\bar{g}(CC^\dagger)}\|\|\sqrt{\bar{g}(CC^\dagger)}R\| \\
        &\leq (\eps d^{-2})\sqrt{d^3}\sqrt{\|g(A^\dagger A) - g(0)I\| + \frac{\eps}{\|A\|}}
        \lesssim \eps.
    \end{align*}
    \item We use \cref{prop:appr-mms} to approximate $Rb \approx u$ such that $\|Rb - u\| \leq \frac{\eps}{d^2\|A\|}$, where we use $\bigO{\|R\|_\fr^2\|b\|^2\frac{d^4\|A\|^2}{\eps^2}\log\frac{1}{\delta}} = \bigO{d^4\|A\|^2\eps^{-2}\log\frac{1}{\delta}}$ samples.
    \begin{align*}
        \|A'R'^\dagger \bar{g}(CC^\dagger) (Rb - u)\|
        &\leq \|AR^\dagger\bar{g}(CC^\dagger)(Rb - u)\| + \|(A'R'^\dagger - AR^\dagger)\bar{g}(CC^\dagger)Rb\| \\
        &\lesssim \|A\|\|R^\dagger\bar{g}(CC^\dagger)\|\|Rb - u\| + \eps
        \lesssim \|A\|d^2(\eps d^{-2}\|A\|^{-1}) + \eps
        \lesssim \eps.
    \end{align*}
    \item Using $\sq(b)$ and \cref{lem:mmhp}, we approximate $Ab \approx A''b''$ such that $\|Ab - A''b''\| \leq \eps/d$ (and consequently, $q(0)\|Ab - A''b''\| \leq \eps$) with $\bigO{\|A\|_\fr^2\|b\|^2d^2\eps^{-2}\delta^{-1}} = \bigO{d^2\eps^{-2}\delta^{-1}}$ samples.
\end{enumerate}

So, we have shown that $v \coloneqq A'R'^\dagger \bar{g}(CC^\dagger)u + q(0)A''b''$ satisfies $\|v - p^\qsvt(A)\| \lesssim \eps$.
$v$ is a linear combination of columns of $A$; via \cref{lemma:sample-Mv}, we can get $\sq_\phi(v)$ with
\begin{align*}
\sqrun(v)&= \phi\sq_\phi(v)\log\frac{1}{\delta} \\
    &= \bOt{\frac{\sum_i \|A'(\cdot, i)\|^2\|R'(\cdot,i)^\dagger \bar{g}(CC^\dagger)u\|^2 + \sum_j q(0)^2\|A''(\cdot,j)\|^2\|b''(j)\|^2}{\|v\|^2}\Big(\frac{d^4+d^2}{\eps^{2}\delta}\Big)^2\log(mn)} \\
    &= \bOt{\frac{\frac{\|A\|_\fr^2\|R^\dagger\|_\fr^2}{d^4\eps^{-2}\delta^{-1}}(\|\bar{g}(CC^\dagger)R\|\|b\| + \|\bar{g}(CC^\dagger)\|\|Rb-u\|)^2 + q(0)^2\frac{\|A\|_\fr^2\|b\|^2}{d^2\eps^{-2}\delta^{-1}}}{(\|p^\qsvt(A)b\| - \|p^\qsvt(A)b - v\|)^2}\frac{d^8}{\eps^4\delta^2}\log(mn)} \\
    &= \bOt{\frac{d^{-4}(d^2 + d^3\eps d^{-2}\|A\|^{-1})^2 + 1}{(\|p^\qsvt(A)b\| - \eps)^2}\frac{d^8}{\eps^2\delta}\log(mn)} \\
    &= \bOt{d^8\|p^\qsvt(A)b\|^{-2}\eps^{-2}\delta^{-1}\log(mn)}.
\end{align*}
Above, we used that $\eps \lesssim \|p^\qsvt(A)b\| \leq \|A\|\|q(A^\dagger A)\|\|b\| \leq d\|A\|$.
Now, we rescale $\eps \leftarrow \eps\|p^\qsvt(A)b\|$ to get the desired statement.
The runtime is dominated by the sampling for $C$ in $\bigO{rc\log(mn)}$ time, the computation of $\bar{g}(CC^\dagger)$ in $\bigO{r^2c}$ time, and the approximation of $AR^\dagger \approx A'R'^\dagger$ in $\bigO{rd^4\eps^{-2}\delta^{-1}\log(mn)}$ time.
\end{proof}

\begin{remark}\label{rmk:a-to-a-dagger}
Here, we make a brief remark about a technical detail we previously elided.
Technically, QSVT can use $A^\dagger$ in QRAM instead of $A$ (cf.~{\cite[Lemma~50]{gilyen2018QSingValTransf}}), leaving open the possibility that there is a quantum algorithm that does not give an exponential speedup when $A$ is in QRAM, but does when $A^\dagger$ is in QRAM.
We sketch an argument why this is impossible by showing that, given $\sq(A)$, we can simulate $\sq_\phi(B)$ (and $\sq_\phi(B^\dagger)$) for $B$ such that $\|B - A^\dagger\| \leq \eps\|A\|$ with probability $\geq 1-\delta$.
Unfortunately, this argument is fairly involved, so we defer it to \cref{apx:sketch}.
\end{remark}

\subsection{Recommendation systems}\label{sec:recommendation-systems}

Our framework gives a simpler and faster variant of Tang's dequantization \cite{tang2018QuantumInspiredRecommSys} of Kerenidis and Prakash's quantum recommendation systems \cite{kerenidis2016QRecSys}.
Tang's result is notable for being the first result in this line of work and for dequantizing what was previously believed to be the strongest candidate for practical exponential quantum speedups for a machine learning problem \cite{preskill2018QuantCompNISQEra}.

We want to find a product $j \in [n]$ that is a good recommendation for a particular user $i \in [m]$, given incomplete data on user-product preferences.
If we store this data in a matrix $A \in \bbr^{m\times n}$ with sampling and query access, in the strong model described by Kerenidis and Prakash~\cite{kerenidis2016QRecSys}, finding good recommendations reduces to the following:

\begin{prob} \label{prob:rec-systems}
  For a matrix $A \in \bbr^{m\times n}$, given $\sq(A)$ and a row index $i \in [m]$, sample from $\hat{A}(i, \cdot)$ up to $\delta$ error in total variation distance, where $\|\hat{A} - A_{\sigma,\eta}\|_\fr \leq \eps\|A\|_\fr$.
\end{prob}

Here, $A_{\sigma,\eta}$ is a certain type of low-rank approximation to $A$.
The standard notion of low-rank approximation is that of $A_r \coloneqq \sum_{i=1}^r \sigma_iU(\cdot, i)V(\cdot, i)^\dag$, which is the rank-$r$ matrix closest to $A$ in spectral and Frobenius norms.
Using singular value transformation, we define an analogous notion thresholding singular values instead of rank.

\begin{definition}[$A_{\sigma,\eta}$]
We define $A_{\sigma,\eta}$ as a singular value transform of $A$ satisfying:
\begin{align*}
  A_{\sigma,\eta} &\coloneqq P_{\sigma,\eta}^{(\mathrm{SV})}(A) \qquad P_{\sigma, \eta}(\lambda)  \begin{cases} = \lambda & \lambda \geq \sigma(1+\eta) \\ = 0 & \lambda < \sigma(1-\eta) \\ \in [0,\lambda] & \text{otherwise} \end{cases}.
\end{align*}
Note that $P_{\sigma,\eta}$ is not fully specified in the range $[\sigma(1-\eta),\sigma(1+\eta))$, so $A_{\sigma,\eta}$ is any of a family of matrices with error $\eta$.
\end{definition}

For intuition, $P_{\sigma,\eta}^{(\mathrm{SV})}(A)$ is $A$ for (right) singular vectors with value $\geq \sigma(1+\eta)$, zero for those with value $< \sigma(1-\eta)$, and something in between for the rest.
Our analysis simplifies the original algorithm, which passes through the low-rank approximation guarantee of Frieze, Kannan, and Vempala~\cite{frieze2004FastMonteCarloLowRankApx}.

Our algorithm uses that we can rewrite our target low-rank approximation as $A\cdot t(A^\dagger A)$, where $t$ is a smoothened projector.
So, we can use our main theorem, \cref{thm:evenSing}, to approximate $t(A^\dagger A)$ by some $R^\dagger U R$.
Then, the $i^{\text{th}}$ row of our low-rank approximation is $A(i,\cdot) R^\dagger U R$, which is a product of a vector with an RUR decomposition.
Thus, using the sampling techniques described in \cref{subsec:tool_box}, we have $\sq_\phi(A(i,\cdot) R^\dagger U R)$, so we can get the sample from this row as desired.

\begin{corollary}\label{corollary:rec-systems}
Suppose $0 < \eps \lesssim \|A\|/\|A\|_\fr$ and $\eta \leq 0.99$.
A classical algorithm can solve \cref{prob:rec-systems} in time
\[
  \bOt{\frac{K^3\kappa^5}{\eta^6\eps^6}\log^3\frac{1}{\delta} + \frac{K^2\kappa\|A(i,\cdot)\|^2}{\eta^2\eps^2\|\hat{A}(i,\cdot)\|^2}\log^2\frac{1}{\delta}}.
\]
\end{corollary}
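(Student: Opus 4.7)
The plan is to write $A_{\sigma,\eta}$ as an even singular value transformation applied to $A$, dequantize the transformation via \cref{thm:evenSing}, and then convert the $i$-th row of the resulting RUR-decomposed approximation into a sample using \cref{lemma:sample-Mv} and \cref{lem:b-sq-approx}.

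First I would define $t\colon [0,\infty)\to[0,1]$ to be the smoothened step function equal to $0$ on $[0,(1-\eta)^2\sigma^2]$, equal to $1$ on $[(1+\eta)^2\sigma^2,\infty)$, and linear on the interpolation interval, so that $A \cdot t(A^\dagger A)$ is a valid choice of $A_{\sigma,\eta}$. A direct computation gives that $t$ is $L$-Lipschitz with $L\lesssim 1/(\eta\sigma^2)$ and $\bar t(x):=t(x)/x$ is $\bar L$-Lipschitz with $\bar L\lesssim 1/(\eta\sigma^4)$, the worst slopes occurring inside the interpolation window. Applying \cref{thm:evenSing} with $*=\op$, precision $\eps$, and using $t(0)=0$ then yields sketches $R=SA\in\bbc^{r\times n}$ and $C=SAT\in\bbc^{r\times c}$ satisfying $\|R^\dagger\bar t(CC^\dagger)R-t(A^\dagger A)\|\le\eps$ with probability $\ge 1-\delta$, where substituting the Lipschitz bounds into \cref{eq:evenSVTrc} gives $r=\widetilde\Theta(K\kappa\eta^{-2}\eps^{-2}\log(1/\delta))$ and $c=\widetilde\Theta(K\kappa^3\eta^{-2}\eps^{-2}\log(1/\delta))$. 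Setting $\hat A:=A\cdot R^\dagger\bar t(CC^\dagger)R$ and using $\|MN\|_\fr\le\|M\|_\fr\|N\|$ immediately gives the required Frobenius bound $\|\hat A-A_{\sigma,\eta}\|_\fr\le\eps\|A\|_\fr$. The preprocessing is dominated by the eigendecomposition of $CC^\dagger$ that is needed to evaluate $\bar t$ on it, costing $O(r^2c)=\widetilde O(K^3\kappa^5\eta^{-6}\eps^{-6}\log^3(1/\delta))$, the first term in the stated runtime.

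To sample from $\hat A(i,\cdot)$, I would write $\hat A(i,\cdot)=w\cdot R$, where $w:=A(i,\cdot)R^\dagger\bar t(CC^\dagger)\in\bbc^{1\times r}$ is computable in $\widetilde O(rc)$ time using the eigendecomposition of $CC^\dagger$. Since the rows of $R$ are $\ell^2$-normalized rows of $A$ with $\|R(j,\cdot)\|^2=\|A\|_\fr^2/r$, \cref{lemma:sample-Mv} produces $\sq_\phi(\hat A(i,\cdot))$ with cancellation constant $\phi=\|A\|_\fr^2\|w\|^2/\|\hat A(i,\cdot)\|^2$; \cref{lem:b-sq-approx} then converts this into a sample from $\mathcal D_{\hat A(i,\cdot)}$ with failure probability at most $\delta$ in time $\widetilde O(\phi\,r\log(1/\delta))$.

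The main obstacle, and the source of the improvement over the bound of \cite{tang2018QuantumInspiredRecommSys}, is bounding $\|w\|$ sharply. The naive estimate $\|w\|\le\|A(i,\cdot)\|\cdot\|R^\dagger\|\cdot\|\bar t(CC^\dagger)\|\lesssim\|A(i,\cdot)\|\|A\|/\sigma^2$ would introduce a spurious factor of $\kappa$ in the runtime. Instead, I would split $\bar t(CC^\dagger)=\sqrt{\bar t(CC^\dagger)}\cdot\sqrt{\bar t(CC^\dagger)}$ and invoke \cref{eq:evenSVTbdRC} of \cref{lem:evenSingBounds}, which gives $\|R^\dagger\sqrt{\bar t(CC^\dagger)}\|\le\sqrt{\|t(A^\dagger A)\|+\eps}=O(1)$ since $\|t\|_\infty\le 1$. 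Combined with $\|\sqrt{\bar t(CC^\dagger)}\|\le O(1/\sigma)$, which follows from $\|\bar t\|_\infty\lesssim 1/((1-\eta)^2\sigma^2)$ under the assumption $\eta\le 0.99$, this yields $\|w\|=O(\|A(i,\cdot)\|/\sigma)$ and hence $\phi=O(K\|A(i,\cdot)\|^2/\|\hat A(i,\cdot)\|^2)$, matching the second term in the runtime.
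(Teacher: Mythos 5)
Your overall strategy matches the paper's, and the bound on the cancellation constant $\phi$ via \cref{eq:evenSVTbdRC} is exactly the right trick. But there is a genuine gap in the middle step.

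You assert that $w := A(i,\cdot)R^\dagger\bar t(CC^\dagger)$ ``is computable in $\widetilde O(rc)$ time using the eigendecomposition of $CC^\dagger$.'' This cannot be right: the vector $A(i,\cdot)R^\dagger \in \bbc^{1\times r}$ has entries $A(i,\cdot)R(j,\cdot)^\dagger$, each of which is an inner product of two length-$n$ vectors. Computing these exactly costs $\Theta(rn)$, which blows up the claimed dimension-independent runtime. Equivalently, your definition $\hat A := A\cdot R^\dagger\bar t(CC^\dagger)R$ has rows whose entries cannot be evaluated in time independent of $n$, so \cref{lemma:sample-Mv} and \cref{lem:b-sq-approx} cannot be applied to it as stated — those lemmas require $\q$-access to the coefficient vector $w$, and you don't have it.

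The paper's proof inserts one more approximation exactly here: it applies \cref{prop:appr-mms} to the product $A(i,\cdot)R^\dagger$ (using $\sq(A(i,\cdot)^\dagger)$ and $\sq(R^\dagger)$, the latter available via \cref{lem:sq-sketching}), producing a sketched row $A'(i,\cdot) := A(i,\cdot)S^\dagger S$ supported on $r' = O(\eps^{-2}K\log\frac1\delta)$ coordinates and satisfying $\|A(i,\cdot)R^\dagger - A'(i,\cdot)R^\dagger\| \le \eps\sigma\|A(i,\cdot)\|$. The output matrix is then $\hat A := A'R^\dagger\bar t(CC^\dagger)R$ (not $AR^\dagger\bar t(CC^\dagger)R$), and the vector $x := A'(i,\cdot)R^\dagger\bar t(CC^\dagger)$ is computable in $O(r'r + rc)$ time because $A'(i,\cdot)$ is sparse. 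The extra $\eps\sigma\|A(i,\cdot)\|$ error per row propagates (via the triangle inequality and the bound $\|\sqrt{\bar t(CC^\dagger)}\|\|\sqrt{\bar t(CC^\dagger)}R\| \lesssim \sigma^{-1}$) to an additional $O(\eps\|A\|_\fr)$ contribution to $\|\hat A - A_{\sigma,\eta}\|_\fr$, which is of the same order as the SVT error, so the overall guarantee is preserved; and the $O(r'r)$ cost is dominated by the $O(r^2c)$ SVD term. Without this extra sketch, your argument does not establish the runtime claim. The rest of your analysis — the $\phi$ bound and the identification of the $K^2\kappa\|A(i,\cdot)\|^2/(\eta^2\eps^2\|\hat A(i,\cdot)\|^2)$ sampling term — goes through unchanged once this fix is in place.
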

The assumption on $\eps$ is a weak non-degeneracy condition in the low-rank regime.
For reference, $\eta = 1/6$ in the application of this algorithm to recommendation systems.
So, supposing the first term of the runtime dominates, the runtime is $\bOt{\frac{\|A\|_\fr^6\|A\|^{10}}{\sigma^{16}\eps^6}\log^3\frac{1}{\delta}}$, which improves on the previous runtime $\bOt{\frac{\|A\|_\fr^{24}}{\sigma^{24}\eps^{12}}\log^3\frac{1}{\delta}}$ of~\cite{tang2018QuantumInspiredRecommSys}.
The quantum runtime for this problem is $\bOt{\frac{\|A\|_\fr}{\sigma}}$, up to $\polylog(m,n)$ terms~\cite[Theorem~27]{chakraborty2018BlockMatrixPowers}.
\begin{proof}
Note that $A_{\sigma,\eta} = A\cdot t(A^\dagger A)$, where $t$ is the thresholding function shown below.
\begin{align*}
t(x) = \begin{cases}
  0 & x < (1-\eta)^2\sigma^2 \\
  \frac{1}{4\eta\sigma^2}(x - (1-\eta)^2\sigma^2) & (1-\eta)^2\sigma^2 \leq x < (1+\eta)^2\sigma^2 \\
  1 & x\geq (1+\eta)^2\sigma^2 \end{cases}.
\end{align*}
We will apply \cref{thm:evenSing} with error parameter $\eps$ to get matrices $R, C$ such that $AR^\dagger \bar{t}(CC^\dagger) R$ satisfies
\begin{align}
  \|A_{\sigma,1/6} - AR^\dagger \bar{t}(CC^\dagger) R\|_\fr &\leq \|A\|_\fr\|t(A^\dagger A) - R^\dagger \bar{t}(CC^\dagger)R\| \leq \eps\|A\|_\fr \label{eqn:recs-svt}.
\end{align}
Since $t(x)$ is $(4\eta\sigma^2)^{-1}$-Lipschitz and $t(x)/x$ is $(4\eta(1-\eta)^2\sigma^4)^{-1}$-Lipschitz, the sizes of $r$ and $c$ are
\begin{align*}
  r &= \bOt{L^2\|A\|^2\|A\|_\fr^2\frac{1}{\eps^2}\log\frac{1}{\delta}}
  = \bOt{\frac{\|A\|^2\|A\|_\fr^2}{\sigma^4\eta^2\eps^2}\log\frac{1}{\delta}}
  = \bOt{\frac{K\kappa}{\eta^2\eps^2}\log\frac1\delta}; \\
  c &= \bOt{\bar{L}^2\|A\|^6\|A\|_\fr^2\frac{1}{\eps^2}\log\frac{1}{\delta}}
  = \bOt{\frac{\|A\|^6\|A\|_\fr^2}{\sigma^8\eta^2\eps^2}\log\frac{1}{\delta}}
  = \bOt{\frac{K\kappa^3}{\eta^2\eps^2}\log\frac{1}{\delta}}.
\end{align*}
So, it suffices to compute the SVD of an $r\times c$ matrix, which has a runtime of
\[
  \bOt{\frac{K^3\kappa^5}{\eta^6\eps^6}\log^3\frac{1}{\delta}}.
\]
Next, we want to approximate $AR^\dagger \approx A'R'^\dagger$.
If we had $\sq(A^\dagger)$ (in particular, if we could compute column norms $\|A(\cdot,j)\|$), we could do this via \cref{prop:appr-mms}, and if we were okay with paying factors of $\frac{1}{\delta}$, we could do this via \cref{lem:mmhp}.
Here, we will instead implicitly define an approximation by approximating each row $[AR^\dagger](i,\cdot) = A(i,\cdot)R^\dagger$ via \cref{prop:appr-mms}, since we then have $\sq(A(i,\cdot)^\dagger)$ and $\sq(R^\dagger)$.
With this proposition, we can estimate $[AR^\dagger](i,\cdot) \approx (A(i,\cdot)S^\dagger S)R^\dagger$ to $\frac{\eps}{\sqrt{K}}\|A(i,\cdot)\|\|R^\dagger\|_\fr = \eps\|A(i,\cdot)\|\sigma$ error using $r' \coloneqq O(\eps^{-2}K\log\frac{1}{\delta})$ samples\footnote{Formally, to get a true approximation $AR \approx A'R$, we need to union bound the failure probability for each row, paying a $\log m$ factor in runtime. However, we will ignore this consideration: our goal is to sample from one row, so we only need to succeed in our particular row.}.
Here, $A'(i,\cdot) \coloneqq A(i,\cdot)S^\dagger S$ is our $r'$-sparse approximation, giving that
\begin{align}
  \|AR^\dagger - A'R^\dagger\|_\fr
  = \sqrt{\textstyle\sum_{i=1}^m \|[AR^\dagger](i,\cdot) - [A'R^\dagger](i,\cdot)\|^2}
  \leq \sqrt{\textstyle\sum_{i=1}^m \eps^2\|A(i,\cdot)\|^2\sigma^2} = \eps\sigma\|A\|_\fr. \label{eqn:recs-mm}
\end{align}
Using this and the observation that $\max_{x} \bar{t}(x) = (1+\eta)^{-2}\sigma^{-2} \leq \sigma^{-2}$, we can bound the quality of our final approximation as
\begin{align*}
  \|\hat{A} - A_{\sigma, \eta}\|_\fr
  &\leq \|(A'R^\dagger - AR^\dagger)\bar{t}(CC^\dagger)R\|_\fr + \|AR^\dagger\bar{t}(CC^\dagger) R - A_{\sigma,\eta}\|_\fr \tag*{by triangle inequality} \\
  &\leq \|A'R^\dagger - AR^\dagger\|_\fr\Big\|\sqrt{\bar{t}(CC^\dagger)}\Big\|\Big\|\sqrt{\bar{t}(CC^\dagger)}R\Big\| + \eps\|A\|_\fr \tag*{by \cref{eqn:recs-svt}} \\
  &\leq \eps\sigma\|A\|_\fr\sigma^{-1}\sqrt{1+\eps} + \eps\|A\|_\fr \lesssim \eps\|A\|_\fr \tag*{by \cref{lem:evenSingBounds,eqn:recs-mm}}.
\end{align*}
We can sample from $\hat{A}(i,\cdot) = A'(i,\cdot)R^\dagger \bar{f}(CC^\dagger) R$ by naively computing $x \coloneqq A'(i,\cdot)R^\dagger \bar{f}(CC^\dagger)$, taking $O(r'r + rc)$ time.
Then, we use \cref{lemma:sample-Mv,lem:b-sq-approx} to get a sample from $xR$ with probability $\geq 1-\delta$ in $\bigO{\sqrun_\phi(xR)}$ time, which is $\bigO{\phi \sqcb_\phi(xR)\log\frac{1}{\delta}}$, where $\sqcb_\phi(xR) = \bigO{r}$ and
\begin{equation*}
  \phi
  = r\frac{\sum_{j=1}^r\abs{x(j)}^2\|R(j,\cdot)\|^2}{\|xR\|^2}
  \lesssim r\frac{\sum_{j=1}^r\abs{x(j)}^2\|A\|_\fr^2}{\|\hat{A}(i,\cdot)\|^2r}
  = \frac{\|x\|^2\|A\|_\fr^2}{\|\hat{A}(i,\cdot)\|^2}.
\end{equation*}
Then, using previously established bounds and bounds from \cref{lem:evenSingBounds}, we have
\begin{align*}
  \frac{\|x\|^2\|A\|_\fr^2}{\|\hat{A}(i,\cdot)\|^2}
  &= \frac{\|A'(i,\cdot)R^\dagger\bar{t}(CC^\dagger)\|^2\|A\|_\fr^2}{\|\hat{A}(i,\cdot)\|^2} \\
  &\leq \Big(\|A(i,\cdot)\|\Big\|R^\dagger\sqrt{\bar{t}(CC^\dagger)}\Big\|\Big\|\sqrt{\bar{t}(CC^\dagger)}\Big\|+\|A(i,\cdot)'R^\dagger - A(i,\cdot)R^\dagger\|\|\bar{t}(CC^\dagger)\|\Big)^2\frac{\|A\|_\fr^2}{\|\hat{A}(i,\cdot)\|^2}  \\
  &\lesssim (\|A(i,\cdot)\|\sigma^{-1}+\eps\sigma\|A(i,\cdot)\|\sigma^{-2})^2\frac{\|A\|_\fr^2}{\|\hat{A}(i,\cdot)\|^2} \\
  &\lesssim \frac{\|A(i,\cdot)\|^2\|A\|_\fr^2}{\|\hat{A}(i,\cdot)\|^2\sigma^2}.
\end{align*}
This sampling procedure and the SVD dominate the runtime.
Since the sampling is exact, the only error in total variation distance is the probability of failure.
\end{proof}

\begin{remark} \label{rmk:spectral-low-rank-approximation}
%
This algorithm implicitly assumes that the important singular values are $\geq \sigma$.
Without such an assumption, we can take $\sigma = \eps\|A\|_\fr$ and $\eta = 1/2$, and have meaningful bounds on the output matrix $\hat{A}$.
Observe that, for $p(x) = x(t(\sqrt{x}) - 1)$,
\begin{align*}
  \|A \cdot t(A^\dagger A) - A\| = \|p^\svt(A)\| \leq \frac{3}{2}\eps\|A\|_\fr.
\end{align*}
So, our low-rank approximation output $\hat{A}$ satisfies $\|\hat{A} - A\| \lesssim \eps\|A\|_\fr$, with no assumptions on $A$, in $\bOt{\frac{\|A\|_\fr^6}{\|A\|^6\eps^{22}}\log^3\frac{1}{\delta}}$ time.
This can be subsequently used to get $\sq_\phi(\hat{A}(i,\cdot)) = \sq_\phi(e_i\hat{A})$ where $\|\hat{A}(i,\cdot) - A(i,\cdot)\| \lesssim \eps\|A\|_\fr$ (in a myopic sense, solving the same problem as \cref{prob:rec-systems}), or more generally, any product of $\hat{A}$ with a vector, in time independent of dimension.
\end{remark}

\subsection{Supervised clustering}\label{sec:supervised-clustering}

The 2013 paper of Lloyd, Mohseni, and Rebentrost~\cite{lloyd2013Clustering} gives two algorithms for the machine learning problem of clustering.
The first algorithm is a simple swap test procedure that was dequantized by Tang \cite{tang2018QInspiredClassAlgPCA} (the second is an application of the quantum adiabatic algorithm with no proven runtime guarantees).
We will reproduce the algorithm from \cite{tang2018QInspiredClassAlgPCA} here: since the dequantization just uses the inner product protocol, so it rather trivially fits into our framework.

We have a dataset of points in $\bbr^d$ grouped into clusters, and we wish to classify a new data point by assigning it to the cluster with the nearest average, aka \emph{centroid}.
We do this by estimating the distance between the new point $p \in \bbr^d$ to the centroid of a cluster of points $q_1,\ldots,q_{n-1} \in \bbr^d$, namely, $\|p - \frac{1}{n-1}(q_1 + \cdots + q_{n-1})\|^2$.
This is equal to $\|wM\|^2$, where
\begin{align*}
M\coloneqq \left[\begin{smallmatrix} p/\|p\| \\ -q_1/(\|q_1\|\sqrt{n-1}) \\ \vdots \\ -q_{n-1}/(\|q_{n-1}\|\sqrt{n-1})\end{smallmatrix}\right]\in \mathbb{R}^{n\times d},\qquad w \coloneqq \left[\|p\|,\frac{\|q_1\|}{\sqrt{n-1}},\dots,\frac{\|q_{n-1}\|}{\sqrt{n-1}}\right]\in \mathbb{R}^{n}.
\end{align*}
Because the quantum algorithm assumes input in quantum states, we can assume sampling and query access to the data points, giving the problem
\begin{prob} \label{prob:cluster}
Given $\sq(M) \in \bbr^{n\times d}, \q(w) \in \bbr^n$, approximate $(wM)(wM)^T$ to additive $\eps$ error with probability at least $1-\delta$. 
\end{prob}

\begin{corollary}[{\cite[Theorem 4]{tang2018QInspiredClassAlgPCA}}]\label{cor:clustering}
There is a classical algorithm to solve \cref{prob:cluster} in $\bigO{\|M\|_\fr^4\|w\|^4\frac{1}{\eps^2}\log\frac{1}{\delta}}$ time.
\end{corollary}

Note that $\|M\|_\fr^2 = 2$ and $\|w\|^2 = \|p\|^2 + \frac{1}{n-1}\sum_{i=1}^{n-1} \|q_i\|^2$.
The quantum algorithm has a quadratically faster runtime of $\bigO{\|M\|_\fr^2\|w\|^2\frac{1}{\eps}}$, ignoring $\polylog(n,d)$ factors~\cite{lloyd2013Clustering,tang2018QInspiredClassAlgPCA}.

\begin{proof}
Recall our notation for the vector of row norms $m\coloneqq \left[\|M(1,\cdot)\|,\dots,\|M(n,\cdot)\|\right]$ coming from \cref{defn:sampling-A}. We can rewrite $(wM)(wM)^T$ as an inner product $\langle u, v\rangle$ where
\begin{align*}
    u &\coloneqq \sum_{i=1}^n\sum_{j=1}^d\sum_{k=1}^n M(i,j)\|M(k,\cdot)\| e_i\otimes e_j\otimes e_k = M \otimes m \\
    v &\coloneqq \sum_{i=1}^n\sum_{j=1}^d\sum_{k=1}^n \frac{w_iw_kM(j,k)}{\|M(k,\cdot)\|} e_i \otimes e_j \otimes e_k,
\end{align*}
where $u$ and $v$ are three-dimension tensors.
By flattening $u$ and $v$, we can represent them as two vectors in $\mathbb{R}^{(n\cdot d\cdot n)\times 1}$.
We clearly have $\q(v)$ from queries to $M$ and $w$.
As for getting $\sq(u)$ from $\sq(M)$: to sample, we first sample $i$ according to $m$, sample $j$ according to $M(i,\cdot)$, and sample $k$ according to $m$; to query, compute $u_{i,j,k} = M(i,j)m(k)$.
Finally, we can apply \cref{lemma:inner-prod} to estimate $\langle u,v\rangle$. $\|u\|=\|M\|^2_\fr$ and $\|v\| = \|w\|^2$, so estimating $\langle u,v\rangle$ to $\eps$ additive error with probability at least $1-\delta$ requires $O(\|M\|^4_\fr\|w\|^4\eps^{-2}\log \frac{1}{\delta})$ samples.
\end{proof}

\subsection{Principal component analysis}\label{sec:PCA}

Principal component analysis (PCA) is an important data analysis tool, first proposed to be feasible via quantum computation by Lloyd, Mohseni, and Rebentrost~\cite{lloyd2013QPrincipalCompAnal}.
Given copies of states with density matrix $\rho = X^\dagger X$, the quantum PCA algorithm can prepare the state $\sum \lambda_i|v_i\rangle \langle v_i| \otimes |\hat{\lambda}_i\rangle \langle \hat{\lambda}_i|$, where $\lambda_i$ and $v_i$ are the eigenvalues and eigenvectors of $X^\dagger X$, and $\hat{\lambda}_i$ are eigenvalue estimates (up to additive error).
See Prakash's PhD thesis~\cite[Section 3.2]{prakash2014QLinAlgAndMLThesis} for a full analysis and Chakraborty, Gily\'{e}n, and Jeffery\ for a faster version of this algorithm in the block-encoding model \cite{chakraborty2018BlockMatrixPowers}.
Directly measuring the eigenvalue register is called \emph{spectral sampling}, but such sampling is not directly useful for machine learning applications.

Though we do not know how to dequantize this protocol exactly, we can dequantize it in the low-rank setting, which is the only useful poly-logarithmic time application that Lloyd, Mohseni, and Rebentrost~\cite{lloyd2013QPrincipalCompAnal} suggests for quantum PCA.

\begin{prob}[PCA for low-rank matrices] \label{prob:pca}
Given a matrix $\sq(X)\in\bbc^{m\times n}$ such that $X^\dagger X$ has top $k$ eigenvalues $\{\lambda_i\}_{i=1}^{k}$ and eigenvectors $\{v_i\}_{i=1}^{k}$, with probability $\geq 1-\delta$, compute eigenvalue estimates $\{\hat{\lambda}_i\}_{i=1}^k$ such that $\sum_{i=1}^k \abs{\hat{\lambda}_i - \lambda_i} \leq \eps\Tr(X^\dagger X)$ and eigenvectors $\{\sq_\phi(\hat{v}_i)\}_{i=1}^k$ such that $\|\hat{v}_i - v_i\| \leq \eps$ for all $i$.
\todo{change this problem to use $\eps\|X\|^2$.}
\end{prob}

Note that we should think of $\lambda_i$ as $\sigma_i^2$, where $\sigma_i$ is the $i$th largest singular value of $X$.
To robustly avoid degeneracy conditions, our runtime must depend on parameters for condition number and spectral gap:
\begin{align}\label{eqn:K-eta-defn}
K \coloneqq \Tr(X^\dagger X)/\lambda_k \geq k\quad\text{and}\quad\eta \coloneqq \min_{i \in [k]} |\lambda_i - \lambda_{i+1}|/\|X\|^2.
\end{align}
We also denote $\kappa \coloneqq \|X\|^2/\lambda_k$.
Dependence on $K$ and $\eta$ are necessary to reduce \cref{prob:pca} to spectral sampling.
If $K = \poly(n)$, then $\lambda_k = \Tr(X^\dagger X)/\poly(n)$, so distinguishing $\lambda_k$ from $\lambda_{k+1}$ necessarily takes $\poly(n)$ samples, and even sampling $\lambda_k$ once takes $\poly(n)$ samples. As a result, learning $v_k$ is also impossible.
A straightforward coupon collector argument (given e.g.\ by Tang \cite{tang2018QInspiredClassAlgPCA}) shows that \cref{prob:pca} can be solved by a quantum algorithm performing spectral sampling\footnote{The quantum analogue to $\sq(X)$ is efficient state preparation of $X$, a purification of $\rho$.}, with runtime depending polynomially on $K$ and $\frac{1}{\eta}$.
We omit this argument for brevity.
Classically, we can solve this PCA problem with quantum-inspired techniques, as first noted in \cite{tang2018QInspiredClassAlgPCA}.

\begin{corollary}\label{cor:PCA}
For $0 < \eps \lesssim \eta\|X\|^2/\|X\|_\fr^2$, we can solve \cref{prob:pca} in $\bOt{\frac{\|X\|_\fr^6}{\lambda_k^2\|X\|^2}\eta^{-6}\eps^{-6}\log^3\frac{k}{\delta}}$ time to get $\sq_\phi(\hat{v}_i)$ where $\sqrun(\hat{v}_i) = \bOt{\frac{\|X\|_\fr^4}{\lambda_i\|X\|^2}\eta^{-2}\eps^{-2}\log^2\frac1\delta}$.
\end{corollary}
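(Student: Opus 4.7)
The plan is to first estimate the top $k$ eigenvalues, then realise each rank-one spectral projector $v_iv_i^\dagger$ as $f_i(X^\dagger X)$ for a bump function $f_i$ centred around a good estimate of $\lambda_i$, apply \cref{thm:evenSing} to obtain an RUR approximation of this projector, and extract the approximate eigenvector from it.

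First I invoke \cref{prop:appr-svs} once with precision $\nu = \eps/\sqrt{k}$, producing a single pair of sketches $S \in \bbr^{r_0 \times m}$, $T^\dagger \in \bbr^{c_0 \times n}$; the squared singular values of $SXT$ yield estimates $\hat\lambda_i$, and Cauchy--Schwarz converts the $\ell_2$ bound $\sqrt{\sum(\hat\lambda_i - \lambda_i)^2} \leq \nu\|X\|_\fr^2$ into $\sum_i|\hat\lambda_i - \lambda_i| \leq \eps\Tr(X^\dagger X)$ as required. The non-degeneracy assumption $\eps \lesssim \eta\|X\|^2/\|X\|_\fr^2$ then forces $|\hat\lambda_i - \lambda_i| \leq \eta\|X\|^2/2$, so a trapezoidal bump $f_i$ that equals $1$ on $[\hat\lambda_i - \eta\|X\|^2/4,\, \hat\lambda_i + \eta\|X\|^2/4]$ and vanishes outside a slightly larger interval satisfies $f_i(\lambda_j) = \delta_{ij}$ for $j \in [k]$ and hence $f_i(X^\dagger X) = v_iv_i^\dagger$. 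A direct computation shows $f_i$ is $L$-Lipschitz with $L = O((\eta\|X\|^2)^{-1})$ and $\bar f_i(x) := f_i(x)/x$ is $\bar L$-Lipschitz with $\bar L = O((\lambda_k\eta\|X\|^2)^{-1})$.

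Next I apply \cref{thm:evenSing} in the spectral-norm regime ($* = \op$) with error parameter $\Theta(\eps)$, reusing the same sketches $S,T$ across all $k$ indices; this yields $R = SX$ and $C = SXT$ with $r = \tilde\Theta(\|X\|_\fr^2/(\eta^2\|X\|^2\eps^2))$, $c = \tilde\Theta(\|X\|^2\|X\|_\fr^2/(\lambda_k^2\eta^2\eps^2))$, and (after a union bound of cost $\log(k/\delta)$) $\|M_i - v_iv_i^\dagger\|_\op \leq O(\eps)$ for every $i$, where $M_i := R^\dagger\bar f_i(CC^\dagger)R$. Because $R$ and $C$ are shared, the SVD of $C$ is computed once in $O(r^2c)$ time, after which each $U_i := \bar f_i(CC^\dagger)$ costs only $O(r^2)$. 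Weyl's inequality places the top eigenvalue of $M_i$ in $[1-O(\eps), 1+O(\eps)]$ with spectral gap $\Omega(1)$, so by Davis--Kahan its top unit eigenvector $\hat v_i$ satisfies $\|\hat v_i - v_i\| \leq O(\eps)$. To compute $\hat v_i$ I use that every eigenvector $y$ of the $r\times r$ matrix $U_i RR^\dagger$ with non-zero eigenvalue $\mu$ pulls back to an eigenvector $R^\dagger y$ of $M_i$ with eigenvalue $\mu$; I diagonalise $U_i RR^\dagger$, pick the top eigenvector $y_i$, and rescale so that $\|R^\dagger y_i\| = 1$, setting $\hat v_i := R^\dagger y_i$.

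Finally, since $\hat v_i = R^\dagger y_i$ is a linear combination of $r$ rows of $X$, applying \cref{lem:sq-sketching} and then \cref{lemma:sample-Mv} yields $\sq_\phi(\hat v_i)$ with cancellation $\phi = r\sum_j |y_i(j)|^2\|R(j,\cdot)\|^2 / \|\hat v_i\|^2 = \|X\|_\fr^2\|y_i\|^2$, using the importance-sampling normalisation $\|R(j,\cdot)\|^2 = \|X\|_\fr^2/r$. Writing $R = \sum_j \hat\sigma_j \hat u_j \hat v_j^\dagger$ shows that the minimal-norm $y_i$ making $R^\dagger y_i$ a unit vector is $y_i = \hat u_i / \hat\sigma_i$, so $\|y_i\| = 1/\hat\sigma_i \approx 1/\sqrt{\lambda_i}$ by the singular-value stability guaranteed by \cref{prop:appr-svs} applied to the same sketch $S$. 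This gives $\phi = O(\|X\|_\fr^2/\lambda_i)$ and $\sqrun(\hat v_i) = \bOt{r\phi\log(1/\delta)} = \bOt{\frac{\|X\|_\fr^4}{\lambda_i\|X\|^2\eta^2\eps^2}\log^2\frac{1}{\delta}}$. The global runtime is dominated by the one-time SVD of $C$, namely $\bOt{r^2c} = \bOt{\frac{\|X\|_\fr^6}{\lambda_k^2\|X\|^2\eta^6\eps^6}\log^3\frac{k}{\delta}}$, matching the target. The main technical obstacle is justifying the cancellation bound $\|y_i\| \approx 1/\sqrt{\lambda_i}$ rigorously: this requires identifying the top eigenvector of $U_iRR^\dagger$ with the appropriate left singular direction of $R$, which in turn hinges on $R$ inheriting the top singular structure of $X$ with overwhelming probability, an event itself guaranteed by the non-degeneracy assumption $\eps \lesssim \eta\|X\|^2/\|X\|_\fr^2$.
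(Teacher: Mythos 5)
Your proposal follows the paper's proof very closely: the same call to \cref{prop:appr-svs} to estimate eigenvalues, the same trapezoidal bump functions $f_i$, the same shared sketches $R = SX$ and $C = SXT$ reused across all $k$ indices, the same application of \cref{thm:evenSing} in spectral norm, and the same final runtime arithmetic. The paper uses a slightly different error parameter for \cref{prop:appr-svs} (namely $\frac{\eps\sqrt{\lambda_k}}{8\|X\|_\fr}$ rather than $\eps/\sqrt{k}$), but both choices are compatible with the sketch sizes the SVT needs, so this is not a substantive difference.

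The genuine gap is the one you flag yourself: the bound $\|y_i\| \approx 1/\sqrt{\lambda_i}$ used to control the cancellation constant $\phi$. Your argument via ``the minimal-norm $y_i$ making $R^\dagger y_i$ a unit vector is $\hat u_i/\hat\sigma_i$'' does not describe what your algorithm computes: you diagonalize $U_iRR^\dagger$ and take its top eigenvector, which is not the minimal-norm preimage of a unit vector. Moreover the nonzero eigenvector of $U_iRR^\dagger$ is determined by $C$ through $U_i = \bar f_i(CC^\dagger)$, and there is no reason it should align with the $i$th left singular vector $\hat u_i$ of $R$. The paper closes this gap without invoking any singular-vector structure of $R$. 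The key observation is that $\bar f_i(CC^\dagger)$ is \emph{exactly} rank one: by the singular-value estimates from \cref{prop:appr-svs} and the eigengap $\eta\|X\|^2$, $CC^\dagger$ has precisely one eigenvalue inside the support of $f_i$. Writing $\bar f_i(CC^\dagger) = \bar v_i\bar v_i^\dagger$, one obtains $R^\dagger\bar f_i(CC^\dagger)R = (R^\dagger\bar v_i)(R^\dagger\bar v_i)^\dagger = \hat v_i\hat v_i^\dagger$ directly (so no diagonalization of $U_iRR^\dagger$ or Davis--Kahan is needed at all), with $\|\bar v_i\|^2 = \|\bar f_i(CC^\dagger)\| \leq \max_x \bar f_i(x) \lesssim \lambda_i^{-1}$ coming from the function bound in \cref{lem:evenSingBounds} alone. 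Since the sole nonzero eigenvector of $U_iRR^\dagger$ is in fact $\bar v_i$, and your rescaled $y$ equals $\bar v_i/\|R^\dagger\bar v_i\|$ with $\|R^\dagger\bar v_i\| \approx 1$, this bound on $\|\bar v_i\|$ is precisely what your step requires; replacing the appeal to $\hat u_i/\hat\sigma_i$ with the function-value bound on $\|\bar v_i\|$ repairs the argument.
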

This improves significantly over prior work~\cite[Theorem~8]{tang2018QInspiredClassAlgPCA}, which achieves the runtime of $\bOt{\frac{\|X\|_\fr^{36}}{\|X\|^{12}\lambda_k^{12}}\eta^{-6}\eps^{-12}\log^3\frac{k}{\delta}}$.\footnote{This runtime comes from taking $\eps_\sigma = \eps_v = \eps$ and changing the normalization of the gap parameter $\eta = \eta\|X\|^2/\|X\|_\fr^2$ to correspond to the problem as formulated here.}
The best quantum algorithm for this problem runs in $\bOt{\frac{\|X\|_\fr\|X\|}{\lambda_k\eps}}$ time, up to factors of $\polylog(m, n)$~\cite[Theorem~27]{chakraborty2018BlockMatrixPowers}.\!\footnote{Given $X$ in QRAM, this follows from applying Theorem~27 to a quantum state with density matrix of $X^\dagger X$ with $\alpha = \|X\|_\fr$ and $\Delta = \frac{\eps\|X\|_\fr^2}{\|X\|} \lesssim \frac{\eta\|X\|}{\|X\|_\fr}$. The output is some estimate of $\sqrt{\lambda_i}$ to $\Delta$ error, which when squared is an estimate of $\lambda_i$ to $\Delta \|X\| = \eps\|X\|_\fr^2$ error as desired.
Then, the density matrix is a probability distribution over eigenvectors with their corresponding eigenvalue estimate (which is enough to identify the eigenvector).
The coupon collector argument mentioned above gives us access to all the top $k$ eigenvalues and eigenvectors by running this algorithm $\|X\|_\fr^2/\lambda_k$ times~\cite{tang2018QInspiredClassAlgPCA}.}

We approach the problem as follows.
First, we use that an importance-sampled submatrix of $X$ has approximately the same singular values as $X$ itself (\cref{prop:appr-svs}) to get our estimates $\{\hat{\lambda}_i\}_{i=1}^k$.
With these estimates, we can define smoothened step functions $f_i$ for $i \in [k]$ such that $f_i(X^\dagger X) = v_i^\dagger v_i$.
We can then use our main theorem to find an RUR decomposition for $f_i(X^\dagger X)$.
We use additional properties of the RUR description to argue that it is indeed a rank-1 outer product $\hat{v}_i^\dagger \hat{v}_i$, which is our desired approximation for the eigenvector.
We have sampling and query access to $\hat{v}_i$ because it is $R^\dagger x$ for some vector $x$.
Our runtime is quite good because these piecewise linear step functions have relatively tame derivatives, as opposed to the thresholded inverse function, whose Lipschitz constants must incur quadratic and cubic overheads in terms of condition number.

\begin{proof}
We will assume that we know $\lambda_k$ and $\eta$.
If both are unknown, then we can estimate them with the singular value estimation procedure described below (\cref{prop:appr-svs}).

Notice that $\eta\|X\|^2 \leq \lambda_k$ follows from our definition of $\eta$.
The algorithm will proceed as follows: first, consider $C \coloneqq SXT \in \bbc^{c\times r}$ as described in \cref{thm:evenSing}, with parameters
\begin{align*}
  r &\coloneqq \bOt{\frac{\|X\|_\fr^2}{\eta^2\|X\|^2\eps^2}\log\frac{k}{\delta}}
  \qquad c\coloneqq \bOt{\frac{\|X\|_\fr^2\|X\|^2}{\eta^2\lambda_k^2\eps^2}\log\frac{k}{\delta}}.
\end{align*}
Consider computing the eigenvalues of $CC^\dagger$; denote the $i^{\text{th}}$ eigenvalue $\hat{\lambda}_i$.
Since $r,c = \Omega(\frac{\|X\|_\fr^2}{\lambda_k\eps^2}\log\frac{1}{\delta})$, by \cref{prop:appr-svs} with error parameter $\frac{\eps\sqrt{\lambda_k}}{8\|X\|_\fr}$, with probability $\geq 1-\delta$,
\begin{align*}
  \sqrt{\sum\nolimits_{i=1}^{\min(m,n)} (\hat{\lambda}_i - \lambda_i)^2} &\leq \frac{\eps\sqrt{\lambda_k}}{8\|X\|_\fr}\|X\|_\fr^2.
\end{align*}
These $\hat{\lambda}_i$'s for $i \in [k]$ have the desired property for eigenvalue estimates:
\begin{align*}
  \sum_{i=1}^k |\hat{\lambda}_i - \lambda_i|
  \leq \sqrt{k}\sqrt{\sum\nolimits_{i=1}^{k} (\hat{\lambda}_i - \lambda_i)^2}
  \leq \eps\sqrt{k\lambda_k}\|X\|_\fr \leq \eps\|X\|_\fr^2.
\end{align*}
This bound also implies that, for all $i$, $|\hat{\lambda}_i - \lambda_i| \leq \frac{\eps}{8}\|X\|_\fr^2$.
Next, consider the eigenvalue transformations $f_i$ for $i \in [k]$, defined
\[
  f_i(x) \coloneqq \begin{cases}
    0 & x - \hat{\lambda}_i < - \frac14\eta\|X\|^2 \\
    2 + \frac{8}{\eta\|X\|^2}(x - \hat{\lambda}_i) & - \frac14\eta\|X\|^2 \leq x - \hat{\lambda}_i < - \frac18\eta\|X\|^2 \\
    1 & -\frac18\eta\|X\|^2 \leq x - \hat{\lambda}_i < \frac18\eta\|X\|^2 \\
    2 - \frac{8}{\eta\|X\|^2}(x - \hat{\lambda}_i) & \frac18\eta\|X\|^2 \leq x - \hat{\lambda}_i < \frac14\eta\|X\|^2 \\
    0 & \frac14\eta\|X\|^2 \leq x - \hat{\lambda}_i
  \end{cases}.
\]
This is a function that is one when $|x - \hat{\lambda}_i| \leq \frac18\eta\|X\|^2$, zero when $|x - \hat{\lambda}_i| \geq \frac14\eta\|X\|^2$, and interpolates between them otherwise.
From the eigenvalue gap and the aforementioned bound $|\hat{\lambda}_i - \lambda_i| \leq \frac18\eta\|X\|^2$, we can conclude that $f_i(X^\dagger X) = v_iv_i^\dagger$ exactly.
Further, by \cref{thm:evenSing}, we can conclude that $R^\dagger \bar{f}_i(CC^\dagger) R$ approximates $v_iv_i^\dagger$, with $C, R$ the exact approximations used to estimate singular values.
The conditions of \cref{thm:evenSing} are satisfied because $\eps \lesssim 8 \leq \frac{8}{\eta} = L\|X\|^2$ for $L$ the Lipschitz constant of $f_i$.
The values of $r,c$ are chosen so that $\|R^\dagger \bar{f}_i(CC^\dagger) R - f_i(X^\dagger X)\| \leq \eps/2$ (note $f_i(0) = 0$):
\begin{align*}
  r &= \bOt{L^2\|X\|^2\|X\|_\fr^2\frac{1}{\eps^2}\log\frac1\delta}
  = \bOt{\frac{\|X\|_\fr^2}{\|X\|^2\eta^2\eps^2}\log\frac{1}{\delta}} \\
  c &= \bOt{\bar{L}^2\|X\|^6\|X\|_\fr^2\frac{1}{\eps^2}\log\frac{1}{\delta}}
  = \bOt{\frac{\|X\|^6\|X\|_\fr^2}{\eta^2\|X\|^4(\hat{\lambda}_i - \frac14\eta\|X\|^2)^2\eps^2}\log\frac{1}{\delta}}
  = \bOt{\frac{\|X\|^2\|X\|_\fr^2}{\eta^2\lambda_k^2\eps^2}\log\frac{1}{\delta}}.
\end{align*}
Further, $f_i$ is chosen with respect to $\hat{\lambda}_i$ such that $R^\dagger \bar{f}_i(CC^\dagger)R$ is rank one, since $CC^\dagger$ has one eigenvalue between $\hat{\lambda}_i - \frac{1}{4}\eta\|X\|^2$ and $\hat{\lambda}_i + \frac{1}{4}\eta\|X\|^2$.
Thus, this approximation is an outer product, $R^\dagger \bar{f}_i(CC^\dagger)R = \hat{v}_i\hat{v}_i^\dagger$, and we take the corresponding vector to be our eigenvector estimate: $\|\hat{v}_i\| \leq \sqrt{1+\eps/2} \leq 1+\eps/4$, so
\begin{align*}
  \eps/2 &\geq \|(\hat{v}_i\hat{v}_i^\dagger - v_iv_i^\dagger)v_i\| \tag*{by definition} \\
  &= \|\langle \hat{v}_i, v_i\rangle \hat{v}_i - v_i\| \tag*{by $\|v_i\|^2 = 1$} \\
  &\geq \|\hat{v}_i - v_i\| - (\langle \hat{v}_i, v_i\rangle - 1)\|\hat{v}_i\| \tag*{by triangle inequality} \\
  &\geq \|\hat{v}_i - v_i\| - (\|\hat{v}_i\|\|v_i\| - 1)\|u\| \tag*{by Cauchy--Schwarz} \\
  &\geq \|\hat{v}_i - v_i\| - (1+\eps/4 - 1)(1+\eps/4) \tag*{by $\|\hat{v}_i\| \leq 1+\eps/4$} \\
  &\geq \|\hat{v}_i - v_i\| - \eps/2,
\end{align*}
which is the desired bound.
By choosing failure probability $\delta/k$, the bound can hold true for all $k$ with probability $\geq 1-\delta$.

Finally, we can get access to $\hat{v}_i = R^\dagger \bar{v}_i$, where $\bar{v}_i \in \bbc^r$ satisfies $\bar{v}_i^\dagger\bar{v}_i = \bar{f}_i(CC^\dagger)$.
Since $\|\bar{v}_i^\dagger\| \leq \sqrt{\max_x \bar{f}_i(x)} \lesssim \lambda_i^{-\frac12}$, using \cref{lem:b-sq-approx,lemma:sample-Mv}, we have $\sq_\phi(\hat{v}_i)$ with
\begin{align*}
  \phi &= r\frac{\sum_{s=1}^r\abs{\hat{v}_i(s)}^2\|R(s,\cdot)\|^2}{\|R^\dagger \bar{v}_i\|^2}
  = r\frac{\sum_{s=1}^r\abs{\hat{v}_i(s)}^2\|X\|_\fr^2}{\|R^\dagger \bar{v}_i\|^2r}
  = \frac{\|\hat{v}_i\|^2\|X\|_\fr^2}{\|R^\dagger \bar{v}_i\|^2}
  \lesssim \frac{\|X\|_\fr^2}{\lambda_i(1-\eps)^2}
  \lesssim \frac{\|X\|_\fr^2}{\lambda_i},
\end{align*}
so $\sqrun_\phi(\hat{v}_i) = \phi\sqcb_\phi(v)\log\frac{1}{\delta} \lesssim \frac{\|X\|_\fr^2}{\lambda_i}r\log\frac{1}{\delta}$.
\end{proof}

\subsection{Matrix inversion and principal component regression}\label{sec:matrix-inversion}

The low-rank matrix inversion algorithms given by Gily\'{e}n, Lloyd, and Tang~\cite{gilyen2018QInsLowRankHHL} and Chia, Lin, and Wang~\cite{chia2018QInspiredSubLinLowRankLinEqSolver} dequantize Harrow, Hassidim, and Lloyd's quantum matrix inversion algorithm (HHL) \cite{harrow2009QLinSysSolver} in the regime where the input matrix is \emph{low-rank} instead of sparse.
The corresponding quantum algorithm in this regime is given by Chakraborty, Gily\'{e}n, and Jeffery~\cite{chakraborty2018BlockMatrixPowers}, among others.
Since sparse matrix inversion is BQP-complete, it is unlikely that one can efficiently dequantize it.
However, the variant of low-rank (non-sparse) matrix inversion appears often in quantum machine learning \cite{prakash2014QLinAlgAndMLThesis,wossnig2018QLinSysAlgForDensMat,rebentrost2014QSVM,cong2016quantum,rebentrost2018QuantumFinance}, making it an influential primitive in its own right.

Using our framework, we can elegantly derive the low-rank matrix inversion algorithm in a manner similar to prior quantum-inspired work \cite{gilyen2018QInsLowRankHHL,chia2018QInspiredSubLinLowRankLinEqSolver}.
Moreover, we can also handle the approximately low-rank regime and only invert the matrix on a well-conditioned subspace, solving principal component regression---for more discussion see~\cite{gilyen2018QSingValTransf}.
Namely, we can find a thresholded pseudoinverse of an input matrix:
\begin{definition}[$A_{\sigma,\eta}^+$] \label{def:psinv}
We define $A_{\sigma,\eta}^+$ to be any singular value transform of $A$ satisfying:
\begin{align}
  A_{\sigma,\eta}^+ &\coloneqq \operatorname{tinv}_{\sigma,\eta}^{(\mathrm{SV})}(A) \qquad \operatorname{tinv}_{\sigma, \eta}(\lambda)  \begin{cases} = 1/\lambda & \lambda \geq \sigma \\ = 0 & \lambda < \sigma(1-\eta) \\ \in [0,\sigma^{-1}] & \text{otherwise} \end{cases}.
\end{align}
\end{definition}
This definition is analogous to $A_{\sigma,\eta}$ in \cref{sec:recommendation-systems}: it is $A^+$ for singular vectors with value $\geq \sigma$, zero for singular vectors with value $\leq \sigma(1-\eta)$, and a linear interpolation between the two in between.

\begin{prob} \label{prob:inv}
Given $\sq_\varphi(A) \in \mathbb{C}^{m\times n}, \q(b) \in \mathbb{C}^m$, with probability $\geq 1-\delta$, get $\sq_\phi(\hat{x})$ such that $\|\hat{x} - x^*\| \leq \eps\|A\|^{-1}\|b\|$, where $x^* \coloneqq A_{\sigma,\eta}^+b$.
\end{prob}

\begin{corollary} \label{cor:inv}
For $0 < \eps \lesssim \frac{\|A\|^2}{\sigma^2}$ and $\eta \leq 0.99$, we can solve \cref{prob:inv} in $\bOt{\frac{\varphi^6 K^3\kappa^{11}}{\eta^6\eps^6}\log^3\frac1\delta}$ time to give $\sq_{\phi}(\hat{x})$ for $\sqrun_\phi(\hat{x}) = \bOt{\frac{\varphi^4 K^2\kappa^5}{\eta^2\eps^2}\frac{\|x^*\|^2}{\|\hat{x}\|^2}\log^2\frac{1}{\delta}}$.
\end{corollary}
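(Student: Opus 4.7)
The plan is to reduce the problem to an application of \cref{thm:evenSing}. Specifically, observe that $A_{\sigma,\eta}^+ b = \iota(A^\dagger A)\, A^\dagger b$, where $\iota\colon[0,\|A\|^2]\to\bbr$ is defined to equal $1/x$ for $x\geq\sigma^2$, to vanish for $x\leq(1-\eta)^2\sigma^2$, and to interpolate linearly between; this matches \cref{def:psinv} on each singular value since $\iota(\sigma_i^2)\sigma_i = \operatorname{tinv}_{\sigma,\eta}(\sigma_i)$. Because $\iota(0)=0$ and a direct inspection of the three pieces gives Lipschitz constants $L=O(1/(\eta\sigma^4))$ for $\iota$ and $\bar L=O(1/(\eta\sigma^6))$ for $\bar\iota(x):=\iota(x)/x$, the hypotheses of \cref{thm:evenSing} (with the spectral-norm variant) are met.

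Invoking \cref{thm:evenSing} with an error parameter $\eps_1$ to be chosen produces sketches $R\in\bbc^{r\times n}$ and $C\in\bbc^{r\times c}$ with $\|R^\dagger\bar\iota(CC^\dagger)R-\iota(A^\dagger A)\|_\op\leq\eps_1$ with probability $\geq 1-\delta$. Set $U:=\bar\iota(CC^\dagger)$, computable from an SVD of $C$ in $\tilde{O}(r^2 c)$ time. Next, approximate the $r$-dimensional vector $v\approx RA^\dagger b$ entry-wise: each entry is proportional to a scalar of the form $b^\dagger A w_i$ for $w_i$ a normalized row of $A$, which can be estimated to any desired precision by the expectation-value protocol of \cref{lemma:xAy} using $\sq_\varphi(A),\q(b)$, and $\q(w_i)$. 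Finally, output $\hat{x}:=R^\dagger(Uv)$, a linear combination of $r$ rows of $A$ with explicit coefficients $Uv\in\bbc^r$; \cref{lemma:sample-Mv} then delivers $\sq_\phi(\hat{x})$ with the claimed cost.

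For error control, the triangle inequality gives
\[
\|\hat{x}-x^*\|\;\leq\;\|\iota(A^\dagger A)-R^\dagger UR\|_\op\,\|A^\dagger b\|\;+\;\|R^\dagger U\|_\op\,\|v-RA^\dagger b\|,
\]
where the second term's norm factor is controlled via \cref{lem:evenSingBounds}: writing $U=\sqrt{U}\sqrt{U}$, one obtains $\|R^\dagger U\|_\op\leq\|R^\dagger\sqrt{U}\|\cdot\|\sqrt{U}\|\lesssim\sigma^{-1}\cdot\sigma^{-2}=\sigma^{-3}$. Choosing the per-entry error in the estimation of $v$ so that $\|v-RA^\dagger b\|\lesssim\eps\sigma^3\|x^*\|$ controls the second term to $\eps\|x^*\|$.

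The main obstacle is the parameter bookkeeping needed to realize the advertised $\kappa^{11}$ (rather than $\kappa^5$) dependence. To make the first term at most $\eps\|x^*\|$, one must set $\eps_1\lesssim\eps\|x^*\|/\|A^\dagger b\|$; the worst-case ratio $\|A^\dagger b\|/\|x^*\|\leq\|A\|^2$ (achieved when $b$ concentrates on a top singular vector) forces $\eps_1\asymp\eps/(\kappa\sigma^2)$. Substituting this $\eps_1$ together with $L$ and $\bar L$ into \eqref{eq:evenSVTrc} yields $r=\tilde{O}(\varphi^2 K\kappa^3/(\eta^2\eps^2))$ and $c=\tilde{O}(\varphi^2 K\kappa^5/(\eta^2\eps^2))$, whose product $r^2 c$ matches the claimed runtime via the dominant SVD computation. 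A parallel calculation, bounding $\|Uv\|$ by $\|x^*\|/\sigma$ up to the approximation slack and tracking the resulting cancellation constant in \cref{lemma:sample-Mv}, produces the stated $\sqrun_\phi(\hat{x})$.
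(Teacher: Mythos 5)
Your proposal follows the paper's proof essentially step for step: the same reduction $x^*=\iota(A^\dagger A)A^\dagger b$ with the same thresholded-inverse $\iota$, the same error parameter $\eps_1 \asymp \eps/\|A\|^2$ fed to \cref{thm:evenSing} (spectral-norm version), the same entry-wise estimation of $RA^\dagger b$ via the expectation-value protocol of \cref{lemma:xAy}, the same assembly $\hat x = R^\dagger(Uv)$, and the same sampling argument via \cref{lemma:sample-Mv,lem:b-sq-approx}. Your Lipschitz bound $\bar L = O(1/(\eta\sigma^6))$ is in fact cleaner (and, I believe, the correct one given $\eta\leq 0.99$) than the $\eta^2(1-\eta)^2$ the paper writes, and your $r,c$ match the paper's.

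One small inconsistency in your sketch: you assert $\|Uv\|\lesssim\|x^*\|/\sigma$ ``up to the approximation slack'' and claim this reproduces the stated $\sqrun_\phi(\hat x)$. But that bound, plugged into the cancellation constant of \cref{lemma:sample-Mv}, yields $\sqrun_\phi(\hat x)=\bOt{\varphi^4 K^2\kappa^3\eta^{-2}\eps^{-2}\frac{\|x^*\|^2}{\|\hat x\|^2}\log^2\frac1\delta}$, with $\kappa^3$ rather than the stated $\kappa^5$. The paper instead bounds $\|\bar\iota(CC^\dagger)u\|\lesssim\sigma^{-3}\|A\|\,\|b\|\lesssim\kappa\,\|x^*\|/\sigma$ — carrying an extra $\kappa$ from $\|A^\dagger b\|\leq\|A\|\|b\|$ and $\|b\|\lesssim\|A\|\|x^*\|$ — which is what gives the $\kappa^5$. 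This is only bookkeeping and does not affect the validity of the approach, but as written your sketch contradicts the target bound (it would actually be an improvement on it, were the tighter $\|Uv\|$ bound justified, which it is not without a further argument that $R^\dagger$ acts as an approximate $\sigma$-scaled isometry on $Uv$).
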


This should be compared to \cite{gilyen2018QInsLowRankHHL}, which applies only to strictly rank-$k$ $A$ with $\varphi = 1$ and gets the incomparable runtime of $\bOt{\frac{K^3\kappa^8k^6}{\eta^6\eps^6}\log^3\frac{1}{\delta}}$.
The corresponding quantum algorithm using block-encodings takes $\bigO{\|A\|_\fr/\sigma}$ time, up to $\polylog(m,n)$ factors, to get this result for constant $\eta$~\cite[Theorem~41]{gilyen2018QSingValTransf}.

If we further assume that $\eps < 0.99$ and $b$ is in the image of $A$, then $\sqrun_\phi(\hat{x})$ can be simplified, since $\|\hat{x}\| \geq \|x^*\| - \eps\|A\|^{-1}\|b\| \geq (1-\eps)\|x^*\|$, so $\frac{\|x^*\|}{\|\hat{x}\|} \leq 100$.
However, this algorithm also works for larger $\eps$; namely, if we only require that $\|\hat{x} - x^*\| \leq \eps\sigma^{-1}\|b\|$ (a ``worst-case'' error bound), then this algorithm works with runtime smaller by a factor of $\kappa^3$ (and $\sqrun_\phi(\hat{x})$ smaller by a factor of $\kappa$).

The algorithm comes from rewriting $A_{\sigma,\eta}^+b = \iota(A^\dagger A)A^\dagger b$ for $\iota$ a function encoding a thresholded inverse.
Namely, $\iota(x) = 1/x$ for $x \geq \sigma^2$, $\iota(x) = 0$ for $x \leq (1-\eta)^2\sigma^2$, and is a linear interpolation between the endpoints for $x \in [(1-\eta)^2\sigma^2, \sigma^2]$.
By our main theorem, we can find an RUR decomposition for $\iota(A^\dagger A)$, from which we can then get $\sq(R^\dagger URA^\dagger b)$ via sampling techniques.

\begin{proof}
We will solve our problem for $x^* = A_{\sigma,\eta}^+b = \iota(A^\dagger A)A^\dagger b$ where
\[
  \iota(x) \coloneqq \begin{cases}
    0 & x < \sigma^2(1-\eta)^2 \\
    \frac{1}{(2\eta-\eta^2)\sigma^4}(x - \sigma^2(1-\eta)^2) & \sigma^2(1-\eta)^2 \leq x < \sigma^2 \\
    \frac{1}{x} & \sigma^2 \leq x
  \end{cases}.
\]
So, if we can estimate $\iota(A^\dagger A)$ such that $\|\iota(A^\dagger A) - R^\dagger \bar{\iota}(CC^\dagger) R\| \leq \frac{\eps}{\|A\|^2}$, then as desired,
\[
  \|A_{\sigma,\eta}^+b - R^\dagger \bar{\iota}(CC^\dagger) R A^\dagger b\| \leq \frac{\eps}{\|A\|}\|b\| \leq \eps\|A_{\sigma,\eta}^+b\|.
\]
By \cref{thm:evenSing} with $L = \frac{1}{(2\eta - \eta^2)\sigma^4}$ and $\bar{L} = \frac{1}{(1-\eta)^2(2\eta-\eta^2)\sigma^6}$, we can find such $R$ and $C$ with
\begin{align*}
  r &
  = \bOt{\varphi^2\frac{\|A\|^2\|A\|_\fr^2}{(2\eta-\eta^2)^2\sigma^8\frac{\eps^2}{\|A\|^4}}\log\frac1\delta}
  = \bOt{\frac{\varphi^2K\kappa^3}{\eta^2\eps^2}\log\frac1\delta} \\
  c &= \bOt{\varphi^2\frac{\|A\|^6\|A\|_\fr^2}{(1-\eta)^4(2\eta-\eta^2)^2\sigma^{12}\frac{\eps^2}{\|A\|^4}}\log\frac1\delta}
  = \bOt{\frac{\varphi^2K\kappa^5}{\eta^2\eps^2}\log\frac1\delta}.
\end{align*}
Computing the SVD of a matrix of this size dominates the runtime, giving the complexity in the theorem statement.
Next, we would like to further approximate $R^\dagger \bar{\iota}(CC^\dagger) RA^\dagger b$.
We will do this by estimating $RA^\dagger b$ by some vector $u$ to $\eps\sigma^3\|A\|^{-1}\|b\| = \eps\|A\|_\fr^2\|b\|K^{-1}\kappa^{-\frac12}$ error, since then, using the bounds from \cref{lem:evenSingBounds},
\begin{align*}
  \|R^\dagger \bar{\iota}(CC^\dagger) RA^\dagger b - R^\dagger \bar{\iota}(CC^\dagger) u\| &\leq \Big\|R^\dagger\sqrt{\bar{\iota}(CC^\dagger)}\Big\|\Big\|\sqrt{\bar{\iota}(CC^\dagger)}\Big\|\|RA^\dagger b - u\| \\
  &\lesssim \sqrt{\textstyle\sigma^{-2} + \frac{\eps}{\|A\|^2}}\sigma^{-2}(\eps\sigma^3\|A\|^{-1}\|b\|)
  \lesssim \eps\|A\|^{-1}\|b\|.
\end{align*}
We use \cref{lemma:xAy} to estimate $u(i) = R(i,\cdot)A^\dagger b$, for all $i \in [r]$, to $\eps\|R(i,\cdot)\|\|A\|_\fr\|b\|K^{-1}\kappa^{-\frac12}$ error, with probability $\geq 1-\delta/r$.
This takes $\bigO{\varphi\frac{K^2\kappa}{\eps^2}\log\frac{r}{\delta}}$ samples for each of the $r$ entries.
This implies that $\hat{x} \coloneqq R^\dagger \bar{\iota}(CC^\dagger)u$ has the desired error and failure probability.
Finally, we can use \cref{lem:b-sq-approx,lemma:sample-Mv} with matrix $R^\dagger$ and vector $\bar{\iota}(CC^\dagger)u$ to get $\sq_\phi(\hat{x})$ for
\begin{align*}
  \phi &= \varphi r\frac{\sum_{s=1}^r\abs{[\bar{\iota}(CC^\dagger)u](s)}^2\|R(s,\cdot)\|^2}{\|\hat{x}\|^2} \\
  &= \varphi^2\frac{\|\bar{\iota}(CC^\dagger)u\|^2\|A\|_\fr^2}{\|\hat{x}\|^2} \tag*{by $\|R(s,\cdot)\| \leq \|A\|_\fr\sqrt{\varphi/r}$} \\
  &\leq \varphi^2\frac{(\|\bar{\iota}(CC^\dagger)R\|\|A^\dagger\|\|b\| + \|\bar{\iota}(CC^\dagger)\|\|RA^\dagger b - u\|)^2\|A\|_\fr^2}{\|\hat{x}\|^2} \tag*{by linear algebra} \\
  &\lesssim \varphi^2\frac{(\sigma^{-3}\|A\|\|b\| + \sigma^{-4}\eps\sigma^3\|b\|/\|A\|)^2\|A\|_\fr^2}{\|\hat{x}\|^2} \tag*{by prior bounds} \\
  &\lesssim \varphi^2\frac{\sigma^{-6}\|A\|^2\|b\|^2\|A\|_\fr^2}{\|\hat{x}\|^2} \tag*{by $\eps \lesssim \|A\|^2/\sigma^2$} \\
  &\leq \varphi^2K\kappa^2\frac{\|x^*\|^2}{\|\hat{x}\|^2}, \tag*{by $\|A\|^{-1}\|b\| \leq \|x^*\|$}
\end{align*}
so $\sqrun_\phi(\hat{x}) = \phi\sqcb_\phi(\hat{x})\log\frac{1}{\delta} = \bigO{r\varphi^2 K\kappa^2\frac{\|x^*\|^2}{\|\hat{x}\|^2}\log\frac{1}{\delta}}$.
\end{proof}

\subsection{Support vector machines}\label{sec:SVM}
In this section, we use our framework to dequantize Rebentrost, Mohseni, and Lloyd's quantum support vector machine \cite{rebentrost2014QSVM}, which was previously noted to be possible by Ding, Bao, and Huang~\cite{ding2019SVM}.
Mathematically, the support vector machine is a simple machine learning model attempting to label points in $\bbr^m$ as $+1$ or $-1$. Given input data points $x_1, \ldots, x_m \in \bbr^n$ and their corresponding labels $y \in \{\pm 1\}^m$. Let $w \in \bbr^n$ and $b \in \bbr$ be the specification of hyperplanes separating these points. It is possible that no such hyperplane satisfies all the constraints. To resolve this, we add a slack vector $e \in \bbr^m$ such that $e(j) \geq 0$ for $j \in [m]$. We want to minimize the squared norm of the residuals:
\begin{align*}
  \min_{w, b} &\quad \frac{1}{2}\frac{\|w\|^2}{2} + \frac{\gamma}{2}\|e\|^2 \\
  \text{s.t.} &\quad y(i)(w^Tx_i+ b) = 1 - e(i), \quad \forall i \in [m].
\end{align*}
The dual of this problem is to maximize over the Karush-Kuhn-Tucker multipliers of a Lagrange function, taking partial derivatives of which yields the linear system
\begin{align}
\normalsize
    \label{eq:svm}
    \begin{bsmallmatrix}
      0 & \vec{1}^T \\
      \vec{1} & XX^T + \gamma^{-1}I
    \end{bsmallmatrix}
    \begin{bsmallmatrix} b \\ \alpha \end{bsmallmatrix} = \begin{bsmallmatrix} 0 \\ y \end{bsmallmatrix},
\end{align}
where $\vec{1}$ is the all-ones vector and $X = \{x_1, \ldots, x_m\} \in \bbc^{m \times n}$.
Call the above $m+1\times m+1$ matrix $F$, and $\hat{F} \coloneqq F/\Tr(F)$.

The quantum algorithm, given $X$ and $y$ in QRAM, outputs a quantum state $|\hat{F}_{\lambda,0.01}^+[\begin{smallmatrix} 0 \\ y \end{smallmatrix}]\rangle$ (\cref{def:psinv}) in $\bOt{\frac{1}{\lambda^3\eps^3}\polylog(mn)}$ time.
The quantum-inspired analogue is as follows.
\begin{prob}
  \label{prob:svm}
  Given $\sq(X) \in \bbr^{m\times n}$ and $\sq(y) \in \bbr^{m}$, for $\|\hat{F}\| \leq 1$, output $\sq_\phi(v) \in \bbr^{m+1}$ such that $\|\hat{x} - \hat{F}_{\lambda,\eta}^+[\begin{smallmatrix} 0 \\ y \end{smallmatrix}]\| \leq \eps\|\hat{F}_{\lambda,\eta}^+[\begin{smallmatrix} 0 \\ y \end{smallmatrix}]\|$ with probability $\geq 1-\delta$.
\end{prob}
Note that we must assume $\|\hat{F}\| \leq 1$; the quantum algorithm makes the same assumption\footnote{The algorithm as written in \cite{rebentrost2014QSVM} assumes that $\|F\| \leq 1$; we confirmed with an author that this is a typo.}.
Another dequantization was reported in~\cite{ding2019SVM}, which, assuming $X$ is strictly low-rank (with minimum singular value $\sigma$), outputs a description of $(XX^T)^+y$ that can be used to classify points.
This can be done neatly in our framework: express $(XX^T)^+$ (or, more generally, $(XX^T)_{\sigma,\eta}^+$) as $X f(X^TX) X^T$ for the appropriate choice of $f$.
Then, use \cref{thm:evenSing} to approximate $f(X^TX) \approx R^T ZR$ and use \cref{lem:mmhp} to approximate $XR^T \approx CW^T$.
This gives an approximate ``CUC'' decomposition of the desired matrix, since $Xf(X^TX)X^T \approx XR^TZRX^T \approx CW^TZWC^T$, which we can use for whatever purpose we like.

For our solution to \cref{prob:svm}, though, we simply reduce to matrix inversion as described in \cref{sec:matrix-inversion}: we first get $\sq_\phi(\hat{F})$, and then we apply \cref{cor:inv} to complete.
Section VI.C of~\cite{ding2019SVM} claims to dequantize this version, but gives no correctness bounds\footnote{The correctness of this dequantization is unclear, since the approximations performed in this section incur significant errors.} or runtime bounds (beyond arguing it is polynomial in the desired parameters).

\begin{corollary}\label{cor:SVM}
  For $0 < \eps \lesssim 1$ and $\eta \leq 0.99$, we can solve \cref{prob:svm} in $\bOt{\lambda^{-28}\eta^{-6}\eps^{-6}\log^3\frac{1}{\delta}}$ time, where we get $\sq_\phi(v)$ for $\sqrun_\phi(v) = \bOt{\lambda^{-14}\eta^{-2}\eps^{-4}\log^2(\frac{1}{\delta})\log(\frac{m}{\delta})}$.
\end{corollary}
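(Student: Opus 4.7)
The plan is to reduce \cref{prob:svm} to the low-rank matrix inversion result \cref{cor:inv} applied to $\hat F$ with right-hand side $\left[\begin{smallmatrix}0 \\ y\end{smallmatrix}\right]$. The bulk of the work is to produce $\sq_\varphi(\hat F)$ with a dimension-independent oversampling factor $\varphi$; once we have it, $\q(y)$ trivially yields $\q(\left[\begin{smallmatrix}0\\y\end{smallmatrix}\right])$, and \cref{cor:inv} immediately does the rest.

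First I would decompose
\[
    F = M_G + \gamma^{-1} M_I + \vec e_1 u^\dagger + u \vec e_1^\dagger,
\]
where $u \in \bbr^{m+1}$ is $(0,1,\ldots,1)^\dagger$, $M_G$ is $XX^\dagger$ padded with a zero first row and column, and $M_I$ is the padded identity on the lower-right $m\times m$ block. Each summand admits cheap oversampling and query access: the two rank-one outer products give $\sq$-access via \cref{lem:outersampling}; the padded identity is sparse, so \cref{rmk:when-sq-access}(e) yields $\sq_{\bigO{1}}$-access; for $M_G$ I would use the Cauchy--Schwarz entrywise upper bound $\tilde M_G(i,j) := a(i{-}1)\,a(j{-}1)$, which comes for free from $\sq(a) \subseteq \sq(X)$, while entries of $M_G$ themselves are $\langle X(i{-}1,\cdot), X(j{-}1,\cdot)\rangle$ and can be queried exactly in $\bigO{n}$ time or approximately via \cref{lemma:inner-prod}. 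Assembling the four pieces through \cref{lem:weighted-oversampling} and dividing by $\Tr(F)$ gives $\sq_\varphi(\hat F)$ with
\[
    \varphi\,\|\hat F\|_\fr^2 \lesssim \frac{\|X\|_\fr^4 + \gamma^{-2}m + m}{\Tr(F)^2},
\]
and each ratio on the right is $\bigO{1}$ via $\Tr(F) \geq \|X\|_\fr^2$, $\Tr(F) \geq m/\gamma$, and $\Tr(F)^2 \gtrsim m$ (the last forced by $\|\hat F\|\leq 1$), so $\varphi = \bigO{1}$ under the SVM normalization $\|\hat F\|_\fr = \Theta(1)$.

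With $\sq_\varphi(\hat F)$ in hand, I would invoke \cref{cor:inv} with threshold $\sigma=\lambda$ and gap $\eta$. The assumption $\|\hat F\|\leq 1$ gives $\kappa = \|\hat F\|^2/\lambda^2 \leq 1/\lambda^2$, and $\|\hat F\|_\fr = \bigO{1}$ gives $K = \|\hat F\|_\fr^2/\lambda^2 = \bigO{1/\lambda^2}$. Substituting $\varphi = \bigO{1}$, $K = \bigO{\lambda^{-2}}$, $\kappa = \bigO{\lambda^{-2}}$ into $\bOt{\varphi^6 K^3 \kappa^{11} \eta^{-6} \eps^{-6}\log^3\tfrac{1}{\delta}}$ from \cref{cor:inv} produces exactly $\bOt{\lambda^{-28}\eta^{-6}\eps^{-6}\log^3\tfrac{1}{\delta}}$, and the analogous substitution in the per-sample cost bound of \cref{cor:inv} produces the stated $\sqrun_\phi(v)$, the extra $\log(m/\delta)$ accounting for the QRAM sampling cost $\sqcb(X) = \bigO{\log(mn)}$.

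The main obstacle I anticipate is bookkeeping: one has to verify that the composite oversampling really stays $\bigO{1}$, since the $XX^\dagger$ block individually contributes an oversampling proportional to the stable rank of $X$, and only the particular combination of Frobenius norms appearing in \cref{lem:weighted-oversampling} rescues it. A secondary subtlety is that exact $\q(M_G)$ costs $\bigO{n}$, so to retain truly sublinear query cost one relies on the QRAM input regime (where $n$-dependence is absorbed into logs) or substitutes approximate inner-product queries via \cref{lemma:inner-prod}, whose overhead is already accommodated in the stated runtime.
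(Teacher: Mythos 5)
Your plan mirrors the paper's proof: build oversampling-and-query access to $\hat F$ by expressing it as a linear combination of simple pieces (padded $XX^T$, padded identity, rank-one outer products), bound the $XX^T$ block entrywise by the Cauchy--Schwarz upper bound $a(i)a(j)$ from $\sq(a)$, approximate individual entries via inner-product estimation, assemble with \cref{lem:weighted-oversampling}, and hand off to \cref{cor:inv}. The only cosmetic difference is your four-way split of $F$ versus the paper's two-way split into $L$ (sparse, holding the identity and the $\vec 1$ rank-one terms) plus the padded Gram block; both give the same access model.

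Two points deserve flagging. First, your claim $\varphi = \bigO{1}$ relies on the unstated assumption $\|\hat F\|_\fr = \Theta(1)$, which is not guaranteed (only $\|\hat F\|_\fr \lesssim 1$ is). The paper instead retains the bound $\varphi \lesssim 1/\|M\|_\fr^2$ and shows that the $\|M\|_\fr$-dependence cancels in $\varphi^6 K^3 \kappa^{11}$ using $\|M\| \leq \|M\|_\fr \lesssim 1$, so the $\lambda^{-28}$ bound holds regardless of how small $\|M\|_\fr$ is. You identified this as "the main bookkeeping obstacle" but then asserted its resolution without performing it. Second, the $\log(m/\delta)$ factor in $\sqrun_\phi(v)$ is not a QRAM artifact: it arises because each entry query to the approximate Gram block costs $\bigO{\eps^{-2}\log(q/\delta)}$ via \cref{lemma:inner-prod}, and one union-bounds over the $q = m$ future queries to keep the approximate entries consistent. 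Finally, once you commit to approximate entry queries, the matrix you actually invert is an $M$ with $\|M - \hat F\| \leq \eps$ rather than $\hat F$ itself; the paper records this explicitly, and it is worth stating since the error propagates into the final guarantee.
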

The runtimes in the statement are not particularly tight, but we chose the form to mirror the runtime of the QSVM algorithm, which similarly depends polynomially on $\frac{1}{\lambda}$ and $\frac{1}{\eta}$.
\begin{proof}
  Consider constructing $\sq_\varphi(K) \in \bbc^{m\times m}$ as follows.
  To query an entry $K(i,j)$, we estimate $X(i,\cdot)X(j,\cdot)^T$ to $\eps\|X(i,\cdot)\|\|X(j,\cdot)\|$ error.
  We define $K(i,j)$ to be this estimate.
  Using \cref{lemma:inner-prod}, we can do this in $\bigO{\frac{1}{\eps^2}\log\frac{q}{\delta}}$ time.
  $q$ here refers to the number of times the query oracle is used, so in total the subsequent algorithm will only have an errant query with probability $\geq 1-\delta$. ($q$ will not appear in the runtime because it's folded into a polylog term.)
  Then, we can take $\tilde{K} \coloneqq xx^T$, where $x \in \bbr^m$ is the vector of row norms of $X$, since by Cauchy--Schwarz,
  \begin{align*}
    K(i,j) \leq X(i,\cdot)X(j,\cdot)^T + \eps\|X(i,\cdot)\|\|X(j,\cdot)\| \leq (1+\eps)\|X(i,\cdot)\|\|X(j,\cdot)\| = \tilde{K}(i,j).
  \end{align*}
  Since we have $\sq(x)$ from $\sq(X)$, we have $\sq(\tilde{K})$ with $\sqcb(\tilde{K}) = \bigO{1}$ by \cref{lem:outersampling}.
  $\|\tilde{K}\|_\fr^2 = (1+\eps)^2\|X\|_\fr^4$, so we have $\sq_\varphi(K)$ for $\varphi = (1+\eps)^2\frac{\|X\|_\fr^4}{\|K\|_\fr^2}$.
  We can trivially get $\sq(L)$ for $L \coloneqq \big[\begin{smallmatrix}0 & \vec{1}^T \\ \vec{1} & \gamma^{-1}I\end{smallmatrix}\big]$ with $\sqcb(L) = \bigO{1}$.
  Our approximation to $\hat{F}$ is
  \begin{align*}
    M \coloneqq \frac{1}{\Tr(F)}\Big(L + \Big[\begin{smallmatrix}0 & \vec{0}^T \\ \vec{0} & K \end{smallmatrix}\Big]\Big); \qquad
    \|M - \hat{F}\| \leq \frac{1}{\Tr(F)}\|K - XX^T\|_\fr \leq \frac{1}{\Tr(F)}\eps\|X\|_\fr^2 \leq \eps.
  \end{align*}
  Using \cref{lem:weighted-oversampling}, we have $\sq_{\varphi'}(M)$ with
  \begin{align*}
    \varphi' = \frac{2((1+\eps)^2\frac{\|X\|_\fr^4}{\|K\|_\fr^2}\|K\|_\fr^2+\|L\|_\fr^2)}{\Tr(F)^2\|M\|_\fr^2}
    \lesssim \frac{\|X\|_\fr^4+\gamma^{-2}m+2m}{(\|X\|_\fr^2 + m\gamma^{-1})^2\|M\|_\fr^2}
    \lesssim \frac{1}{\|M\|_\fr^2},
  \end{align*}
  where the last inequality uses that $\Tr(F) \geq \sqrt{m}$, which follows from $\|\hat{F}\| \leq 1$:
  \begin{align*}
    1 = \|\hat{F}\|\big\|\big[\begin{smallmatrix}0\\ \vec{1}/\sqrt{m} \end{smallmatrix}\big]\big\|
    \geq \big\|\hat{F}\big[\begin{smallmatrix}0\\ \vec{1}/\sqrt{m} \end{smallmatrix}\big]\big\|
    \geq \frac{\sqrt{m}}{\Tr(F)}.
  \end{align*}
  Note that we can compute $\Tr(F)$ given $\sq(X)$.
  So, applying \cref{cor:inv}, we can get the desired $\sq_\phi(v)$ in runtime
  \begin{align*}
  \bOt{\frac{\varphi^6\|M\|_\fr^6\|M\|^{22}}{\lambda^{28}\eta^6\eps^6}\log^3\frac{1}{\delta}}
  \lesssim \bOt{\frac{\|M\|^{22}}{\|M\|_\fr^{6}\lambda^{28}\eta^6\eps^6}\log^3\frac{1}{\delta}}
  \lesssim \bOt{\frac{1}{\lambda^{28}\eta^6\eps^6}\log^3\frac{1}{\delta}}.
  \end{align*}
  Here, we used that $\|M\| \leq \|M\|_\fr \lesssim 1$, which we know since $\varphi' \geq 1$ (by our definition of oversampling and query access).
  That $\q(M) = \bigO{\frac{1}{\eps^2}\log\frac{q}{\delta}}$ does not affect the runtime, since the dominating cost is still the SVD.
  On the other hand, this does come into play for the runtime for sampling:
  \begin{align*}
    \sqrun_\phi(v) = \bOt{\frac{\varphi^4\|M\|_\fr^4\|M\|^{10}}{\eta^2\eps^2}\log^2\big(\frac{1}{\delta}\big)\frac{1}{\eps^2}\log\big(\frac{m}{\delta}\big)}.
  \end{align*}
  We take $q = m$ to guarantee that all future queries will be correct with probability $\geq 1-\delta$.
\end{proof}

The normalization used by the quantum and quantum-inspired SVM algorithms means that these algorithms fail when $X$ has too small Frobenius norm, since then the singular values from $XX^T$ are all filtered out.
In \cref{apx:proofs}, we describe an alternative method that relies less on normalization assumptions, instead simply computing $F^+$.
This is possible if we depend on $\|X\|_\fr^2\gamma$ in the runtime.
Recall from \cref{eq:svm} that we regularize by adding $\gamma^{-1}I$, so $\gamma^{-1}$ acts as a singular value lower bound and $\|X\|_\fr^2\gamma$ implicitly constrains.

\begin{restatable}{corollary}{svmrev}\label{cor:SVM2}
  Given $\sq(X^T)$ and $\sq(y)$, with probability $\geq 1-\delta$, we can output a real number $\hat{b}$ such that $\abs{b - \hat{b}} \leq \eps(1+b)$ and $\sq_\phi(\hat{\alpha})$ such that $\|\hat{\alpha} - \alpha\| \leq \eps\gamma\|y\|$, where $\alpha$ and $b$ come from \cref{eq:svm}.
  Our algorithm runs in $\bOt{\|X\|_\fr^6\|X\|^{16}\gamma^{11}\eps^{-6}\log^3\frac1\delta}$ time, with $\sqrun_\phi(\hat{\alpha}) = \bOt{\|X\|_\fr^4\|X\|^6\gamma^5\frac{\gamma^2m}{\|\hat{\alpha}\|^2}\eps^{-4}\log^2\frac{1}{\delta}}$.
  Note that when $\gamma^{-1/2}$ is chosen to be sufficiently large (e.g.\ $O(\|X\|_\fr)$) and $\|\alpha\| = \Omega(\gamma\|y\|)$, this runtime is dimension-independent.
\end{restatable}

Notice that $\eps\gamma\|y\|$ is the right notion, since $\gamma$ is an upper bound on the spectral norm of the inverse of the matrix in \cref{eq:svm}.
We assume $\sq(X^T)$ instead of $\sq(X)$ for convenience, though both are possible via the observation that $f(XX^T) = X\bar{f}(X^TX)X^T$.

\subsection{Hamiltonian simulation}\label{sec:Hamiltonian-simulation}

The problem of simulating the dynamics of quantum systems was the original motivation for quantum computers proposed by Feynman~\cite{feynman1982SimQPhysWithComputers}.
Specifically, given a Hamiltonian $H$, a quantum state $|\psi\rangle$, a time $t>0$, and a desired error $\eps > 0$, we ask to prepare a quantum state $|\psi_{t}\rangle$ such that
\begin{align*}
\||\psi_{t}\rangle-e^{iHt}|\psi\rangle\|\leq\eps.
\end{align*}
This problem, known as Hamiltonian simulation, sees wide application, including in quantum physics and quantum chemistry.
A rich literature has developed on quantum algorithms for Hamiltonian simulation~\cite{lloyd1996UnivQSim,aharonov2003adiabatic,berry2015HamSimNearlyOpt}, with an optimal quantum algorithm for simulating sparse Hamiltonians given in~\cite{low2016HamSimQSignProc}.
In this subsection, we apply our framework to develop classical algorithms for Hamiltonian simulation.
Specifically, we ask:
\begin{prob} \label{prob:Hamiltonian-simulation}
Consider a Hermitian matrix $H \in \bbc^{n\times n}$, a unit vector $b\in\bbc^{n}$, and error parameters $\eps, \delta > 0$.
Given $\sq(H)$ and $\sq(b)$, output $\sq_\phi(\hat{b})$ with probability $\geq 1-\delta$ for some $\hat{b} \in \bbc^n$ satisfying $\|\hat{b} - e^{iH}b\| \leq \eps$.
\end{prob}

We give two algorithms that are fundamentally the same, but operate in different regimes: the first works for low-rank $H$, and the second for arbitrary $H$.

\begin{corollary}\label{corollary:hamiltonian-low-rank}
Suppose $H$ has minimum singular value $\sigma$ and $\eps < \min(0.5, \sigma)$.
We can solve \cref{prob:Hamiltonian-simulation} in $\bOt{\frac{\|H\|_\fr^6\|H\|^{16}}{\sigma^{16}\eps^6}\log^3\frac{1}{\delta}}$ time, giving $\sq_\phi(\hat{b})$ with $\sqrun_\phi(\hat{b}) = \bOt{\frac{\|H\|_\fr^4\|H\|^8}{\sigma^8\eps^4}\log^3\frac{1}{\delta}}$.
\end{corollary}

This runtime is dimensionless in a certain sense.
The natural error bound to require is that $\|\hat{b} - e^{iH}b\| \leq \eps\|H\|$, since $\abs{\frac{d}{dx}(e^{-i\|H\|x})} = \|H\|$.
So, if we rescale $\eps$ to $\eps\|H\|$, the runtime is $\bOt{\frac{\|H\|_\fr^6\|H\|^{10}}{\sigma^{16}\eps^6}\log^3\frac{1}{\delta}}$, which is dimensionless.
The runtime of the algorithm in the following corollary does not have this property, so its scaling with $\|H\|$ is worse, despite being faster for, say, $\|H\| = 1$.

\begin{corollary}\label{corollary:Hamiltonian-simulation}
For $\eps < \min(0.5, \|H\|^3)$, we can solve \cref{prob:Hamiltonian-simulation} in $\bOt{\|H\|^{16}\|H\|_\fr^6\eps^{-6}\log^3\frac{1}{\delta}}$ time, giving $\sq_\phi(\hat{b})$ with $\sqrun_\phi(\hat{b}) = \bOt{\|H\|^8\|H\|_\fr^4\eps^{-4}\log^3\frac{1}{\delta}}$.
\end{corollary}

Our strategy proceeds as follows: consider a generic function $f(x)$ and Hermitian $H$.
We can write $f(x)$ as a sum of an even function $a(x) \coloneqq \frac{1}{2}(f(x) + f(-x))$ and an odd function $b(x) \coloneqq \frac{1}{2}(f(x) - f(-x))$.
For the even function, we can use \cref{thm:evenSing} to approximate it via the function $f_a(x) \coloneqq a(\sqrt{x})$; the odd function can be written as $H$ times an even function, which we approximate using \cref{thm:evenSing} for $f_b(x) \coloneqq b(\sqrt{x})/\sqrt{x}$.
In other words, $f(H) = f_a(H^\dagger H) + f_b(H^\dagger H)H$.
Since $\abs{a'(x)},\,\abs{b'(x)} \leq \abs{f'(x)}$, the Lipschitz constants don't blow up by splitting $f$ into even and odd parts.

Now, we specialize to Hamiltonian simulation.
We first rewrite the problem, using the function $\sinc(x) \coloneqq \sin(x)/x$.
\[
  e^{iH}b = \cos(H)b + i\cdot\sinc(H)Hb = f_{\cos}(H^\dagger H) b + f_{\sinc}(H^\dagger H) Hb,
\]
where $f_{\cos}(\lambda) \coloneqq \cos(\sqrt{\lambda})$ and $f_{\sinc}(\lambda) \coloneqq i\cdot \sinc(\sqrt{\lambda})$.
When applying \cref{thm:evenSing} on $f_{\cos}$ and $f_{\sinc}$, we will use the following bounds on the smoothness of $f_{\cos}$ and $f_{\sinc}$.
\begin{align*}
  |f_{\cos}'(x)| &= \Big|\frac{\sin(\sqrt{x})}{2\sqrt{x}}\Big| \leq \min\Big(\frac12,\frac{1}{2\sqrt{x}}\Big) \\
  |\bar{f}_{\cos}'(x)| &= \Big|\frac{2-2\cos(\sqrt{x}) - \sqrt{x}\sin(\sqrt{x})}{2x^2}\Big| \leq \min\Big(\frac{1}{24},\frac{5}{2x^{3/2}}\Big) \\
  |f_{\sinc}'(x)| &= \Big|\frac{\sqrt{x}\cos(\sqrt{x}) - \sin(\sqrt{x})}{2x^{3/2}}\Big| \leq \min\Big(\frac{1}{4},\frac{1}{x}\Big) \\
  |\bar{f}_{\sinc}'(x)| &= \Big|\frac{2\sqrt{x} + \sqrt{x}\cos(\sqrt{x}) - 3\sin(\sqrt{x})}{2x^{5/2}}\Big| \leq \min\Big(\frac{1}{60},\frac{3}{x^2}\Big)
\end{align*}
We separate these bounds into the case where $x \geq 1$, which we use when we assume $H$ has a minimum singular value, and the case where $x < 1$, which we use for arbitrary $H$.

\begin{proof}[Proof of \cref{corollary:Hamiltonian-simulation}]
Using the Lipschitz bounds above with \cref{thm:evenSing}, we can find $R_{\cos} \in \bbc^{r_{\cos} \times n}, C_{\cos} \in \bbc^{r_{\cos} \times c_{\cos}}, R_{\sinc} \in \bbc^{r_{\sinc} \times n}, C_{\sinc} \in \bbc^{r_{\sinc} \times c_{\sinc}}$ such that
\begin{gather}
  \|R_{\cos}^\dagger \bar{f}_{\cos}(C_{\cos}C_{\cos}^\dagger) R_{\cos} + I - f_{\cos}(H^\dagger H)\| \leq \eps \label{eqn:cosapprox} \\
  \|R_{\sinc}^\dagger \bar{f}_{\sinc}(C_{\sinc}C_{\sinc}^\dagger) R_{\sinc} + i\cdot I - f_{\sinc}(H^\dagger H)\| \leq \frac{\eps}{\|H\|} \label{eqn:sincapprox}
\end{gather}
where, using that our Lipschitz constants are all bounded by constants,
\begin{align*}
  r_{\cos} &= \bOt{\|H\|_\fr^2\|H\|^2\eps^{-2}\log\frac{1}{\delta}}
  &c_{\cos} &= \bOt{\|H\|_\fr^2\|H\|^6\eps^{-2}\log\frac{1}{\delta}} \\
  r_{\sinc} &= \bOt{\|H\|_\fr^2\|H\|^4\eps^{-2}\log\frac{1}{\delta}}
  &c_{\sinc} &= \bOt{\|H\|_\fr^2\|H\|^8\eps^{-2}\log\frac{1}{\delta}}.
\end{align*}
As a consequence,
\[
  \Big\|e^{iH}b - \Big(R_{\cos}^\dagger \bar{f}_{\cos}(C_{\cos}C_{\cos}^\dagger) R_{\cos}b + b + R_{\sinc}^\dagger \bar{f}_{\sinc}(C_{\sinc}C_{\sinc}^\dagger) R_{\sinc} Hb + iHb\Big)\Big\| \lesssim \eps.
\]
Note that, by \cref{lem:evenSingBounds}, $\|R_{\cos}\| \lesssim \|H\|$, $\|\bar{f}_{\cos}(C_{\cos}C_{\cos}^\dagger)\| \lesssim 1$, and $\|R_{\cos}^\dagger \sqrt{\bar{f}_{\cos}(C_{\cos}C_{\cos}^\dagger)}\| \lesssim 1$; the same bounds hold for the $\sinc$ analogues.
We now approximate using \cref{prop:appr-mms} four times.
\begin{enumerate}
  \item We approximate $R_{\cos} b \approx u$ to $\eps \|b\|$ error, requiring $\bigO{\|H\|_\fr^2\eps^{-2}\log\frac{1}{\delta}}$ samples.
  \item We approximate $R_{\sinc}H \approx WC$ to $\eps$ error, requiring $\bigO{\|H\|_\fr^4\eps^{-2}\log\frac{1}{\delta}}$ samples.
  \item We approximate $Cb \approx v$ to $\eps\|H\|_\fr^{-1}\|b\|$ error, requiring $\bigO{\|H\|_\fr^4\eps^{-2}\log\frac{1}{\delta}}$ samples.
  \item We approximate $Hb \approx R^\dagger w$ to $\eps\|b\|$ accuracy, requiring $r \coloneqq \bigO{\|H\|_\fr^2\eps^{-2}\log\frac{1}{\delta}}$ samples.
\end{enumerate}
Our output will be
\[
  \hat{b} \coloneqq R_{\cos}^\dagger \bar{f}_{\cos}(C_{\cos}C_{\cos}^\dagger)  u + b + R_{\sinc}^\dagger \bar{f}_{\sinc}(C_{\sinc}C_{\sinc}^\dagger)  Wv + i R^\dagger w,
\]
which is close to $e^{iH}b$ because
\begin{align*}
  &\Big\|\hat{b} - \Big(R_{\cos}^\dagger \bar{f}_{\cos}(C_{\cos}C_{\cos}^\dagger) R_{\cos}b + b + R_{\sinc}^\dagger \bar{f}_{\sinc}(C_{\sinc}C_{\sinc}^\dagger) R_{\sinc} Hb + iHb\Big)\Big\| \\
  &\leq \|R_{\cos}^\dagger \bar{f}_{\cos}(C_{\cos}C_{\cos}^\dagger)(u - R_{\cos}b)\| + \|R_{\sinc}^\dagger \bar{f}_{\sinc}(C_{\sinc}C_{\sinc}^\dagger)(R_{\sinc}H - WC)b\|\\
  &\qquad+ \|R_{\sinc}^\dagger \bar{f}_{\sinc}(C_{\sinc}C_{\sinc}^\dagger) W (Cb - v)\| + \|iR^\dagger w - iHb\| \\
  & \lesssim \|u - R_{\cos}b\| + \|R_{\sinc}H - WC\|\|b\| + \|H\|_\fr\|Cb - v\| + \|R^\dagger w - Hb\|
  \leq 4\eps\|b\|.
\end{align*}

Now, we have expressed $\hat{b}$ as a linear combination of a small number of vectors, all of which we have sampling and query access to.
We can complete using \cref{lem:b-sq-approx,lemma:sample-Mv}, where the matrix is the concatenation $(R_{\cos}^\dagger \mid b \mid R_{\sinc}^\dagger \mid i\cdot R^\dagger)$, and the vector is the concatenation $(\bar{f}_{\cos}(C_{\cos}C_{\cos}^\dagger)u \mid 1 \mid \bar{f}_{\sinc}(C_{\sinc}C_{\sinc}^\dagger)Wv \mid w)$.
The length of this vector is $r_{\cos} + 1 + r_{\sinc} + r \lesssim r_{\sinc}$.
We get $\sq_\phi(\hat{b})$ where
\begin{align*}
  \phi &\lesssim r_{\sinc} \Big(\frac{\|H\|_\fr^2}{r_{\cos}}\|\bar{f}_{\cos}(C_{\cos}C_{\cos}^\dagger)u\|^2 + \|b\|^2 + \frac{\|H\|_\fr^2}{r_{\sinc}}\|\bar{f}_{\sinc}(C_{\sinc}C_{\sinc}^\dagger)Wv\|^2 + \frac{\|H\|_\fr^2}{r}\|w\|^2\Big)\|\hat{b}\|^{-2} \\
  &\lesssim \Big(\frac{r_{\sinc}}{r_{\cos}}\|H\|_\fr^2(1+\eps)^2\|b\|^2 + r_{\sinc}\|b\|^2 + \|H\|_\fr^2(1+\eps)^2\|b\|^2 + \frac{r_{\sinc}}{r}\|H\|_\fr^2\|b\|^2\Big)\|b\|^{-2} \\
  &= \bOt{\|H\|_\fr^2\|H\|^2 + r_{\sinc} + \|H\|_\fr^2 + \|H\|_\fr^2\|H\|^4} = \bOt{r_{\sinc}}.
\end{align*}
In the second inequality, we use the same bounds for proving $\|\hat{b} - e^{iH}b\| \leq \eps$, repurposed to argue that all approximations are sufficiently close to the values they are estimating, up to relative error.
So, $\sqrun_\phi(\hat{b}) = \bOt{r_{\sinc}^2\log\frac{1}{\delta}}$.
\end{proof}

\begin{proof}[Proof of \cref{corollary:hamiltonian-low-rank}]
Our approach is the same, though with different parameters.
For \cref{thm:evenSing}, we use that in the interval $[\sigma^2/2,\infty)$, $f_{\cos}$ has Lipschitz constants of $L = O(1/\sigma)$ and $\bar{L} = O(1/\sigma^3)$ and $f_{\sinc}$ has $L = O(1/\sigma^2)$ and $\bar{L} = O(1/\sigma^4)$.
So, if we take
\begin{align*}
  r_{\cos} &= \bOt{\|H\|^2\frac{\|H\|_\fr^2}{\sigma^2}\eps^{-2}\log\frac{1}{\delta}}
  &c_{\cos} &= \bOt{\|H\|^2\frac{\|H\|_\fr^2\|H\|^4}{\sigma^6}\eps^{-2}\log\frac{1}{\delta}} \\
  r_{\sinc} &= \bOt{\|H\|^2\frac{\|H\|_\fr^2\|H\|^2}{\sigma^4}\eps^{-2}\log\frac{1}{\delta}}
  &c_{\sinc} &= \bOt{\|H\|^2\frac{\|H\|_\fr^2\|H\|^6}{\sigma^8}\eps^{-2}\log\frac{1}{\delta}},
\end{align*}
all the conditions of \cref{thm:evenSing} are satisfied: in particular, $\sigma^2/2 > \bar{\eps}$ in both cases, up to rescaling $\eps$ by a constant factor:
\begin{align*}
  \bar{\eps}_{\cos} &\lesssim \|H\|\|H\|_\fr\frac{\eps\sigma}{\|H\|\|H\|_\fr} = \eps\sigma \leq \sigma^2 \\
  \bar{\eps}_{\sinc} &\lesssim \|H\|\|H\|_\fr\frac{\eps\sigma^2}{\|H\|^2\|H\|_\fr} = \eps\sigma^2\|H\|^{-1} \leq \sigma^2
\end{align*}
Here, we used our initial assumption that $\eps \leq \sigma$.
So, the bounds \cref{eqn:cosapprox,eqn:sincapprox} hold.
Note that, by \cref{lem:evenSingBounds}, $\|R_{\cos}\| \lesssim \|H\|$, $\|\bar{f}_{\cos}(C_{\cos}C_{\cos}^\dagger)\| \lesssim \sigma^{-2}$, and $\|R_{\cos}^\dagger \sqrt{\bar{f}_{\cos}(C_{\cos}C_{\cos}^\dagger)}\| \leq 1$; the same bounds hold for the $\sinc$ analogues.
We now approximate using \cref{prop:appr-mms} four times.
\begin{enumerate}
  \item We approximate $R_{\cos} b \approx u$ to $\eps\sigma\|b\|$ error, requiring $\bigO{\|H\|_\fr^2\sigma^{-2}\eps^{-2}\log\frac{1}{\delta}}$ samples.
  \item We approximate $R_{\sinc}H \approx WC$ to $\eps\sigma$ error, requiring $\bigO{\|H\|_\fr^4\sigma^{-2}\eps^{-2}\log\frac{1}{\delta}}$ samples.
  \item We approximate $Cb \approx v$ to $\eps\sigma\|H\|_\fr^{-1}\|b\|$ error, requiring $\bigO{\|H\|_\fr^4\sigma^{-2}\eps^{-2}\log\frac{1}{\delta}}$ samples.
  \item We approximate $Hb \approx R^\dagger w$ to $\eps\|b\|$ accuracy, requiring $r \coloneqq \bigO{\|H\|_\fr^2\eps^{-2}\log\frac{1}{\delta}}$ samples.
\end{enumerate}
Our output will be
\[
  \hat{b} \coloneqq R_{\cos}^\dagger \bar{f}_{\cos}(C_{\cos}C_{\cos}^\dagger)  u + b + R_{\sinc}^\dagger \bar{f}_{\sinc}(C_{\sinc}C_{\sinc}^\dagger)  Wv + i R^\dagger w,
\]
which is close to $e^{iH}b$ by the argument
\begin{align*}
  &\Big\|\hat{b} - \Big(R_{\cos}^\dagger \bar{f}_{\cos}(C_{\cos}C_{\cos}^\dagger) R_{\cos}b + b + R_{\sinc}^\dagger \bar{f}_{\sinc}(C_{\sinc}C_{\sinc}^\dagger) R_{\sinc} Hb + iHb\Big)\Big\| \\
  &\leq \|R_{\cos}^\dagger \bar{f}_{\cos}(C_{\cos}C_{\cos}^\dagger)(u - R_{\cos}b)\| + \|R_{\sinc}^\dagger \bar{f}_{\sinc}(C_{\sinc}C_{\sinc}^\dagger)(R_{\sinc}H - WC)b\| \\
  &\qquad+ \|R_{\sinc}^\dagger \bar{f}_{\sinc}(C_{\sinc}C_{\sinc}^\dagger) W (Cb - v)\| + \|iR^\dagger w - iHb\| \\
  &\lesssim \sigma^{-1}\|u - R_{\cos}b\| + \sigma^{-1}\|R_{\sinc}H - WC\|\|b\| + \sigma^{-1}\|H\|_\fr\|Cb - v\| + \|R^\dagger w - Hb\|
  \leq 4\eps\|b\|
\end{align*}

Now, we have expressed $\hat{b}$ as a linear combination of a small number of vectors, all of which we have sampling and query access to.
We can complete using \cref{lem:b-sq-approx,lemma:sample-Mv}, where the matrix is the concatenation $(R_{\cos}^\dagger \mid b \mid R_{\sinc}^\dagger \mid i\cdot R^\dagger)$, and the vector is the concatenation $(\bar{f}_{\cos}(C_{\cos}C_{\cos}^\dagger)u \mid 1 \mid \bar{f}_{\sinc}(C_{\sinc}C_{\sinc}^\dagger)Wv \mid w)$.
The length of this vector is $r_{\cos} + 1 + r_{\sinc} + r \lesssim r_{\sinc}$.
We get $\sq_\phi(\hat{b})$ where
\begin{align*}
  \phi &\lesssim r_{\sinc} \Big(\frac{\|H\|_\fr^2}{r_{\cos}}\|\bar{f}_{\cos}(C_{\cos}C_{\cos}^\dagger)u\|^2 + \|b\|^2 + \frac{\|H\|_\fr^2}{r_{\sinc}}\|\bar{f}_{\sinc}(C_{\sinc}C_{\sinc}^\dagger)Wv\|^2 + \frac{\|H\|_\fr^2}{r}\|w\|^2\Big)\|\hat{b}\|^{-2} \\
  &\lesssim \Big(\frac{r_{\sinc}}{r_{\cos}}\|H\|_\fr^2\sigma^{-2}\|b\|^2 + r_{\sinc}\|b\|^2 + \|H\|_\fr^2\sigma^{-2}\|b\|^2 + \frac{r_{\sinc}}{r}\|H\|_\fr^2\|b\|^2\Big)\|b\|^{-2} \\
  &=\bOt{\|H\|_\fr^2\|H\|^2\sigma^{-4} + r_{\sinc} + \|H\|_\fr^2\sigma^{-2} + \|H\|^4\sigma^{-4}\|H\|_\fr^2}
  =\bOt{r_{\sinc} + \frac{t^2\|H\|_\fr^2}{\sigma^{4}}}.
\end{align*}
So, $\sqrun_\phi(\hat{b}) = \bOt{r_{\sinc}(r_{\sinc} + \|H\|_\fr^2\|H\|^2\sigma^{-4})\log\frac{1}{\delta}}$.
Since $\eps < \sigma$, the $r_{\sinc}^2$ term dominates.
\end{proof}

\begin{remark}
In the case where $H$ is not low-rank, we could still run a modified version of \cref{corollary:hamiltonian-low-rank} to compute a modified ``$\exp_{\sigma,\eta}(iH)$'' where singular values below $\sigma$ are smoothly thresholded away.
Following the same logic as \cref{def:psinv}, we could redefine $f_{\cos}$ such that $f_{\cos}(x) = 1$ for $x < \sigma^2(1-\eta)$, $f_{\cos}(x) = \cos(\sqrt{\lambda})$ for $x \geq \sigma^2$, and is a linear interpolation between the endpoints for the $x$ in between (and $f_{\sinc}$ similarly).
These functions have the same Lipschitz constants as their originals, up to factors of $\frac1\eta$, and give the desired behavior of ``smoothing away'' small singular values (though we do keep the 0th and 1st order terms of the exponential).
\end{remark}

\begin{remark}
Our result generalizes those of Ref.~\cite{rudi2018nystrom}, which achieves essentially the same result only in the much easier regime where $H$ and $b$ are sparse.
They achieve a significant speedup due to these assumptions: note that when $H$ is sparse, and a subsample of rows $R$ is taken, $RR^\dagger$ can be computed in time independent of dimension; so, we only need to take a subsample of rows, and not of columns.
More corners can be cut from our algorithm in this fashion.
In summary, though our algorithm is significantly slower, their sparsity assumptions are essential for their fast runtime, and our framework can identify where these tradeoffs occur.
\end{remark}

\subsection{Semidefinite program solving}\label{sec:SDP}
A recent line of inquiry in quantum computing \cite{brandao2016QSDPSpeedup,apeldoorn2017QSDPSolvers,brandao2017QSDPSpeedupsLearning,apeldoorn2018ImprovedQSDPSolving} focuses on finding quantum speedups for \emph{semidefinite programs} (SDPs), a central topic in the theory of convex optimization with applications in algorithms design, operations research, and approximation algorithms.
Chia, Li, Lin, and Wang~\cite{chia2019QInspiredSubLinLowRankSDPSolver} first noticed that quantum-inspired algorithms could dequantize these quantum algorithms in certain regimes.
We improve on their result, giving an algorithm which is as general as the quantum algorithms, if the input is given classically (e.g., in a data-structure in RAM).
Our goal is to solve the $\eps$-feasibility problem; solving an SDP reduces by binary search to solving $\log(1/\eps)$ instances of this feasibility problem.

\begin{prob}[SDP $\eps$-feasibility]\label{prob:sdp}
Given an $\eps>0$, $m$ real numbers $b_{1},\ldots,b_{m}\in\bbr$, and Hermitian $n\times n$ matrices $\sq(A^{(1)}),\ldots,\sq(A^{(m)})$ such that $-I\preceq A^{(i)}\preceq I$ for all $i\in\range{m}$, we define $\mathcal{S}_{\eps}$ as the set of all $X$ satisfying\footnote{For simplicity, we assume here that $X$ is normalized to have trace $1$. This can be relaxed; for an example, see~\cite{apeldoorn2017QSDPSolvers}.}
\begin{align*}
  \Tr[A^{(i)} X] &\leq b_{i}+\eps\quad\forall\,i\in\range{m}; \\ 
  X&\succeq 0; \\ 
  \Tr[X]&=1. 
\end{align*}
If $\mathcal{S}_{\eps}=\emptyset$, output ``infeasible".
If $\mathcal{S}_{0}\neq\emptyset$, output an $X\in\mathcal{S}_{\eps}$.
(If neither condition holds, either output is acceptable.)
\end{prob}

\begin{corollary}\label{cor:SDP}
Let $F \geq \max_{j \in [m]}(\|A^{(j)}\|_\fr)$, and suppose\footnote{Because of the normalization assumption that $\|A^{(\cdot)}\| \leq 1$, $F$ is effectively a dimensionless ``stable rank''-type constant, normalized by $\max_i \|A^{(i)}\|$.} $F = \Omega(1)$.
Then we can solve \cref{prob:sdp} with success probability $\geq 1-\delta$ in cost
  \begin{equation*}
  \bOt{\Big(\frac{F^{18}}{\eps^{40}}\log^{20}(n)\sqcb(A)
    +\frac{F^{22}}{\eps^{46}}\log^{23}(n)
    +m\frac{F^8}{\eps^{18}}\log^{8}(n)\qcb(A)
    +m\frac{F^{14}}{\eps^{28}}\log^{13}(n)\Big)\log^3\frac{1}{\delta}},
  \end{equation*}
providing sampling and query access to a solution.
\end{corollary}

Assuming $\sqcb(A) = \bOt{1}$, this runtime is
$
\widetilde{\mathcal{O}}\Big(\frac{F^{22}}{\eps^{46}}\log^{23}(n) + m\frac{F^{14}}{\eps^{28}}\log^{13}(n)\Big).
$
For the same feasibility problem, the previous quantum-inspired SDP solver~\cite{chia2019QInspiredSubLinLowRankSDPSolver} proved a complexity bound $\bOt{mr^{57}\eps^{-92}\log^{37}(n)}$, assuming that the constraint matrices have rank at most~$r$. Since the rank constraint implies that $\nrm{A^{(\cdot)}}_\fr\leq \sqrt{r}$, under this assumption our algorithm has complexity $\bOt{r^{11}\eps^{-46}\log^{23}(n)+m r^{7}\eps^{-28}\log^{13}(n)}$.
So, our new algorithm both solves a more general problem and also greatly improves the runtime.
The paper with the current best runtime for SDP solving does not discuss this precise model, but if we use the runtime they achieve in quantum state input model, making reasonable substitutions of $\gamma \to \frac{1}{\eps}$ and $B \to F^2$, the corresponding quantum runtime is $\bOt{\frac{F^7}{\eps^{7.5}} + \frac{\sqrt{m}F^2}{\eps^4}}$, up to $\polylog(n)$ factors.

Like prior work on quantum algorithms for SDP-solving, we use the matrix multiplicative weights (MMW) framework \cite{arora2016CombPrimDualSDP,kale2007efficient} to solve \cref{prob:sdp}.
\cref{cor:SDP} immediately follows from running the algorithm this framework admits (\cref{alg:MMW}), where we solve an instance of the problem described in \cref{cor:traceEst} with precision $\theta=\eps/4$ in each of the $\bigO{\log(n)/\eps^2}$ iterations.

\begin{algorithm}
Set $X_{1}\coloneqq \frac{I_{n}}{n}$, and the number of iterations $T\coloneqq \frac{16\log n}{\eps^{2}}$\;
\For{$t=1,\ldots,T$}{
  \textbf{find} a $j_{t}\in\range{m}$ such that $\Tr[A^{(j_{t})}X_{t}]>b_{j_{t}}+\frac{\eps}{2}$ \label{step:either}\\
  \quad\textbf{or} conclude correctly that $\Tr[A^{(j_{t})}X_{t}]\leq b_{j_{t}}+\eps$ for all $j\in [m]$\label{step:or}\\
  \textbf{if} a $j_{t}\in\range{m}$ is found \textbf{then}\\
  $\quad X_{t+1} \coloneqq \exp[-\frac{\eps}{4}\sum_{i=1}^{t}A^{(j_{i})}] / \Tr[\exp[-\frac{\eps}{4}\sum_{i=1}^{t}A^{(j_{i})}]]$ \label{step:Gibbs}\\
  \textbf{else} conclude that $X_{t}\in\mathcal{S}_{\eps}$\\
  $\quad$ \textbf{return} $X_t$
}
If no solution found, conclude that the SDP is infeasible and terminate the algorithm
\caption{MMW based feasibility testing algorithm for SDPs}
\label{alg:MMW}
\end{algorithm}

MMW works as a zero-sum game with two players, where the first player wants to provide an $X \in \mathcal{S}_{\eps}$, and the second player wants to find a violation for any proposed $X$, i.e., a $j\in\range{m}$ such that $\Tr[A^{(j)}X]>b_{j}+\eps$. At the $t^{\text{th}}$ round of the game, if the second player points out a violation $j_{t}$ for the current solution $X_{t}$, the first player proposes a new solution
\begin{equation*}
  X_{t+1}\propto\exp[-\eps(A^{(j_{1})}+\cdots+A^{(j_{t})})].
\end{equation*}
Solutions of this form are also known as \emph{Gibbs states}. It is known that MMW solves the SDP $\eps$-feasibility problem in $\bigO{\frac{\log n}{\eps^{2}}}$ iterations;
a proof can be found, e.g., in the work of Brand\~ao, Kalev, Li, Lin, Svore, and Wu~\cite[Theorem 3]{brandao2017QSDPSpeedupsLearning} or in Lee, Raghavendra and Steurer~\cite[Lemma 4.6]{lee2015LowerBoundSDPRelax}.

Our task is to execute \cref{step:either,step:or} of \cref{alg:MMW}, for an implicitly defined matrix with the form given in \cref{step:Gibbs}.

\begin{lemma}[``Efficient'' trace estimation]\label{cor:traceEst}
  Consider the setting described in \cref{cor:SDP}.
  Given $\theta\in(0,1]$, $t\leq\frac{\log(n)}{\theta^2}$ and $j_{i}\in\range{m}$ for $i\in[t]$, defining $H\coloneqq \exp[-\theta\sum_{i=1}^{t}A^{(j_{i})}]$, we can estimate $\Tr(A^{(i)}H)/\Tr(H)$ with success probability $\geq 1-\delta$ for all $i\in [m]$ to precision $\theta$ in cost
  \[
  \bOt{\left[\frac{F^{18}}{\theta^{38}}\log^{19}(n)\sqcb(A)
    +\frac{F^{22}}{\theta^{44}}\log^{22}\kern-0.2mm(n)
    +m\frac{F^8}{\theta^{16}}\log^{7}\kern-0.4mm(n)\qcb(A)
    +m\frac{F^{14}}{\theta^{26}}\log^{12}(n)\right]\log^3\frac1\delta+ \frac{\log(n)}{\theta^2}\ncb(A)\!}\!,
  \]
  where $\sqcb(A)=\max_{j\in [m]}\sqcb(A^{(j)})$, and $\scb(A)$, $\qcb(A)$, $\ncb(A)$ are defined analogously.
\end{lemma}

To estimate $\Tr[A^{(i)}H]$, we first notice that we have $\sq_\phi(\theta\sum_{i=1}^t A^{(j_i)})$, since it is a linear combination of matrices that we have sampling and query access to (\cref{lem:weighted-oversampling}).
Then, we can find approximations of the Gibbs state by applying eigenvalue transformation (\cref{thm:eig-svt}) according to the exponential function to get $\exp[-\theta\sum_{i=1}^t A^{(j_i)}]$ as an RUR decomposition.
Then the estimation of $\Tr[A^{(i)}H]$ can be performed by usual techniques (namely, \cref{claim:tr_prod}).

In order to understand how precisely we need to approximate the matrix in \cref{step:Gibbs} we prove the following lemmas.
Our first lemma will show that, to estimate $\Tr(A^{(i)}H)/\Tr(H)$ to $\theta$ precision, it suffices to estimate both $\Tr(A^{(i)}H)$ and $\Tr(H)$ to $\frac13\theta\Tr(H)$ precision.

\begin{lemma}\label{lem:traceRelativeErrors}
  Suppose that $\theta\in[0,1]$ and $a,\tilde{a},Z,\tilde{Z}$ are such that $|a|\leq Z$, $|a-\tilde{a}|\leq \frac\theta3 Z$, and $|Z-\tilde{Z}|\leq \frac\theta3 Z$, then
  \begin{equation*}
    \left|\frac{\tilde{a}}{\tilde{Z}}-\frac{a}{Z}\right|\leq \theta.
  \end{equation*}
\end{lemma}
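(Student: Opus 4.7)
The plan is a short, routine error propagation calculation via the identity
\[
    \frac{\tilde{a}}{\tilde{Z}} - \frac{a}{Z} = \frac{\tilde{a}Z - a\tilde{Z}}{Z\tilde{Z}} = \frac{(\tilde{a}-a)Z + a(Z - \tilde{Z})}{Z\tilde{Z}}.
\]
First I would dispose of the degenerate case $Z = 0$: then $|a| \leq Z$, $|a-\tilde{a}| \leq \frac{\theta}{3}Z$, and $|Z-\tilde{Z}| \leq \frac{\theta}{3}Z$ force $a = \tilde{a} = \tilde{Z} = 0$, so both sides of the desired inequality can be taken to be $0$ and the claim is vacuous. So from now on I assume $Z > 0$.

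Next I would bound the numerator by applying the triangle inequality together with all three hypotheses: $|(\tilde{a}-a)Z + a(Z-\tilde{Z})| \leq \frac{\theta}{3}Z \cdot Z + Z \cdot \frac{\theta}{3}Z = \frac{2\theta}{3}Z^2$. For the denominator I would use the reverse triangle inequality and $\theta \leq 1$ to get $\tilde{Z} \geq Z - \frac{\theta}{3}Z = (1-\theta/3)Z \geq \frac{2}{3}Z$. Combining,
\[
    \left|\frac{\tilde{a}}{\tilde{Z}} - \frac{a}{Z}\right| \leq \frac{(2\theta/3)Z^{2}}{Z \cdot (2/3)Z} = \theta,
\]
which is the desired inequality.

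There is no real obstacle; the only subtlety is that the constants $1/3$ in the hypotheses are chosen precisely so that the $2/3$ lower bound on $\tilde{Z}/Z$ (which requires $\theta \leq 1$) cancels the $2/3$ produced by the two numerator terms, giving the clean bound $\theta$ rather than $\theta$ times some constant strictly greater than one.
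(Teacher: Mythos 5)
Your proof is correct and matches the paper's argument essentially exactly: both use the decomposition $\tilde{a}Z - a\tilde{Z} = (\tilde{a}-a)Z + a(Z-\tilde{Z})$, the bound $\tilde{Z}\geq \tfrac23 Z$ (using $\theta\leq 1$), and the hypothesis $|a|\leq Z$ to bound the two resulting terms by $\theta/2$ each. Your explicit treatment of the degenerate case $Z=0$ and your care with absolute values around $a$ are minor tidiness improvements over the paper's write-up, but the approach is the same.
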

\begin{proof}
\begin{equation*}
\left|\frac{\tilde{a}}{\tilde{Z}}-\frac{a}{Z}\right|
=\left|\frac{\tilde{a}Z}{Z\tilde{Z}}-\frac{a\tilde{Z}}{Z\tilde{Z}}\right|
\leq \left|\frac{\tilde{a}Z-a Z}{Z\tilde{Z}}\right| + \left|\frac{aZ -a\tilde{Z}}{Z\tilde{Z}}\right|
\leq \frac{3}{2Z}\abs{\tilde{a} - a} + \frac{3a}{2Z^2}\abs{Z - \tilde{Z}}
\leq \frac{1}{2}\theta + \frac{1}{2}\theta
\leq\theta. \qedhere
\end{equation*}
\end{proof}

Next, we will prove that the approximations we will use to $\Tr(A^{(i)}H)$ and $\Tr(H)$ suffice.
We introduce some useful properties of matrix norms.
For a matrix $A \in \mathbb{C}^{m\times n}$ and $p\in[1,\infty]$, we denote by $\nrm{A}_p$ the \emph{Schatten $p$-norm}, which is the $\ell^p$-norm of the singular values $(\sum_i \sigma_i^p(A))^{1/p}$.
In particular, $\|A\|_\fr =\|A\|_2$ and $\|A\|_\op =\|A\|_\infty$.
We recall some useful inequalities~\cite[Section IV.2]{bhatia1997MatrixAnalysis}.
\emph{Hölder's inequality} states that for all $B\in\bbC^{n\times k}$ and $r,p,q\in(0,\infty]$ such that $\frac1p+\frac1q=\frac1r$, we have $\nrm{AB}_r\leq\nrm{A}_p\nrm{B}_q$.
The \emph{trace-norm inequality} states that if $n=m$, then $\abs{\Tr(A)}\leq\nrm{A}_1$.

\begin{restatable}[Perturbations of the partition function]{lemma}{partpert}\label{lem:partitionError}
  For all Hermitian matrices $H,\tilde{H}\in\bbC^{n\times n}$,
  \begin{equation*}
  \abs*{\Tr(e^{\tilde{H}}) - \Tr(e^H)}\leq\nrm{e^{\tilde{H}}- e^H}_1 \leq\left(e^{\|\tilde{H}-H\|}-1\right)\Tr(e^H).
  \end{equation*}
\end{restatable}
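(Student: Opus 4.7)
The plan is to prove the two inequalities separately. The first one, $|\Tr(e^{\tilde{H}}) - \Tr(e^H)| \leq \|e^{\tilde{H}} - e^H\|_1$, follows immediately from the trace-norm inequality $|\Tr(X)| \leq \|X\|_1$ that the paper just recalled, applied to $X := e^{\tilde{H}} - e^H$.

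For the nontrivial second inequality, the plan is to set $\Delta := \tilde{H} - H$ and use a Duhamel-style integral representation. Defining $f(s) := e^{(1-s)H}e^{s\tilde{H}}$ and differentiating, I get $f'(s) = e^{(1-s)H}\Delta\, e^{s\tilde{H}}$, since $H$ commutes with $e^{(1-s)H}$. Integrating from $0$ to $1$ yields the identity
\[
e^{\tilde{H}} - e^H = \int_0^1 e^{(1-s)H}\Delta\, e^{s\tilde{H}}\,ds.
\]
Then the triangle inequality for $\|\cdot\|_1$ combined with iterated Hölder (i.e. the three-factor version obtained by applying the two-factor Hölder stated in the paper twice), with Schatten exponents $\tfrac{1}{1-s}, \infty, \tfrac{1}{s}$, bounds the integrand by
\[
\bigl\|e^{(1-s)H}\Delta\, e^{s\tilde{H}}\bigr\|_1 \leq \bigl\|e^{(1-s)H}\bigr\|_{\frac{1}{1-s}}\,\|\Delta\|\,\bigl\|e^{s\tilde{H}}\bigr\|_{\frac{1}{s}}.
\]
A direct computation from the spectral decomposition of Hermitian matrices evaluates these Schatten norms as $\|e^{(1-s)H}\|_{1/(1-s)} = (\Tr e^H)^{1-s}$ and $\|e^{s\tilde{H}}\|_{1/s} = (\Tr e^{\tilde{H}})^s$.

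What remains is to control $\Tr(e^{\tilde{H}})$ itself. Since $\Delta \preceq \|\Delta\| I$, we have $\tilde{H} \preceq H + \|\Delta\| I$, and monotonicity of eigenvalues under the Loewner order (Weyl's inequality), together with $\Tr(e^A) = \sum_i e^{\lambda_i(A)}$, gives $\Tr(e^{\tilde{H}}) \leq \Tr(e^{H+\|\Delta\| I}) = e^{\|\Delta\|}\Tr(e^H)$. Substituting back and evaluating the remaining $s$-integral produces
\[
\|e^{\tilde{H}} - e^H\|_1 \leq \|\Delta\|\Tr(e^H)\int_0^1 e^{s\|\Delta\|}\,ds = (e^{\|\Delta\|}-1)\Tr(e^H),
\]
which is the desired bound.

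I expect no real obstacles here; the only mild subtlety is that the Hölder exponents degenerate at the endpoints $s \in \{0,1\}$, where one Schatten norm becomes $\|\cdot\|_\infty$ and the other becomes $\|\cdot\|_1$, but the bound still holds in the limit and the integrand is continuous, so this causes no issue. All the ingredients invoked, namely the Duhamel identity, Hölder's inequality for Schatten norms, and the monotonicity of the trace-exponential in the Loewner order, are standard.
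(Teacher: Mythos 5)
Your proof is correct and takes a genuinely different route from the paper. The paper's argument starts from the Karplus--Schwinger/Feynman formula $\frac{d}{dt}e^{M(t)}=\int_{0}^{1}e^{y M(t)}\,\dot M(t)\,e^{(1-y) M(t)}\,dy$ applied to the path $M(t)=H+t(\tilde H - H)$, bounds the derivative of $g_A(t):=\Tr(Ae^{M(t)})$, invokes Grönwall's differential inequality on $z(t)=\Tr(e^{M(t)})$ to get $z(t)\leq z(0)e^{t\|\tilde H-H\|}$, and then uses a dual-witness $A$ with $\|A\|\leq 1$ attaining the trace norm. You instead use the Duhamel identity $e^{\tilde H}-e^H=\int_0^1 e^{(1-s)H}\,\Delta\, e^{s\tilde H}\,ds$ (an exact formula for the \emph{difference} of exponentials, not a derivative along a path), a triple Hölder bound with $s$-dependent Schatten exponents $(\tfrac{1}{1-s},\infty,\tfrac{1}{s})$ on the nuclear norm of the integrand, the identities $\|e^{(1-s)H}\|_{1/(1-s)}=(\Tr e^H)^{1-s}$ and $\|e^{s\tilde H}\|_{1/s}=(\Tr e^{\tilde H})^s$, and Weyl monotonicity to get $\Tr(e^{\tilde H})\leq e^{\|\Delta\|}\Tr(e^H)$. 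The Schatten-norm computation is essentially the same in both proofs, but your approach buys two simplifications: the dual-witness step is unnecessary because you bound the nuclear norm of the integrand directly, and the Grönwall step is replaced by a one-line eigenvalue-comparison argument. The endpoint degeneracy of the Hölder exponents at $s\in\{0,1\}$, which you flagged, is indeed harmless since the final bound on the integrand extends continuously to the endpoints.
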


The bound in the above lemma is tight, as shown by the example $\tilde{H}\coloneqq H+\eps I$.
The proof is in the appendix.
Before proving the following lemma, we observe that for any Hermitian matrix $H\in\bbC^{n \times n}$ with $\nrm{H}_\fr^2\leq \frac{n}4$, we have by Hölder's inequality that
\begin{equation}\label{eq:traceLB}
\Tr(e^{H}) = n+ \Tr(e^{H}-I)=n+\sum_i (e^{\lambda_i}-1) \geq n+\sum_i \lambda_i = n + \Tr(H) \geq n - \sqrt{n}\norm{H}_\fr\geq n/2.
\end{equation}

\begin{lemma}\label{lem:UDUTraceError}
  Consider a Hermitian matrix $H\in\bbC^{n \times n}$ such that $\nrm{H}_\fr^2\leq \frac n4$.
  Let $H$ have an approximate eigendecomposition in the following sense: for $r \leq n$, suppose we have a diagonal matrix $D \in \bbr^{r\times r}$ and $\widetilde{U} \in \bbc^{r\times n}$ that satisfy $\|\widetilde{U}\widetilde{U}^\dagger - I\| \leq \delta$ and $\|H-\widetilde{U}^\dagger D \widetilde{U}\|\leq \eps$ for $\eps \leq \frac12$ and $\delta \leq \min(\frac{\eps}{4(\|H\|+\eps)}, \frac{\eps}{2})$.
  Then we have
  \begin{equation}\label{eq:traceZNew}
  \abs{(\Tr(e^{D}) + n-r) - \Tr(e^H)}\leq 2(e-1)\eps\Tr(e^H),
  \end{equation}  
  and, moreover, for all $A\in\bbC^{n \times n}$ we have
  \[
  \abs{\Tr(A \widetilde{U}^\dagger(e^{D}-I)\widetilde{U})+\Tr(A)-\Tr(A e^H)}\lesssim \eps\nrm{A}\Tr(e^H).
  \]
\end{lemma}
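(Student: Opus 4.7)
The plan is to reduce to the case where $\widetilde{U}$ is an exact co-isometry, and then invoke Lemma~\ref{lem:partitionError}. Writing $\widetilde{U}\widetilde{U}^\dagger = I + E$ with $\|E\|\leq \delta$, define the nearby exact co-isometry $U := (I+E)^{-1/2}\widetilde{U}$; then $UU^\dagger = I_r$ and $\widetilde{U} = (I+E)^{1/2}U$. Extending $U$ to a unitary $V$ and using $U^\dagger U + U_\perp^\dagger U_\perp = I_n$ gives the exact identity
\begin{equation*}
 e^{U^\dagger D U} \;=\; U^\dagger(e^D - I)U + I,
\end{equation*}
so that once $\widetilde{U}$ is replaced by $U$ both target quantities take clean forms: $\Tr(e^{U^\dagger DU}) = \Tr(e^D) + (n-r)$ and $\Tr(A\,e^{U^\dagger DU}) = \Tr(A\,U^\dagger(e^D-I)U) + \Tr(A)$.

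The central quantitative step is the spectral bound $\|H - U^\dagger DU\|\leq 2\eps$. From $\widetilde{U}^\dagger D\widetilde{U} = U^\dagger(I+E)^{1/2} D (I+E)^{1/2} U$ together with $UU^\dagger = I_r$ one reads off $\|D\| \leq (\|H\|+\eps)/(1-\delta)$, and the elementary fact $\|(I+E)^{1/2}-I\|\leq \delta$ (valid for $\delta\leq 1$) gives
\begin{equation*}
 \|\widetilde{U}^\dagger D\widetilde{U} - U^\dagger DU\| \;\leq\; \delta\bigl(1+\sqrt{1+\delta}\bigr)\|D\| \;\lesssim\; \delta(\|H\|+\eps) \;\leq\; \eps,
\end{equation*}
where the last step uses the hypothesis $\delta \leq \eps/(4(\|H\|+\eps))$. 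Combined with $\|H - \widetilde{U}^\dagger D\widetilde{U}\|\leq \eps$ by assumption, the triangle inequality yields $\|H - U^\dagger DU\|\leq 2\eps$ as desired.

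Both conclusions now follow from Lemma~\ref{lem:partitionError} applied with $\tilde H := U^\dagger DU$ and H\"older's inequality. For \eqref{eq:traceZNew},
\begin{equation*}
 \bigl|\Tr(e^{U^\dagger DU}) - \Tr(e^H)\bigr| \leq (e^{2\eps}-1)\,\Tr(e^H) \leq 2(e-1)\eps\,\Tr(e^H),
\end{equation*}
using $e^x - 1 \leq (e-1)x$ for $x\in[0,1]$ and $\eps\leq\tfrac12$, and then invoking the identity $\Tr(e^{U^\dagger DU}) = \Tr(e^D)+(n-r)$. For the second bound, H\"older gives $|\Tr(A(e^{U^\dagger DU}-e^H))| \leq \|A\|\,\|e^{U^\dagger DU}-e^H\|_1 \lesssim \eps\|A\|\Tr(e^H)$; it then remains to swap $U^\dagger(e^D-I)U$ for $\widetilde{U}^\dagger(e^D-I)\widetilde{U}$, which, by cyclicity, costs at most $\|UAU^\dagger - \widetilde{U}A\widetilde{U}^\dagger\|\cdot\|e^D-I\|_1 \lesssim \delta\|A\|\Tr(e^H) \lesssim \eps\|A\|\Tr(e^H)$. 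Here $\|UAU^\dagger - \widetilde{U}A\widetilde{U}^\dagger\|\lesssim \delta\|A\|$ comes from $U-\widetilde{U} = (I-(I+E)^{1/2})U$, while $\|e^D-I\|_1 \leq \Tr(e^D) + r \lesssim \Tr(e^H)$ follows from the first bound together with $r\leq n \leq 2\Tr(e^H)$ via \eqref{eq:traceLB}.

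The main obstacle is squeezing the constants tightly enough that the first inequality comes out with the advertised $2(e-1)$: the hypothesis $\delta \leq \eps/(4(\|H\|+\eps))$ is calibrated precisely so that $\|\widetilde{U}^\dagger D\widetilde{U} - U^\dagger DU\|\leq \eps$, which in combination with $\eps\leq\tfrac12$ unlocks the clean estimate $e^{2\eps}-1\leq 2(e-1)\eps$. Everything after that is a mechanical combination of the partition-function lemma, H\"older's inequality, and the unitary-extension identity.
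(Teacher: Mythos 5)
Your proof is correct and follows essentially the same route as the paper's: both replace $\widetilde{U}$ by its nearest exact co-isometry $U := (\widetilde{U}\widetilde{U}^\dagger)^{-1/2}\widetilde{U}$, establish $\|H - U^\dagger D U\| \leq 2\eps$ via the hypothesis on $\delta$, invoke Lemma~\ref{lem:partitionError} to get the $(e^{2\eps}-1)\Tr(e^H)$ bound on $\|e^{U^\dagger D U} - e^H\|_1$, exploit the identity $e^{U^\dagger D U} = U^\dagger(e^D-I)U + I$, and handle the $U\leftrightarrow\widetilde{U}$ swap in the trace by H\"older together with $\|e^D - I\|_1 \lesssim \Tr(e^H)$. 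The only difference is cosmetic: you phrase the perturbation bound via $\widetilde{U} = (I+E)^{1/2}U$ and $\|(I+E)^{1/2}-I\|\leq\delta$, while the paper bounds $\|D\widetilde{U}\|$ and $\|D\|$ directly through $\|\widetilde{U}^+\|$; the resulting constants agree.
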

\begin{proof}
  First, recall that, by \cref{lem:apporth-facts}, there is unitary $U$ such that $\|\widetilde{U} - U\| \leq \delta$.
  Consequently, also using facts from \cref{lem:apporth-facts}, along with bounds on $\delta$,
  \begin{align}
  \|H - U^\dagger DU\|
  \leq \|H - \widetilde{U}^\dagger D \widetilde{U}\| + \delta\frac{2 - \delta}{(1-\delta)^2}\|\widetilde{U}^\dagger D \widetilde{U}\|
  \leq \eps + 4\delta(\|H\| + \eps) \leq 2\eps.
  \label{eq:IdealUCloseNew}
  \end{align}
  By \cref{lem:partitionError} we have
  \begin{equation*}
    \big\|e^{U^\dagger D U} - e^H\big\|_1 \leq (e^{2\eps} - 1)\Tr(e^H) \leq 2(e-1)\eps\Tr(e^H),
  \end{equation*}
  and since $e^{U^\dagger D U}=U^\dagger (e^{D}-I)U+I$, by the linearity of trace, the trace-norm inequality, and Hölder's inequality,
  \begin{multline}\label{eq:traceMiddleNew}
    \abs{\Tr(A U^\dagger (e^{D}-I)U) + \Tr(A) - \Tr(Ae^H)} \\
    = \abs{\Tr(A (e^{U^\dagger DU} - e^H))}
    \leq \|A\|\|e^{U^\dagger DU} - e^H\|_1
    \leq 2(e-1)\|A\|\eps\Tr(e^H).
  \end{multline}
  In particular, setting $A=I$, we get the first desired bound
  \begin{equation*}
    \abs{(\Tr(e^{D})+ n-r)-\Tr(e^H)}
    = \abs{\Tr(U^\dagger(e^{D}-I)U + I)-\Tr(e^H)}
    \leq 2(e-1)\eps\Tr(e^H).
  \end{equation*}  
  Note that the two identity matrices in the equation above refer to identities of two different sizes.
  Now, if we show that $\Tr(A U^\dagger (e^{D}-I)U) - \Tr(A \widetilde{U}^\dagger (e^{D}-I)\widetilde{U})$ is sufficiently small, then the second desired bound follows by \cref{eq:traceMiddleNew} and triangle inequality.
  \begin{align*}
  &|\Tr(A U^\dagger (e^{D}-I)U) - \Tr(A \widetilde{U}^\dagger (e^{D}-I)\widetilde{U})| \\
  &=|\Tr((U A U^\dagger-\widetilde{U} A \widetilde{U}^\dagger)(e^{D}-I))|\\  
  &\leq \big\|U A U^\dagger-\widetilde{U} A \widetilde{U}^\dagger\big\|\nrm{e^{D}-I}_1 \tag*{by trace-norm and Hölder's inequality}\\  
  &\leq (2\delta + \delta^2)\|A\|\nrm{e^{D}-I}_1 \tag*{by \cref{lem:apporth-facts}}\\
  &\leq 2\eps\|A\|\nrm{e^{D}-I}_1 \tag*{by assumption that $\delta \leq \eps/2$} \\
  &\leq 2\eps\nrm{A}\left(\Tr(e^{D})+r\right) \tag*{by triangle inequality}\\    
  &\lesssim \eps\nrm{A}\Tr(e^H). \tag*{by \cref{eq:traceZNew,eq:traceLB}}
  \end{align*}
\end{proof}

Now we are ready to devise our upper bound on the trace estimation subroutine.
\begin{proof}[Proof of \cref{cor:traceEst}]
  By \cref{lem:traceRelativeErrors}, it suffices to find estimates of $\Tr(e^H)$ and $\Tr(A e^H)$ for all $A = A^{(i)}$, to $\frac{\theta}{3}\Tr(e^H)$ additive precision.
  Recall from the statement that $H\coloneqq -\theta\sum_{i=1}^{t}A^{(j_{i})}$.
  By triangle inequality, $\nrm{H}_\fr\leq \frac{F}{\theta}\log(n)$.
  Because $H$ is a linear combination of matrices, by \cref{lem:weighted-oversampling}, after paying $\frac{\log(n)}{\theta^2}\ncb(A)$ cost, we can obtain $\sq_\phi(H)$ for $\phi\leq\frac{F^2\log^2(n)}{\theta^2\nrm{H}_\fr^2}$ with $\qcb(H)=\pcb_\phi(H)\leq\frac{\log(n)}{\theta^2}\qcb(A)$ and $\scb_\phi(H)=\scb(A)$.
  
  If $\frac{F}{\theta}\log(n)> \sqrt{n}/18$, then we simply compute the sum $H$ by querying all matrix elements of every $A^{(j_{i})}$ in the sum, costing $\bigO{t n^2 \qcb(A)}$. Then we compute $e^H$ 
  and its trace $\Tr(e^H)$ all in time $\bigO{n^3}$~\cite{pan1999ComplexityMatEigenProb}. Finally, we compute all the traces $\Tr(e^H A^{(m)})$ in time $\bigO{m n^2}$. The overall complexity is $\bigO{n^2 (t \qcb(A)+n+m)}=\bOt{\frac{F^6}{\theta^6}\qcb(A)\log^6(n)+m\frac{F^4}{\theta^4}\log^4(n)}$.
  
  If $\frac{F}{\theta}\log(n)\leq \sqrt{n}/18$ we do the following.
  Note that if $\nrm{H}\leq 1$, then $\Tr(e^H)\geq n/e$ and $\Tr(A^{(i)}e^H)\leq \nrm{A^{(i)}}_\fr\nrm{e^H}_\fr\leq F e\sqrt{n}$, so $\Tr(A^{(i)}e^H)/\Tr(e^H)\leq e^2 F / \sqrt{n}\leq\theta$, and outputting $0$ as estimates is acceptable. 
  We use \cref{thm:eig-svt} (with $f(x)=x$, so that $L=1$, and choosing $\eps\coloneqq \Theta(\theta)$) to find a diagonal matrix $D\!\in\!\bbR^{s\times s}$ with $s=\bOt{\!\phi^2\nrm{H}_\fr^6/\eps^6\log(1/\delta)\!}\!=\bOt{F^6\theta^{-6}\log^6(n)\eps^{-6}\log(1/\delta)}=\bOt{F^6\theta^{-12}\log^{6}(n)\log(1/\delta)}$ together with an approximate isometry $\widetilde{U}=N(SH)\in\bbC^{s \times n}$ such that $\nrm{H-\widetilde{U}^\dagger D \widetilde{U}}\leq \bigO{\eps}$.
  If every diagonal element is less than $3/4$, then we conclude that $\nrm{H}\leq 1$, and return $0$. Otherwise we have $\nrm{H}\geq 1/2$ and thus by \cref{thm:eig-svt} we have $\nrm{\widetilde{U}\widetilde{U}^\dagger - I} \lesssim\eps^3\nrm{H}^{-3} \lesssim\frac{\eps}{\nrm{H}+\eps} + \eps$ with probability at least $1-\frac\delta2$.
  As per \cref{thm:eig-svt}, the cost of this is $\log^3(1/\delta)$ times at most
  \begin{align*}
  \bOt{\frac{\|H\|_\fr^{18}}{\eps^{18}}\phi^7\sqcb_\phi(H) +\frac{\|H\|_\fr^{22}}{\eps^{22}}\phi^6}
  &=\bOt{\frac{\|H\|_\fr^{4}}{\eps^{18}}\frac{F^{14}}{\theta^{14}}\log^{14}(n)\sqcb_\phi(H) +\frac{\|H\|_\fr^{10}}{\eps^{22}}\frac{F^{12}}{\theta^{12}}\log^{12}(n)}\\
  &=\bOt{\frac{\|H\|_\fr^{4}}{\eps^{18}}\frac{F^{14}}{\theta^{16}}\log^{15}(n)\sqcb(A) +\frac{\|H\|_\fr^{10}}{\eps^{22}}\frac{F^{12}}{\theta^{12}}\log^{12}(n)}\\      
  &=\bOt{\frac{1}{\eps^{18}}\frac{F^{18}}{\theta^{20}}\log^{19}(n)\sqcb(A) +\frac{1}{\eps^{22}}\frac{F^{22}}{\theta^{22}}\log^{22}(n)}\\
  &=\bOt{\frac{F^{18}}{\theta^{38}}\log^{19}(n)\sqcb(A) +\frac{F^{22}}{\theta^{44}}\log^{22}(n)}.  
  \end{align*}
  
  By \cref{lem:UDUTraceError} we have\footnote{In case applying \cref{thm:eig-svt} would result in $s > n$, we instead directly diagonalize $H$ ensuring $s\leq n$.} that $\Tr(e^D)+(n-s)$ is a multiplicative $\frac\theta{3}$-approximation of $\Tr(e^H)$ as desired, and for all $A=A^{(i)}$, $\Tr((e^{D}-I)\widetilde{U} A \widetilde{U}^\dagger )+\Tr(A)$ is an additive $(\frac\theta{9}\Tr(e^H))$-approximation of $\Tr(A e^H)$.
  We can ignore the $\Tr(A)$ in our approximation: by \cref{eq:traceLB} we have
  \[
    \Tr(A) \leq \|A\|_\fr\|I\|_\fr \leq F\sqrt{n} \leq \theta n/18 \leq \theta\Tr(e^H)/9,
  \]
  so $|\Tr((e^{D}-I)\widetilde{U} A \widetilde{U}^\dagger) - \Tr(Ae^H)| \leq \frac{2\theta}{9}\Tr(e^H))$.
  So, it suffices to compute an additive $(\frac{\theta}{9}\Tr(e^H))$-approximation of $\Tr((e^{D}-I)\widetilde{U} A \widetilde{U}^\dagger)=\Tr( A \widetilde{U}^\dagger (e^{D}-I)\widetilde{U})$ to obtain the $(\frac{\theta}{3}\Tr(e^H))$-approximation of $\Tr(A e^H)$ we seek.
  
  We use \cref{claim:tr_prod} to estimate $\Tr( A \widetilde{U}^\dagger (e^{D}-I)\widetilde{U})$ to additive precision $(\frac{\theta}{9}\Tr(e^H))$. Note that by  \cref{lem:UDUTraceError} and \cref{eq:traceLB} we have
  \begin{equation*}
  \nrm{\widetilde{U}^\dagger (e^{D}-I)\widetilde{U}}_\fr
  \leq\|\tilde{U}\|^2\|e^{D}-I\|_\fr
  \leq 2\|e^{D}-I\|_\fr
  \leq 2\|e^{D}-I\|_1
  \lesssim \Tr(e^H),
  \end{equation*}  
  and since  $s=\bOt{F^6\theta^{-12}\log^{6}(n)\log(1/\delta)}$ and $\qcb(H)\leq\frac{\log(n)}{\theta^2}\qcb(A)$, we also have
  \begin{align*}
    \qcb(\widetilde{U}^\dagger (e^{D}-I)\widetilde{U})
    &=\qcb((SH)^\dagger N^\dagger (e^{D}-I)N(SH))\\
    &=\bigO{s \cdot \qcb(H)+s^2}\\
    &=\bOt{F^6\theta^{-14}\log^{7}(n)\log(1/\delta)\qcb(A)+F^{12}\theta^{-24}\log^{12}(n)\log^2(1/\delta)}.
  \end{align*}
  Therefore, \cref{claim:tr_prod} tells us that given $\sq(A)$, a  $(\frac{\theta}{9}\Tr(e^H))$-approximation of $\Tr( A \widetilde{U}^\dagger (e^{D}-I)\widetilde{U})$ can be computed with success probability at least $1-\frac\delta{2m}$ in time
  \begin{equation*}
  \bigO{\frac{\|A\|_\fr^2}{\theta^2} \big(\sqcb(A)+s \cdot \qcb(H)+s^2  \big)\log\frac{m}{\delta}}.
  \end{equation*}
  Since we do this for all $i\in[m]$, the overall complexity of obtaining the desired estimates $\Tr(A^{(i)} e^H)$ with success probability at least $1-\frac\delta2$ is $m$ times
  \begin{equation*}
  \bOt{\frac{F^8}{\theta^{16}}\log^{7}(n)\log(1/\delta)\log(m/\delta)\qcb(A)+\frac{F^{14}}{\theta^{26}}\log^{12}(n)\log^2(m/\delta)\log(m/\delta)}.\qedhere
  \end{equation*}
\end{proof}

\subsection{Discriminant analysis}\label{sec:discriminant-analysis}

Discriminant analysis is used for dimensionality reduction and classification over large data sets.
Cong and Duan introduced a quantum algorithm to perform both with Fisher's linear discriminant analysis~\cite{cong2016quantum}, a generalization of principal component analysis to data separated into classes.

The problem is as follows: given classified data, we wish to project our data onto a subspace that best explains between-class variance, while minimizing within-class variance.
Suppose there are $M$ input data points $\{x_i \in \bbr^N : 1 \leq i \leq M\}$ each belonging to one of $k$ classes. Let $\mu_c$ denote the centroid (mean) of class $c\in [k]$, and $\bar{x}$ denote the centroid of all data points. Following the notation of~\cite{cong2016quantum}, let
\begin{align*}
S_B = \sum_{c=1}^k (\mu_c-\bar{x})(\mu_c-\bar{x})^T \text{ and }
S_W = \sum_{c=1}^k \sum_{x\in c} (\mu_c-x)(\mu_c-x)^T.
\end{align*}
denote the between-class and within-class scatter matrices of the dataset respectively.
The original goal is to solve the generalized eigenvalue problem $S_B v_i = \lambda_i S_W v_i$ and output the top eigenvalues and eigenvectors; for dimensionality reduction using linear discriminant analysis, we would project onto these top eigenvectors. If $S_W$ would be full-rank, this problem would be equivalent to finding the eigenvalues of $S_W^{-1}S_B$. However, this does not happen in general, and therefore various relaxations are considered in the literature~\cite{Belhumeur1997EigenFisherFaces,Welling2009FisherLDA}. For example, Welling~\cite{Welling2009FisherLDA} considers the eigenvalue problem of
\begin{equation}\label{eq:welling}
S_B^{\frac12}S_W^{-1}S_B^{\frac12}.
\end{equation}
Cong and Duan further relax the problem, as they ignore small eigenvalues of $S_W$ and $S_B$, and only compute approximate eigenvalues of \cref{eq:welling} (after truncating eigenvalues), leading to inexact eigenvectors. We construct a classical analogue of their quantum algorithm.\footnote{Analyzing whether or not the particular relaxation used in this and other quantum machine learning papers provides a meaningful output is unfortunately beyond the scope of our paper.}
Cong and Duan also describe a quantum algorithm for discriminant analysis classification; this algorithm does a matrix inversion procedure very similar to those described in \cref{sec:matrix-inversion} and \cref{sec:SVM}, so for brevity we will skip dequantizing this algorithm.

To formally analyze this algorithm, we could, as in \cref{sec:PCA}, assume the existence of an eigenvalue gap, so the eigenvectors are well-conditioned.
However, let us instead use a different convention: if we can find diagonal $D$ and an approximate isometry $U$ such that $S_B^{\frac12}S_W^{-1}S_B^{\frac12}U \approx UD$, then we say we have found approximate eigenvalues and eigenvectors of $S_W^+S_B$.

\begin{prob}[Linear discriminant analysis] \label{prob:disc}
Consider the functions
\begin{align*}
  \fsqrt(x) = \begin{cases}
    0 & x < \sigma^2/2 \\
    2x/\sigma - \sigma & \sigma^2/2 \leq x < \sigma^2 \\
    \sqrt{x} & x \geq \sigma^2
  \end{cases} \qquad
  \finv(x) = \begin{cases}
    0 & x < \sigma^2/2 \\
    2x/\sigma^4 - 1/\sigma^2 & \sigma^2/2 \leq x < \sigma^2 \\
    1/x & x \geq \sigma^2
  \end{cases}.
\end{align*}
Given $\sq(B, W) \in \bbc^{m\times n}$, with $S_W \coloneqq W^\dagger W$ and $S_B \coloneqq B^\dagger B$, find an $\alpha$-approximate isometry $U \in \bbc^{n\times p}$ and diagonal $D \in \bbc^{p\times p}$ such that we have $\sq_\phi(U(\cdot,i))$ for all $i$, $\abs{D_{ii} - \lambda_i} \leq \eps\|B\|^2/\sigma^2$ for $\lambda_i$ the eigenvalues of $\fsqrt(S_B)\finv(S_W)\fsqrt(S_B)$, and
\[ \|\fsqrt(S_B)\finv(S_W)\fsqrt(S_B)U - UD\| \leq \eps\|\fsqrt(S_B)\|^2\|\finv(S_W)\| \leq \eps\|B\|^2/\sigma^2. \]
\end{prob}

The choice of error bound is natural, since $\|B\|^2/\sigma^2$ is essentially $\|\fsqrt(S_B)\|^2\|\finv(S_W)\|$: we aim for additive error.
The quantum algorithm achieves a runtime of $\bOt{\frac{\|B\|_\fr^7}{\eps^3\sigma^7} + \frac{\|W\|_\fr^7}{\eps^3\sigma^7}}$, up to $\polylog(m,n)$ factors~\cite[Theorem~2]{cong2016quantum}.\!\footnote{This is the runtime of Step 2 of Algorithm 1. The normalization factor of $\max(\|B\|_\fr,\|S\|_\fr)$ is implicit there, $\kappa_{eff}$ corresponds to $\frac{\max(\|B\|_\fr,\|S\|_\fr)}{\sigma^2}$, and the error bound the algorithm achieves is the one we describe here, since the authors must implicitly rescale the inverse and square root function by a cumulative factor of $\|B\|^2/\sigma^2$ to apply their Theorem~1.}

\begin{corollary} \label{cor:disc}
For $\eps < \sigma/\|B\|$, we can solve \cref{prob:disc} in $\bOt{(\frac{\|B\|_\fr^6\|B\|^4}{\eps^6\sigma^{10}} + \frac{\|W\|_\fr^6\|W\|^{10}}{\eps^6\sigma^{16}})\log^3\frac{1}{\delta}}$ time, with $\sqrun_\phi(U(\cdot,i)) = \bOt{\frac{\|B\|_\fr^4}{\eps^2\sigma^4}\log^2\frac{1}{\delta}}$.
\end{corollary}

We prove this by using \cref{thm:evenSing} to approximate $\fsqrt(W^\dagger W) \approx R_W^\dagger U_W R_W$ and $\finv(B^\dagger B) \approx R_B^\dagger U_B R_B$ by RUR decompositions.
Then, we use \cref{prop:appr-mms} to approximate $R_W R_B^\dagger$ by small submatrices $R_W' R_B'^\dagger$.
This yields an approximate RUR decomposition of the matrix whose eigenvalues and vectors we want to find, $R_W^\dagger U R_W$ for $U = U_W R_W'R_B'^\dagger U_B R_B'R_W'^\dagger U_W$.

Finding eigenvectors from an RUR decomposition follows from an observation (\cref{lem:app-orth-expression}): for a matrix $C_W$ formed by sampling columns from $R_W$ (using $\sq(W)$), and $[C_W]_k$ the rank-$k$ approximation to $C_W$ (which can be computed because $C_W$ has size independent of dimension), $(([C_W]_k)^+ R_W)^\dagger$ has singular values either close to zero or close to one.
This roughly formalizes the intuition of $C_W$ preserving the left singular vectors and singular values of $R_W$.
We can rewrite $R_W^\dagger U R_W = R_W^\dagger (C_k^+)^\dagger C_k^\dagger U C_k C_k^+ R_W$, which holds by choosing $k$ sufficiently large and choosing $C$ to be the same sketch used for $U$.
Then, we can compute the eigendecomposition of the center $C_k^\dagger U C_k = VDV^\dagger$, which gives us an approximate eigendecomposition for $R_W^\dagger U R_W$: $(C_k^+ R_W)^\dagger V$ is an approximate isometry, so we choose its columns to be our eigenvectors, and our eigenvalues are the diagonal entries of $D$.
We show that this has the approximation properties analogous to the quantum algorithm.

\begin{proof}
By \cref{thm:evenSing}, we can find $R_B, C_B, R_W, C_W$ such that
\begin{align*}
  \|\fsqrt(B^\dagger B) - R_B^\dagger \overline{\fsqrt}(C_BC_B^\dagger) R_B\| &\leq \eps\|B\| \\
  \|\finv(W^\dagger W) - R_W^\dagger \overline{\finv}(C_WC_W^\dagger) R_W\| &\leq \eps/\sigma^2
\end{align*}
with
\begin{align*}
  r_B &= \bOt{\frac{\|B\|_\fr^2}{\eps^2\sigma^2}\log\frac{1}{\delta}} & c_B &= \bOt{\frac{\|B\|^4\|B\|_\fr^2}{\eps^2\sigma^6}\log\frac{1}{\delta}} \\
  r_W &= \bOt{\frac{\|W\|^2\|W\|_\fr^2}{\eps^2\sigma^4}\log\frac{1}{\delta}} & c_W &= \bOt{\frac{\|W\|^6\|W\|_\fr^2}{\eps^2\sigma^8}\log\frac{1}{\delta}}.
\end{align*}
Let $Z_B \coloneqq \overline{\fsqrt}(C_BC_B^\dagger)$ and $Z_W \coloneqq \overline{\finv}(C_WC_W^\dagger)$.
These approximations suffice for us:
\begin{multline*}
  \|\fsqrt(S_B)\finv(S_W)\fsqrt(S_B) - R_B^\dagger Z_B R_B R_W^\dagger Z_W R_W R_B^\dagger Z_B R_B\| \\
  \leq \|\fsqrt(S_B) - R_B^\dagger Z_B R_B\|\|\finv(S_W)\fsqrt(S_B)\| \\ + \| R_B^\dagger Z_B R_B\|\|\finv(S_W) - R_W^\dagger Z_W R_W\| \|\fsqrt(S_B)\| \\ + \| R_B^\dagger Z_B R_B R_W^\dagger Z_W R_W\|\|\fsqrt(S_B) - R_B^\dagger Z_B R_B\|,
\end{multline*}
each of which is bounded by $\eps\|B\|^2/\sigma^2$.
Next, we approximate $\|R_BR_W^\dagger - R_B'R_W'^\dagger\|_\fr \leq \eps\sigma^{3/2}\sqrt{\|B\|}$, since then
\begin{align*}
  &\|\bar{\Sigma}_B^{\frac12}\bar{U}_B^\dagger R_B R_W^\dagger \bar{U}_W\bar{\Sigma}_W^{\frac12} - \bar{\Sigma}_B^{\frac12}\bar{U}_B^\dagger R_B' R_W'^\dagger \bar{U}_W\bar{\Sigma}_W^{\frac12}\| \\
  &\leq \|\bar{\Sigma}_B^{\frac12}\bar{U}_B^\dagger\| \|R_B R_W^\dagger - R_B' R_W'^\dagger\| \|\bar{U}_W\bar{\Sigma}_W^{\frac12}\| \\
  &\leq \sigma^{-\frac12} \|R_B R_W^\dagger - R_B' R_W'^\dagger\| \sigma^{-2} \\
  &\leq \eps\sqrt{\|B\|/\sigma^2},
\end{align*}
and so
\begin{align*}
  \|R_B^\dagger Z_B R_B R_W^\dagger Z_W R_W R_B^\dagger Z_B R_B - R_B^\dagger Z_B R_B' R_W'^\dagger Z_W R_W' R_B'^\dagger Z_B R_B\| \lesssim \eps\|B\|^2/\sigma^2.
\end{align*}
Now, we can compute $Z \coloneqq Z_B R_B' R_W'^\dagger Z_W R_W' R_B'^\dagger Z_B$ and, using that $Z_B = Z_B [C_B]_{\frac{\sigma}{\sqrt{2}}} [C_B]_{\frac{\sigma}{\sqrt{2}}}^+$, rewrite
\begin{align*}
  R_B^\dagger Z R_B
  = R_B^\dagger ([C_B]_{\frac{\sigma}{\sqrt{2}}}^+)^\dagger [C_B]_{\frac{\sigma}{\sqrt{2}}}^\dagger Z [C_B]_{\frac{\sigma}{\sqrt{2}}} [C_B]_{\frac{\sigma}{\sqrt{2}}}^+ R_B.
\end{align*}
By \cref{lem:app-orth-expression}, $([C_B]_{\frac{\sigma}{\sqrt{2}}}^+ R_B)^\dagger$ is an $\eps\sigma/\|B\|$-approximate projective isometry\footnote{We get more than we need here: an $\eps$-approximate projective isometry would suffice for the subsequent arguments.} onto the image of $[C_B]_{\frac{\sigma}{\sqrt{2}}}^+$ (where we use that $\eps < \sigma/\|B\|$).
To turn this approximate projective isometry into an isometry, we compute the eigendecomposition $[C_B]_{\frac{\sigma}{\sqrt{2}}}^\dagger Z [C_B]_{\frac{\sigma}{\sqrt{2}}} = V\Sigma V^\dagger$, where we truncate so that $V$ is full rank.
Consequently, $U \coloneqq R_B^\dagger ([C_B]_{\frac{\sigma}{\sqrt{2}}}^+)^\dagger V$ is full rank---the image of $V$ is contained in the image of $[C_B]_{\frac{\sigma}{\sqrt{2}}}^+$---and thus is an $\eps\sigma/\|B\|$-approximate isometry.
So, our eigenvectors are $U$ and our eigenvalues are $D \coloneqq \Sigma$.
This satisfies the desired bounds because
\begin{align*}
  \|\fsqrt(S_B)\finv(S_W)\fsqrt(S_B)U - UD\|
  &\leq \|\fsqrt(S_B)\finv(S_W)\fsqrt(S_B)U - UDU^\dagger U\| + \|UDU^\dagger U - UD\| \\
  &\leq \eps\frac{\|B\|^2}{\sigma^2}\|U\| + \|UD\|\|U^\dagger U - I\| \lesssim \eps\frac{\|B\|^2}{\sigma^2}.
\end{align*}
The eigenvalues are correct because, by the approximate isometry condition, $\|U - \tilde{U}\| \lesssim \eps\frac{\sigma}{\|B\|}$ for $\tilde{U}$ an isometry, and so we can use \cref{lem:apporth-facts} to conclude
\begin{align*}
  &\|\fsqrt(S_B)\finv(S_W)\fsqrt(S_B) - \tilde{U}D\tilde{U}^\dagger\| \\
  &\leq \|\fsqrt(S_B)\finv(S_W)\fsqrt(S_B) - UDU^\dagger\| + \|UDU^\dagger - \tilde{U}D\tilde{U}^\dagger\| \\
  &\lesssim \eps\frac{\|B\|^2}{\sigma^2} + \eps\frac{\sigma}{\|B\|}\|D\| \lesssim \eps\frac{\|B\|^2}{\sigma^2}.
\end{align*}
$\tilde{U}D\tilde{U}^\dagger$ is an eigendecomposition.
Furthermore, this is an approximation of a Hermitian PSD matrices, where singular value error bounds align with eigenvalue error bounds.
So, Weyl's inequality (\cref{lem:weylineq}) implies the desired bound $\abs{D_{ii} - \lambda_i} \lesssim \eps\frac{\|B\|^2}{\sigma^2}$ for $\lambda_i$ the true eigenvalues.

We have $\sq_\phi(U(\cdot,i))$ by \cref{lem:b-sq-approx,lemma:sample-Mv}, since $U(\cdot, i) = R_B^\dagger ([C_B]_{\frac{\sigma}{\sqrt{2}}}^+)^\dagger V(\cdot,i)$.
The runtime is $\sqrun_\phi(U(\cdot, i)) = r_B\phi\log\frac{1}{\delta}$, where
\begin{align*}
  \phi = r_B\frac{\sum_{j=1}^{r_B} \|R_B(j,\cdot)\|^2\abs{[([C_B]_{\frac{\sigma}{\sqrt{2}}}^+)^\dagger V(\cdot,i)](j)}^2}{\|U(\cdot,i)\|}
  \lesssim \|B\|_\fr^2\|([C_B]_{\frac{\sigma}{\sqrt{2}}}^+)^\dagger V(\cdot,i)\|^2
  \lesssim \frac{\|B\|_\fr^2}{\sigma^2}.
\end{align*}
This gives the stated runtime.
\end{proof} 


\section{More singular value transformation}
\label{sec:cur}

In this section, we present more general versions of our algorithm for even SVT to get results for generic SVT (\cref{thm:gen-svt}) and eigenvalue transformation (\cref{thm:eig-svt}).
In applications we mainly use even SVT to allow for more fine-tuned control over runtime, but we do use eigenvalue transformation in \cref{sec:SDP}.

For generic SVT: consider a matrix $A \in \bbc^{m\times n}$ and a function $f: [0,\infty) \to \bbc$ satisfying $f(0) = 0$ (so the singular value transformation $f^\svt(A)$ is well-defined as in \cref{def:svt}).
Given $\sq(A)$ and $\sq(A^\dagger)$, we give an algorithm to output a CUR decomposition approximating $f^\svt(A)$.

\begin{restatable}[Generic singular value transformation]{theorem}{gensvt}
\label{thm:gen-svt}
Let $A \in \bbc^{m\times n}$ be given with both $\sq_\phi(A)$ and $\sq_\phi(A^\dagger)$ and let $f: [0,\infty) \to \bbc$ be a function such that $f(0) = 0$, $g(x) \coloneqq f(\sqrt{x})/\sqrt{x}$ is $L$-Lipschitz, and $\bar{g}(x) \coloneqq (g(x)-g(0))/x$ is $\bar{L}$-Lipschitz.
Then, for $0 < \eps \leq \min(L\|A\|^3, \bar{L}\|A\|^5)$, we can output sketches $R \coloneqq SA \in \bbc^{r\times n}$ and $C \coloneqq AT \in \bbc^{m\times c}$, along with $M \in \bbc^{r\times c}$ such that
\begin{align*}
    \Pr\Big[\|CMR + g(0)A - f^\svt(A)\| > \eps \Big] < \delta,
\end{align*}
with $r = \bOt{\phi^2L^2\|A\|^2\|A\|_\fr^4\frac{1}{\eps^2}\log\frac{1}{\delta}}$ and $c = \bOt{\phi^2L^2\|A\|^4\|A\|_\fr^2\frac{1}{\eps^2}\log\frac{1}{\delta}}$.
Finding $S$, $M$, and $T$ takes time
\begin{multline*}
    \tilde{\mathcal{O}}\Big((\bar{L}^2\|A\|^8\|A\|_\fr^2 + L^2\|A\|^2\|A\|_\fr^4)\frac{\phi^2}{\eps^2}\log\frac1\delta(\scb_\phi(A)+\scb_\phi(A^\dagger)+\pcb_\phi(A)+\pcb_\phi(A^\dagger)) \\
    + (L^2\bar{L}^2\|A\|^{12}\|A\|_\fr^4 + L^4\|A\|^6\|A\|_\fr^6)\frac{\phi^4}{\eps^4}\log^2\frac{1}{\delta}\qcb(A) \\
    + (L^4\bar{L}^2\|A\|^{16}\|A\|_\fr^6 + L^6\|A\|^{10}\|A\|_\fr^8)\frac{\phi^6}{\eps^6}\log^3\frac{1}{\delta} + \ncb_\phi(A)\Big).
\end{multline*}
\end{restatable}

If we only wish to assume $\sq_\phi(A)$, we can do so by using \cref{lem:mmhp} instead of \cref{prop:appr-mms} in our proof, paying an additional factor of $\frac{1}{\delta}$.

Note that if $\sqcb_\phi(A), \sqcb_\phi(A^\dagger) = \bigO{1}$, then this runtime is dominated by the last term.
Moreover, if $A$ is strictly low-rank, with minimum singular value $\sigma$, or essentially equivalently, if $f(x) = 0$ for $x \leq \sigma$ and so $g(x) = 0$ for $x \leq \sigma^2$, then $L \leq \ell/\sigma^2$ and $\bar{L} = 2\ell/\sigma^4$ for $\ell$ the Lipschitz constant of $f$.
In this case the complexity is
\begin{equation} \label{eqn:gensvt-sigma}
    \bOt{\Big(\frac{\|A\|^{10}\|A\|_\fr^6}{\sigma^{16}} + \frac{\|A\|^{4}\|A\|_\fr^8}{\sigma^{12}}\Big)\Big(\frac{\ell\|A\|}{\eps}\Big)^6\phi^6\log^3\frac{1}{\delta}}.
\end{equation}
Importantly, when $\eps = \bigO{\ell\|A\|}$ (that is, if we want relative error), this runtime is independent of dimension.
If one desires greater generality, where we only need to depend on the Lipschitz constant of $f$, we can use a simple trick: as we aim for a spectral norm bound, we can essentially treat $A$ as if it had strictly low rank.
Consider the variant of $f$, $f_{\geq\sigma}$, which is zero below $\sigma/2$, $f$ above $\sigma$, and is a linear interpolation in between.
\begin{equation*}
    f_{\geq\sigma}(x) \coloneqq \begin{cases}
        0 & 0 \leq x < \sigma/2 \\
        (2x/\sigma - 1)f(\sigma) & \sigma/2 \leq x < \sigma \\
        f(x) & \sigma \leq x
    \end{cases}
\end{equation*}
Then $\|f^\svt(A) - f_{\geq\eps/\ell}^\svt(A)\| \leq \eps$, because $f(\eps/\ell) \leq \eps$.
Further, the Lipschitz constant of $f_{\geq\eps/\ell}$ is at most $2\ell$: the slope of the linear interpolation is $2f(\sigma)/\sigma \leq 2\ell\sigma/\sigma$.
So, we can run our algorithm for arbitrary $\ell$-Lipschitz $f$ in the time given by \cref{eqn:gensvt-sigma}, with $\sigma = \eps/\ell$.

Our proof strategy is to apply our main result \cref{thm:evenSing} to $g(A^\dagger A)$, for $g(x) \coloneqq f(\sqrt{x})/\sqrt{x}$, and subsequently approximate matrix products with \cref{prop:appr-mms} to get an approximation of the form $A'R'^\dagger U R + g(0)A$:
\begin{align*}
    f^\svt(A) = A g(A^\dagger A) \approx A R^\dagger U R + A (g(0)I) \approx A'R'^\dagger U R + g(0)A.
\end{align*}
Here, $A'R'^\dagger UR$ is a CUR decomposition as desired, since $A'$ is a normalized subset of columns of $A$.
One could further approximate $g(0)A$ by a CUR decomposition if necessary (e.g.\ by adapting the eigenvalue transformation result below).

We do not use this theorem in our applications.
Sometimes we implicitly use a similar strategy (e.g.\ in \cref{sec:matrix-inversion}), but because we apply our matrix to a vector ($f(A^\dagger) b$) we can use \cref{lemma:inner-prod} instead of \cref{prop:appr-mms} when approximating.
This allows for the algorithm to work with only $\sq_\phi(A)$ and still achieve a poly-logarithmic dependence on $\frac{1}{\delta}$.

\begin{proof}
  If we want to compute $\hat{f}^\svt(A)$, we can work with $f(x) \coloneqq \hat{f}(x) - g(0)x$, so that $g(0) = 0$ without loss of generality.
  Notice that $\hat{f}^\svt(A) = f^\svt(A) + g(0)A$, so if we get a CUR decomposition for $f^\svt(A)$ we can add $g(0)A$ after to get the decomposition in the theorem statement.

  Consider the SVT $g(x) \coloneqq f(\sqrt{x})/\sqrt{x}$, so that $f^\svt(A) = Ag(A^\dagger A)$.
  First, use \cref{thm:evenSing} to get $SA \in \bbc^{r\times n}$, $SAT \in \bbc^{r\times c}$ such that, with probability $\geq 1-\delta/4$,
  \begin{equation}\label{eq:genSVTErr1}
    \|(SA)^\dagger \bar{g}((SAT)(SAT)^\dagger)SA - g(A^\dagger A)\| \leq \frac{\eps}{2\|A\|}.
  \end{equation}
  Second, use \cref{prop:appr-mms} to get a sketch $T'^\dagger \in \bbc^{c'\times n}$ such that, with probability $\geq 1-\delta/4$,
  \begin{equation}\label{eq:genSVTErr2}
    \|A(SA)^\dagger - AT'(SAT')^\dagger\| \leq \eps(3L\|A\|)^{-1}.
  \end{equation}
  The choices of parameters necessary are as follows (using that $\|SA\|_\fr = O(\|A\|_\fr)$ by \cref{eq:evenSVTbdR} and we have a $2\phi$-oversampled distribution for $(SA)^\dagger$ by \cref{lem:sq-sketching}):
  \begin{align*}
    r &= \tilde{\Theta}\Big(\phi^2L^2\|A\|^4\|A\|_\fr^2\frac{1}{\eps^2}\log\frac{1}{\delta}\Big) \\
    c &= \tilde{\Theta}\Big(\phi^2\bar{L}^2\|A\|^8\|A\|_\fr^2\frac{1}{\eps^2}\log\frac{1}{\delta}\Big) \\
    c' &= \tilde{\Theta}\Big(\phi^2L^2\|A\|^2\|A\|_\fr^4\frac{1}{\eps^2}\log\frac{1}{\delta}\Big)
  \end{align*}
  This implies the desired bound through the following sequence of approximations:
  \begin{align*}
    f^\svt(A) &= A g(A^\dagger A) \\
    &\approx A(SA)^\dagger \bar{g}((SAT)(SAT)^\dagger)SA \\
    &\approx \underbrace{AT'}_{C}\underbrace{(SAT')^\dagger\bar{g}((SAT)(SAT)^\dagger)}_{M}\underbrace{SA}_R.
  \end{align*}
  This gives us a CUR decomposition of $f^\svt(A)$.
  These two approximations only incur $\bigO{\eps}$ error in spectral norm; for the first, notice that
  \begin{align*}
    &\|A g(A^\dagger A) - A(SA)^\dagger \bar{g}((SAT)(SAT)^\dagger)SA\| \\
    &\leq \|A\|\|(SA)^\dagger \bar{g}((SAT)(SAT)^\dagger)SA - g(A^\dagger A)\| \leq \frac{\eps}{2} \tag{by \eqref{eq:genSVTErr1}}.
  \end{align*}    
   For the second approximation observe that $\abs{g(x)} \leq L\abs{x}$ (and, by corollary, $\bar{g}(x) \leq L$) due to $g$ being $L$-Lipschitz and $g(0) = 0$, therefore
   \begin{align*}
    &\|(A(SA)^\dagger - AT'(SAT')^\dagger)\bar{g}((SAT)(SAT)^\dagger)SA\| \\
    &\leq \|AT'(SAT')^\dagger - A(SA)^\dagger\|\|\sqrt{\bar{g}((SAT)(SAT)^\dagger)}\|\|\sqrt{\bar{g}((SAT)(SAT)^\dagger)}SA\| \\
    &\leq \eps(3L\|A\|)^{-1}\|\sqrt{\bar{g}((SAT)(SAT)^\dagger)}\|\|\sqrt{\bar{g}((SAT)(SAT)^\dagger)}SA\| \tag{by \eqref{eq:genSVTErr2}}\\
    &\leq  \eps(3L\|A\|)^{-1}\sqrt{L}\sqrt{\|g(A^\dagger A)\|+\frac{\eps}{2\|A\|}} \tag{since $\bar{g}(x) \leq L$ and by \eqref{eq:genSVTErr1}} \\
    &\leq \eps(3L\|A\|)^{-1} \sqrt{\frac32} L\|A\| < \frac{\eps}{2}. \tag{since $\abs{g(x)} \leq L\abs{x}$ and $\eps \leq L\|A\|^3$}
  \end{align*}

  The time complexity of this procedure is
  \begin{align*}
    \bigO{\ncb_\phi(A) + (r+c+c')(\scb_\phi(A) + \qcb_\phi(A)) + c'(\scb_\phi(A^\dagger) + \qcb_\phi(A^\dagger)) + (rc + rc')\qcb(A) + r^2c + r^2c'},
  \end{align*}
  which comes from producing sketches, querying all the relevant entries of $SAT$ and $SAT'$, the singular value transformation of $SAT$, and the matrix multiplication in $M$.
  We get $r$ factors in the latter two terms because we can separate $\bar{g}((SAT)(SAT)^\dagger) = \sqrt{\bar{g}}(SAT)(\sqrt{\bar{g}}(SAT))^\dagger$ where $\sqrt{\bar{g}}(x) \coloneqq \sqrt{\bar{g}(x)}$.
\end{proof}

As for eigenvalue transformation, consider a function $f: \bbr \to \bbc$ and a Hermitian matrix $A \in \bbc^{n\times n}$, given $\sq(A)$.
If $f$ is even (so $f(x) = f(-x)$), then $f(A) = f(\sqrt{A^\dagger A})$, so we can use \cref{thm:evenSing} to compute the eigenvalue transform $f(A)$.
For non-even $f$, we cannot use this result, and present the following algorithm to compute it.

\begin{restatable}[Eigenvalue transformation]{theorem}{eigsvt} \label{thm:eig-svt}
Suppose we are given a Hermitian $\sq_\phi(A) \in \bbc^{n\times n}$ with eigenvalues $\lambda_1\geq \cdots \geq \lambda_n$, a function $f: \bbr \to \bbc$ that is $L$-Lipschitz on $\cup_{i=1}^n [\lambda_i - d, \lambda_i+d]$ for some $d > \frac{\eps}{L}$, and some $\eps \in (0,L\|A\|/2]$.
Then we can output matrices $S \in \bbc^{r\times n}$, $N \in \bbc^{s \times r}$, and $D \in \bbc^{s\times s}$, with $r = \bOt{\phi^2\|A\|^4\|A\|_\fr^2\frac{L^6}{\eps^6}\log\frac{1}{\delta}}$ and $s = \bOt{\|A\|_\fr^2\frac{L^2}{\eps^2}}$, such that
\begin{align*}
    \Pr\Big[\|(SA)^\dagger N^\dagger D N (SA) + f(0)I - f^\evt(A)\| > \eps\Big] < \delta,
\end{align*}
in time
\begin{multline*}
    \tilde{\mathcal{O}}\Big((L^{10}\eps^{-10}\|A\|^8\|A\|_\fr^2\phi^2\log\frac{1}{\delta} + L^6\eps^{-6}\|A\|_\fr^6\phi\log\frac{1}{\delta})(\scb_\phi(A)+\pcb_\phi(A)) \\
    + (L^{16}\eps^{-16}\|A\|^{12}\|A\|_\fr^4\phi^4\log^2\frac{1}{\delta}+L^{18}\eps^{-18}\|A\|^8\|A\|_\fr^{10}\phi^5\log^3\frac{1}{\delta})\qcb(A) \\
    + L^{22}\eps^{-22}\|A\|^{16}\|A\|_\fr^6\phi^6\log^3\frac{1}{\delta} + \ncb_\phi(A)\Big).
\end{multline*}
Moreover, this decomposition satisfies the following two properties.
First, $NSA$ is an approximate isometry: $\|(NSA)(NSA)^\dagger - I\| \leq (\frac{\eps}{L\|A\|})^3$.
Second, $D$ is a diagonal matrix and its diagonal entries satisfy $\abs{D(i,i)+f(0) - f(\lambda_i)} \leq \eps$ for all $i \in [n]$ (where $D(i,i) \coloneqq 0$ for $i > s$).
\end{restatable}
Under the reasonable assumptions\footnote{The correct way to think about $\eps$ is as some constant fraction of $L\|A\|$.
If $\eps > L\|A\|$ then $f(0)I$ is a satisfactory approximation.
The bound we give says that we want an at least $\|A\|_\fr/\|A\|$ improvement over trivial, which is modest in the close-to-low-rank regime that we care about.
Similar assumptions appear in \cref{sec:applications}.
} that $\sqcb(A)$ and $\phi$ are small ($\bigO{1}$, say) and $\eps \leq L\|A\|\frac{\|A\|}{\|A\|_\fr}$, the complexity of this theorem is $\bOt{L^{22}\eps^{-22}\|A\|^{16}\|A\|_\fr^6\log^3\frac{1}{\delta}}$.

We now outline our proof.
Our strategy is similar to the one used for quantum-inspired semidefinite programming~\cite{chia2019QInspiredSubLinLowRankSDPSolver}: first we find the eigenvectors and eigenvalues of $A$ and then apply $f$ to the eigenvalues.
Let $\pi(x)$ be a (smoothened) step function designed so it can zeroes out small eigenvalues of $A$ (in particular, eigenvalues smaller than $\eps/\sqrt{2}L$).
Then
\begin{align*}
    A &\approx \pi(A^\dagger A) A \pi(A^\dagger A) \tag*{by definition of $\pi$} \\
    &\approx R^\dagger \bar{\pi}(CC^\dagger) R A R^\dagger \bar{\pi}(CC^\dagger) R \tag*{by \cref{thm:evenSing}} \\
    &\approx R^\dagger \bar{\pi}(CC^\dagger) M \bar{\pi}(CC^\dagger) R \tag*{by sketching $M \approx RAR^\dagger$} \\
    &= R^\dagger (C_\sigma C_\sigma^+)^\dagger \bar{\pi}(CC^\dagger) M \bar{\pi}(CC^\dagger) C_\sigma C_\sigma^+ R. \tag*{where $\sigma = \eps/\sqrt{2}L$}
\end{align*}
Here, $C_\sigma$ is the low-rank approximation of $C$ formed by transforming $C$ according to the ``filter'' function on $x$ that is $0$ for $x < \sigma$ and $x$ otherwise.
$\hat{U} \coloneqq C_\sigma^+ R \in \bbc^{c \times n}$ is an approximate isometry by \cref{lem:app-orth-expression}.
We are nearly done now: since the rest of the matrix expression, $C_\sigma^\dagger \bar{\pi}(CC^\dagger) M \bar{\pi}(CC^\dagger) C_\sigma \in \bbc^{c\times c}$, consists of submatrices of $A$ of size independent of $n$, we can directly compute its unitary eigendecomposition $UDU^\dagger$.
This gives the approximate decomposition $A \approx (\hat{U}U)D(\hat{U}U)^\dagger$, with $\hat{U}U$ and $D$ acting as approximate eigenvectors and eigenvalues of $A$, respectively.
An application of \cref{lem:lipschitz-oper} shows that $f(A) \approx (\hat{U}U) f(D) (\hat{U}U)^\dagger$ in the desired sense.
Therefore, our output approximation of $f(A)$ comes in the form of an RUR decomposition that can be rewritten in the form of an approximate eigendecomposition.
The only major difference between this proof sketch and the proof below is that we perform our manipulations on the SVD of $C_\sigma$, to save on computation time: note that the SVD can be made small in dimension, using that the rank of $C_\sigma$ is bounded by $\|C\|_\fr^2/\sigma^2$.

\begin{proof}
  Throughout this proof $\eps$ is not dimensionless; if choices of parameters are confusing, try replacing $\eps$ with $\eps\|A\|$.
  We will take $f(0) = 0$ without loss of generality.
  First, consider the ``smooth projection'' singular value transformation
  \begin{align*}
    \pi(x) = \begin{cases}
      0 & x < \frac{\eps^2}{2L^2} \\
      \frac{2L^2}{\eps^2}x - 1 & \frac{\eps^2}{2L^2} \leq x < \frac{\eps^2}{L^2} \\
      1 & \frac{\eps^2}{L^2} \leq x
    \end{cases}
  \end{align*}
  Since $\pi$ is a projector onto the large eigenvectors of $A$, we can add these projectors to our expression without incurring too much spectral norm error.
  \begin{equation*}
    \|\pi(A^\dagger A) A \pi(A^\dagger A) - A\|
    = \max_{i \in [n]} \abs{\pi(\lambda_i^2)\lambda_i \pi(\lambda_i^2) - \lambda_i} \leq \eps/L
  \end{equation*}
  Second, use \cref{thm:evenSing} to get $SA \in \bbc^{r\times n}$, $SAT \in \bbc^{r\times c}$ such that, with probability $\geq 1-\delta$,
  \begin{equation*}
    \|(SA)^\dagger \bar{\pi}((SAT)(SAT)^\dagger)SA - \pi(A^\dagger A)\| \leq \frac{\eps}{L\|A\|}.
  \end{equation*}
  The necessary sizes for these bounds to hold are as follows (Lipschitz constants for $\pi$ are $2L^2/\eps^2$ and $4L^4/\eps^4$, $\|SA\|_\fr = O(\|A\|_\fr)$ by \cref{eq:evenSVTbdR}, and we have a $2\phi$-oversampled distribution for $(SA)^\dagger$ by \cref{lem:sq-sketching}):\footnote{The constraint on the size of $\eps$ from \cref{thm:evenSing} here is $\eps(L\|A\|)^{-1} \lesssim \min(L^2\|A\|^2/\eps^2, L^4\|A\|^4/\eps^4)$, which is true since $\eps \leq L\|A\|/2$.}
  \begin{align*}
    r &= \tilde{\Theta}\Big(\phi^2\frac{L^4}{\eps^4}\|A\|^2\|A\|_\fr^2\frac{L^2\|A\|^2}{\eps^2}\log\frac{1}{\delta}\Big)
    = \tilde{\Theta}\Big(\phi^2\|A\|^4\|A\|_\fr^2\frac{L^6}{\eps^6}\log\frac{1}{\delta}\Big), \\
    c &= \tilde{\Theta}\Big(\phi^2\frac{L^8}{\eps^8}\|A\|^6\|A\|_\fr^2\frac{L^2\|A\|^2}{\eps^2}\log\frac{1}{\delta}\Big)
    = \tilde{\Theta}\Big(\phi^2\|A\|^8\|A\|_\fr^2\frac{L^{10}}{\eps^{10}}\log\frac{1}{\delta}\Big).
  \end{align*}
  This approximation does not incur too much error:
  \begin{align*}
    &\|R^\dagger \bar{\pi}(CC^\dagger)RAR^\dagger \bar{\pi}(CC^\dagger)R - \pi(A^\dagger A )A\pi(A^\dagger A)\| \\
    &\leq \|\pi(A^\dagger A)A(\pi(A^\dagger A) - R^\dagger \bar{\pi}(CC^\dagger)R)\| + \|(\pi(A^\dagger A) - R^\dagger \bar{\pi}(CC^\dagger)R)AR^\dagger\bar{\pi}(CC^\dagger)R\| \\
    &\leq \frac{\eps}{L\|A\|}\Big(\|\pi(A^\dagger A)\|\|A\| + \|A\|\|R^\dagger \bar{\pi}(CC^\dagger)R\|\Big)\leq \frac{\eps}{L\|A\|}\Big(\|A\| + \|A\|\Big(1+\frac{\eps}{L\|A\|}\Big)\Big)
    \leq 3\frac{\eps}{L}.
  \end{align*}
  Third, apply \cref{lemma:xAy}(b) $r^2$ times to approximate each entry of $RAR^\dagger$: pull $t$ samples from $\sq_\phi(A)$ for $t \coloneqq \bigO{\phi\|A\|_\fr^6\frac{L^6}{\eps^6}\log\frac{r^2}{\delta}}$ such that, given some $\q(x), \q(y)$, with probability $\geq 1-\frac{\delta}{r^2}$, one can output an estimate of $x^\dagger A y$ up to $\frac{\eps^3\|x\|\|y\|}{L^3\|A\|_\fr^2}$ additive error with \emph{no additional queries to} $\sq_\phi(A)$.
  Then, by union bound, with probability $\geq 1-\delta$, using the same $t$ samples from $A$ each time, one can output an estimate of $R(i,\cdot) A R(j,\cdot)^\dagger$ up to $\frac{\eps^3\|R(i,\cdot)\|\|R(j,\cdot)\|}{L^3\|A\|_\fr^2}$ error for all $i,j \in [r]$ such that $i \leq j$.
  Let $M$ be the matrix of these estimates.
  Then, using that $\|R\|_\fr = \bigO{\|A\|_\fr}$ from \cref{eq:evenSVTbdR},
  \begin{align*}
    \|M - RAR^\dagger\|_\fr^2
    \leq \sum_{i=1}^r \sum_{j=1}^r \Big(\frac{\eps^3\|R(i,\cdot)\|\|R(j,\cdot)\|}{L^3\|A\|_\fr^2}\Big)^2
    = \frac{\eps^6\|R\|_\fr^4}{L^6\|A\|_\fr^4}
    \lesssim \frac{\eps^6}{L^6}.
  \end{align*}
  From \cref{eq:evenSVTbdRC,eq:evenSVTbdC},
  \begin{align*}
    \|R^\dagger \bar{\pi}(CC^\dagger)(RAR^\dagger-M) \bar{\pi}(CC^\dagger)R\|
    \lesssim \frac{\eps^3}{L^3}\|R^\dagger \bar{\pi}(CC^\dagger)\|^2
    \leq \frac{\eps^3}{L^3}\Big(1+\frac{\eps}{L\|A\|}\Big)\frac{L^2}{\eps^2}
    \lesssim \frac\eps L.
  \end{align*}
  So far, we have shown that we can find an RUR approximation to $A$, with
  \begin{align*}
    \|R^\dagger \bar{\pi}(CC^\dagger)M\bar{\pi}(CC^\dagger)R - A\| \lesssim \frac{\eps}{L}
  \end{align*}
  However, if we wish to apply an eigenvalue transformation to $A$, we need to access the eigenvalues of $A$ as well.
  To do this, we will express this decomposition as an approximate unitary eigendecomposition.
  Using that $\bar{\pi}$ zeroes out singular values that are smaller than $\frac{\eps^2}{2L^2}$, we can write our expression as $\hat{U}\hat{D}\hat{U}^\dagger$, for $\hat{U} \in \bbC^{n\times s}$ and $\hat{D} \in \bbC^{s\times s}$:
  \begin{align} 
    &R^\dagger \bar{\pi}(CC^\dagger)M\bar{\pi}(CC^\dagger) R \nonumber \\
    &= \big(C_{\frac{\eps}{\sqrt{2}L}}^+R\big)^\dagger\big(C_{\frac{\eps}{\sqrt{2}L}}^\dagger \bar{\pi}(CC^\dagger)M\bar{\pi}(CC^\dagger) C_{\frac{\eps}{\sqrt{2}L}}\big)\big(C_{\frac{\eps}{\sqrt{2}L}}^+R\big)\nonumber \\
    &= \underset{\hat{U}}{\underbrace{\big(
      R^\dagger U^{(C)}_{\frac{\eps}{\sqrt{2}L}}(D^{(C)}_{\frac{\eps}{\sqrt{2}L}})^{-1}
    \big)}}\underset{\hat{D}}{\underbrace{\big(
      D^{(C)}_{\frac{\eps}{\sqrt{2}L}}(U^{(C)}_{\frac{\eps}{\sqrt{2}L}})^\dagger \bar{\pi}(CC^\dagger)M\bar{\pi}(CC^\dagger) U^{(C)}_{\frac{\eps}{\sqrt{2}L}} D^{(C)}_{\frac{\eps}{\sqrt{2}L}}
    \big)}}\underset{\hat{U}^\dagger}{\underbrace{\big(
      (D^{(C)}_{\frac{\eps}{\sqrt{2}L}})^{-1} (U^{(C)}_{\frac{\eps}{\sqrt{2}L}})^\dagger R
    \big)}}. \label{eqn:evt-udu}
  \end{align}
  Here, we are using an SVD of $C$ truncated to ignore singular values smaller than $\frac{\eps}{\sqrt{2}L}$, where $U_{\frac{\eps}{\sqrt{2}L}}^{(C)} \in \bbc^{r\times s},\,D_{\frac{\eps}{\sqrt{2}L}}^{(C)} \in \bbc^{s\times s},\,V_{\frac{\eps}{\sqrt{2}L}}^{(C)} \in \bbc^{c\times s}$, where $s$ is the number of singular values of $C$ that are at least $\frac{\eps}{\sqrt{2}L}$.
  Note that, as a result, $s \leq \|C\|_\fr^2/(\eps/\sqrt{2}L)^2 \lesssim \|A\|_\fr^2L^2\eps^{-2}$ and $s \leq \min(r,c,n)$.
  By \cref{lem:app-orth-expression} with our values of $r$ and $c$, we get that $\hat{U} \coloneqq R^\dagger U^{(C)}_{\frac{\eps}{\sqrt{2}L}}(D^{(C)}_{\frac{\eps}{\sqrt{2}L}})^{-1}$ is an $\bigO{\frac{\eps^3}{L^3\|A\|^3}}$-approximate isometry; we rescale $\eps$ until this is at most $\frac12$.

  The rest of this error analysis will show that, since $\hat{U}$ is an approximate isometry, $f(A) \approx \hat{U} f(\hat{D})\hat{U}^\dagger$ in the senses required for the theorem statement.
  Though $\hat{D}$ is not diagonal, since it is $s\times s$, we can compute its unitary eigendecomposition $U^{(\hat{D})}D^{(\hat{D})}(U^{(\hat{D})})^\dagger$; so, we can take $D \coloneqq f(D^{(\hat{D})})$ and $N \coloneqq (U^{(\hat{D})})^\dagger (D^{(C)}_{\frac{\eps}{\sqrt{2}L}})^+ (U^{(C)}_{\frac{\eps}{\sqrt{2}L}})^\dagger$ to get the decomposition in the theorem statement.
  (Including the isometry $(U^{(\hat{D})})^\dagger$ in our expression for $\hat{U}$ does not change the value of $\alpha$).

  First, consider the eigenvalues of $\hat{D}$. Note that $\|\hat{U}^\dagger \hat{U} - I\| \leq \alpha$ since $\hat{U}$ is an $\alpha$-approximate isometry, and by \cref{lem:apporth-facts}, there exists an isometry $U$ such that $\|U - \hat{U}\| \leq \alpha$.
  We first observe that, using \cref{lem:apporth-facts} and our bound on $\alpha$,
  \begin{align*}
    \|A - U\hat{D}U^\dagger\|
    &\leq \|A - \hat{U}\hat{D}\hat{U}^\dagger\| + \|\hat{U}\hat{D}\hat{U}^\dagger - U\hat{D}U^\dagger\| \\
    &\leq \frac{\eps}{L} + \alpha\frac{2-\alpha}{(1-\alpha)^2}\|\hat{U}\hat{D}\hat{U}^\dagger\| \\
    &\leq \frac{\eps}{L} + \alpha\frac{2-\alpha}{(1-\alpha)^2}\Big(\|A\| + \frac{\eps}{L}\Big)
    \lesssim \frac{\eps}{L}.
  \end{align*}
  Consequently, by Weyl's inequality (\cref{lem:weylineq}), the eigenvalues of $U\hat{D}U^\dagger$, $\hat{\lambda}_1 \geq \cdots \geq \hat{\lambda}_n$ satisfy $\abs{\lambda_i - \hat{\lambda_i}} \lesssim \frac{\eps}{L}$ for all $i \in [n]$, and by assumption, $f$ is $L$-Lipschitz on the spectrums of $A$ and $U\hat{D}U$.
  From this, we can conclude that we can compute estimates for the eigenvalues of $f(A)$, since the eigenvalues of $U\hat{D}U^\dagger$ are the eigenvalues of $\hat{D}$ (padded with zero eigenvalues) which we know from our eigendecomposition of $\hat{D}$
  Further, our estimates $f(\hat{\lambda}_i)$ satisfy the desired bound $\abs{f(\hat{\lambda}_i) - f(\lambda_i)} \leq \eps$.
  Finally, since $f$ is Lipschitz on our spectrums of concern, the desired error bound for our approximation holds by the following computation (which uses \cref{lem:apporth-facts} extensively):
  \begin{align*}
    \|f(A) - \hat{U}f(\hat{D})\hat{U}^\dagger\|
    &\leq \|f(A) - Uf(\hat{D})U^\dagger\| + \|Uf(\hat{D})U^\dagger - \hat{U}f(\hat{D})\hat{U}^\dagger\| \\
    &\leq \|f(A) - Uf(\hat{D})U^\dagger\| + (2\alpha + \alpha^2)\|f(\hat{D})\| \\
    &\lesssim L(\|A - U\hat{D}U^\dagger\| + (2\alpha + \alpha^2)\|\hat{D}\|)\log s \tag*{by \cref{lem:lipschitz-oper}} \\
    &\leq L(\|A - \hat{U}\hat{D}\hat{U}^\dagger\| + 2(2\alpha + \alpha^2)\|\hat{D}\|)\log s \\
    &\leq L\Big(\frac{\eps}{L} + \frac{2(2\alpha + \alpha^2)}{(1-\alpha)^2}\|\hat{U}\hat{D}\hat{U}^\dagger\|\Big)\log s \\
    &\leq L\Big(\frac{\eps}{L} + \frac{2(2\alpha + \alpha^2)}{(1-\alpha)^2}\Big(\|A\| + \frac{\eps}{L}\Big)\Big)\log s\lesssim \eps\log s. \tag*{by $\alpha \leq \frac{\eps}{L\|A\|}$}
  \end{align*}
  Finally, we rescale $\eps$ down by $\log^2 s$ so that this final bound is $\bigO{\eps}$.
  This term is folded into the $\polylog$ terms of $r$, $c$, and $s$.
  (We need to scale by more than $\log s$ because $s$ has a dependence on $\frac{1}{\eps^2}$.)
  This completes the error analysis.

  The complexity analysis takes some care: we want to compute our matrix expressions in the correct order.
  First, we will sample to get $S$ and $T$, and then compute the \emph{truncated} singular value decomposition of $C \coloneqq SAT$, which we have denoted 
  $C_{\frac{\eps}{\sqrt{2}L}} = U_{\frac{\eps}{\sqrt{2}L}}^{(C)}D_{\frac{\eps}{\sqrt{2}L}}^{(C)}(V_{\frac{\eps}{\sqrt{2}L}}^{(C)})^\dagger$ for $U_{\frac{\eps}{\sqrt{2}L}}^{(C)} \in \bbc^{r\times s},\,D_{\frac{\eps}{\sqrt{2}L}}^{(C)} \in \bbc^{s\times s},\,V_{\frac{\eps}{\sqrt{2}L}}^{(C)} \in \bbc^{c\times s}$.
  Then, we will perform the inner product estimation protocol $r^2$ times to get our estimate $M \in \bbc^{r\times r}$, and compute the eigendecomposition of
  \begin{align*}
    \hat{D}
    &= D^{(C)}_{\frac{\eps}{\sqrt{2}L}}(U^{(C)}_{\frac{\eps}{\sqrt{2}L}})^\dagger \bar{\pi}(CC^\dagger)M\bar{\pi}(CC^\dagger) U^{(C)}_{\frac{\eps}{\sqrt{2}L}} D^{(C)}_{\frac{\eps}{\sqrt{2}L}} \\
    &= D^{(C)}_{\frac{\eps}{\sqrt{2}L}}\bar{\pi}((D_{\frac{\eps}{\sqrt{2}L}}^{(C)})^2)(U_{\frac{\eps}{\sqrt{2}L}}^{(C)})^\dagger M U_{\frac{\eps}{\sqrt{2}L}}^{(C)}\bar{\pi}((D_{\frac{\eps}{\sqrt{2}L}}^{(C)})^2)D^{(C)}_{\frac{\eps}{\sqrt{2}L}}
  \end{align*}
  via the final expression above, with the truncations propagated through the matrices, to get $\hat{D} = U^{(\hat{D})}D^{(\hat{D})}(U^{(\hat{D})})^\dagger$.
  Then, we compute and output $D = \hat{D}$ and $N = (U^{(\hat{D})})^\dagger (D^{(C)}_{\frac{\eps}{\sqrt{2}L}})^+ (U^{(C)}_{\frac{\eps}{\sqrt{2}L}})^\dagger$.
  By evaluating the expression for $\hat{D}$ from left-to-right, we only need to perform matrix multiplications that (naively) take $s^3$ or $sr^2$ time.
  The only cost of $c$ we incur is in computing the SVD of $C$.
  The runtime is
  \begin{equation*}
    \bigO{(r+c+t)(\scb_\phi(A) + \qcb_\phi(A)) + (rc+r^2t)\qcb(A) + s^3 + r^2s + r^2c + \ncb_\phi(A)}. \qedhere
  \end{equation*}
\end{proof}


\section*{Acknowledgments}

ET thanks Craig Gidney for the reference to alias sampling. AG is grateful to Saeed Mehraban for insightful suggestions about proving perturbation bounds on partition functions. Part of this work was done while visiting the Simons Institute for the Theory of Computing. We gratefully acknowledge the Institute's hospitality.

\bibliography{qc_gily,ref}
\bibliographystyle{alphaUrlePrint}


\appendix
\section{Proof sketch for Remark~\ref{rmk:a-to-a-dagger}}\label{apx:sketch}

Recall that we wish to show that, given $\sq(A)$, we can simulate $\sq_\phi(B)$ for $B$ such that $\|B - A^\dagger\| \leq \eps\|A\|$ with probability $\geq 1-\delta$.

Following the argument from \cref{rmk:spectral-low-rank-approximation}, we can find a $B \coloneqq AR^\dagger \bar{t}(CC^\dagger)R$ satisfying the above property in $\bOt{\frac{\|A\|_\fr^{28}}{\|A\|^{28}\eps^{22}}\log^3\frac{1}{\delta}}$ (rescaling $\eps$ appropriately).
Here, $R$ and $\bar{t}(CC^\dagger)$ come from an application of \cref{thm:evenSing} with $t: \bbR \to \bbC$ a smooth step function that goes from zero to one around $(\eps\|A\|)^2$.
If we had sampling and query access to the columns of $AR^\dagger$, we would be done, since then $B = \sum_{i=1}^r\sum_{j=1}^r [\bar{t}(CC^\dagger)](i,j) [A'R'^\dagger](\cdot,i)R(j,\cdot)$, and we can express $B$ as a sum of $r^2$ outer products of vectors that we have sampling and query access to.
This gives us both $\sq_\phi(B)$ and $\sq_\phi(B^\dagger)$.

We won't get exactly this, but using that $\bar{t}(CC^\dagger) = (C_{\eps\|A\|/2}^+)^\dagger t(C^\dagger C) C_{\eps\|A\|/2}^+$, for $UDV^\dagger$ the SVD of $C$ and $U_{\eps\|A\|/2}D_{\eps\|A\|/2}V_{\eps\|A\|/2}^\dagger$ the SVD truncated to singular values at least $\eps\|A\|/2$, we can rewrite
\[
    B = A(R^\dagger U_{\eps\|A\|/2}D_{\eps\|A\|/2}^+) (t(D^2) D_{\eps\|A\|/2}^+ U_{\eps\|A\|/2}^\dagger) R.
\]
Now it suffices to get sampling and query access to the columns of $A(R^\dagger U_{\eps\|A\|/2}D_{\eps\|A\|/2}^+)$, and by \cref{lem:app-orth-expression}, $R^\dagger U_{\eps\|A\|/2}D_{\eps\|A\|/2}^+$ is an $\eps^3$-approximate isometry.
Further, we can lower bound the norms of these columns, using that $R^\dagger R \approx A^\dagger A$ and $CC^\dagger \approx RR^\dagger$.
\begin{align*}
    \|A(R^\dagger U_{\eps\|A\|/2}D_{\eps\|A\|/2}^+)\|^2
    &= \|(U_{\eps\|A\|/2}D_{\eps\|A\|/2}^+)^\dagger RA^\dagger AR^\dagger (U_{\eps\|A\|/2}D_{\eps\|A\|/2}^+)\| \\
    &\approx \|(U_{\eps\|A\|/2}D_{\eps\|A\|/2}^+)^\dagger RR^\dagger RR^\dagger (U_{\eps\|A\|/2}D_{\eps\|A\|/2}^+)\| \\
    &= \|RR^\dagger (U_{\eps\|A\|/2}D_{\eps\|A\|/2}^+)\|^2 \\
    &\approx \|CC^\dagger U_{\eps\|A\|/2}D_{\eps\|A\|/2}^+\|^2 \\
    &= \|UD^2U^\dagger U_{\eps\|A\|/2}D_{\eps\|A\|/2}^+\|^2 \\
    &\geq \eps^2\|A\|^2
\end{align*}
Consider one particular column $v \coloneqq [R^\dagger U_{\eps\|A\|/2}D_{\eps\|A\|/2}^+](\cdot,\ell)$; summarizing our prior arguments, we know $\|v\| \geq \frac12$ from approximate orthonormality and $\|Av\| \gtrsim \eps\|A\|$, which we just showed.
We can also query for entries of $v$ since it is a linear combination of rows of $R$.
We make one more approximation $Av \approx u$, using \cref{lemma:inner-prod} as we do in \cref{corollary:rec-systems}.
That is, if we want to know $[Av](i) = A(i,\cdot)v$, we use our inner product protocol to approximate it to $\gamma\|A(i,\cdot)\|\|v\|$ error, and declare it to be $u(i)$.
This implicitly defines $u$ via an algorithm to compute its entries from $\sq(A)$ and $\q(v)$.
Let $B'$ be the version of $B$, with the columns of $AR^\dagger U_{\eps\|A\|/2}D_{\eps\|A\|/2}^+$ replaced with their $u$ versions.
One can set $\gamma$ such that the correctness bound $\|B' - A^\dagger\| \lesssim \eps$ and our lower bound $u \gtrsim \eps\|A\|$ both still hold.
All we need now to get $\sq_\phi(u)$ (thereby completing our proof sketch) is a bound $\tilde{u}$ such that we have $\sq(\tilde{u})$.
We will take $\tilde{u}(i) \coloneqq 2\|A(i,\cdot)\|$.
We have $\sq(\tilde{u})$ immediately from $\sq(A)$, $\phi = \|\tilde{u}\|^2/\|u\|^2 \lesssim \eps^2\|A\|_\fr^2/\|A\|^2$ (from our lower bound on $\|u\|$), and $|\tilde{u}(i)| \geq \|A(i,\cdot)\| + \gamma\|A(i,\cdot)\|v\| \geq |u(i)|$ (from our correctness bound from \cref{lemma:inner-prod}).

\section{Deferred proofs}\label{apx:proofs}

\apporthfacts*

\begin{proof}
  Let $\hat{X} = UD V^\dagger$ be a singular value decomposition of $\hat{X}$, with singular values $\sigma_1,\ldots,\sigma_n$ and $U \in \bbC^{m\times n},\,D \in \bbR^{n\times n},\,V \in \bbC^{n\times n}$.
  We set $X\coloneqq UV^\dagger$ which is an isometry since $(UV^\dagger)^\dagger UV^\dagger = I$, and has the same columnspace as $\hat{X}$, and
  \begin{multline*}
    \|UV^\dagger - \hat{X}\|
    = \|UV^\dagger - UD V^\dagger\|
    = \|I - D\|
    = \max_{i \in [n]}\abs{1 - \sigma_i}
    \leq \max_{i \in [n]}\abs{1 - \sigma_i}\abs{1 + \sigma_i} \\
    = \max_{i \in [n]}\abs{1 - \sigma_i^2}
    = \|I - D^2\|
    = \|I - V D U^\dagger U D V^\dagger\|
    = \|I - \hat{X}^\dagger \hat{X}\|
    \leq \alpha.
  \end{multline*}
  Consequently,
  \begin{align*}
    \|\hat{X}Y\hat{X}^\dagger - XYX^\dagger\|
    &\leq \|\hat{X}Y(\hat{X} - X)^\dagger\| + \|(\hat{X} - X)YX\| \\
    &\leq \alpha(\|\hat{X}Y\| + \|YX\|) \\
    &\leq \alpha(\|XY\| + \alpha\|Y\| + \|YX\|) \\
    &= (2\alpha + \alpha^2)\|Y\|
  \intertext{Suppose $\alpha < 1$, ruling out the possibility that $\hat{X}$ is the zero matrix.
  Then by \cref{lem:weylineq} we have}
    \|\hat{X}^+\|
    &= \max_{i \in [n]} \frac{1}{\sigma_i}
    \leq \frac{1}{1 - \alpha}, \text{ and consequently } \\
    \|\hat{X}Y\hat{X}^\dagger - XYX^\dagger\|
    &\leq \alpha(\|\hat{X}Y\| + \|Y\|) \\
    &\leq \alpha(\|\hat{X}Y\hat{X}^\dagger\|\|\hat{X}^+\| + \|\hat{X}Y\hat{X}^\dagger\|\|\hat{X}^+\|^2) \\
    &\leq \alpha\frac{1 - \alpha + 1}{(1-\alpha)^2}\|\hat{X}Y\hat{X}^\dagger\|. \qedhere
  \end{align*}
\end{proof}

\lemmm*

\begin{proof}
Using that the rows of $S$ are selected independently, we can conclude the following:
\begin{align*}
  \E[(SX)^\dagger (SY)] &= r\cdot \E[[SX](1,\cdot)^\dagger [SY](1,\cdot)] = r \sum_{i=1}^np(i)\frac{X(i,\cdot)^\dagger Y(i,\cdot)}{r p(i)} = X^\dagger Y \\
  \E[\|X^\dagger S^\dagger SY - X^\dagger Y\|_\fr^2]
  &= \sum_{i=1}^m\sum_{j=1}^p\E\big[\big|[X^\dagger S^\dagger SY - X^\dagger Y](i,j)\big|^2\big] \\
  &= r\sum_{i=1}^m\sum_{j=1}^p\E\big[\big|[SX](1,i)^\dagger [SY](1,j) - [X^\dagger Y](i,j)\big|^2\big] \\
  &\leq r\sum_{i=1}^m\sum_{j=1}^p\E\big[\big|[SX](1,i)^\dagger [SY](1,j)\big|^2\big] \\
  &= r\E\big[\|[SX](1,\cdot)\|^2\|[SY](1,\cdot)\|^2\big] \\
  &= r \sum_{k=1}^n p(k)\frac{\|X(k,\cdot)\|^2}{r\cdot p(k)}\frac{\|Y(k,\cdot)\|^2}{r\cdot p(k)} \\
  &= \frac1r \sum_{k=1}^n \frac{1}{p(k)}\|X(k,\cdot)\|^2\|Y(k,\cdot)\|^2 \\
  &\leq \frac1r \sum_{k=1}^n \frac{\phi\sum_\ell \|X(\ell,\cdot)\|\|Y(\ell,\cdot)\|}{\|X(k,\cdot)\|\|Y(k,\cdot)\|}\|X(k,\cdot)\|^2\|Y(k,\cdot)\|^2 \\
  &= \frac{\phi}{r} \Big(\sum_k \|X(k,\cdot)\|\|Y(k,\cdot)\|\Big)^{\!2}.
\end{align*}
To prove concentration, we use McDiarmid's ``independent bounded difference inequality''~\cite{mcdiarmid89}.
\begin{lemma}[{\cite[Lemma (1.2)]{mcdiarmid89}}]\label{lem:ibdi}
  Let $X_1,\ldots, X_c$ be independent random variables with $X_s$ taking values in a set $A_s$ for all $s\in[c]$. Suppose that $f$ is a real valued measurable function on the product set $\Pi_s A_s$ such that $\abs{f(x)-f(x')}\leq b_s$ whenever the vectors $x$ and $x'$ differ only in the $s$-th coordinate. Let $Y$ be the random variable $f[X_1,\ldots,X_c]$. Then for any $\gamma>0$:
  \begin{align*}
  \Pr[\abs{Y-\E[Y]}\geq\gamma]\leq 2\exp\Big(-\frac{2\gamma^2}{\sum_s b_s^2}\Big).
  \end{align*}
\end{lemma}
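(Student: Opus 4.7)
The plan is to prove this via the classical martingale (Doob) method, since independence of the $X_s$'s combined with a coordinate-wise Lipschitz condition is exactly the structure that makes the Doob martingale have bounded differences. I would introduce the filtration $\mathcal{F}_s := \sigma(X_1,\ldots,X_s)$ and the martingale $Z_s := \E[Y \mid \mathcal{F}_s]$, with $Z_0 = \E[Y]$ and $Z_c = Y$. The martingale differences $D_s := Z_s - Z_{s-1}$ then telescope to $Y - \E[Y]$, so it suffices to bound each $D_s$ in an appropriate sense.

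The main step is to show that, conditioned on $\mathcal{F}_{s-1}$, $D_s$ is a mean-zero random variable lying in an interval of length at most $b_s$. To do this, I would use independence to rewrite $Z_s = g_s(X_1,\ldots,X_s)$ where $g_s(x_1,\ldots,x_s) := \E[f(x_1,\ldots,x_s,X_{s+1},\ldots,X_c)]$ (swapping the conditional expectation over $X_{s+1},\ldots,X_c$ for an integral with $X_1,\ldots,X_s$ held fixed; this is where independence is essential via Fubini). The bounded-differences hypothesis on $f$ transfers verbatim to $g_s$ in the $s$-th coordinate, so for any $x_s, x_s' \in A_s$,
\[
  \bigl| g_s(X_1,\ldots,X_{s-1},x_s) - g_s(X_1,\ldots,X_{s-1},x_s') \bigr| \leq b_s.
\]
Since $Z_{s-1} = \E_{X_s}[g_s(X_1,\ldots,X_{s-1},X_s) \mid \mathcal{F}_{s-1}]$, the random variable $D_s$ conditioned on $\mathcal{F}_{s-1}$ is centered and takes values in an interval of length $b_s$.

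From here the argument is standard. Hoeffding's lemma gives $\E[e^{\lambda D_s} \mid \mathcal{F}_{s-1}] \leq \exp(\lambda^2 b_s^2 / 8)$ for every $\lambda \in \bbr$. Iterating the tower property through $s = c, c-1, \ldots, 1$ yields
\[
  \E\bigl[e^{\lambda(Y - \E[Y])}\bigr] \leq \exp\Bigl(\tfrac{\lambda^2}{8}\textstyle\sum_{s=1}^c b_s^2\Bigr).
\]
A Chernoff bound $\Pr[Y - \E[Y] \geq \gamma] \leq e^{-\lambda\gamma}\E[e^{\lambda(Y-\E[Y])}]$, optimized at $\lambda = 4\gamma/\sum b_s^2$, gives $\exp(-2\gamma^2/\sum b_s^2)$; applying the same argument to $-Y$ and union bounding produces the factor of $2$ in the stated inequality.

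The only real obstacle is the coordinate-conditioning step for bounding $D_s$: one has to be careful that the bounded-differences hypothesis, stated pointwise on $\Pi_s A_s$, lifts to the correct statement about the conditional random variable $D_s \mid \mathcal{F}_{s-1}$. Independence of the $X_s$'s is what makes this lifting clean, since it lets us realize $Z_{s-1}$ as the conditional mean of $Z_s$ with respect to $X_s$ alone, holding $X_1,\ldots,X_{s-1}$ fixed as parameters rather than as random variables influencing the distribution of $X_{s+1},\ldots,X_c$. Everything else is bookkeeping and the standard Hoeffding/Chernoff optimization.
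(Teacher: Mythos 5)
The paper does not prove \cref{lem:ibdi}; it cites it directly from McDiarmid's survey, so there is no internal proof to compare against. Your argument is the standard Doob-martingale proof of the bounded-differences inequality: write $Y - \E[Y]$ as the telescoping sum of the martingale increments $D_s = \E[Y \mid X_1,\ldots,X_s] - \E[Y \mid X_1,\ldots,X_{s-1}]$, use independence to realize each $D_s$, conditionally on $X_1,\ldots,X_{s-1}$, as a mean-zero variable supported on an interval of length at most $b_s$, apply Hoeffding's lemma coordinate-by-coordinate through the tower property to get $\E[e^{\lambda(Y-\E[Y])}] \leq \exp(\lambda^2 \sum_s b_s^2/8)$, and close with Chernoff (optimized at $\lambda = 4\gamma/\sum_s b_s^2$, which gives exactly $\exp(-2\gamma^2/\sum_s b_s^2)$) and a union bound for the two-sided tail. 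The one point that deserves care — that the pointwise bounded-differences hypothesis on $f$ lifts to the same bound on the $s$-th-coordinate oscillation of $g_s(x_1,\ldots,x_s) := \E[f(x_1,\ldots,x_s,X_{s+1},\ldots,X_c)]$, which requires independence — is handled correctly. This is, up to presentation, the proof McDiarmid himself gives in the cited reference.
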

To use \cref{lem:ibdi}, we think about this expression as a function of the indices that are randomly chosen from $p$.
That is, let $f$ be the function $[n]^r\to \mathbb{R}$ defined to be
\begin{align*}
f(i_1,i_2, \ldots, i_r)\coloneqq \Big\|X^\dagger Y-\sum_{s=1}^r\frac{1}{r\cdot p(i_s)}X(i_s,\cdot)^\dagger Y(i_s,\cdot)\Big\|_\fr,
\end{align*}
Then, by Jensen's inequality, we have
\begin{align*}
 \E[f] = \E[\|X^\dagger S^\dagger SY - X^\dagger Y\|_\fr] \leq \sqrt{\E[\|X^\dagger S^\dagger SY - XY\|_\fr^2]} \leq \sqrt{\frac{\phi}{r}}\sum_k \|X(k,\cdot)\|\|Y(k,\cdot)\|.
\end{align*}
Now suppose that the index sequences $\vec{i}$ and $\vec{i}'$ only differ at the $s$-th position. Then by the triangle inequality,
\begin{align*}
\abs{f(\vec{i})-f(\vec{i}')}&\leq \frac1r\Big\|\frac1{p(i_s)}X(i_s,\cdot)^\dagger Y(i_s,\cdot)-\frac1{p(i'_s)}X(i_s',\cdot)^\dagger Y(i_s',\cdot)\Big\|_\fr\\
&\leq \frac2r\max_{k \in [n]}\Big\|\frac1{p(k)}X(k,\cdot)^\dagger Y(k,\cdot)\Big\|_\fr \leq \frac{2\phi}{r}\sum_{k=1}^n\|X(k,\cdot)\|\|Y(k,\cdot)\|.
\end{align*}
Now, by \cref{lem:ibdi}, we conclude that
\begin{align*}
\Pr\left[\abs{f-E[f]}\geq \sqrt{\frac{2\phi^2\ln(2/\delta)}{r}}\sum_k\|X(\cdot,k)\|\|Y(k,\cdot)\|\right]\leq \delta.
\end{align*}
So, with probability $\geq 1-\delta$,
\begin{align*}
  \|X^\dagger S^\dagger SY - X^\dagger Y\|_\fr
  &\leq \E[\|X^\dagger S^\dagger SY - X^\dagger Y\|_\fr] + \sqrt{\frac{2\phi^2\ln(2/\delta)}{r}}\sum_k\|X(\cdot,k)\|\|Y(k,\cdot)\| \\
  &\leq \Big(\sqrt{\frac{\phi}{r}} + \sqrt{\frac{2\phi^2\ln(2/\delta)}{r}}\Big)\sum_k\|X(\cdot,k)\|\|Y(k,\cdot)\| \\
  &\leq \sqrt{\frac{8\phi^2\ln(2/\delta)}{r}}\sum_k\|X(\cdot,k)\|\|Y(k,\cdot)\|. \qedhere
\end{align*}
\end{proof}

\ipest*

\begin{proof}
Define a random variable $Z$ by sampling an index from the distribution $p$ given by $\sq_\phi(u)$, and setting $Z \coloneqq u(i)v(i)/p(i)$.
Then
\begin{align*}
    \E[Z] = \langle u,v\rangle \quad\mbox{and}\quad \E[\abs{Z}^2] = \sum_{i=1}^n p(i)\frac{\abs{u(i)v(i)}^2}{p(i)^2} \leq \sum_{i=1}^n \abs{u(i)v(i)}^2\frac{\phi\|u\|^2}{\abs{u(i)}^2} = \phi\|u\|^2\|v\|^2.
\end{align*}
So, we just need to boost the quality of this random variable.
Consider taking $\bar{Z}$ to be the mean of $x \coloneqq 8\phi \|u\|^2\|v\|^2\frac{1}{\eps^2}$ independent copies of $Z$.
Then, by Chebyshev's inequality (stated here for complex-valued random variables),
\begin{align*}
    \Pr[\abs{\bar{Z} - \E[\bar{Z}]} \geq \eps/\sqrt{2} ]\leq \frac{2\Var[Z]}{x\eps^2} \leq \frac14.
\end{align*}
Next, we take the (component-wise) median of $y \coloneqq 8\log\frac{1}{\delta}$ independent copies of $\bar{Z}$, which we call $\tilde{Z}$, to decrease failure probability.
Consider the median of the real parts of $\bar{Z}$.
The key observation is that if $\Re(\tilde{Z}-\E[Z]) \geq \eps/\sqrt{2}$, then at least half of the $\bar{Z}$'s satisfy $\Re(\bar{Z} - \E[Z]) \geq \eps/\sqrt{2}$.
Let $E_i = \chi(\Re(\bar{Z}_i - \E[Z]) \geq \eps/\sqrt{2})$ be the characteristic function for this event for a particular mean.
The above argument implies that $\Pr[E_i] \leq \frac14$.
So, by Hoeffding's inequality,
\begin{align*}
  \Pr\left[\frac1q\sum_{i=1}^q E_i \geq \frac12\right] \leq \Pr\left[\frac1q\sum_{i=1}^q E_i \geq \frac14 + \Pr[E_i]\right] \leq \exp(-q/8) \leq \frac\delta2.
\end{align*}
With this combined with our key observation, we can conclude that $\Pr[\Re(\tilde{Z} - \langle u,v\rangle) \geq \eps/\sqrt{2}] \leq \delta/2$.
From a union bound together with the analogous argument for the imaginary component, we have $\Pr[\abs{\tilde{Z} - \langle u,v\rangle} \geq \eps] \leq \delta$ as desired.
The time complexity is the number of samples multiplied by the time to create one instance of the random variable $Z$, which is $\bigO{\sqcb(u)+\qcb(v)}$.
\end{proof}

\lowdeg*

\begin{proof}
  We use the following Markov-Bernstein inequality {\cite[5.1.E.17.f]{borwein1995polynomials}}. For every $p\in\bbC[x]$ of degree at most $d$
  \begin{equation}\label{eqn:markov-bernstein}
    \max_{x \in [-1,1]}\abs{p^{(k)}(x)} \lesssim \Big(\min\Big(d^2,\frac{d}{\sqrt{1-x^2}}\Big)\Big)^k \max_{x \in [-1,1]} \abs{p(x)},
  \end{equation}
  where $\lesssim$ hides a constant depending on $k$.
  Note that by replacing $x$ in the above equation with $2y - 1$, we get that $\max_{y \in [0,1]} \abs{p^{(k)}(y)} \lesssim d^{2k} \max_{y \in [0,1]}\abs{p(y)}$ (paying an additional $2^k$ constant factor).

  We make a couple observations about $\bar{r}(x)$ using Taylor expansions, where $r(x)$ is any degree-$d$ polynomial.
  First,
  \[
    \bar{r}(x)
    = \frac{r(x) - r(0)}{x}
    = \frac{r(x) - (r(x) - r'(y)x)}{x}
    = r'(y),
  \]
  where $y \in [0,x]$ comes from the remainder term of the Taylor expansion of $r(x)$ at $x$.
  Similarly,
  \begin{multline*}
    \bar{r}'(x)
    = \Big(\frac{r(x) - r(0)}{x}\Big)'
    = \frac{1}{x^2}\Big(xr'(x) - r(x) + r(0)\Big) \\
    = \frac{1}{x^2}\Big(xr'(x) - r(x) + r(x) - r'(x)x + r''(y)\frac{x^2}{2}\Big)
    = \frac12r''(y)
  \end{multline*}
  for some $y \in [0,x]$.
  Then, for $p$ even, $\max_{x \in [0,1]} |q(x)|\leq 1$ by definition.
  We also have
  \begin{align*}
    \max_{x \in [0,1]} |q'(x)| &\lesssim d^2\max_{x \in [0,1]}|q(x)| \leq d^2 \\
    \max_{x \in [0,1]} |\bar{q}(x)| &\leq \max_{y \in [0,1]} |q'(y)| \lesssim d^2 \\
    \max_{x \in [0,1]} |\bar{q}'(x)| &\leq \max_{y \in [0,1]} \frac12|q''(y)| \lesssim d^4
  \end{align*}
  For $p$ odd, the same argument applies provided we can show that $\max_{x \in [0,1]}|q(x)| \lesssim d$, which we do by splitting into two cases: $x \leq \frac12$ and $x > \frac12$.
  \begin{align*}
    \max_{x \in [0,\frac12]} |q(x)|
    &= \max_{x \in [0,\frac12]} \Big|\frac{p(x)}{x}\Big|
    = \max_{y \in [0,\frac12]}|p'(y)|
    \lesssim \max_{x \in [0,\frac12]}\frac{d}{\sqrt{1-x^2}} \max_{x \in [-1,1]} |p(x)|
    \lesssim d; \\
    \max_{x \in (\frac12,1]} |q(x)|
    &= \max_{x \in (\frac12,1]} \Big|\frac{p(x)}{x}\Big|
    \leq \max_{x \in (\frac12, 1]} |2p(x)| \leq 2. \qedhere
  \end{align*}
\end{proof}

\svmrev*

\begin{proof}
Denote $\sigma^2 \coloneqq \gamma^{-1}$, and redefine $X \leftarrow X^T$ (so we have $\sq(X)$ instead of $\sq(X^T)$).
By the block matrix inversion formula\footnote{In a more general setting, we would use the Sherman-Morrison inversion formula, or the analogous formula for functions of matrices subject to rank-one perturbations.} we know that
\begin{align*}
  \begin{bsmallmatrix}
    0 & \vec{1}^T \\
    \vec{1} & M
  \end{bsmallmatrix}^{-1}
  = \begin{bsmallmatrix}
    -\frac{1}{\vec{1}^TM^{-1}\vec{1}} & \frac{\vec{1}^TM^{-1}}{\vec{1}^TM^{-1}\vec{1}} \\
    \frac{M^{-1}\vec{1}}{\vec{1}^TM^{-1}\vec{1}} & M^{-1} - \frac{M^{-1}\vec{1}\vec{1}^TM^{-1}}{\vec{1}^TM^{-1}\vec{1}}
  \end{bsmallmatrix}\quad \Rightarrow\quad
  \begin{bsmallmatrix}
    0 & \vec{1}^T \\
    \vec{1} & M
  \end{bsmallmatrix}^{-1}
  \begin{bsmallmatrix} 0 \\ y \end{bsmallmatrix}
  = \begin{bsmallmatrix}
    \frac{\vec{1}^TM^{-1}y}{\vec{1}^T M^{-1} \vec{1}} \\
    M^{-1}\Big(y - \frac{\vec{1}^TM^{-1}y}{\vec{1}^TM^{-1}\vec{1}}\vec{1}\Big)
  \end{bsmallmatrix}.
\end{align*}
So, we have reduced the problem of inverting the modified matrix to just inverting $M^{-1}$ where $M = X^TX + \sigma^{-2}I$.
$M$ is invertible because $M \succeq \sigma^2 I$.
Note that $M^{-1} = f(X^TX)$, where
\begin{align*}
  f(\lambda) \coloneqq \frac{1}{\lambda + \sigma^2}.
\end{align*}
So, by \cref{thm:evenSing}, we can find $R^\dagger \bar{f}(CC^\dagger) R$ such that $\|R^\dagger \bar{f}(CC^\dagger) R + \frac{1}{\sigma^2} I - f(X^TX)\| \leq \eps\sigma^{-2}$, where (because $L = \sigma^{-4}$, $\bar{L} = \sigma^{-6}$)
\begin{align*}
  r &= \bOt{\frac{L^2\|A\|^2\|A\|_\fr^2\sigma^4}{\eps^2}\log\frac{1}{\delta}}
  = \bOt{\frac{K\kappa}{\eps^2}\log\frac{1}{\delta}}, \\
  c &= \bOt{\frac{\bar{L}^2\|A\|^6\|A\|_\fr^2\sigma^4}{\eps^2}\log\frac{1}{\delta}}
  = \bOt{\frac{K\kappa^3}{\eps^2}\log\frac{1}{\delta}}.
\end{align*}
So, the runtime for estimating this is $\bOt{\frac{K^3\kappa^5}{\eps^6}\log^3\frac1\delta}$.
We further approximate using \cref{prop:appr-mms}: we find $r_1 \approx R^\dagger \vec{1}$, $r_y \approx R^\dagger \vec{y}$, and $\gamma \approx \vec{1}^\dagger y$ in $O(r\frac{K}{\eps^2}\log\frac{1}{\delta})$ time (for the first two) and $O(\frac{1}{\eps^2}\log\frac{1}{\delta})$ time (for the last one) such that the following bounds hold:
\begin{equation} \label{eqn:svm-approx-bounds}
  \|R^\dagger \vec{1} - r_1\| \leq \eps\sqrt{m}\sigma
  \qquad \|R^\dagger y - r_y\| \leq \eps\sqrt{m}\sigma
  \qquad \abs{\vec{1}^\dagger y - \gamma} \leq \eps m
\end{equation}
Via \cref{lem:evenSingBounds}, we observe the following additional bounds:
\begin{equation} \label{eqn:svm-svt-bounds}
  \|M^{-1}\| \leq \sigma^{-2}
  \qquad \|R^\dagger (\bar{f}(CC^\dagger))^{1/2}\| \leq (1+\eps)\sigma^{-1}
  \qquad \|(\bar{f}(CC^\dagger))^{1/2}\| \leq \sigma^{-2}
\end{equation}

Now, we compute what the subsequent errors are for replacing $M^{-1}$ with $N \coloneqq R^\dagger Z R + \frac{1}{\sigma^2}I$, where $Z \coloneqq \bar{f}(CC^\dagger)$.
\begin{align*}
\frac{\vec{1}^\dagger M^{-1}y}{\vec{1}^\dagger M^{-1}\vec{1}}&= \frac{\vec{1}^\dagger(R^\dagger Z R + \sigma^{-2}I)y \pm \|\vec{1}\|\|y\|\|R^\dagger Z R + \sigma^{-2}I - M^{-1}\|}{\vec{1}^\dagger(R^\dagger Z R + \sigma^{-2}I)\vec{1} \pm \|\vec{1}\|^2\|R^\dagger Z R + \sigma^{-2}I - M^{-1}\|} \\
  &= \frac{\vec{1}^\dagger R^\dagger Z Ry + \sigma^{-2}\vec{1}^\dagger y \pm \eps\sigma^{-2}m}{\vec{1}^\dagger R^\dagger Z R\vec{1} + \sigma^{-2}\vec{1}^\dagger \vec{1} \pm \eps\sigma^{-2}m} \tag*{by SVT bound} \\
  &= \frac{\vec{1}^\dagger R^\dagger Z r_y \pm \|\vec{1}^\dagger R^\dagger Z\|\|Ry - r_y\| + \sigma^{-2}\gamma \pm \sigma^{-2}\abs{\gamma - \vec{1}^\dagger y} \pm \eps\sigma^{-2}m}{\vec{1}^\dagger R^\dagger Z r_1 \pm \|\vec{1}^\dagger R^\dagger Z\|\|R\vec{1} - r_1\| + (1\pm \eps)\sigma^{-2}m} \\
  &= \frac{\vec{1}^\dagger R^\dagger Z r_y \pm (\sqrt{m}(1+\eps)\sigma^{-3})(\eps\sigma\sqrt{m}) + \sigma^{-2}\gamma \pm 2\eps\sigma^{-2}m}{\vec{1}^\dagger R^\dagger Z r_1 \pm (\sqrt{m}(1+\eps)\sigma^{-3})(\eps\sigma\sqrt{m}) + \sigma^{-2}m \pm \eps\sigma^{-2}m} \tag*{by \cref{eqn:svm-approx-bounds,eqn:svm-svt-bounds}} \\
  &= \frac{r_1^\dagger Z r_y \pm \|R\vec{1} - r_1\|\|Zr_y\| + \sigma^{-2}\gamma \pm \bigO{\eps\sigma^{-2}m}}{r_1^\dagger Z r_1 \pm \|R\vec{1} - r_1\|\|Zr_1\| + \sigma^{-2}m \pm \bigO{\eps\sigma^{-2}m}} \\
  &= \frac{r_1^\dagger Z r_y \pm \eps\sigma\sqrt{m}(\|ZRy\| + \|Z\|\|Ry - r_y\|) + \sigma^{-2}\gamma \pm \bigO{\eps\sigma^{-2}m}}{r_1^\dagger Z r_1 \pm \eps\sigma\sqrt{m}(\|ZR\vec{1}\| + \|Z\|\|R\vec{1} - r_1\|) + \sigma^{-2}m \pm \bigO{\eps\sigma^{-2}m}} \tag*{by \cref{eqn:svm-approx-bounds}} \\
  &= \frac{r_1^\dagger Z r_y + \sigma^{-2}\gamma \pm \bigO{\eps\sigma^{-2}m}}{r_1^\dagger Z r_1 + \sigma^{-2}m \pm \bigO{\eps\sigma^{-2}m}} \tag*{by \cref{eqn:svm-approx-bounds,eqn:svm-svt-bounds}} \\
  &= \frac{r_1^\dagger Z r_y + \sigma^{-2}\gamma}{r_1^\dagger Z r_1 + \sigma^{-2}m}(1 \pm \bigO{\eps}) \pm \bigO{\eps} \tag*{by $r_1^\dagger Z r_1 \geq 0$}.
\end{align*}
We will approximate the output vector as
\[
  M^{-1}y - \frac{\vec{1}^\dagger M^{-1}y}{\vec{1}^\dagger M^{-1}\vec{1}}M^{-1}\vec{1} \approx R^\dagger Zr_y + \sigma^{-2}y - \frac{r_1^\dagger Z r_y + \sigma^{-2}\gamma}{r_1^\dagger Z r_1 + \sigma^{-2}m}(R^\dagger Z r_1 + \sigma^{-2}\vec{1}).
\]
To analyze this, we first note that
\begin{align*}
  \|M^{-1}y - R^\dagger Zr_y + \sigma^{-2}y\|
  &\leq \|M^{-1} - R^\dagger Z R - \sigma^{-2}I\|\|y\| + \|R^\dagger Z\|\|Ry - r_y\| \\
  &\leq \eps\sigma^{-2}\sqrt{m} + (1+\eps)\sigma^{-3}\eps\sigma\sqrt{m}\lesssim \eps\sigma^{-2}\sqrt{m}
\end{align*}
and analogously, $\|M^{-1}\vec{1} - R^\dagger Zr_1 + \sigma^{-2}\vec{1}\| \lesssim \eps\sigma^{-2}\sqrt{m}$.
We also use that
\begin{align} \label{eqn:svm-scalar-bound}
  \frac{\vec{1}^\dagger M^{-1}y}{\vec{1}^\dagger M^{-1}\vec{1}}
  \leq \frac{\|\vec{1}M^{-1/2}\|\|M^{-1/2}y\|}{\|M^{-1/2}\vec{1}\|^2}
  = \frac{\|M^{-1/2}y\|}{\|M^{-1/2}\vec{1}\|}
  \leq \frac{\|X\|}{\sigma}.
\end{align}
With these bounds, we can conclude that (continuing to use \cref{eqn:svm-svt-bounds,eqn:svm-approx-bounds})
\begin{align*}
  &\left\|M^{-1}\Big(y - \frac{\vec{1}^\dagger M^{-1}y}{\vec{1}^\dagger M^{-1}\vec{1}}\vec{1}\Big) -
  \Big(R^\dagger Zr_y + \sigma^{-2}y - \frac{r_1^\dagger Z r_y + \sigma^{-2}\gamma}{r_1^\dagger Z r_1 + \sigma^{-2}m}(R^\dagger Z r_1 + \sigma^{-2}\vec{1})\Big)\right\| \\
  &\leq \|M^{-1}y - R^\dagger Zr_y + \sigma^{-2}y\| + \left\|\frac{\vec{1}^\dagger M^{-1}y}{\vec{1}^\dagger M^{-1}\vec{1}}M^{-1}\vec{1} - \frac{r_1^\dagger Z r_y + \sigma^{-2}\gamma}{r_1^\dagger Z r_1 + \sigma^{-2}m}(R^\dagger Z r_1 + \sigma^{-2}\vec{1})\Big)\right\| \\
  &\leq \eps\sigma^{-2}\sqrt{m} + \frac{\vec{1}^\dagger M^{-1}y}{\vec{1}^\dagger M^{-1}\vec{1}}\|M^{-1}\vec{1} - R^\dagger Z r_1 - \sigma^{-2}\vec{1}\| + \Big|\frac{\vec{1}^\dagger M^{-1}y}{\vec{1}^\dagger M^{-1}\vec{1}} - \frac{r_1^\dagger Z r_y + \sigma^{-2}\gamma}{r_1^\dagger Z r_1 + \sigma^{-2}m}\Big|\|R^\dagger Z r_1 + \sigma^{-2}\vec{1}\| \\
  &\lesssim \Big(1+\frac{\vec{1}^\dagger M^{-1}y}{\vec{1}^\dagger M^{-1}\vec{1}}\Big)\eps\sigma^{-2}\sqrt{m} + \eps\Big(1+\frac{\vec{1}^\dagger M^{-1}y}{\vec{1}^\dagger M^{-1}\vec{1}}\Big)\|R^\dagger Z r_1 + \sigma^{-2}\vec{1}\| \\
  &= \eps\Big(1+\frac{\vec{1}^\dagger M^{-1}y}{\vec{1}^\dagger M^{-1}\vec{1}}\Big)\Big(\sigma^{-2}\sqrt{m} + \|R^\dagger Z r_1 + \sigma^{-2}\vec{1}\|\Big) \\
  &\lesssim \eps\frac{\|X\|}{\sigma}\Big(\sigma^{-2}\sqrt{m} + \|M^{-1}\vec{1}\| + \|(R^\dagger Z R + \sigma^{-2}I - M^{-1})\vec{1}\| + \|R^\dagger Z r_1 - R^\dagger Z R\vec{1}\|\Big) \tag*{by \cref{eqn:svm-scalar-bound}} \\
  &\lesssim \eps\frac{\|X\|}{\sigma}\Big(\sigma^{-2}\sqrt{m} + \sigma^{-2}\sqrt{m} + \eps\sigma^{-2}\sqrt{m} + \eps\sigma^{-2}\sqrt{m}\Big) \\
  &\lesssim \eps\frac{\|X\|}{\sigma}\sigma^{-2}\sqrt{m}.
\end{align*}
So, by rescaling $\eps$ down by $\frac{\|X\|}{\sigma}$, it suffices to sample from
\[
  \hat{\alpha} \coloneqq R^\dagger Z\Big(r_y - \frac{r_1^\dagger Z r_y + \sigma^{-2}\gamma}{r_1^\dagger Z r_1 + \sigma^{-2}m}r_1\Big) - \sigma^{-2}\Big(y - \frac{r_1^\dagger Z r_y + \sigma^{-2}\gamma}{r_1^\dagger Z r_1 + \sigma^{-2}m}\vec{1}\Big).
\]
To gain sampling and query access to the output, we consider this as a matrix-vector product, where the matrix is $(R^\dagger\mid y\mid \vec{1})$ and the vector is the corresponding coefficients in the linear combination.
Then, by \cref{lem:b-sq-approx,lemma:sample-Mv}, we can get $\sq_\phi(\hat{\alpha})$ for
\begin{align*}
  \phi &= (r+2)\left(\frac{\|X\|_\fr^2}{r}\Big\|Z\Big(r_y - \frac{r_1^\dagger Z r_y + \sigma^{-2}\gamma}{r_1^\dagger Z r_1 + \sigma^{-2}m}r_1\Big)\Big\|^2 + \sigma^{-4}\Big(\|y\|^2 + \Big(\frac{r_1^\dagger Z r_y + \sigma^{-2}\gamma}{r_1^\dagger Z r_1 + \sigma^{-2}m}\Big)^2\|\vec{1}\|^2\Big)\right)\|\hat{\alpha}\|^{-2} \\
  &\lesssim \left(\|X\|_\fr^2\frac{\|X\|^2}{\sigma^2}\sigma^{-6}m + r\sigma^{-4}\frac{\|X\|^2}{\sigma^2}m\right)\|\hat{\alpha}\|^{-2}\lesssim \Big(\frac{\|X\|_\fr^2}{\sigma^2} + r\Big)\frac{\|X\|^2}{\sigma^2}\frac{\sigma^{-4}m}{\|\hat{\alpha}\|^2}
\end{align*}
so $\sqrun_\phi(\hat{\alpha}) = \phi\sqcb_\phi(\hat{\alpha})\log\frac{1}{\delta} = \bigO{r(\frac{\|X\|_\fr^2}{\sigma^2} + r)\frac{\|X\|^2}{\sigma^2}\frac{\sigma^{-4}m}{\|\hat{\alpha}\|^2}\log\frac{1}{\delta}}$.
\end{proof}

\partpert*

\begin{proof}
  We will use the following formula introduced by \cite{KarplusMicriWave,FeynmanOpCalc} (see also \cite[Page 181]{BellmanIntroMatrixAnal}):
  \begin{equation}\label{eq:expDerIntegral}
  \frac{d}{dt}e^{M(t)}=\int_{0}^{1}e^{y M(t)}\frac{d M(t)}{dt}e^{(1-y) M(t)} dy.
  \end{equation}
  Let $A\in\bbC^{n\times n}$ with $\nrm{A}\leq 1$, we define the function $g_A(t)\coloneqq \Tr\big(A e^{H+t(\tilde{H}-H)}\big)$, and observe that
  \begin{align}
  g_A'(t)=& \frac{d}{dt}\Tr\left(A e^{H+t(\tilde{H}-H)}\right)\tag*{by definiton}\\
  =& \Tr\left(A\frac{d}{dt} e^{H+t(\tilde{H}-H)}\right)\tag*{by linearity of trace}\\
  =& \Tr\left(A\int_{0}^{1}e^{y [H+t(\tilde{H}-H)]}(\tilde{H}-H)e^{(1-y) [H+t(\tilde{H}-H)]} dy\right)\tag*{by \cref{eq:expDerIntegral}}\\
  =& \int_{0}^{1}\Tr\left(A e^{y [H+t(\tilde{H}-H)]}(\tilde{H}-H)e^{(1-y) [H+t(\tilde{H}-H)]} \right)dy\tag*{by linearity of trace\footnotemark}\\
  \leq& \int_{0}^{1}\nrm{A e^{y [H+t(\tilde{H}-H)]}(\tilde{H}-H)e^{(1-y) [H+t(\tilde{H}-H)]} }_1dy\tag*{by trace-norm inequality}\\
  \leq& \int_{0}^{1}\nrm{A e^{y [H+t(\tilde{H}-H)]}}_{\frac1y}\nrm{(\tilde{H}-H)e^{(1-y) [H+t(\tilde{H}-H)]} }_\frac{1}{1-y}dy\tag*{by Hölder's inequality}\\
  \leq& \int_{0}^{1}\nrm{A} \nrm{e^{y [H+t(\tilde{H}-H)]}}_{\frac1y}\nrm{\tilde{H}-H}\nrm{e^{(1-y) [H+t(\tilde{H}-H)]} }_\frac{1}{1-y}dy\tag*{by Hölder's inequality}\\
  \leq& \nrm{\tilde{H}-H}\int_{0}^{1} \nrm{e^{y [H+t(\tilde{H}-H)]}}_{\frac1y}\nrm{e^{(1-y) [H+t(\tilde{H}-H)]} }_\frac{1}{1-y}dy\tag*{since $\nrm{A}\leq 1$}\\
  =& \nrm{\tilde{H}-H}\nrm{e^{H+t(\tilde{H}-H)}}_1.\label{eq:derivBound}
  \end{align} \footnotetext{Note that in case $A=I$, by the cyclicity of trace, this equation implies that $\frac{d}{dt}\Tr(e^{H(t)})=\Tr(e^{H(t)}\frac{d}{dt}H(t))$.}

  Now we consider $z(t)\coloneqq g_I(t)=\Tr\big(e^{H+t(\tilde{H}-H)}\big)$.
  From \cref{eq:derivBound} we have $z'(t)\leq \|\tilde{H}-H\|z(t)$.
  Using Grönwall's differential inequality, we can conclude that $z(t)\leq z(0)e^{t\|\tilde{H}-H\|}$ for every $t\in [0,\infty)$.

  Finally, we use the fact that there exists a matrix $A$ of operator norm at most $1$ such that $\big\|e^{\tilde{H}}-e^H\big\|_1=\Tr(A(e^{\tilde{H}}-e^H))$ (take, e.g., $\operatorname{sgn}(e^{\tilde{H}}-e^H)$).
  We finish the proof by observing that for such an $A$, $\big\|e^{\tilde{H}}-e^H\big\|_1=\Tr(Ae^{\tilde{H}})-\Tr(A e^H)=g_A(1)-g_A(0)=\int_{0}^{1}g_A'(t) dt$ and
  \begin{align*}
  \int_{0}^{1}g_A'(t) dt
  \overset{\eqref{eq:derivBound}}{\leq} \int_{0}^{1}\|\tilde{H}-H\|z(t)dt\leq z(0)\int_{0}^{1}\|\tilde{H}-H\|e^{t\|\tilde{H}-H\|} dt=\Tr(e^H)\left(e^{\|\tilde{H}-H\|}-1\right). &\qedhere
  \end{align*}
\end{proof} 

\end{document}